\newtheorem{theorem}{Theorem}
\newtheorem{proposition}[theorem]{Proposition}
\newtheorem{lemma}[theorem]{Lemma}
\newtheorem{observation}[theorem]{Observation}
\newenvironment{customprop}[1]
  {\innercustomprop}
  {\endinnercustomprop}
\def\bra#1{\mathinner{\langle{#1}|}}
\def\ket#1{\mathinner{|{#1}\rangle}}
\def\braket#1#2{\mathinner{\langle{#1|#2}\rangle}}
\def\ketbra#1#2{\mathinner{|{#1}\rangle\!\langle{#2}|}}
\def\BraVert{\egroup\,\mid@vertical\,\bgroup}
\def\dbra#1{\mathinner{\langle\!\langle{#1}|}}
\def\dket#1{\mathinner{|{#1}\rangle\!\rangle}}
\def\dketbra#1#2{\mathinner{|{#1}\rangle\!\rangle\!\langle\!\langle{#2}|}}
\DeclareMathOperator{\Tr}{Tr}
\DeclareMathOperator{\rank}{rank}
\DeclareMathOperator{\range}{range}
\renewcommand\L{\mathcal{L}}
\newcommand{\M}{\mathcal{M}}
\newcommand{\HS}{\mathcal{H}}
\newcommand{\N}{\mathcal{N}}
\newcommand{\A}{\mathcal{A}}
\newcommand{\K}{\mathcal{K}}
\newcommand{\id}{\mathbbm{1}}
\newcommand{\bigpi}{\scalebox{1.6}{$\pi$}}
\begin{document}

\title{Quantum circuits with classical versus quantum control of causal order}

\author{Julian Wechs}
\affiliation{Univ.\ Grenoble Alpes, CNRS, Grenoble INP\footnote{Institute of Engineering Univ. Grenoble Alpes}, Institut N\'eel, 38000 Grenoble, France}
\affiliation{Centre for Quantum Information and Communication (QuIC), \'Ecole Polytechnique de Bruxelles, C.P.\ 165, Universit\'e libre de Bruxelles, 1050 Bruxelles, Belgium}

\author{Hippolyte Dourdent}
\affiliation{Univ.\ Grenoble Alpes, CNRS, Grenoble INP\footnote{Institute of Engineering Univ. Grenoble Alpes}, Institut N\'eel, 38000 Grenoble, France}

\author{Alastair A. Abbott}
\affiliation{D\'epartement de Physique Appliqu\'ee, Universit\'e de Gen\`eve, 1211 Gen\`eve, Switzerland}
\affiliation{Univ.\ Grenoble Alpes, Inria, 38000 Grenoble, France}

\author{Cyril Branciard}
\affiliation{Univ.\ Grenoble Alpes, CNRS, Grenoble INP\footnote{Institute of Engineering Univ. Grenoble Alpes}, Institut N\'eel, 38000 Grenoble, France}

\date{\today}

\begin{abstract}

Quantum supermaps are transformations that map quantum operations to quantum operations.
It is known that quantum supermaps which respect a definite, predefined causal order between their input operations correspond to fixed-order quantum circuits, also called quantum combs. 
A systematic understanding of the physical interpretation of more general types of quantum supermaps---in particular, those incompatible with a definite causal structure---is however lacking. 
In this paper, we identify two new types of circuits that naturally generalise the fixed-order case and that likewise correspond to distinct classes of quantum supermaps, which we fully characterise. We first introduce ``\emph{quantum circuits with classical control of causal order}'', in which the order of operations is still well-defined, but not necessarily fixed in advance: it can in particular be established dynamically, in a classically-controlled manner, as the circuit is being used.
We then consider ``\emph{quantum circuits with quantum control of causal order}'', in which the order of operations is controlled coherently. 
The supermaps described by these classes of circuits are physically realisable, and the latter encompasses all known examples of physically realisable processes with indefinite causal order, including the celebrated ``\emph{quantum switch}''.
Interestingly, it also contains new examples arising from the combination of dynamical and coherent control of causal order, and we detail explicitly one such process.
Nevertheless, we show that quantum circuits with quantum control of causal order can only generate ``causal'' correlations, compatible with a well-defined causal order.
We furthermore extend our considerations to probabilistic circuits that produce also classical outcomes, and we demonstrate by an example how the characterisations derived in this work allow us to identify new advantages for quantum information processing tasks that could be demonstrated in practice.
\end{abstract}

\maketitle

\makeatletter
\def\l@subsubsection#1#2{}
\makeatother

\tableofcontents

\section{Introduction}

The standard paradigm used in quantum information theory is that of quantum circuits. 
In this framework, quantum computations are performed through the application of quantum operations on some quantum system in a given, definite order. 
An approach that is relevant in many situations is to consider quantum circuits with open slots, into which arbitrary input operations can be inserted~\cite{chiribella08,chiribella09}. 
Such circuits can be understood as higher-order transformations that map quantum operations to quantum operations. 
Mathematically, they can be described as \emph{quantum supermaps}, i.e., maps which take completely positive (CP) maps to other CP maps~\cite{chiribella08a}. 
Quantum supermaps corresponding to circuits in which the input operations are performed in a definite, fixed order are also called \emph{quantum combs}~\cite{chiribella08}.

More generally however, quantum supermaps do not need to presuppose a fixed causal order of the different operations. 
The investigation of quantum structures that go beyond the quantum circuit framework and that are incompatible with a global causal order between the operations has begun to receive significant attention, motivated not only by foundational questions~\cite{hardy05,oreshkov12,zych19}, but also by the possibility of obtaining advantages in quantum information processing~\cite{chiribella13}. 
A useful description of quantum supermaps, encompassing those that are incompatible with any definite causal structure, is given by the process matrix framework~\cite{oreshkov12}.

Beyond quantum circuits with a fixed causal order (those represented by quantum combs), one can consider situations in which the causal order depends on how the circuit is being used. The order can in particular be established dynamically, i.e., the order of future operations can depend on previous ones~\cite{hardy05,baumeler14,oreshkov16,wechs19}. 
If the causal order is thus controlled in a classical manner, then as the circuit is being used the operations are still realised in a well-defined causal order, established on the fly.
However, one can also consider situations in which the causal order is indefinite, for instance subject to quantum superpositions.
Indeed, a realisable example of a quantum process with indefinite causal order---or, in a more technical jargon, of a ``causally nonseparable'' quantum process~\cite{oreshkov12,araujo15,oreshkov16,wechs19}---is the so-called ``quantum switch''~\cite{chiribella13}. 
In this process, two operations are applied to a target system in an order that is coherently controlled by another quantum system. 
If the control system is prepared in a superposition state, the two operations are applied in a ``superposition of orders''. 
A generalisation to $N$ operations applied in a superposition of different orders has also been proposed~\cite{colnaghi12,araujo14,facchini15}. 
Notably, the quantum switch can provide advantages in various quantum information processing tasks over standard, causally ordered quantum circuits~\cite{chiribella13,chiribella12,colnaghi12,araujo14,facchini15,feix15,
guerin16,ebler18,salek18,chiribella18,mukhopadhyay18,
mukhopadhyay20,procopio19,frey19,loizeau20,caleffi20,gupta19,procopio20,
zhao20,taddei20,felce20,guha20,sazim20,wilson20a,simonov20}, and has now been demonstrated in several experiments~\cite{procopio15,rubino17,rubino17a,goswami18,goswami20,wei19,guo20,taddei20}.

In light of such possibilities, it is notable that more general, constructive formulations of classes of quantum supermaps encompassing dynamical and coherent control of causal order have not been forthcoming.
In contrast, significant progress has been made in classifying quantum supermaps using the process matrix framework, notably by studying their causal structure~\cite{araujo15,oreshkov16,wechs19} and reversibility~\cite{araujo17,barrett20,yokojima20}.
This framework, however, adopts an inherently top-down approach, and it remains unclear whether generic quantum supermaps can be given faithful physical realisations.
In this paper we instead adopt a bottom-up approach, presenting two new, general classes of quantum supermaps that are realisable by construction.
These classes can be described as types of generalised quantum circuits, naturally extending the notion of quantum circuits with fixed causal order (``QC-FOs'', Sec.~\ref{sec:QCFOs}).

We first describe ``quantum circuits with classical control of causal order'' (``QC-CCs'', Sec.~\ref{sec:QCCCs}) in which the  causal order between $N$ operations can be classically controlled and thereby established dynamically, while ensuring that each operation is applied once and only once---a crucial assumption to ensure one obtains a quantum supermap.
Our study thus formalises the description of ``classically controlled quantum circuits'' proposed in Ref.~\cite{oreshkov16}.
The classical nature of the control in QC-CCs means that the causal order remains well-defined (if not fixed), so the corresponding processes are causally separable.
It is then natural to consider quantum circuits in which the causal order is controlled coherently, which leads us to formulate the new class of ``quantum circuits with quantum control of causal order'' (``QC-QCs'', Sec.~\ref{sec:QCQCs}), which contains the quantum switch as a particular example.
This class, however, also contains more general types of causally nonseparable quantum processes, a fact we illustrate with a novel example that qualitatively differs from the quantum switch.
Nevertheless, not all quantum supermaps can be realised as QC-QCs. 
In particular, we show that the correlations generated by QC-QCs are always compatible with a well-defined causal order, which means that processes that can violate so-called \emph{causal inequalities}~\cite{oreshkov12,abbott16,branciard16,baumeler16} cannot be realised as QC-QCs.
The relation between these different classes of quantum supermaps is summarised in Fig.~\ref{fig:venn_diagram}.

For each of these classes of generalised quantum circuits we show how they can be described as process matrices, characterise the classes of process matrices they define, and show how, given a process matrix from one of these classes, one can construct the corresponding circuit.

\begin{figure}[t]
	\begin{center}
	\includegraphics[width=\columnwidth]{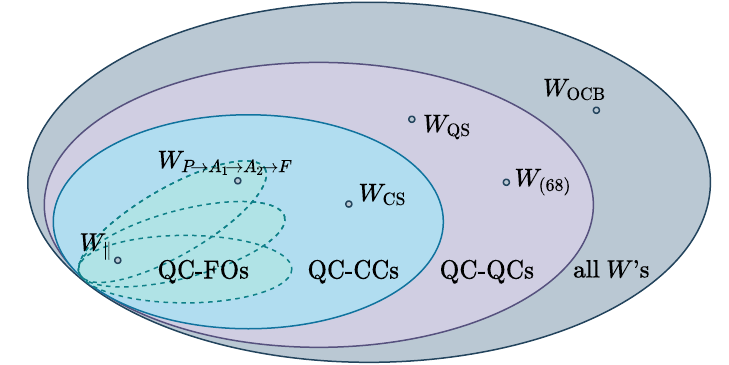}
	\end{center}
	\caption{
	Venn diagram illustrating the relation between the classes of quantum supermaps studied in this paper. 
	QC-FOs are quantum circuits compatible with a single, fixed, causal order (Sec.~\ref{sec:QCFOs}), such as the process $W_{P \text{\scalebox{.4}[1]{$\to$}} A_1 \text{\scalebox{.4}[1]{$\to$}} A_2 \text{\scalebox{.4}[1]{$\to$}} F}$ described in Eq.~\eqref{eq:W_PA1A2F}.
	QC-FOs form a non-convex set since a mixture of QC-FOs compatible with different orders is, in general, not compatible with any single order, while some processes, such as the parallel circuit $W_{\parallel}$ of Eq.~\eqref{eq:W_parallel} are compatible with any causal order.
	QC-CCs are quantum circuits with classical control of causal order (Sec.~\ref{sec:QCCCs}), such as the ``classical switch'' $W_\text{CS}$ (Eq.~\eqref{eq:W_CS}); all QC-CCs are causally separable processes.
	QC-QCs are quantum circuits with quantum control of causal order (Sec.~\ref{sec:QCQCs}), such as the quantum switch $W_\text{QS}$ (Eq.~\eqref{eq:W_QS}) and the new quantum process $W_\text{(68)}$ we describe in Eq.~\eqref{eq:processvec_newQCQC}, both of which are causally nonseparable.
	QC-QCs are a strict subset of all quantum supermaps: those violating causal inequalities, such as the $W_\text{OCB}$ of Ref.~\cite{oreshkov12} cannot be described as QC-QCs.
	}
	\label{fig:venn_diagram}
\end{figure}

In Sec.~\ref{sec:probQCs}, we then generalise our analysis to probabilistic (post-selected) quantum circuits.
We characterise the classes of probabilistic quantum supermaps, or ``quantum superinstruments'', that can be realised in terms of probabilistic QC-FOs, QC-CCs and QC-QCs.

The perspective of higher-order quantum transformations has turned out to be very useful for the investigation of quantum information processing tasks that involve the processing of unknown operations. 
For instance, the description of quantum combs (i.e., QC-FOs) in terms of quantum supermaps has been used to formulate and study various such tasks as semidefinite optimisation problems~\cite{chiribella16}. 
This approach can be extended to the more general classes of QC-CCs and QC-QCs, based on the characterisations that we provide in this work. 
In particular, our characterisation of QC-QCs and their corresponding probabilistic counterparts allows one to investigate possible quantum information processing applications of quantum processes which go beyond quantum circuits with a well-defined causal order, but for which a concrete realisation scheme exists. 
We illustrate this in Sec.~\ref{sec:applications}, where we consider a generalisation of a recently studied black-box discrimination task~\cite{shimbo18} and show that QC-QCs can provide a higher probability of success than any QC-FO or QC-CC.

Our work thus paves the way for a more systematic study of possible quantum processes with indefinite causal order, beyond the quantum switch, that are realisable in practice with current technologies, and of their applications for quantum information processing.

\section{Quantum circuits as quantum supermaps}
\label{sec:QC_supermaps}

Before proceeding further, let us first introduce the mathematical tools we shall use to manipulate and study quantum supermaps, and recall how a quantum circuit can be described by a so-called \emph{process matrix}~\cite{oreshkov12}.

\subsection{Preliminaries: mathematical tools}
\label{subsec:preliminaries}

In this paper we generically use the notation $\HS^X$ (for various different superscripts $X$) to denote a Hilbert space. 
$\L(\HS^X)$ is then defined as the space of linear operators on $\HS^X$ (operators $\HS^X \to \HS^X$); in particular, the identity operator is written $\id^X \in \L(\HS^X)$.%
\footnote{In general superscripts are used to indicate, when necessary, the relevant Hilbert spaces (we may omit them when the situation is clear enough). All Hilbert spaces throughout the paper are taken to be finite-dimensional. 
A generalisation of the process matrix framework to infinite-dimensional Hilbert spaces has been proposed in Ref.~\cite{giacomini16}.}
For two Hilbert spaces $\HS^X$ and $\HS^Y$, we use the short-hand notation $\HS^{XY} \coloneqq \HS^X \otimes \HS^Y$ to denote their tensor product (the order in which we write the factors being irrelevant, as long as we keep track of which space each of them corresponds to). 
$\Tr_X$ ($\Tr_Y$) then denotes the partial trace over $\HS^X$ (over $\HS^Y$), while $\Tr$ denotes the full trace.

\subsubsection{The Choi isomorphism}

Linear operators and maps are conveniently expressed using the Choi isomorphism~\cite{choi75}, which allows one to write them in the form of vectors or matrices. 
To define this we choose, for each Hilbert space $\HS^X$ under consideration, a fixed orthonormal basis $\{\ket{i}^X\}_i$---the \emph{computational basis} of $\HS^X$.
For a Hilbert space $\HS^{XY}$ obtained as the tensor product of two Hilbert spaces $\HS^X$ and $\HS^Y$ with computational bases $\{\ket{i}^X\}_i$ and $\{\ket{j}^Y\}_j$, respectively, the computational basis is naturally taken to be $\{\ket{i,j}^{XY} \coloneqq \ket{i}^X \otimes \ket{j}^Y\}_{i,j}$.

The choice of fixed computational bases is used in particular to define, for any pair of isomorphic Hilbert spaces $\HS^X$ and $\HS^{X'}$ with computational basis states $\ket{i}^X$ and $\ket{i}^{X'}$ in one-to-one correspondence,\footnote{Throughout, whenever we refer to isomorphic Hilbert spaces we will always implicitly assume that their computational basis states are in one-to-one correspondence.} the unnormalised maximally entangled state---written as a ``double-ket vector''
\begin{align}
\dket{\id}^{XX'} \coloneqq \sum_i \ket{i}^X \otimes \ket{i}^{X'} \quad \in \HS^X \otimes \HS^{X'}. \label{eq:def_ketId}
\end{align}
The computational basis is also used to define transposition of operators in $\L(\HS^X)$, denoted ${}^T$, or ${}^{T_X}$ for the partial transpose over $\HS^X$ only, in the case of an operator over a composite system in $\L(\HS^{XY})$.

\medskip

In this paper we shall make use of two (directly related) versions of the Choi isomorphism: the ``pure case''  and the ``mixed case'' versions.

For the first case we define, for any linear operator $V: \HS^X \to \HS^Y$, its \emph{Choi vector} as%
\footnote{Note that this double-ket notation is consistent with the definition of Eq.~\eqref{eq:def_ketId} when $V = \id^{X\to X'}: \HS^X \to \HS^{X'}, \ket{i}^X \mapsto \ket{i}^{X'}$ is the ``identity'' operator that defines the  one-to-one correspondence between the computational basis states of $\HS^X$ and $\HS^{X'}$.
In Eq.~\eqref{eq:def_Choi_vector}, $\dket{\id}^{XX}$ is defined as in Eq.~\eqref{eq:def_ketId} by taking $\HS^{X'}$ to just be a copy of $\HS^X$ (without introducing any ambiguity in the notations). \label{footnote:isomorphism_identity}}
\begin{align}
\dket{V} \coloneqq & \, \big( \id^X \otimes V \big) \dket{\id}^{XX} \notag \\
= & \, \sum_i \ket{i}^X \otimes V \ket{i}^X \quad \in \HS^{XY}. \label{eq:def_Choi_vector}
\end{align}
For the second case, for any linear map $\M: \L(\HS^X) \to \L(\HS^Y)$ we define its \emph{Choi matrix} as%
\footnote{To relate the two definitions of the Choi isomorphism, note that if $\M: \L(\HS^X) \to \L(\HS^Y)$ is obtained in terms of its Kraus operators $V_k: \HS^X \to \HS^Y$ as $\M(\rho) = \sum_k V_k \rho V_k^\dagger$, then its Choi matrix is obtained in terms of the Choi vectors $\dket{V_k}$ as $M = \sum_k \dketbra{V_k}{V_k}$ (with $\dbra{V_k} \coloneqq \dket{V_k}^\dagger$).}
\begin{align}
M \coloneqq & \, \big({\cal I}^X \otimes \M \big) \big( \dketbra{\id}{\id}^{XX} \big) \notag \\
= & \, \sum_{i,i'} \ketbra{i}{i'}^X \otimes \M\big( \ketbra{i}{i'}^X \big) \quad \in \L\big(\HS^{XY}\big) \label{eq:def_Choi_matrix}
\end{align}
(where ${\cal I}^X$ denotes the identity map on $\L(\HS^X)$). A fundamental property is that a linear map $\M$ is completely positive if and only if its Choi matrix is positive semidefinite~\cite{choi75}.

The inverse Choi isomorphism is easily obtained, in the two cases, as
\begin{align}
V = \dket{V}^{T_X} = \sum_i  \bra{i}^X \!\otimes\! \id^Y \dket{V} \bra{i}^X
\label{eq:inverse_pure_Choi}
\end{align}
and
\begin{align}
\M(\rho) = \Tr_X \big[ (\rho^T \otimes \id^Y) M \big] \label{eq:inverse_mixed_Choi}
\end{align}
for any $\rho \in \L(\HS^X)$.
This implies in particular that $\Tr[\M(\rho)] = \Tr [ \rho^T (\Tr_Y M) ]$, from which one can see that $\M$ is trace-preserving if and only if $\Tr_Y M = \id^X$.

\subsubsection{The link product}
\label{subsubsec:link_product}

We now introduce a special kind of product for vectors and matrices---the so-called \emph{link product}~\cite{chiribella08,chiribella09}---which will prove useful in describing the composition of quantum operations in terms of their Choi representations.

\medskip

Let $\HS^{XY} = \HS^X \otimes \HS^Y$ and $\HS^{YZ} = \HS^Y \otimes \HS^Z$ be two tensor product Hilbert spaces sharing the same (possibly trivial) space factor $\HS^Y$, and with non-overlapping $\HS^X, \HS^Z$.

The link product of any two vectors $\ket{a} \in \HS^{XY}$ and $\ket{b} \in \HS^{YZ}$ is defined (with respect to the computational basis $\{\ket{i}^Y\}_i$ of $\HS^Y$) as
\begin{align}
\ket{a} * \ket{b} \coloneqq & \, \big( \id^{XZ} \otimes \dbra{\id}^{YY} \big) (\ket{a} \otimes \ket{b}) \notag \\
= & \, \big(\ket{a}^{T_Y} \otimes \id^Z \big) \ket{b} \notag \\
= & \, \sum_i \ket{a_i}^X \otimes \ket{b_i}^Z \quad \in \HS^{XZ} \label{eq:def_pure_link_product}
\end{align}
with $\ket{a_i}^X \coloneqq (\id^X \otimes \bra{i}^Y) \ket{a} \in \HS^X$ and $\ket{b_i}^Z \coloneqq (\bra{i}^Y \otimes \id^Z) \ket{b} \in \HS^Z$ (so that $\ket{a} = \sum_i \ket{a_i}^X \otimes \ket{i}^Y$ and $\ket{b} = \sum_i \ket{i}^Y \otimes \ket{b_i}^Z$).

Similarly, the link product of any two operators $A \in \L(\HS^{XY})$ and $B \in \L(\HS^{YZ})$ is defined as~\cite{chiribella08,chiribella09}%
\footnote{Note, to check consistency with the pure case, that if $A$ and $B$ are for instance of the form $A = \ketbra{a}{a}$ and $B = \ketbra{b}{b}$, then $A * B = \ketbra{a}{a} * \ketbra{b}{b} = (\ket{a} * \ket{b})(\bra{a} * \bra{b}) = (\ket{a} * \ket{b})(\ket{a} * \ket{b})^\dagger$.}
\begin{align}
A * B \coloneqq & \, \big( \id^{XZ} \otimes \dbra{\id}^{YY} \big) (A \otimes B) \big( \id^{XZ} \otimes \dket{\id}^{YY} \big) \notag \\
= & \, \Tr_{Y} \big[ \big( A^{T_Y} \otimes \id^Z \big) \big( \id^X \otimes B \big) \big] \notag \\
= & \, \sum_{ii'} A_{ii'}^X \otimes B_{ii'}^Z \quad \in \L\big(\HS^{XZ}\big) \label{eq:def_mixed_link_product}
\end{align}
with $A_{ii'}^X \coloneqq (\id^X \otimes \bra{i}^Y) A (\id^X \otimes \ket{i'}^Y) \in \L(\HS^X)$ and $B_{ii'}^Z \coloneqq (\bra{i}^Y \otimes \id^Z) A (\ket{i'}^Y \otimes \id^Z) \in \L(\HS^Z)$ (so that $A = \sum_{i,i'} A_{ii'}^X \otimes \ketbra{i}{i'}^Y$ and $B = \sum_{i,i'} \ketbra{i}{i'}^Y \otimes B_{ii'}^Z$).

Let us state some properties of these link products that will be useful.
Firstly, note that they are commutative (up to a re-ordering of the tensor products). 
For a trivial 1-dimensional space $\HS^Y$---i.e., for $\ket{a} \in \HS^X$ and $\ket{b} \in \HS^Z$, or $A \in \L(\HS^X)$ and $B \in \L(\HS^Z)$ in distinct, non-overlapping spaces%
\footnote{Here, with a minor abuse of notation, we formally identify $\ket{a} \in \HS^{XY}$ with $(\id^X \otimes \bra{0}^Y) \ket{a} \in \HS^X$, and $A \in \L(\HS^{XY})$ with $(\id^X \otimes \bra{0}^Y) A (\id^X \otimes \ket{0}^Y) \in \HS^X$, where $\{\ket{0}^Y\}$ denotes the computational basis of the trivial 1-dimensional space $\HS^Y$ (and similarly for the other cases with trivial space factors).}---they reduce to tensor products ($\ket{a} * \ket{b} = \ket{a} \otimes \ket{b}$ or $A * B = A \otimes B$). 
For trivial spaces $\HS^X$ and $\HS^Z$ on the other hand---i.e., for $\ket{a}, \ket{b} \in \HS^Y$, or $A , B \in \L(\HS^Y)$ in the same spaces---they reduce to scalar products ($\ket{a} * \ket{b} = \sum_i \braket{i}{a} \braket{i}{b} = \ket{a}^T \ket{b}$ or $A * B = \Tr[A^T B]$). 
Note also that the link product of two positive semidefinite matrices is positive semidefinite (or a nonnegative real number for trivial spaces $\HS^X$ and $\HS^Z$).

We will often consider link products of vectors $\ket{a}, \ket{b}$ or matrices $A,B$ in (or acting on) some Hilbert spaces given as $\bigotimes_{j \in \mathsf{A}} \HS^j$ and $\bigotimes_{j \in \mathsf{B}} \HS^j$, for some (non-overlapping) tensor factors $\HS^j$ and some sets of indices $\mathsf{A}, \mathsf{B}$. 
The definitions above are then used by taking $\HS^X = \bigotimes_{j \in \mathsf{A} \backslash \mathsf{B}} \HS^j$, $\HS^Y = \bigotimes_{j \in \mathsf{A} \cap \mathsf{B}} \HS^j$ and $\HS^Z = \bigotimes_{j \in \mathsf{B} \backslash \mathsf{A}} \HS^j$.
The 2-fold products can also be extended to define $n$-fold link products of $n$ vectors $\ket{a_k} \in \HS^{\mathsf{A}_k} \coloneqq \bigotimes_{j \in \mathsf{A}_k} \HS^j$ or $n$ matrices $A_k \in \L(\HS^{\mathsf{A}_k})$, for $n$ sets of indices $\mathsf{A}_k$. 
Provided (as will be the case for all $n$-fold link products written in this paper) that each constituent Hilbert space $\HS^j$ appears at most twice in all $\HS^{\mathsf{A}_k}$'s---i.e., that $\mathsf{A}_{k_1} \cap \mathsf{A}_{k_2} \cap \mathsf{A}_{k_3} = \emptyset$ for all $k_1 \neq k_2 \neq k_3$---the $n$-fold link products thus defined are associative (in addition to being commutative)~\cite{chiribella08,chiribella09}, and can unambiguously be written without parentheses as $\ket{a_1} * \ket{a_2} * \cdots * \ket{a_n}$ or $A_1 * A_2 * \cdots * A_n$.

\medskip

\begin{figure}[t]
	\begin{center}
	\includegraphics[scale=1]{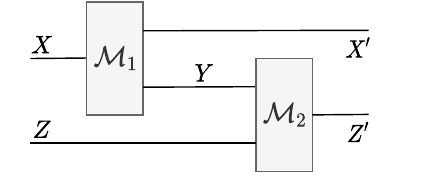}
	\end{center}
	\caption{Composition of two linear maps $\M_1: \L(\HS^X) \to \L(\HS^{X'Y})$ and $\M_2: \L(\HS^{YZ}) \to \L(\HS^{Z'})$ (as indicated by the labels on the wires, to be read from left to right). 
	The Choi matrix of the composed map $\M \coloneqq ({\cal I}^{X'} \otimes \M_2)\circ(\M_1 \otimes {\cal I}^Z)$ is obtained as the \emph{link product} of the Choi matrices of $\M_1$ and $\M_2$, as in Eq.~\eqref{eq:compose_link_prod_mixed}---and similarly for the ``pure case'' of two linear operators $V_1: \HS^{X} \to \HS^{X'Y}$ and $V_2: \HS^{YZ} \to \HS^{Z'}$, as in Eq.~\eqref{eq:compose_link_prod_pure}.}
	\label{fig:link_prod_composition}
\end{figure}

The initial motivation for introducing the link product (originally for matrices)~\cite{chiribella08,chiribella09} was to give a convenient way to write the Choi representation of a quantum operation obtained as the composition of two operations in sequence. 
To illustrate this, consider two linear operators $V_1: \HS^{X} \to \HS^{X'Y}$ and $V_2: \HS^{YZ} \to \HS^{Z'}$, with the output space of $V_1$ overlapping (through the tensor factor $\HS^Y$) with the input space of $V_2$: see Fig.~\ref{fig:link_prod_composition}.
It can easily be verified that the Choi vector of the composed operator $V \coloneqq (\id^{X'} \otimes V_2)(V_1 \otimes \id^Z): \HS^{XZ} \to \HS^{X'Z'}$ is obtained, in terms of the Choi vectors $\dket{V_1} \in \HS^{XX'Y}$ and $\dket{V_2} \in \HS^{YZZ'}$ of $V_1$ and $V_2$, as
\begin{align}
\dket{V} = \dket{V_1} * \dket{V_2} \quad \in \HS^{XX'ZZ'}. \label{eq:compose_link_prod_pure}
\end{align}
Similarly, for two linear maps $\M_1: \L(\HS^{X}) \to \L(\HS^{X'Y})$ and $\M_2: \L(\HS^{YZ}) \to \L(\HS^{Z'})$
the Choi matrix of the composition $\M \coloneqq ({\cal I}^{X'} \otimes \M_2) \circ (\M_1 \otimes {\cal I}^Z): \L(\HS^{XZ}) \to \L(\HS^{X'Z'})$ is obtained, in terms of the Choi matrices $M_1 \in \L(\HS^{XX'Y})$ and $M_2 \in \L(\HS^{YZZ'})$ of $\M_1$ and $\M_2$, as
\begin{align}
M = M_1 * M_2 \quad \in \L\big(\HS^{XX'ZZ'}\big). \label{eq:compose_link_prod_mixed}
\end{align}

Finally, we note that the link product allows one to write the inverse Choi isomorphism in a simple way. 
Indeed, the Choi matrix of the operation that consists in preparing some state (or density matrix) $\rho \in \L(\HS^X)$---i.e., of the map $1 \to \rho$, with a trivial input space---is $\rho$ itself. 
The Choi matrix that represents the preparation of $\M(\rho)$---i.e., the composition of the preparation of $\rho$ with the map $\M$---is also $\M(\rho)$ itself, and is obtained by link-multiplying the Choi matrices $M$ (of $\M$) and $\rho$:
\begin{align}
\M(\rho) = \rho * M, \label{eq:inverse_mixed_Choi_link_prod}
\end{align}
which is indeed equivalent to Eq.~\eqref{eq:inverse_mixed_Choi}.%
\footnote{Similarly for the pure case, one has $V \ket{\psi} = \ket{\psi} * \dket{V}$, so that $V = \sum_i V \ketbra{i}{i}^X = \sum_i ( \ket{i}^X * \dket{V} ) \bra{i}^X$, equivalently to Eq.~\eqref{eq:inverse_pure_Choi}.
\\
Note also that this extends to operators acting on just a subpart of a composite system: e.g., for $V: \HS^X \to \HS^Y$ and $\ket{\psi} \in\HS^{XX'}$, one still has $(V \otimes \id^{X'}) \ket{\psi} = \ket{\psi} * \dket{V} (\in \HS^{YX'})$; and analogously for the mixed case of Eq.~\eqref{eq:inverse_mixed_Choi_link_prod}.}

\subsection{Process matrices}
\label{subsec:PM_supermaps}

The sequential composition of two linear maps is an example of a  \emph{quantum supermap}~\cite{chiribella08a,chiribella08,chiribella09}: a process that takes any two ``freely chosen'' maps (say, $\A_1, \A_2$) to some new map (namely, $\M = \A_2 \circ \A_1$).
The \emph{process matrix framework} allows one to describe all possible ways to combine some ``free'' maps and define a new map (or originally, a probability distribution%
\footnote{The original process matrix framework of Ref.~\cite{oreshkov12} fits in the description given here (which follows that of Ref.~\cite{araujo17}, rather), by considering any probability distribution $P(\A_1,\ldots,\A_N)$ as the map $\M: 1 \mapsto P(\A_1,\ldots,\A_N)$, from and to some trivial 1-dimensional Hilbert spaces $\HS^P$ and $\HS^F$. In Appendix~\ref{app:W_equiv_descr} we elaborate on this, proving that the descriptions of Refs.~\cite{oreshkov12} and~\cite{araujo17} are indeed equivalent.\label{footnote:pm_formalisms}}%
) in a consistent manner~\cite{oreshkov12,araujo17}.

Let us make this more precise. 
Throughout the paper we will consider scenarios with $N \ge 1$ free quantum operations $\A_k$ ($k \in \N \coloneqq \{1, \ldots, N\}$), from some input to some output Hilbert spaces $\HS^{A_k^I}$ and $\HS^{A_k^O}$, of (finite, possibly different) dimensions $d_k^I$ and $d_k^O$, respectively. 
That is, the $N$ operations are any completely positive (CP) linear maps $\A_k: \L(\HS^{A_k^I}) \to \L(\HS^{A_k^O})$.
We will use the short-hand notations $\HS^{A_k^{IO}} \coloneqq \HS^{A_k^I} \otimes \HS^{A_k^O}$ and $\HS^{A_\N^{IO}} \coloneqq \bigotimes_{k \in \N} \HS^{A_k^{IO}}$, or more generally $\HS^{A_\K^{IO}} \coloneqq \bigotimes_{k \in \K} \HS^{A_k^{IO}}$ for any subset $\K \subseteq \N$. 

We are interested in how one can combine these $N$ operations so as to define a new quantum operation $\M: \L(\HS^P) \to \L(\HS^F)$, from some $d_P$-dimensional Hilbert space $\HS^P$ to some $d_F$-dimensional Hilbert space $\HS^F$, which can be thought of as embedding quantum systems in a ``global past'' and a ``global future'' of all $N$ operations, respectively; see Fig.~\ref{fig:general_W}. 
That is, how to define a function
\begin{align}
f: (\A_1,\ldots,\A_N) \mapsto \M. \label{eq:M_f_A1_An}
\end{align}
For consistency with a probabilistic interpretation, we impose that $f$ must be $N$-linear---so that if a given operation $\A_k$ is obtained as a probabilistic mixture of some operations $\A_k^{(j)}$, then the resulting map $\M$ should also be obtained as the corresponding probabilistic mixture: $f(\A_1,\ldots,\sum_j p^{(j)}\A_k^{(j)},\ldots,\A_N) = \sum_j p^{(j)} f(\A_1,\ldots,\A_k^{(j)},\ldots,\A_N)$. 
Furthermore, we require that $f$ must not only transform any set of $N$ CP maps $\A_k$ into another valid CP map, but that it can also be applied locally to extended maps $\A_k': \L(\HS^{A_k^IA_k^{I'}}) \to \L(\HS^{A_k^OA_k^{O'}})$ involving some ancillary Hilbert spaces $\HS^{A_k^{I'}}$ and $\HS^{A_k^{O'}}$ and still gives valid CP maps in such cases. 
Functions $f$ that satisfy these constraints define so-called completely CP-preserving (CCP) quantum supermaps~\cite{chiribella08a,chiribella08}. 

\begin{figure}[t]
	\begin{center}
	\includegraphics[scale=.8]{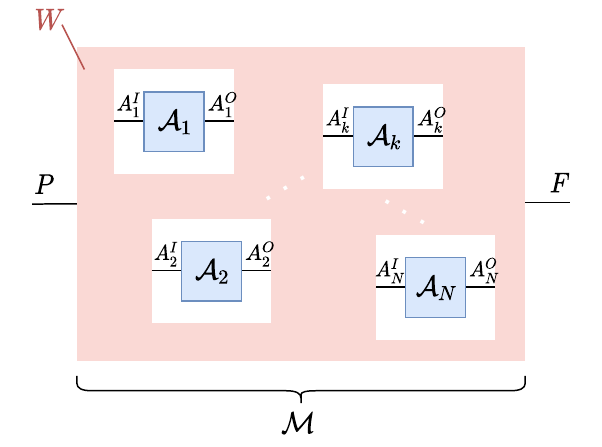}
	\end{center}
	\caption{A completely CP-preserving quantum supermap takes $N$ quantum operations---i.e.,\ CP maps---$\A_k$ (for $k=1,\dots,N$) with input and output Hilbert spaces $\HS^{A_k^I}$ and $\HS^{A_k^O}$, resp., to a new CP map $\M$ with an input Hilbert space $\HS^P$ in the ``global past'' of all operations $\A_k$ and an output Hilbert space $\HS^F$ in their ``global future''~\cite{oreshkov12,araujo17}.
	The Choi representation $M$ of the global map $\M$ is obtained from the Choi representations $A_k$ of the maps $\A_k$ according to Eq.~\eqref{eq:choi_map_M}, in terms of the \emph{process matrix} $W$ (represented by the salmon-coloured area), which describes how the $N$ operations are combined together to define the induced map $\M$. 
	Note that how exactly the $N$ operations are connected---i.e., their causal relations---need not be specified \emph{a priori}.}
	\label{fig:general_W}
\end{figure}

The ``supermapping'' of Eq.~\eqref{eq:M_f_A1_An} can be written at the level of the Choi matrices $M \in \L(\HS^{PF})$ of $\M$ and $A_k \in \L(\HS^{A_k^{IO}})$ of the operations $\A_k$, as $(A_1,\ldots,A_N) \mapsto M$.
Translating the previous constraints on $f$, it can be shown that the dependency on the Choi matrices can be written in terms of a Hermitian operator---a so-called \emph{process matrix}~\cite{oreshkov12,araujo17}
\begin{align}
W \in \L(\HS^{PA_\N^{IO}F}),
\end{align}
in the form
\begin{align}
M & = \Tr_{A_\N^{IO}} \big[ (A_1^T \otimes \cdots \otimes A_N^T \otimes \id^{PF}) W \big] \notag \\
& = (A_1 \otimes \cdots \otimes A_N) * W \quad \in \ \L(\HS^{PF}), \label{eq:choi_map_M}
\end{align}
where in the second line we used the link product notation defined previously, see Eq.~\eqref{eq:def_mixed_link_product}.
The requirement that $f$ above must be completely CP-preserving is equivalent here to $W$ being positive semidefinite, $W \succeq 0$.

Process matrices were originally introduced to describe \emph{deterministic} supermaps, such that if all CP maps $\A_k$ are trace-preserving (TP), then so must be the induced map $\M$ (they have thus sometimes been called \emph{superchannels}~\cite{gour19,quintino19a}).
This condition imposes some ``validity constraints'' on the allowed process matrices $W$---namely, that they must belong to some particular subspace $\L_V$ of $\L(\HS^{PA_\N^{IO}F})$, and be normalised such that $\Tr W = d_P ( \Pi_{k\in\N} \, d_k^O)$~\cite{oreshkov12,araujo15,oreshkov16,araujo17}; see Appendix~\ref{app:W_valid_matrices}.
By default, by ``process matrices'' we will refer to such deterministic ones---as will be considered in Secs.~\ref{sec:QCFOs}--\ref{sec:QCQCs} below. 
One may, however, also relax these constraints and consider \emph{probabilistic} process matrices which turn TP maps into a trace-nonincreasing induced map, and which may be part of a so-called \emph{quantum superinstrument}~\cite{chiribella13a}---namely, sets of probabilistic process matrices summing up to a deterministic one.
We will consider this possibility further in Sec.~\ref{sec:probQCs}.

\medskip

We emphasise that in the general construction of the process matrix framework, one does not specify \emph{a priori} \emph{how} the $N$ variable operations $\A_k$ are to be connected, and how these are causally related. 
In fact, while certain process matrices describe some clear causal connections, the framework also allows for process matrices which are incompatible with any well-defined causal structure between the $N$ operations~\cite{oreshkov12}. 
Some of these process matrices (like, e.g., that of the ``quantum switch'' mentioned in the introduction~\cite{chiribella13}) can be understood as exhibiting some kind of quantum superposition, or quantum coherent control, of causal orders.
In general, however, it has proven unclear how to interpret causally indefinite process matrices or, indeed, to determine which such processes can be given an interpretation of this (or any other) kind.

In the present paper, we study several different classes of process matrices for which one can give a clear interpretation for the underlying causal relations.
These classes can be described as types of generalised quantum circuits defining CCP quantum supermaps, into which the free, ``external'' operations $\A_k$ can be ``plugged in'' in either a fixed, a classically-controlled, or a coherently-controlled causal order.
This latter possibility can notably lead to causally indefinite process matrices, defining a broad, new class of such supermaps which, by construction, can be meaningfully interpreted.
For each type of circuit, we will calculate the induced global map $M$ as a function of the operations $A_k$ (in their Choi representations), and write their dependency in the form of Eq.~\eqref{eq:choi_map_M}, so as to identify the process matrix $W$ that describes them---noting that as the $\A_k$'s can be any CP maps, then $A_1 \otimes \cdots \otimes A_N$ spans the whole space of Hermitian matrices in $\L(\HS^{A_\N^{IO}})$, so that the Hermitian matrix $W$ that gives the correct induced map $\M$, or $M$ for all possible $A_1, \ldots, A_N$ via Eq.~\eqref{eq:choi_map_M}, is unique.

\section{Quantum circuits with fixed causal order}
\label{sec:QCFOs}

Quantum circuits with fixed causal order (QC-FOs) have been studied in detail before, often under the name of ``quantum combs''~\cite{chiribella08,chiribella09}.
Here we simply recall their description (Proposition~\ref{prop:descr_W_QCFO}) and characterisation (Proposition~\ref{prop:charact_W_QCFO}) in terms of process matrices so as to make the paper self-contained and to set the stage for the study of quantum circuits without a fixed causal order.

\subsection{Description}

\begin{figure*}[t]
	\begin{center}
	\includegraphics[scale=.75]{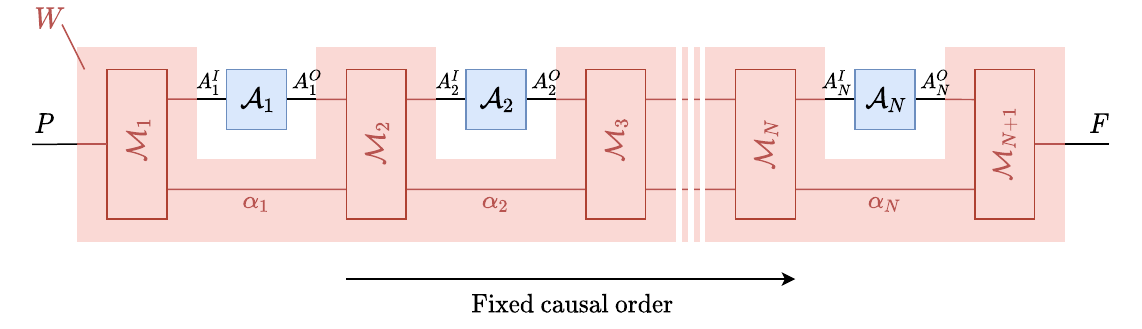}
	\end{center}
	\caption{A quantum circuit with fixed causal order (or, equivalently, a ``quantum comb''~\cite{chiribella08,chiribella09}), here shown with the order of operations $(\A_1, \A_2, \ldots, \A_N)$. 
	Its process matrix representation is given by $W = M_1 * M_2 * \cdots * M_{N+1}$, in terms of the Choi matrices $M_n$ of the internal circuit operations $\M_n$, as in Proposition~\ref{prop:descr_W_QCFO}.}
	\label{fig:QCFO}
\end{figure*}

We thus consider a quantum circuit with $N$ ``open slots'' into which the CP maps $\A_1, \ldots, \A_N$ are placed in a fixed order (so as to define the global map $\M$, as described above). 
We will denote, for example, the ordering in which $\A_1$ is applied first, then $\A_2$, etc., as $(\A_1, \A_2, \ldots, \A_N)$.
A QC-FO connects these ``external'' CP maps through ``internal'' quantum operations $\M_1,\dots,\M_{N+1}$ that take the output of each external map to the input of the subsequent one, as shown in Fig.~\ref{fig:QCFO}.
These internal circuit operations may involve additional ancillary systems or ``memories'' that are entangled with the ``target systems'' that the external CP maps act upon. 
For the moment, we consider deterministic circuits that do not themselves produce random transformations. 
The internal circuit operations $\M_n$ must therefore preserve the trace of their input states, i.e., they must be CPTP maps. 

More specifically, the circuit initially applies a CPTP map $\M_1: \L(\HS^P) \to \L(\HS^{A_1^I\alpha_1})$ which takes the circuit's input in the global past $\HS^P$ and outputs a state in the input Hilbert space $\HS^{A_1^I}$ of the first operation $\A_1$ (the target system), which in general may be entangled with an ancillary system in some Hilbert space $\HS^{\alpha_1}$.
Then, for $1 \le n \le N-1$, the output state of each external CP map $\A_n$ in the Hilbert space $\HS^{A^O_n}$ and the ancillary system in $\HS^{\alpha_n}$ are jointly mapped to the input Hilbert space $\HS^{A_{n+1}^I}$ of $\A_{n+1}$ and an ancillary system in some Hilbert space $\HS^{\alpha_{n+1}}$ by a CPTP map $\M_{n+1}: \L(\HS^{A_n^O\alpha_n}) \to \L(\HS^{A_{n+1}^I\alpha_{n+1}})$.
Finally, after the last operation $\A_{N}$, a CPTP map $\M_{N+1}: \L(\HS^{A_N^O\alpha_N}) \to \L(\HS^F)$ takes the output state of $\A_N$ in $\HS^{A_{N}^O}$, together with the ancillary state in $\HS^{\alpha_N}$, to the global output state of the full circuit in the global future $\HS^F$.
The maps $\M_1$, $\M_{n+1}$, and $\M_{N+1}$ above have Choi representations  $M_1 \in \L(\HS^{PA_1^I\alpha_1})$, $M_{n+1} \in \L(\HS^{A_n^O\alpha_nA_{n+1}^I\alpha_{n+1}})$, and $M_{N+1} \in \L(\HS^{A_N^O\alpha_NF})$, respectively.

\medskip

Let us elaborate further on the trace-preservation conditions we impose on the internal circuit operations $\M_n$.
As mentioned, these should preserve the trace of their input states; note however that we only require this for their \emph{possible} input states---i.e., not necessarily for their full input spaces $\L(\HS^{A_{n-1}^O\alpha_{n-1}})$, but only for its subspace that can actually be populated following the internal and external circuit operations previously applied.
Indeed if, for instance, a subspace of $\HS^{\alpha_{n-1}}$ is never populated by the previous internal operation $\M_{n-1}$, then we do not care about how $\M_n$ acts on that subspace.

It is in this relaxed sense, restricted to the possibly populated input spaces---which we shall call the \emph{effective input spaces}---that the TP conditions are to be understood throughout the paper.%
\footnote{Note that any map that is TP in this relaxed sense can always be artificially extended to a map that is TP on its full input space. 
Hence, it would be equivalent to impose here that the internal operations $\M_n$ are ``fully TP'' (as is usually done~\cite{chiribella08,chiribella09}). 
It will simplify matters here, however, to only require this ``effective TP'' condition: for the cases of quantum circuits with classical or quantum control of causal orders, imposing that the internal operations are fully TP would introduce unnecessary complications.} 
In the present case of QC-FOs, we show in Appendix~\ref{app:proof_charact_QCFOs} that these TP conditions can be expressed as the following constraints on the operations' Choi matrices:
\begin{align}
& \Tr_{A_1^I\alpha_1} M_1 = \id^P, \label{eq:TP_constr_QCFO_1} \\[1mm]
& \forall \, n = 1, \ldots, N{-}1, \ \Tr_{A_{n+1}^I\alpha_{n+1}} (M_1 * \cdots * M_n * M_{n+1}) \notag \\[-1mm]
& \hspace{29mm} = \Tr_{\alpha_n} (M_1 * \cdots * M_n) \otimes \id^{A_n^O}, \label{eq:TP_constr_QCFO_n} \\[1mm]
& \textup{and} \quad \Tr_F (M_1 * \cdots * M_N * M_{N+1}) \notag \\[-1mm]
& \hspace{25mm} = \Tr_{\alpha_N} (M_1 * \cdots * M_N) \otimes \id^{A_N^O}, \label{eq:TP_constr_QCFO_N}
\end{align}
which are in general weaker than (and indeed implied by) the TP assumptions applied to the full input spaces of the operations $\M_n$ (which can be written as $\Tr_{A_{n+1}^I\alpha_{n+1}} M_{n+1} = \id^{A_n^O\alpha_n}$ for $n = 1, \ldots, N{-}1$, and $\Tr_F M_{N+1} = \id^{A_N^O\alpha_N}$).

\medskip

The previous description of the process represented in Fig.~\ref{fig:QCFO}, with the internal circuit operations satisfying the TP constraints of Eqs.~\eqref{eq:TP_constr_QCFO_1}--\eqref{eq:TP_constr_QCFO_N}, formally defines what we call a \emph{Quantum Circuit with Fixed causal Order} (QC-FO).
These processes are indeed ``standard'' quantum circuits and, as shown in Refs.~\cite{chiribella08,chiribella09}, are the most general CCP quantum supermaps (obtained with an  ``axiomatic approach'') that respect the fixed causal order $(\A_1, \A_2, \ldots, \A_N)$, i.e., that do not allow for any signalling ``from the future to the past''~\cite{kretschmann05}. 
More precisely, this means that for any $n$, the output state following $\A_n$ (i.e., the target system in $\HS^{A_n^O}$) does not depend on the external operations $\A_{n+1},\dots,\A_N$ applied ``later'' in the circuit.

\medskip

Let us now consider how to obtain the description of a QC-FO as a process matrix.
Recall firstly that the Choi matrix of the sequential composition of quantum operations is obtained by link-multiplying the composite operations.
Here, the Choi matrix of the induced global map $\M: \L(\HS^P) \to \L(\HS^F)$ is thus
\begin{align}
M & = M_1 * A_1 * M_2 * \cdots * M_N * A_N * M_{N+1} \notag \\
& = (A_1 \otimes \cdots \otimes A_N) * ( M_1 * M_2 * \cdots * M_N * M_{N+1} ) \notag \\
& \hspace{20mm} \in \L(\HS^{PF}), \label{eq:induced_M_QCFO}
\end{align}
where in the second line we used the commutativity and associativity of the link product, and the fact that it reduces to tensor products for non-overlapping Hilbert spaces, to write it in the form of Eq.~\eqref{eq:choi_map_M}.
This allows us to identify the process matrix $W$ as the second term in parentheses above, and which, as noted at the end of Sec.~\ref{sec:QC_supermaps}, is moreover unique.
This thus proves the following:

\begin{proposition}[Process matrix description of QC-FOs] \label{prop:descr_W_QCFO}
The process matrix corresponding to the quantum circuit of Fig.~\ref{fig:QCFO}, with the fixed causal order $(\A_1, \A_2, \ldots, \A_N)$, is
\begin{align}
W = M_1 * M_2 * \cdots * M_N * M_{N+1} \quad \in \ \L(\HS^{PA_\N^{IO} F}). \label{eq:W_QCFO}
\end{align}
\end{proposition}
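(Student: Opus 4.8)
The plan is to compute the Choi matrix $M$ of the induced global map $\M$ directly from the circuit structure of Fig.~\ref{fig:QCFO}, and then to read off $W$ by matching against the defining relation \eqref{eq:choi_map_M} and invoking the uniqueness remarked upon at the end of Sec.~\ref{sec:QC_supermaps}.

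First I would observe that the circuit of Fig.~\ref{fig:QCFO} realises $\M$ as a strictly sequential composition of the internal maps $\M_1, \M_2, \ldots, \M_{N+1}$ interleaved with the external maps $\A_1, \ldots, \A_N$, with the ancillary wires $\HS^{\alpha_n}$ threaded through. Applying the composition rule for Choi matrices \eqref{eq:compose_link_prod_mixed} iteratively---namely, that the Choi matrix of a sequential composition is the link product of the constituent Choi matrices---yields
\begin{align}
M = M_1 * A_1 * M_2 * \cdots * M_N * A_N * M_{N+1}. \notag
\end{align}
To make sense of this $(2N{+}1)$-fold link product I would check that it is well-defined and associative, via the criterion of Sec.~\ref{subsubsec:link_product} that each elementary Hilbert-space factor appear at most twice across all operators. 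This is pure bookkeeping: $\HS^P$ occurs only in $M_1$ and $\HS^F$ only in $M_{N+1}$; each $\HS^{\alpha_n}$ occurs exactly in $M_n$ and $M_{n+1}$; each $\HS^{A_n^I}$ occurs in $M_n$ and $A_n$; and each $\HS^{A_n^O}$ occurs in $A_n$ and $M_{n+1}$. Every factor thus appears at most twice, so the expression is unambiguous.

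Next I would exploit commutativity and associativity to regroup the factors, separating the external operations from the internal ones. Since the $A_k$ act on pairwise-disjoint spaces $\HS^{A_k^{IO}}$, their mutual link product reduces to a tensor product, and I can collect them to the front, obtaining
\begin{align}
M = (A_1 \otimes \cdots \otimes A_N) * ( M_1 * M_2 * \cdots * M_{N+1} ), \notag
\end{align}
which is exactly \eqref{eq:induced_M_QCFO}. Comparing this with the defining equation $M = (A_1 \otimes \cdots \otimes A_N) * W$ of \eqref{eq:choi_map_M}, and recalling that this must hold for \emph{all} CP maps $\A_k$---so that the operators $A_1 \otimes \cdots \otimes A_N$ span the full Hermitian space on $\HS^{A_\N^{IO}}$ and $W$ is thereby uniquely determined---I would conclude $W = M_1 * M_2 * \cdots * M_{N+1}$.

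The routine but most error-prone step is the regrouping in the second display: one must verify that pulling the $A_k$ to the left does not improperly merge or contract any of the internal wires. The cleanest justification is precisely the associativity guaranteed by the ``appears-at-most-twice'' check above, after which commutativity lets the factors be permuted freely. I would also note explicitly that the trace-preservation constraints \eqref{eq:TP_constr_QCFO_1}--\eqref{eq:TP_constr_QCFO_N} play no role in this argument: they govern the \emph{validity} of $W$ as a process matrix, whereas the proposition concerns only the purely algebraic form of $W$, obtained above by composition and uniqueness.
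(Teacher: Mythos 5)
Your proposal is correct and follows essentially the same route as the paper: compute the Choi matrix of the induced map as the link product $M_1 * A_1 * M_2 * \cdots * A_N * M_{N+1}$, regroup via commutativity/associativity into the form $(A_1 \otimes \cdots \otimes A_N) * (M_1 * \cdots * M_{N+1})$, and identify $W$ by uniqueness as in Eq.~\eqref{eq:choi_map_M}. Your explicit ``appears-at-most-twice'' bookkeeping and the remark that the TP conditions are irrelevant here are sound elaborations of points the paper leaves implicit, but they do not change the argument.
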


We note that this coincides precisely with the description of quantum combs given in Refs.~\cite{chiribella08,chiribella09}.

\subsection{Characterisation}

This description of QC-FOs allows us to obtain the following characterisation of their process matrices.

\begin{proposition}[Characterisation of QC-FOs] \label{prop:charact_W_QCFO}
For a given matrix $W \in \L(\HS^{PA_\N^{IO} F})$, let us define the reduced matrices (for $1 \le n \le N$, and relative to the fixed order $(\A_1, \A_2, \ldots, \A_N)$) $W_{(n)} \coloneqq \frac{1}{d_n^O d_{n+1}^O \cdots d_N^O}\Tr_{A_n^OA_{\{n+1,\ldots,N\}}^{IO}F} W \in \L(\HS^{PA_{\{1,\ldots,n-1\}}^{IO}A_n^I})$.

The process matrix $W \in \L(\HS^{PA_\N^{IO} F})$ of a quantum circuit with the fixed causal order $(\A_1, \A_2, \ldots, \A_N)$ is a positive semidefinite matrix such that its reduced matrices $W_{(n)}$ just defined satisfy
\begin{align}
& \Tr_{A_1^I} W_{(1)} = \id^P, \notag \\
& \forall \, n = 1, \ldots, N-1, \quad \Tr_{A_{n+1}^I} W_{(n+1)} = W_{(n)} \otimes \id^{A_n^O}, \notag \\
& \textup{and} \quad \Tr_F W = W_{(N)} \otimes \id^{A_N^O}. \label{eq:charact_W_QCFO}
\end{align}

Conversely, any positive semidefinite matrix $W \in \L(\HS^{PA_\N^{IO} F})$ whose reduced matrices $W_{(n)}$ satisfy the constraints of Eq.~\eqref{eq:charact_W_QCFO} is the process matrix of a quantum circuit with the fixed causal order $(\A_1, \A_2, \ldots, \A_N)$.
\end{proposition}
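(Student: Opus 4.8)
The statement has two directions, and I would prove each using the comb description $W = M_1 * M_2 * \cdots * M_{N+1}$ from Proposition~\ref{prop:descr_W_QCFO}. Positivity of $W$ in the forward direction is immediate: each $M_n$ is the Choi matrix of a CP map, hence positive semidefinite, and the link product of positive semidefinite matrices is positive semidefinite (as noted in Sec.~\ref{subsubsec:link_product}), so $W \succeq 0$; the reduced matrices $W_{(n)}$, being partial traces of $W$ up to a positive factor, are likewise positive semidefinite. The heart of the forward direction is the bridging identity $W_{(n)} = \Tr_{\alpha_n}(M_1 * \cdots * M_n)$ for $1 \le n \le N$, which relates the reduced process matrices to the partial link products and lets me read off Eq.~\eqref{eq:charact_W_QCFO} directly from the trace-preservation constraints Eqs.~\eqref{eq:TP_constr_QCFO_1}--\eqref{eq:TP_constr_QCFO_N} already established for the internal operations.

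I would prove this identity by downward induction on $n$, writing $P_n \coloneqq M_1 * \cdots * M_n$. For the base case $n = N$, the definition of $W_{(N)}$ together with Eq.~\eqref{eq:TP_constr_QCFO_N} gives $\Tr_{A_N^O F} W = \Tr_{A_N^O}\big(\Tr_{\alpha_N} P_N \otimes \id^{A_N^O}\big) = d_N^O\,\Tr_{\alpha_N} P_N$, and dividing by $d_N^O$ yields $W_{(N)} = \Tr_{\alpha_N} P_N$. For the inductive step, a regrouping of the partial traces in the definition of the reduced matrices shows $W_{(n)} = \frac{1}{d_n^O}\Tr_{A_n^O A_{n+1}^I} W_{(n+1)}$; feeding in $W_{(n+1)} = \Tr_{\alpha_{n+1}} P_{n+1}$ and then using Eq.~\eqref{eq:TP_constr_QCFO_n} to evaluate $\Tr_{A_{n+1}^I \alpha_{n+1}} P_{n+1} = \Tr_{\alpha_n} P_n \otimes \id^{A_n^O}$ gives $W_{(n)} = \Tr_{\alpha_n} P_n$, closing the induction. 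With the identity in hand, the three lines of Eq.~\eqref{eq:charact_W_QCFO} follow by applying the remaining partial trace in each case: $\Tr_{A_1^I} W_{(1)} = \Tr_{A_1^I \alpha_1} M_1 = \id^P$ from Eq.~\eqref{eq:TP_constr_QCFO_1}; $\Tr_{A_{n+1}^I} W_{(n+1)} = W_{(n)} \otimes \id^{A_n^O}$ from Eq.~\eqref{eq:TP_constr_QCFO_n}; and $\Tr_F W = W_{(N)} \otimes \id^{A_N^O}$ from Eq.~\eqref{eq:TP_constr_QCFO_N}.

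For the converse I would construct the internal operations explicitly, adapting the canonical realisation of quantum combs. Since $W_{(1)} \succeq 0$ and $\Tr_{A_1^I} W_{(1)} = \id^P$, the matrix $W_{(1)}$ is the Choi matrix of a CPTP map $\L(\HS^P) \to \L(\HS^{A_1^I})$, and I take $\M_1$ to be a Stinespring dilation of it, producing a memory $\HS^{\alpha_1}$ that purifies $W_{(1)}$, so that $\Tr_{\alpha_1} M_1 = W_{(1)}$. Proceeding recursively, supposing $M_1 * \cdots * M_n$ has been built with $\Tr_{\alpha_n}(M_1 * \cdots * M_n) = W_{(n)}$ while carrying a purifying memory, the constraint $\Tr_{A_{n+1}^I} W_{(n+1)} = W_{(n)} \otimes \id^{A_n^O}$ is exactly the trace-preservation (on the effective input space) condition that lets me dilate the passage from $W_{(n)}$ to $W_{(n+1)}$ into a CPTP map $\M_{n+1} : \L(\HS^{A_n^O \alpha_n}) \to \L(\HS^{A_{n+1}^I \alpha_{n+1}})$ with $\Tr_{\alpha_{n+1}}(M_1 * \cdots * M_{n+1}) = W_{(n+1)}$; the last constraint $\Tr_F W = W_{(N)} \otimes \id^{A_N^O}$ similarly yields the final CPTP map $\M_{N+1}$ to $\HS^F$, which releases the relevant part of the memory into the global future and traces out the residual purifying reference. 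By construction the link product $M_1 * \cdots * M_{N+1}$ then reproduces $W$.

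The main obstacle is this converse construction, and specifically ensuring that the recursion telescopes to $W$ \emph{exactly} rather than merely reproducing the reduced matrices $W_{(n)}$, which do not determine $W$ on their own. This forces the memory $\HS^{\alpha_n}$ to carry forward all the correlations between past and future (a purification, rather than just the reduced state), and requires checking that each dilation step can indeed be taken coherently -- the key point being that the normalisation constraints of Eq.~\eqref{eq:charact_W_QCFO} are precisely the conditions guaranteeing that every internal map is trace-preserving on its effective input space. The forward direction, by contrast, is essentially a bookkeeping of partial traces once the bridging identity is established.
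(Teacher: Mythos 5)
Your proposal is correct and takes essentially the same approach as the paper's proof (Appendix~\ref{app:proof_charact_QCFOs}): the necessary condition via the bridging identity $W_{(n)} = \Tr_{\alpha_n}(M_1 * \cdots * M_n)$, established by the same downward recursion and combined with the TP conditions of Eqs.~\eqref{eq:TP_constr_QCFO_1}--\eqref{eq:TP_constr_QCFO_N}, and the converse via the same inductive purification-chain construction of the internal operations. The one step you assert rather than prove---that the constraint $\Tr_{A_{n+1}^I} W_{(n+1)} = W_{(n)} \otimes \id^{A_n^O}$ lets the passage from one purification to the next be dilated coherently into a CPTP map---is precisely what the paper makes explicit through Observation~\ref{observation:nec_cond_range} (giving the range condition $\ket{w_{(n+1)}} \in \range(W_{(n)}) \otimes \HS^{A_n^O A_{n+1}^I \alpha_{n+1}}$) and Lemma~\ref{lemma:invert_link_product} (link-product inversion via the Moore--Penrose pseudoinverse, yielding the single Kraus operator $\dket{V_{n+1}} = \ket{w_{(n)}^+} * \ket{w_{(n+1)}}$ satisfying $\ket{w_{(n)}} * \dket{V_{n+1}} = \ket{w_{(n+1)}}$), so your appeal to Stinespring dilation/purification uniqueness fills the same role and the difference is one of explicitness rather than of approach.
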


Equivalent results were already proved in Refs.~\cite{gutoski06,chiribella09}.
We give a self-contained proof in Appendix~\ref{app:proof_charact_QCFOs}, and here simply outline the proof approach.

To prove the first direction (the necessary condition), one needs simply to note that, for a QC-FO as described above, the reduced matrices defined above are of the form $W_{(n)} = \Tr_{\alpha_n} (M_1 * \cdots * M_n)$ and, according to Eqs.~\eqref{eq:TP_constr_QCFO_1}--\eqref{eq:TP_constr_QCFO_N} indeed satisfy Eq.~\eqref{eq:charact_W_QCFO}.
Note that Eq.~\eqref{eq:charact_W_QCFO} implies that $W$ satisfies the validity constraints for process matrices (cf.\ Appendix~\ref{app:W_valid_matrices}).

For the second direction (the sufficient condition), we provide an explicit construction: 
for a given $W$ whose reduced matrices $W_{(n)}$ satisfy Eq.~\eqref{eq:charact_W_QCFO}, we construct CPTP maps $\M_n$ (with Choi matrices $M_n$ obtained from the reduced matrices) which, for $1 \le n \le N$, act as isometries on their effective input spaces, and whose link product gives $W$ as in Eq.~\eqref{eq:W_QCFO}. 
That is, given such a $W$, we provide a way to explicitly construct the corresponding QC-FO. 
Note that this realisation is not unique, and different circuits may be described by the same process matrix. 
Moreover, a process matrix of this class may be compatible with different fixed causal orders.

\medskip

The description we gave of QC-FOs includes, as a specific case, the situation where the CP maps $\A_n$ (or just some of them) are used in parallel. 
The parallel composition of CP maps is equivalent to their composition in an arbitrary fixed order, with internal circuit operations in between that send the different input systems to the respective CP maps one at a time, while passing on the outputs of the preceding CP maps, as well as the inputs of the subsequent ones, via some ancillary systems; see the second explicit example below, and Appendix~\ref{app:QCPARs} for further details.
For completeness and ease of reference, let us state here how the process matrix characterisation of Proposition~\ref{prop:charact_W_QCFO} simplifies for such quantum circuits with operations used in parallel (QC-PARs).

\begin{proposition}[Characterisation of QC-PARs] \label{prop:charact_W_QCPAR}
The process matrix of a quantum circuit with operations used in parallel is a positive semidefinite matrix $W \in \L(\HS^{PA_\N^{IO} F})$ such that
\begin{align}
\Tr_F W = W_{(I)} \otimes \id^{A_\N^O} \quad \textup{with} \quad \Tr_{A_\N^I} W_{(I)} = \id^P \label{eq:charact_W_QCPAR}
\end{align}
for some matrix $W_{(I)} \in \L(\HS^{PA_\N^I})$.

Conversely, any positive semidefinite matrix $W \in \L(\HS^{PA_\N^{IO} F})$ satisfying Eq.~\eqref{eq:charact_W_QCPAR} is the process matrix of a quantum circuit with operations used in parallel.
\end{proposition}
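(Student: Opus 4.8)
The plan is to reduce the claim to the single-slot special case of Proposition~\ref{prop:charact_W_QCFO}, by merging the $N$ parallel operations into a single ``super-operation''. A quantum circuit with operations used in parallel is, by definition, one that first applies an internal CPTP map $\M_I : \L(\HS^P) \to \L(\HS^{A_\N^I\alpha})$ preparing all input systems $\HS^{A_\N^I}$ (together with an ancilla in some $\HS^\alpha$), then applies the $N$ external operations independently and simultaneously---i.e.\ their tensor product $\A_1 \otimes \cdots \otimes \A_N : \L(\HS^{A_\N^I}) \to \L(\HS^{A_\N^O})$, acting as the identity on $\HS^\alpha$---and finally applies an internal CPTP map $\M_F : \L(\HS^{A_\N^O\alpha}) \to \L(\HS^F)$ collecting all outputs. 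Since the $N$ operations are never linked to one another through the circuit, but only jointly to $\M_I$ before and $\M_F$ after, this is nothing but a QC-FO with a single slot whose input and output Hilbert spaces are $\HS^{A_\N^I}$ and $\HS^{A_\N^O}$, respectively, with process matrix $W = M_I * M_F$ (Proposition~\ref{prop:descr_W_QCFO} with $N=1$).

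For the necessary condition I would simply apply Proposition~\ref{prop:charact_W_QCFO} to this single-slot circuit. Specialising Eq.~\eqref{eq:charact_W_QCFO} to $N=1$, with $A_1^I, A_1^O$ replaced by the merged spaces $A_\N^I, A_\N^O$, makes the recursive middle constraints vacuous and leaves exactly $W \succeq 0$, $\Tr_F W = W_{(I)} \otimes \id^{A_\N^O}$ and $\Tr_{A_\N^I} W_{(I)} = \id^P$, with $W_{(I)} = \frac{1}{\prod_k d_k^O}\Tr_{A_\N^O F} W$. These are precisely the constraints of Eq.~\eqref{eq:charact_W_QCPAR}; note that this explicit normalisation of $W_{(I)}$ is forced by, and hence consistent with, the requirement $\Tr_F W = W_{(I)} \otimes \id^{A_\N^O}$ upon tracing out $A_\N^O$.

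For the converse, given any $W \succeq 0$ satisfying Eq.~\eqref{eq:charact_W_QCPAR}, I would read these same constraints as the $N=1$ instance of Eq.~\eqref{eq:charact_W_QCFO} and invoke the sufficiency direction of Proposition~\ref{prop:charact_W_QCFO} to obtain internal CPTP maps $\M_I$ and $\M_F$ (with some ancilla $\HS^\alpha$) such that $W = M_I * M_F$. Inserting the $N$ external operations into the single merged slot---i.e.\ filling it with $\A_1 \otimes \cdots \otimes \A_N$---then exhibits $W$ as the process matrix of a circuit in which the operations are genuinely used in parallel, as required.

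The main point to get right is the justification of the merging step, which is conceptual rather than computational. One must check that viewing the slot as a single super-operation does not change the process matrix: because $f$ is multilinear and the products $A_1 \otimes \cdots \otimes A_N$ span the whole Hermitian part of $\L(\HS^{A_\N^{IO}})$ (as noted at the end of Sec.~\ref{sec:QC_supermaps}), the unique $W$ reproducing the induced map via Eq.~\eqref{eq:choi_map_M} is the same whether the single slot is filled by an arbitrary CPTP map $\HS^{A_\N^I} \to \HS^{A_\N^O}$ or by product operations, so the characterisation of the merged single-slot comb transfers verbatim. I would also note that the effective trace-preservation conventions used in Proposition~\ref{prop:charact_W_QCFO} carry over unchanged, so that the converse construction yields maps that are trace-preserving on their effective input spaces. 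The equivalence of this merged description with the alternative realisation as a fixed-order circuit with ``pass-through'' internal operations (deferred to Appendix~\ref{app:QCPARs}) is not needed for the proof itself.
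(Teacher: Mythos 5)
Your proof is correct and takes essentially the same route as the paper's own proof in Appendix~\ref{app:QCPARs}: there, too, the QC-PAR is treated as a one-slot comb on the merged spaces $\HS^{A_\N^I}$, $\HS^{A_\N^O}$, with the necessary direction read off the TP conditions (setting $W_{(I)} = \Tr_\alpha M_P$) and the converse obtained by the same spectral-decomposition construction that underlies the $N=1$ sufficiency proof of Proposition~\ref{prop:charact_W_QCFO}---which the paper simply writes out explicitly for $M_P$ and $M_F$ rather than invoking the proposition as a black box. Your justification of the merging step (product operations $A_1 \otimes \cdots \otimes A_N$ span $\L(\HS^{A_\N^{IO}})$, so the merged single-slot process matrix coincides with the unique $N$-partite one) is precisely the uniqueness observation from the end of Sec.~\ref{sec:QC_supermaps} that the paper's identification $W = M_P * M_F$ implicitly relies on, and you are right that the ``pass-through'' QC-FO embedding is not needed for the proof itself.
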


A proof of this proposition, as well as a more detailed exposition of QC-PARs, are given in Appendix~\ref{app:QCPARs}.

\subsection{Examples}
\label{subsec:QCFO_examples}

As a simple example of a QC-FO, consider a process in which two CP maps $\A_1$ and $\A_2$ are applied successively to the input state from the global past, and then the output is sent to the global future; see Fig.~\ref{fig:W_PA1A2F}. 
This scenario corresponds to a QC-FO with the order $(\A_1, \A_2)$, with internal circuit operations that are (clearly TP) identity channels (between isomorphic Hilbert spaces $\HS^P$ and $\HS^{A_1^I}$, $\HS^{A_1^O}$ and $\HS^{A_2^I}$, and $\HS^{A_2^O}$ and $\HS^F$, with Choi matrices of the form $\dketbra{\id}{\id}^{XY}$), and that do not involve additional ancillary systems. 
The corresponding process matrix, as per Proposition~\ref{prop:descr_W_QCFO}, is
\begin{equation}
    \label{eq:W_PA1A2F}
    W_{P \text{\scalebox{.4}[1]{$\to$}} A_1 \text{\scalebox{.4}[1]{$\to$}} A_2 \text{\scalebox{.4}[1]{$\to$}} F} = \dketbra{\id}{\id}^{P A^I_1} \otimes \dketbra{\id}{\id}^{A^O_1 A^I_2} \otimes \dketbra{\id}{\id}^{A^O_2 F} ,
\end{equation}
and it is straightforward to verify that it satisfies the characterisation of Proposition~\ref{prop:charact_W_QCFO}.

\begin{figure}[t]
\begin{center}
\includegraphics[width=\columnwidth]{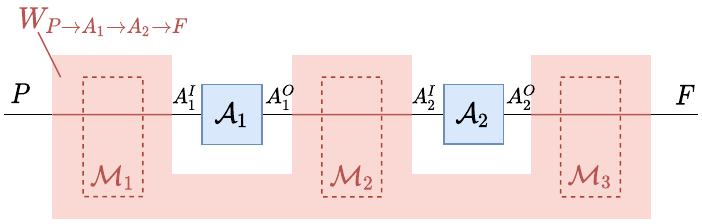}
\caption{
A QC-FO applying the CP maps $\A_1$ and $\A_2$ successively to a system initially provided in the global past $\HS^P$. The internal operations $\M_1,\M_2,\M_3$ are simply identity channels between the respective Hilbert spaces (cf.\ Fig.~\ref{fig:QCFO}).
}
\label{fig:W_PA1A2F}
\end{center}
\end{figure}

Another example is a scenario where a bipartite state is prepared in the global past and sent (via identity channels) in parallel to $\A_1$ and $\A_2$, whose outputs are then sent (again via identity channels) to the global future; see Fig.~\ref{fig:W_parallel}. 
Here the past Hilbert space decomposes as $\HS^P = \HS^{P_1} \otimes \HS^{P_2}$, with each $\HS^{P_k}$ isomorphic to $\HS^{A_k^I}$, and the future Hilbert space decomposes as $\HS^F = \HS^{F_1} \otimes \HS^{F_2}$, with each $\HS^{F_k}$ isomorphic to $\HS^{A_k^O}$. 
The corresponding ``parallel'' process matrix is
\begin{equation}
     W_{\parallel} = \dketbra{\id}{\id}^{P_1 \!A^I_1} \!\otimes\! \dketbra{\id}{\id}^{P_2 \!A^I_2} \!\otimes\!  \dketbra{\id}{\id}^{A^O_1 \!F_1} \!\otimes\! \dketbra{\id}{\id}^{A^O_2 \!F_2}\!\!. \label{eq:W_parallel}
\end{equation}
$W_{\parallel}$ is the process matrix of a QC-PAR, as can be verified from Proposition~\ref{prop:charact_W_QCPAR}.
It is thus also a QC-FO, compatible with both orders $(\A_1, \A_2)$ and $(\A_2, \A_1)$ (and satisfies Proposition~\ref{prop:charact_W_QCFO} for both orders). 
Indeed, a realisation of $W_{\parallel}$ as a QC-FO conforming to the description above with the causal order $(\A_1, \A_2)$ is given through the circuit operations (in their Choi representation) $M_1 = \dketbra{\id}{\id}^{P_1 A^I_1} \otimes \dketbra{\id}{\id}^{P_2 \alpha_1}$, $M_2 = \dketbra{\id}{\id}^{A^O_1 \alpha_2} \otimes \dketbra{\id}{\id}^{\alpha_1 A^I_2}$ and $M_3 = \dketbra{\id}{\id}^{A^O_2 F_2} \otimes \dketbra{\id}{\id}^{\alpha_2 F_1}$, by introducing some ancillary Hilbert spaces $\HS^{\alpha_1}$ isomorphic to $\HS^{P_2}$ and $\HS^{A_2^I}$, and $\HS^{\alpha_2}$ isomorphic to $\HS^{A_1^O}$ and $\HS^{F_1}$ (see Fig.~\ref{fig:W_parallel} and Appendix~\ref{app:QCPARs}). 
A realisation of $W_{\parallel}$ in terms of a QC-FO with the order $(\A_2, \A_1)$ is similarly given by the operations $M'_1 = \dketbra{\id}{\id}^{P_1 \alpha_1'} \otimes \dketbra{\id}{\id}^{P_2 A^I_2}$, $M'_2 = \dketbra{\id}{\id}^{A^O_2\alpha_2'} \otimes \dketbra{\id}{\id}^{\alpha_1' A^I_1}$ and $M'_3 = \dketbra{\id}{\id}^{A^O_1 F_1} \otimes \dketbra{\id}{\id}^{\alpha_2' F_2}$, with now $\HS^{\alpha_1'}$ isomorphic to $\HS^{P_1}$ and $\HS^{A_1^I}$, and $\HS^{\alpha_2'}$ isomorphic to $\HS^{A_2^O}$ and $\HS^{F_2}$. It can easily be checked that $M_1 * M_2 * M_3 = M'_1 * M'_2 * M'_3 = W_{\parallel}$.

\begin{figure}[t]
\begin{center}
\includegraphics[width=\columnwidth]{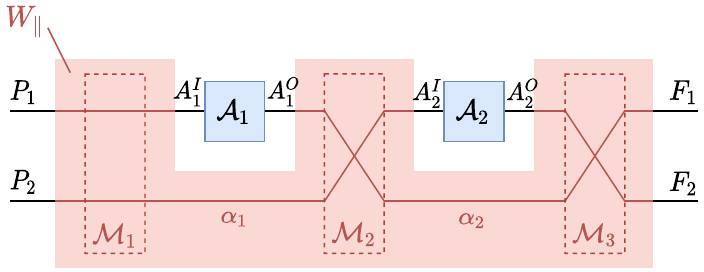}
\caption{A QC-FO which applies the CP maps $\A_1$ and $\A_2$ to the respective parts of a bipartite system prepared in $\HS^{P_1}\otimes \HS^{P_2}$, and then sends the outputs to $\HS^{F_1}\otimes \HS^{F_2}$. The process matrix describing this QC-FO, $W_{\parallel}$, could also be implemented as a QC-FO compatible with the order $\A_2\prec \A_1$, or directly as a QC-PAR (cf.\ Appendix~\ref{app:QCPARs}).
}
\label{fig:W_parallel}
\end{center}
\end{figure}

This example illustrates the fact that a given process matrix may have different realisations, and, more particularly, that process matrices described in the class of QC-FO may be compatible with different causal orders, or even with a parallel composition of the external operations.
Note also that the class of QC-FOs (i.e., quantum circuits compatible with some fixed order) is not convex (in contrast to those compatible with a single fixed order): a convex mixture of process matrices compatible with two different orders may not be compatible with any single fixed order, and thus not describe a QC-FO.

\section{Quantum circuits with classical control of causal order}
\label{sec:QCCCs}

While QC-FOs form an important and well-studied class of quantum supermaps, it is nonetheless a rather restrictive class.
Indeed, there are supermaps which are compatible with a well-defined causal structure (i.e., are causally separable~\cite{oreshkov12,oreshkov16,wechs19}) but which cannot be described as QC-FOs.
This is the case, for instance, of many supermaps representing probabilistic mixtures of QC-FOs with different causal orders, or of processes in which the causal order is established dynamically~\cite{hardy05,baumeler14,oreshkov16,wechs19}.
Here, motivated by a preliminary formulation in Ref.~\cite{oreshkov16}, we present a circuit model encompassing such possibilities, in which the causal order between the $N$ quantum operations $\A_k$ is still well-defined, but not fixed from the outset. 
Instead, in these quantum circuits with classical control of causal order (QC-CCs) it can be established dynamically, with the operations in the past determining the causal order of the operations in the future.
We will show below how to describe QC-CCs in terms of process matrices (Proposition~\ref{prop:descr_W_QCCC}), and characterise the set of process matrices they define (Proposition~\ref{prop:charact_W_QCCC}).

As recalled in Sec.~\ref{subsec:PM_supermaps}, in order for such circuits to define valid quantum supermaps they must be linear in the operations $\A_k$.
It is thus necessary to require that QC-CCs always apply each operation exactly once.
This excludes scenarios, for instance, where certain operations may or may not be applied, depending on the state of some control system~\cite{abbott20,chiribella19,kristjansson19}. 
Thus, only the order, and not the use, of the operations can be controlled classically within the framework considered here.

\subsection{Description}
\label{subsec:QCCC_description}

\begin{figure*}[t]
	\begin{center}
	\includegraphics[scale=.75]{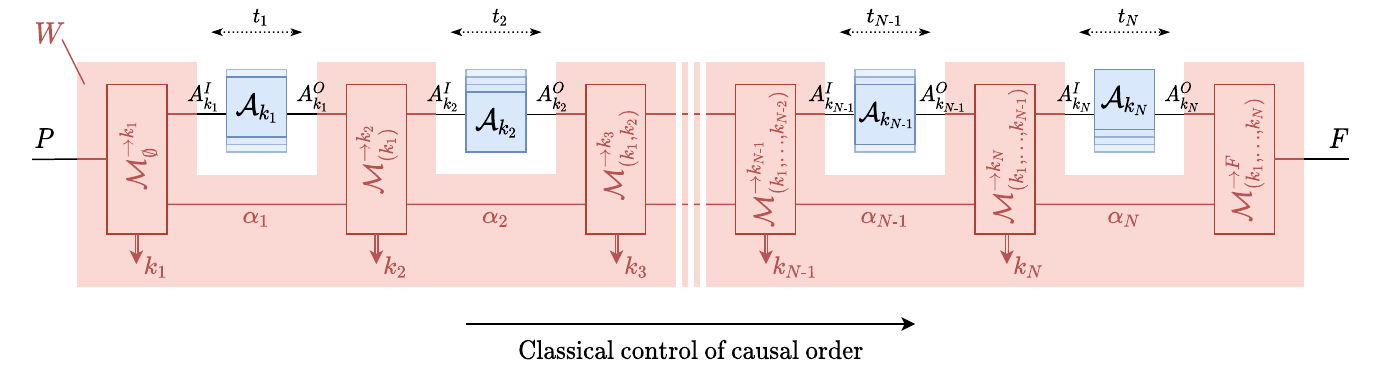}
	\end{center}
	\caption{Quantum circuit with classical control of causal order (QC-CC). The causal order is controlled, and established dynamically, by the outcomes $k_n$ of the internal circuit operations $\M_{(k_1,\ldots,k_{n-1})}^{\to k_n}$, represented by the double-stroke arrows.
	The superimposed boxes $\A_{k_n}$ at each time slot $t_n$ indicate that any of the $N$ external operations $\A_k$ can \emph{a priori} be applied at any time slot; we illustrate here the case where the causal order of operations ends up being $(k_1, k_2, \ldots, k_{N-1}, k_N)$.
	The process matrix $W$ that represents the circuit above is a (classical) combination of the different contributions corresponding to the different (dynamically established) orders $(k_1, \ldots, k_N)$. It is obtained from the Choi matrices $M_{(k_1,\ldots,k_{n-1})}^{\to k_n}$ of the internal circuit operations $\M_{(k_1,\ldots,k_{n-1})}^{\to k_n}$ according to Proposition~\ref{prop:descr_W_QCCC}.}
	\label{fig:QCCCv1}
\end{figure*}

We consider a generalised quantum circuit as represented schematically in Fig.~\ref{fig:QCCCv1}, with $N$ ``open slots'' at different time slots $t_n$ ($1 \le n \le N$). 
At each time slot, one (and only one) operation $\A_k$ will be applied (and each operation $\A_k$ can \emph{a priori} be applied at any time slot $t_n$).%
\footnote{Note that one could also consider $M>N$ time slots, and not apply an external operation at all of them; this would however just amount to introducing some trivial operations (identity channels) to fill the ``empty'' slots. \label{footnote:M_slots}}
Compared to the previous case of QC-FOs, however, precisely which operation is applied at each time slot $t_n$ is not predefined in a QC-CC.
Instead, before the first time slot $t_1$, and between each pair of consecutive time slots $t_n, t_{n+1}$ (for $1 \le n \le N-1$), the circuit applies an internal quantum operation which determines, in particular, which (thus far unused) operation $\A_k$ shall be applied next (while also transforming its input state and, potentially, additional ancillary systems). 
A final internal operation is then applied, taking the output of the operation applied at the last time slot $t_N$ to the output of the circuit in $\HS^F$.

The internal operations thus not only map the output state of the preceding operation to the input state of some subsequent one (together with potential ancillary systems): they now also produce a classical outcome, indicating which is the subsequent external operation to be applied. 
Such operations that keep track of both the classical and the quantum output are called \emph{quantum instruments}~\cite{davies70}. 
Mathematically, a quantum instrument is a collection of CP maps (associated to the different classical outputs), which sum up to a CPTP map. 

\medskip

More precisely, before the first time slot $t_1$, the circuit applies some internal quantum instrument $\{\M_\emptyset^{\to k_1}\}_{k_1 \in \N}$, where each operation $\M_\emptyset^{\to k_1} : \L(\HS^P) \to \L(\HS^{A_{k_1}^I\alpha_1})$, attached to the classical output $k_1$ that ``controls'' which external operation shall be applied first, maps the circuit's input in $\HS^P$ to the incoming space $\HS^{A_{k_1}^I}$ of the operation $\A_{k_1}$ and (possibly) also to some ancillary system in some Hilbert space $\HS^{\alpha_1}$.%
\footnote{Note that the various operations $\M_\emptyset^{\to k_1}$ that form the instrument $\{\M_\emptyset^{\to k_1}\}_{k_1 \in \N}$ (and similarly for the operations $\M_{(k_1,\ldots,k_n)}^{\to k_{n+1}}$ that form the instruments $\{\M_{(k_1,\ldots,k_n)}^{\to k_{n+1}}\}_{k_{n+1} \in \N \backslash \{k_1, \ldots, k_n\}}$ considered subsequently) do not have the same output spaces. This is, however, not a problem as we can formally extend the CP maps to have a common output space (e.g., $\L(\oplus_{k_1\in\N}\HS^{A_{k_1}^I}\otimes\HS^{\alpha_1})$). \label{footnote:different_output_spaces}}
Between the time slots $t_n$ and $t_{n+1}$, for $1 \le n \le N-1$, the circuit applies a quantum instrument $\{\M_{(k_1,\ldots,k_n)}^{\to k_{n+1}}\}_{k_{n+1} \in \N \backslash \{k_1, \ldots, k_n\}}$ conditioned on the sequence $(k_1, \ldots, k_n)$ of operations that have already been performed.%
\footnote{In accordance with the assumption that each operation can only be applied once, all sequences $(k_1, \ldots, k_n)$ we shall write assume that all $k_i$'s ($1 \le i \le n$) are different; for $n=N$, a sequence $(k_1, \ldots, k_N)$ shall thus contain each operation label $k_i \in \N$ once and only once. When we write the $k_i$'s within parentheses as in $(k_1, \ldots, k_n)$, their order matters (as opposed to $\{k_1, \ldots, k_n\}$ which denotes an unordered set).}
Each operation $\M_{(k_1,\ldots,k_n)}^{\to k_{n+1}} : \L(\HS^{A_{k_n}^O \alpha_n}) \to \L(\HS^{A_{k_{n+1}}^I \alpha_{n+1}})$, attached to the classical output $k_{n+1}$ indicating the next operation to apply, takes the output system of the last performed operation $\A_{k_n}$, together with the ancillary system in $\HS^{\alpha_n}$, to the incoming space $\HS^{A_{k_{n+1}}^I}$ of some yet unperformed operation $\A_{k_{n+1}}$ (hence with $k_{n+1} \in \N \backslash \{k_1, \ldots, k_n\}$) and an ancillary system in some Hilbert space $\HS^{\alpha_{n+1}}$. 
Before the time slot $t_N$ only one operation $\A_{k_N}$ is left to be performed (so that the instruments $\{\M_{(k_1,\ldots,k_{N-1})}^{\to k_N}\}$ only have one possible outcome $k_N$), and after $t_N$ all operations $\A_k$ have been performed exactly once. 
The circuit then applies a CPTP map $\M_{(k_1,\ldots,k_N)}^{\to F} : \L(\HS^{A_{k_N}^O \alpha_N}) \to \L(\HS^{F})$ that takes the output system of $\A_{k_N}$, together with the ancillary state in $\HS^{\alpha_N}$, to the output of the circuit in $\HS^F$.

\medskip

Let us elaborate further on the constraints required for the internal circuit operations to be valid quantum instruments (and thus for the circuit to be deterministic).
While each individual CP map of an instrument, say $\{\M_{(k_1,\ldots,k_n)}^{\to k_{n+1}}\}_{k_{n+1} \in \N \backslash \{k_1, \ldots, k_n\}}$, need not be TP, the trace should be preserved once all outcomes are summed over (i.e., the quantity $\sum_{k_{n+1}} \Tr \M_{(k_1,\ldots,k_n)}^{\to k_{n+1}}(\cdot)$) for any state in the effective input space of the operation.
As we show in Appendix~\ref{app:proof_charact_QCCCs}, analogously to Eqs.~\eqref{eq:TP_constr_QCFO_1}--\eqref{eq:TP_constr_QCFO_N}, these (effective) TP conditions translate here into the following constraints on the operations' Choi matrices:%
\footnote{From here on in, we will often omit the range of the sum when it is clear from context. E.g., $\sum_{k_1}$ is to be understood as $\sum_{k_1 \in \N}$ in Eq.~\eqref{eq:TP_constr_QCCC_1}, while $\sum_{k_{n+1}}$ means $\sum_{k_{n+1} \in \N\backslash\{k_1,\ldots,k_n\}}$ in Eq.~\eqref{eq:TP_constr_QCCC_n}.}
\begin{align}
& \hspace{2mm}\sum_{k_1} \Tr_{A_{k_1}^I\alpha_1} M_\emptyset^{\to k_1} = \id^P, \label{eq:TP_constr_QCCC_1} \\[1mm]
& \forall \, n = 1, \ldots, N{-}1, \ \forall \, (k_1, \ldots, k_n), \notag \\
& \hspace{2mm} \sum_{k_{n+1}} \! \Tr_{A_{k_{n+1}}^I\!\alpha_{n+1}} \!\! \big( M_\emptyset^{\to k_1} \!*\! \cdots \!*\! M_{(k_1,\ldots,k_{n-1})}^{\to k_n} \!*\! M_{(k_1,\ldots,k_n)}^{\to k_{n+1}} \big) \notag \\[-2mm]
& \hspace{10mm} = \Tr_{\alpha_n} \!\big(M_\emptyset^{\to k_1} * \cdots * M_{(k_1,\ldots,k_{n-1})}^{\to k_n} \big) \!\otimes\! \id^{A_{k_n}^O}\!, \label{eq:TP_constr_QCCC_n} \\[1mm]
& \textup{and} \ \forall \, (k_1, \ldots, k_N), \notag \\
& \hspace{2mm} \Tr_F \big( M_\emptyset^{\to k_1} \!* \cdots *\! M_{(k_1,\ldots,k_{N-1})}^{\to k_N} \!*\! M_{(k_1,\ldots,k_N)}^{\to F} \big) \notag \\
& \hspace{10mm} = \Tr_{\alpha_N} \!\big(M_\emptyset^{\to k_1} \!* \cdots *\! M_{(k_1,\ldots,k_{N-1})}^{\to k_N} \big) \!\otimes\! \id^{A_{k_N}^O}\!. \label{eq:TP_constr_QCCC_N}
\end{align}

The previous description of the process under consideration, as represented in Fig.~\ref{fig:QCCCv1} and with the internal circuit operations $\M_\emptyset^{\to k_1}, \M_{(k_1,\ldots,k_n)}^{\to k_{n+1}}, \M_{(k_1,\dots,k_N)}^{\to F}$ satisfying the TP constraints of Eqs.~\eqref{eq:TP_constr_QCCC_1}--\eqref{eq:TP_constr_QCCC_N}, formally defines what we call a \emph{Quantum Circuit with Classical Control of causal order} (QC-CC).
Note that QC-FOs are a special case of QC-CCs as the internal CPTP maps of a QC-FO can be seen as instruments with only one non-trivial classical output.

\medskip

Let us now see how to obtain the description of a QC-CC as a process matrix.
As for QC-FOs (cf.\ Eq.~\eqref{eq:induced_M_QCFO}), in the case where the operations $\M_\emptyset^{\to k_1}$, $\M_{(k_1)}^{\to k_2}$, $\M_{(k_1,k_2)}^{\to k_3}$, \ldots, $\M_{(k_1,\ldots,k_{N-1})}^{\to k_N}$ and $\M_{(k_1,\ldots,k_N)}^{\to F}$ are applied in between the external operations $\A_k$---which thus end up being applied in the dynamically established order $(k_1, k_2, \ldots, k_N)$---the Choi matrix of the global CP map induced by the circuit is obtained as the link product
\begin{align}
& M_\emptyset^{\to k_1} * A_{k_1} * M_{(k_1)}^{\to k_2} * A_{k_2} * M_{(k_1,k_2)}^{\to k_3} * \cdots \notag \\
& \qquad \cdots * M_{(k_1,\ldots,k_{N-1})}^{\to k_N} * A_{k_N} * M_{(k_1,\ldots,k_N)}^{\to F} \notag \\[1mm]
& = (A_1 \otimes \cdots \otimes A_N) * \big( M_\emptyset^{\to k_1} * M_{(k_1)}^{\to k_2} * M_{(k_1,k_2)}^{\to k_3} * \cdots \notag \\
& \hspace{26mm} \cdots * M_{(k_1,\ldots,k_{N-1})}^{\to k_N} * M_{(k_1,\ldots,k_N)}^{\to F} \big), \label{eq:induced_Mk1_kN_QCCC}
\end{align}
where we used in particular the fact that each operation $A_k$ appears once and only each in $A_{k_1} * A_{k_2} * \cdots * A_{k_N}$ to reorder these terms.

As just stated, this induced map is conditioned on the causal order ending up being $(k_1, k_2, \ldots, k_N)$.%
\footnote{Note that this induced map is not TP; instead, the trace of its output equals the trace of its input, multiplied by the probability that the causal order of operations indeed ends up being $(k_1, \ldots, k_N)$.}
However, we want to describe the deterministic map that does not ``post-select'' on this order; indeed, the outcomes of the internal quantum instruments are \emph{internal} to the process.
We thus need to sum Eq.~\eqref{eq:induced_Mk1_kN_QCCC} above over all possible orders $(k_1, k_2, \ldots, k_N)$ to obtain the induced global map:
\begin{align}
M & = \sum_{(k_1,\ldots,k_N)} M_\emptyset^{\to k_1} * A_{k_1} * M_{(k_1)}^{\to k_2} * \cdots \notag \\[-4mm]
& \hspace{21mm} \cdots * M_{(k_1,\ldots,k_{N-1})}^{\to k_N} * A_{k_N} * M_{(k_1,\ldots,k_N)}^{\to F} \notag \\[1mm]
& \hspace{35mm} \in \L(\HS^{PF}). \label{eq:induced_M_QCCC}
\end{align}
Noting that the sum can be applied only to the second term in parentheses in Eq.~\eqref{eq:induced_Mk1_kN_QCCC} (which, for each $(k_1, \ldots, k_N)$, belongs to the same space $\L(\HS^{PA_\N^{IO} F})$), and that the induced map is then written in the form of Eq.~\eqref{eq:choi_map_M}, we can directly identify the process matrix $W$ and obtain the following:

\begin{proposition}[Process matrix description of QC-CCs] \label{prop:descr_W_QCCC}
The process matrix corresponding to the quantum circuit with classical control of causal order depicted in Fig.~\ref{fig:QCCCv1} is
\begin{align}
W = \sum_{(k_1,\ldots,k_N)} W_{(k_1,\ldots,k_N,F)} \label{eq:W_QCCC}
\end{align}
where
\begin{align}
& W_{(k_1,\ldots,k_N,F)} \coloneqq M_\emptyset^{\to k_1} * M_{(k_1)}^{\to k_2} * M_{(k_1,k_2)}^{\to k_3} * \cdots \notag \\
& \hspace{30mm} \cdots * M_{(k_1,\ldots,k_{N-1})}^{\to k_N} * M_{(k_1,\ldots,k_N)}^{\to F} \notag \\
& \hspace{40mm} \in \ \L(\HS^{PA_\N^{IO} F}). \label{eq:W_k1_kN_QCCC}
\end{align}
\end{proposition}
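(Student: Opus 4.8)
The plan is to compute the Choi matrix of the global map induced by the circuit of Fig.~\ref{fig:QCCCv1} by composing the internal and external operations, and then to read off $W$ by casting the result in the form of Eq.~\eqref{eq:choi_map_M}. First I would fix a particular ``run'' of the circuit in which the internal instruments produce the outcome sequence $(k_1,\ldots,k_N)$, so that the external operations end up being applied in that order. Along such a run the operations are composed strictly in sequence, so by iterating the link-product composition rule of Eq.~\eqref{eq:compose_link_prod_mixed} the Choi matrix of the corresponding (conditional, not trace-preserving) induced map is the single long link product $M_\emptyset^{\to k_1} * A_{k_1} * M_{(k_1)}^{\to k_2} * A_{k_2} * \cdots * M_{(k_1,\ldots,k_N)}^{\to F}$, i.e.\ the expression appearing before the equality in Eq.~\eqref{eq:induced_Mk1_kN_QCCC}.

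Next I would regroup this product so as to isolate the dependence on the external operations. Before doing so I must check that the long link product is unambiguously defined, that is, associative as well as commutative: this requires verifying that no elementary Hilbert space occurs in more than two of the factors. This indeed holds here, since $\HS^P$ and $\HS^F$ each appear only once, each ancillary space $\HS^{\alpha_n}$ is shared only between the two consecutive internal operations $M_{(k_1,\ldots,k_{n-1})}^{\to k_n}$ and $M_{(k_1,\ldots,k_n)}^{\to k_{n+1}}$, and each $\HS^{A_{k_n}^I},\HS^{A_{k_n}^O}$ is shared only between an internal operation and the external operation $A_{k_n}$. Using commutativity I can then collect the external factors as $A_{k_1} * \cdots * A_{k_N}$; since the $A_k$ act on pairwise-disjoint spaces $\HS^{A_k^{IO}}$, their link product reduces to the tensor product $A_1 \otimes \cdots \otimes A_N$, which crucially is \emph{independent} of the order $(k_1,\ldots,k_N)$. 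This yields the factorisation $(A_1 \otimes \cdots \otimes A_N) * W_{(k_1,\ldots,k_N,F)}$, with $W_{(k_1,\ldots,k_N,F)}$ the link product of the internal operations defined in Eq.~\eqref{eq:W_k1_kN_QCCC}.

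Finally, since the instrument outcomes are internal to the circuit and are not post-selected, the deterministic induced map is obtained by summing the conditional maps over all orders, giving Eq.~\eqref{eq:induced_M_QCCC}. Because the common factor $A_1 \otimes \cdots \otimes A_N$ does not depend on the summation index, I can pull the sum inside the link product to obtain $M = (A_1 \otimes \cdots \otimes A_N) * \big(\sum_{(k_1,\ldots,k_N)} W_{(k_1,\ldots,k_N,F)}\big)$. This is exactly the form of Eq.~\eqref{eq:choi_map_M}, and by the uniqueness of the process matrix noted at the end of Sec.~\ref{sec:QC_supermaps} (the operators $A_1 \otimes \cdots \otimes A_N$ span the full Hermitian space as the CP maps $\A_k$ vary), the bracketed operator must be the process matrix $W$, which proves the claim.

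I expect the main subtlety to lie in the second paragraph: the bookkeeping that justifies treating the iterated link product as a single well-defined product and, above all, the observation that the external factor reduces to the order-independent tensor product $A_1 \otimes \cdots \otimes A_N$. It is precisely this order-independence that allows the sum over realised orders to be moved inside and absorbed into a single process matrix; keeping careful track of which Hilbert space each factor lives on, so that the ``appears at most twice'' condition is manifest, is the step most prone to error.
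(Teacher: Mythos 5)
Your proposal is correct and follows essentially the same route as the paper: compute the conditional Choi matrix for each realised order $(k_1,\ldots,k_N)$ as a long link product, use commutativity/associativity and the fact that each $A_k$ appears exactly once to factor out the order-independent term $A_1\otimes\cdots\otimes A_N$, sum over orders since the instrument outcomes are internal, and identify $W$ by the uniqueness argument from the end of Sec.~\ref{sec:QC_supermaps}. The only difference is presentational: you make explicit the ``each Hilbert space appears at most twice'' bookkeeping that the paper leaves implicit, which is a sound addition rather than a deviation.
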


\subsection{Characterisation}

The above description of QC-CCs allows us to obtain the following characterisation of their process matrices.

\begin{proposition}[Characterisation of QC-CCs] \label{prop:charact_W_QCCC}
The process matrix $W \in \L(\HS^{PA_\N^{IO} F})$ of a quantum circuit with classical control of causal order can be decomposed in terms of positive semidefinite matrices $W_{(k_1,\ldots,k_n)} \in \L(\HS^{PA_{\{k_1,\ldots,k_{n-1}\}}^{IO} A_{k_n}^I})$ and $W_{(k_1,\ldots,k_N,F)} \in \L(\HS^{PA_\N^{IO} F})$, for all nonempty ordered subsets $(k_1,\ldots,k_n)$ of $\N$ (with $1 \le n \le N$, $k_i \neq k_j$ for $i \neq j$), in such a way that
\begin{align}
W = \sum_{(k_1,\ldots,k_N)} W_{(k_1,\ldots,k_{N},F)} \label{eq:charact_W_QCCC_decomp_sum}
\end{align}
and
\begin{align}
& \sum_{k_1} \Tr_{A_{k_1}^I} W_{(k_1)} = \id^P, \notag \\[1mm]
& \forall \, n = 1, \ldots, N{-}1, \ \forall \, (k_1, \ldots, k_n), \notag \\
& \hspace{3mm} \sum_{k_{n+1}} \Tr_{A_{k_{n+1}}^I} \! W_{(k_1,\ldots,k_n,k_{n+1})} = W_{(k_1,\ldots,k_n)} \otimes \id^{A_{k_n}^O}, \notag \\[1mm]
& \textup{and} \ \forall \, (k_1, \ldots, k_N), \notag \\
& \hspace{3mm} \Tr_F W_{(k_1,\ldots,k_N,F)} = W_{(k_1,\ldots,k_N)} \otimes \id^{A_{k_N}^O}. \label{eq:charact_W_QCCC_decomp_constr}
\end{align}

Conversely, any Hermitian matrix $W \in \L(\HS^{PA_\N^{IO} F})$ that admits a decomposition in terms of positive semidefinite matrices $W_{(k_1,\ldots,k_n)} \in \L(\HS^{PA_{\{k_1,\ldots,k_{n-1}\}}^{IO} A_{k_n}^I})$ and $W_{(k_1,\ldots,k_N,F)} \in \L(\HS^{PA_\N^{IO} F})$ satisfying Eqs.~\eqref{eq:charact_W_QCCC_decomp_sum}--\eqref{eq:charact_W_QCCC_decomp_constr} above is the process matrix of a quantum circuit with classical control of causal order.
\end{proposition}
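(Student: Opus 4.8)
The plan is to establish both directions in close parallel with the proof of Proposition~\ref{prop:charact_W_QCFO} for QC-FOs, viewing the QC-CC decomposition $\{W_{(k_1,\ldots,k_n)}\}$ as the branching analogue of the nested family of reduced matrices $W_{(n)}$.

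For the necessary direction, I would start from a QC-CC as described in Sec.~\ref{subsec:QCCC_description} and take Eq.~\eqref{eq:charact_W_QCCC_decomp_sum} directly from Proposition~\ref{prop:descr_W_QCCC}, with $W_{(k_1,\ldots,k_N,F)}$ as in Eq.~\eqref{eq:W_k1_kN_QCCC}. In analogy with the QC-FO reduced matrices, the intermediate matrices would be defined as the traced partial link products
\begin{align}
W_{(k_1,\ldots,k_n)} \coloneqq \Tr_{\alpha_n}\big( M_\emptyset^{\to k_1} * \cdots * M_{(k_1,\ldots,k_{n-1})}^{\to k_n} \big),
\end{align}
which are positive semidefinite (being partial traces of link products of positive semidefinite Choi matrices) and which lie in the stated spaces $\L(\HS^{PA_{\{k_1,\ldots,k_{n-1}\}}^{IO}A_{k_n}^I})$, the intermediate memories $\alpha_1,\ldots,\alpha_{n-1}$ being contracted in the link product and $\alpha_n$ traced out. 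The constraints of Eq.~\eqref{eq:charact_W_QCCC_decomp_constr} then follow by substituting these definitions into the effective trace-preservation conditions Eqs.~\eqref{eq:TP_constr_QCCC_1}--\eqref{eq:TP_constr_QCCC_N}: splitting $\Tr_{A_{k_{n+1}}^I\alpha_{n+1}} = \Tr_{A_{k_{n+1}}^I}\circ\Tr_{\alpha_{n+1}}$ and absorbing the inner trace into the definition of $W_{(k_1,\ldots,k_{n+1})}$ turns each such condition into precisely the corresponding relation between consecutive reduced matrices.

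For the sufficient (converse) direction I would give an explicit construction of the internal instruments from the given decomposition, following the strategy used for QC-FOs but now branch by branch. The idea is to maintain the invariant that each partial product is pure, $M_\emptyset^{\to k_1} * \cdots * M_{(k_1,\ldots,k_{n-1})}^{\to k_n} = \dketbra{w_{(k_1,\ldots,k_n)}}{w_{(k_1,\ldots,k_n)}}$, where $\dket{w_{(k_1,\ldots,k_n)}} \in \HS^{PA_{\{k_1,\ldots,k_{n-1}\}}^{IO}A_{k_n}^I\alpha_n}$ is a purification of the given $W_{(k_1,\ldots,k_n)}$ carried by the memory $\alpha_n$. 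Starting from step $1$ with $M_\emptyset^{\to k_1}=\dketbra{w_{(k_1)}}{w_{(k_1)}}$ for a purification of $W_{(k_1)}$, one defines each subsequent instrument element as a pure map $M_{(k_1,\ldots,k_n)}^{\to k_{n+1}}=\dketbra{v}{v}$ chosen so that $\dket{w_{(k_1,\ldots,k_n)}}*\dket{v}=\dket{w_{(k_1,\ldots,k_{n+1})}}$, and a final (generally mixed) CPTP map $\M_{(k_1,\ldots,k_N)}^{\to F}$ with $\dketbra{w_{(k_1,\ldots,k_N)}}{w_{(k_1,\ldots,k_N)}}*M_{(k_1,\ldots,k_N)}^{\to F}=W_{(k_1,\ldots,k_N,F)}$. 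The link products then reproduce each $W_{(k_1,\ldots,k_N,F)}$ by construction, and summing over orders recovers $W$ via Eq.~\eqref{eq:charact_W_QCCC_decomp_sum} and Proposition~\ref{prop:descr_W_QCCC}; it remains to check that the collections $\{\M_{(k_1,\ldots,k_n)}^{\to k_{n+1}}\}_{k_{n+1}}$ are valid instruments, i.e.\ satisfy the effective TP conditions Eqs.~\eqref{eq:TP_constr_QCCC_1}--\eqref{eq:TP_constr_QCCC_N}, which one verifies from the constraints of Eq.~\eqref{eq:charact_W_QCCC_decomp_constr} exactly as in the QC-FO argument.

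The main obstacle is the existence of the transition vectors $\dket{v}$ realising each step. One must show that, for every outcome $k_{n+1}$, there is a purification of the given $W_{(k_1,\ldots,k_{n+1})}$ that factors through the already-fixed $\dket{w_{(k_1,\ldots,k_n)}}$ via a link product. This is guaranteed by a support-containment argument: since $\sum_{k_{n+1}}\Tr_{A_{k_{n+1}}^I}W_{(k_1,\ldots,k_{n+1})}=W_{(k_1,\ldots,k_n)}\otimes\id^{A_{k_n}^O}$, each individual term satisfies $\Tr_{A_{k_{n+1}}^I}W_{(k_1,\ldots,k_{n+1})}\preceq W_{(k_1,\ldots,k_n)}\otimes\id^{A_{k_n}^O}$, so the relevant reduced state of the target purification is supported within $\mathrm{supp}(W_{(k_1,\ldots,k_n)})\otimes\HS^{A_{k_n}^O}$ --- exactly the range ``seen'' by $\dket{w_{(k_1,\ldots,k_n)}}$ through its purifying memory $\alpha_n$, so that $\dket{v}$ can be obtained as (an extension of) the corresponding partial isometry. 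Handling these support conditions carefully, together with the branch-dependent memories $\alpha_n$ (which must accommodate every history $(k_1,\ldots,k_n)$ at a given slot simultaneously, e.g.\ via a direct sum labelled by the classical record), is the delicate part; the analogous final step producing $\M_{(k_1,\ldots,k_N)}^{\to F}$ uses the last constraint $\Tr_F W_{(k_1,\ldots,k_N,F)} = W_{(k_1,\ldots,k_N)}\otimes\id^{A_{k_N}^O}$ in the same way, and the remaining verifications are routine adaptations of the QC-FO proof.
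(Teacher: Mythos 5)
Your proposal is correct and follows essentially the same route as the paper's proof in Appendix~\ref{app:proof_charact_QCCCs}: for the necessary direction you define $W_{(k_1,\ldots,k_n)} = \Tr_{\alpha_n}\big(M_\emptyset^{\to k_1} * \cdots * M_{(k_1,\ldots,k_{n-1})}^{\to k_n}\big)$ and read the constraints off the effective TP conditions, and for the converse you purify each $W_{(k_1,\ldots,k_n)}$ into an ancilla $\alpha_n$ and invert the link product branch by branch, exactly as the paper does (your support-containment step is the paper's Observation~\ref{observation:nec_cond_range}, and your partial-isometry extension is Lemma~\ref{lemma:invert_link_product} via the Moore--Penrose pseudoinverse). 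The remaining differences are cosmetic bookkeeping, e.g.\ the paper uses a single ancilla space per time slot of dimension at least $\max_{(k_1,\ldots,k_n)} \rank W_{(k_1,\ldots,k_n)}$ (padding with null vectors) rather than a direct sum labelled by the history.
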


The full proof is given in Appendix~\ref{app:proof_charact_QCCCs};
here, we simply outline briefly the proof approach.

As was the case of QC-FOs, the necessary condition follows from the form of Eqs.~\eqref{eq:W_QCCC}--\eqref{eq:W_k1_kN_QCCC}, and the TP constraints of Eqs.~\eqref{eq:TP_constr_QCCC_1}--\eqref{eq:TP_constr_QCCC_N}, with $W_{(k_1,\ldots,k_n)} \equiv \Tr_{\alpha_n} \big(M_\emptyset^{\to k_1} * \cdots * M_{(k_1,\ldots,k_{n-1})}^{\to k_n} \big)$.

To prove the sufficient condition, we again provide an explicit construction of a QC-CC: given a matrix $W$ with a decomposition satisfying Eqs.~\eqref{eq:charact_W_QCCC_decomp_sum}--\eqref{eq:charact_W_QCCC_decomp_constr}, we construct the operations $\M_\emptyset^{\to k_1}$, $\M_{(k_1,\ldots,k_n)}^{\to k_{n+1}}$ and $\M_{(k_1,\ldots,k_N)}^{\to F}$ (which, except in general for the last one, can each be taken to have a single Kraus operator) whose induced process matrix is precisely $W$.
As was the case for QC-FOs, this construction is not unique and different QC-CCs may be described by the same process matrix.

\medskip

It can be verified that Eqs.~\eqref{eq:charact_W_QCCC_decomp_sum}--\eqref{eq:charact_W_QCCC_decomp_constr} imply that $W$ satisfies the validity constraints for process matrices (cf.\ Appendix~\ref{app:W_valid_matrices}).
Note, however, that the individual matrices $W_{(k_1,\ldots,k_N,F)}$ in Proposition~\ref{prop:charact_W_QCCC} may or may not be valid (deterministic) process matrices.

If the $W_{(k_1,\ldots,k_N,F)}$'s are valid process matrices (up to normalisation), each compatible with the fixed causal order $(k_1,\ldots,k_N)$, then $W$ is simply a probabilistic mixture of quantum circuits with different fixed causal orders.
We recover the case of QC-FOs when there is only one term in the sum of Eq.~\eqref{eq:charact_W_QCCC_decomp_sum}; if that single term corresponds to the order $(k_1,\ldots,k_N) = (1,\ldots,N)$, the constraints of Eq.~\eqref{eq:charact_W_QCCC_decomp_constr} simply reduce to those of Eq.~\eqref{eq:charact_W_QCFO} (with $W_{(1,\ldots,n)} \equiv W_{(n)}$ and $W_{(1,\ldots,N,F)} \equiv W$).

If the $W_{(k_1,\ldots,k_N,F)}$'s are not valid process matrices, then the causal order depends, at least in part, on the input state of the circuit (in the global past space $\HS^P$) and on the external operations $\A_n$ inserted in the slots of the QC-CC.
The $W_{(k_1,\ldots,k_N,F)}$'s can, in that case, be interpreted as probabilistic process matrices which are post-selected on the order $(k_1,\ldots,k_N,F)$ being realised (see Sec.~\ref{sec:probQCs}).

\medskip

We finish by noting that if we consider the case with trivial one-dimensional global past and global future Hilbert spaces $\HS^P$ and $\HS^F$---i.e., the ``original'' version of process matrices as supermaps that take linear CP maps to probabilities~\cite{oreshkov12}---then the characterisation of Proposition~\ref{prop:charact_W_QCCC} (given more explicitly for this case in Appendix~\ref{app:charact_dP_dF_1}) coincides precisely with the sufficient condition for the causal separability of general $N$-partite process matrices obtained in Ref.~\cite{wechs19}.
Hence, unsurprisingly, QC-CCs define causally separable processes.%
\footnote{This is the case even if $\HS^P$ and $\HS^F$ are nontrivial, as the two versions of the process matrix framework are equivalent; see Appendix~\ref{app:W_equiv_descr}.}

\subsection{Example}
\label{subsec:QCCCs_example}

The simplest example of a QC-CC without a predetermined (even probabilistic) causal order is the ``classical switch''~\cite{chiribella13}, in which a classical ``control system'' is used to incoherently control the order in which two CP maps, $\A_1$ and $\A_2$, are applied to some ``target system''; see Fig.~\ref{fig:W_CS}.
These two systems are initially provided in the global past $\HS^P = \HS^{P_\textup{t}} \otimes \HS^{P_\textup{c}}$ and, after the operations are applied, are sent to the global future $\HS^F = \HS^{F_\textup{t}} \otimes \HS^{F_\textup{c}}$.
Here, $\HS^{P_\textup{t}}$ and $\HS^{F_\textup{t}}$ are $d_\textup{t}$-dimensional Hilbert spaces for the target system and $\HS^{P_\textup{c}}$ and $\HS^{F_\textup{c}}$ are $2$-dimensional Hilbert spaces (with computational bases denoted here $\{\ket{1},\ket{2}\}$) in which the classical control bit is encoded.
The operations $\A_1$ and $\A_2$ thus also act on $d_\textup{t}$-dimensional spaces $\HS^{A^k_I},\HS^{A^k_O}$.
The circuit begins by performing a measurement on the control system, and depending on the (classical) measurement outcome, the target system is sent (via identity channels) first to $\A_1$ and then to $\A_2$ (outcome `1'), or vice versa (outcome `2'). 
The order is thus not fixed \emph{a priori}, but is established through the preparation of the control system in the global past. 

\begin{figure}[t]
\begin{center}
\includegraphics[width=\columnwidth]{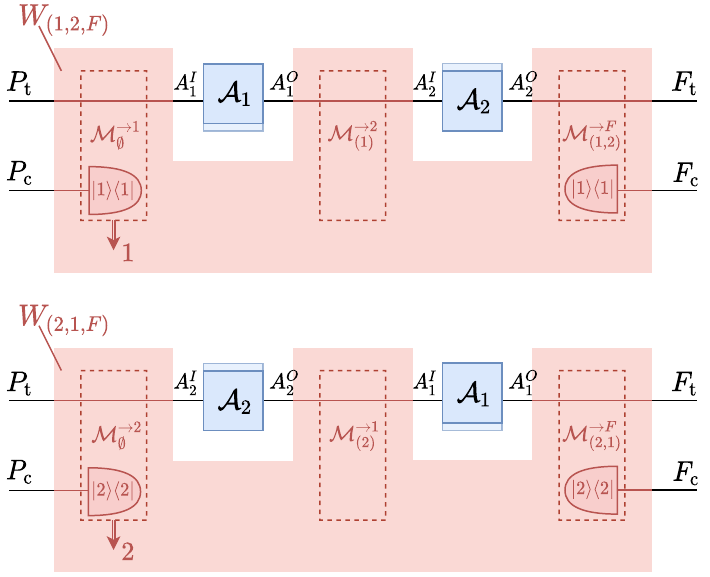}
\caption{The two possible realisations of the classical switch.
In this QC-CC, the order of the two CP maps $\A_1$ and $\A_2$ is controlled incoherently through the ``control system'' in $\HS^{P_\textup{c}}$, which is measured as part of the first internal circuit operation.
The process matrix $W_\text{CS}$ is obtained as the sum of the two corresponding probabilistic process matrices $W_{(1,2,F)}$ and $W_{(2,1,F)}$ (cf. Eq.~\eqref{eq:W_CS}).
}
\label{fig:W_CS}
\end{center}
\end{figure}

To see that the classical switch can be described as a QC-CC, we can take the internal circuit operations with Choi matrices
\begin{align}
M_\emptyset^{\to k_1} & = \dketbra{\id}{\id}^{P_\textup{t} A_{k_1}^I} \otimes \ketbra{k_1}{k_1}^{P_\textup{c}}, \notag \\
M_{(k_1)}^{\to k_2} & = \dketbra{\id}{\id}^{A_{k_1}^O A_{k_2}^I}, \notag \\
M_{(k_1,k_2)}^{\to F} & = \dketbra{\id}{\id}^{A_{k_2}^O F_\textup{t}} \otimes \ketbra{k_1}{k_1}^{F_\textup{c}}. \label{eq:Ms_CSwitch}
\end{align}
These operations can be interpreted intuitively:
$\M_\emptyset^{\to k_1}$ is an identity channel sending the initial target system in $\HS^{P_\textup{t}}$ to the input space of the first operation $\A_{k_1}$, post-selected on the outcome $k_1$ of the measurements on $\HS^{P_\textup{c}}$;
$\M_{(k_1)}^{\to k_2}$ is an identity channel sending the target from the output of $\A_{k_1}$ to the input of $\A_{k_2}$;
and $\M_{(k_1,k_2)}^{\to F}$ sends the output of the second operation to the global future, while preparing the control system in $\HS^{F_\textup{c}}$ in the appropriate state, $\ketbra{k_1}{k_1}$.
It is easy to verify that these operations indeed satisfy the TP conditions of Eqs.~\eqref{eq:TP_constr_QCCC_1}--\eqref{eq:TP_constr_QCCC_N}.

The process matrix describing the classical switch defined by the operations~\eqref{eq:Ms_CSwitch} is thus
\begin{align}
W_\textup{CS} = & M_\emptyset^{\to 1} * M_{(1)}^{\to 2} * M_{(1,2)}^{\to F} + M_\emptyset^{\to 2} * M_{(2)}^{\to 1} * M_{(2,1)}^{\to F} \notag\\
    = & \ketbra{1}{1}^{P_\textup{c}} \dketbra{\id}{\id}^{P_\textup{t}A_1^I} \dketbra{\id}{\id}^{A_1^OA_2^I} \dketbra{\id}{\id}^{A_2^OF_\textup{t}} \ketbra{1}{1}^{F_\textup{c}} \notag \\
& + \! \ketbra{2}{2}^{P_\textup{c}} \! \dketbra{\id}{\id}^{P_\textup{t}A_2^I} \! \dketbra{\id}{\id}^{A_2^O\!A_1^I} \! \dketbra{\id}{\id}^{A_1^O\!F_\textup{t}} \! \ketbra{2}{2}^{F_\textup{c}} \notag \\[1mm]
& \hspace{27mm} \in \L(\HS^{P_\textup{c}P_\textup{t}A_1^{IO}A_2^{IO}F_\textup{t}F_\textup{c}}) \label{eq:W_CS}
\end{align}
(where the tensor products are implicit). 
One can readily check that $W_\textup{CS}$ indeed satisfies the characterisation of Proposition~\ref{prop:charact_W_QCCC}, with $W_{(k_1)} = M_\emptyset^{\to k_1}$, $W_{(k_1,k_2)} = M_\emptyset^{\to k_1} \otimes M_{(k_1)}^{\to k_2}$, and $W_{(k_1,k_2,F)} = M_\emptyset^{\to k_1} \otimes M_{(k_1)}^{\to k_2} \otimes M_{(k_1,k_2)}^{\to F}$.

Note that this process goes beyond a probabilistic mixture of two fixed-order quantum circuits. 
Indeed, the two individual summands in Eq.~\eqref{eq:W_CS} do not satisfy the validity constraints for process matrices, and only their sum does. 
This reflects the fact that the first internal operation applied by the circuit, $\{\M_\emptyset^{\to k_1}\}_{k_1\in\N}$, is probabilistic, and if we post-select on one of the two outcomes, we do not end up with a valid (deterministic) supermap. 
(Indeed, as we will see later in Sec.\ref{sec:probQCs}, the individual terms are probabilistic process matrices.) 
To obtain a valid process, we thus need to combine the terms corresponding to the different outcomes.
This also proves (as was already shown in Ref.~\cite{chiribella13}), that such a classical switch cannot be realised by a standard QC-FO.

Lastly, let us observe that if one traces out $F$ from the process matrix of the classical switch, the resulting matrix $\Tr_F W_\textup{CS}$ is also a valid QC-CC (with now a trivial global future) with a still well-defined, but not predefined, causal order.
(This is also the case if one only traces out $F_\textup{t}$ or $F_\textup{c}$.)
Indeed, taking $M_\emptyset^{\to k_1}$ and $M_{(k_1)}^{\to k_2}$ as in Eq.~\eqref{eq:Ms_CSwitch} and $M_{(k_1,k_2)}^{\to F} = \id^{A^O_{k_2}}$, one recovers the corresponding process matrix.

\section{Quantum circuits with quantum control of causal order}
\label{sec:QCQCs}

In this section we go one step further, defining a class of circuits in which the causal order is controlled not classically, as in QC-CCs, but coherently in a quantum manner.
Such circuits may no longer always combine the operations $\A_k$ in a well-defined causal manner, but instead they do so in an indefinite causal order.
As for the classes above, we show how to describe these quantum circuits with quantum control of causal order (QC-QCs) as process matrices (Proposition~\ref{prop:descr_W_QCQC}) and characterise the set of process matrices they define (Proposition~\ref{prop:charact_W_QCQC}).
Before we present these QC-QCs, however, we will revisit QC-CCs from a slightly different angle.
In particular, we will first present a different, but equivalent, description of QC-CCs that will lead more naturally to this new class of QC-QCs.

\subsection{Revisiting the description of quantum circuits with classical control of causal order}
\label{subsec:QCCC_revisit}

\subsubsection{Introducing explicit control systems}

In the previous section we said that each internal operation $\M_{(k_1,\ldots,k_n)}^{\to k_{n+1}}$ applied by the circuit between the time slots $t_n$ and $t_{n+1}$ was conditioned on which operations $\A_k$ had already been performed (thereby allowing us to ensure that each external operation is applied once and only once, as required), and their order $(k_1,\ldots,k_n)$.
This conditioning can, in fact, be included in the description of the operation applied between $t_n$ and $t_{n+1}$ by introducing a physical ``control'' system that explicitly encodes the outcomes $k_n$ of the instruments $\{\M_{(k_1,\ldots,k_{n-1})}^{\to k_n}\}_{k_n}$, and stores on the fly the dynamically established causal order.

To this end, we add an explicit control system to the circuit, in which we encode the full order of the preceding (and currently applied) external operations in the computational basis states $\ket{(k_1,\ldots,k_n)}^{C_n^{(\prime)}}$ of some Hilbert space $\HS^{C_n^{(\prime)}}$ (for $1 \le n \le N$).
Here $C_n$ denotes the control system just before the external operation $\A_{k_n}$ (at time $t_n$) is applied, while $C_n'$ denotes the control system just after (see below).
As these control systems will, for now, act ``classically'', it will be useful to use the following notation:
\begin{align}
[\![(k_1,\ldots,k_n)]\!]^{C_n^{(\prime)}} \coloneqq \ketbra{(k_1,\ldots,k_n)}{(k_1,\ldots,k_n)}^{C_n^{(\prime)}}\!\!.
\end{align}
Note that while the example of the classical switch in Sec.~\ref{subsec:QCCCs_example} utilised a control qubit in the global past $\HS^P$ and future $\HS^F$, the role of the explicit control system we introduce here is more precise.
In that example, it would be used, e.g., to propagate the control qubit in $\HS^{P_\textup{c}}$ through the circuit to $\HS^{F_\textup{c}}$ and apply the correct external operation at each time slot.

This control system will be used to control both the choice of external operation $\A_{k_n}$ and the internal operations $\M_{(k_1,\ldots,k_n)}^{\to k_{n+1}}$, as illustrated in Fig.~\ref{fig:QCCCv2}.
To formally achieve this, we need to embed the input and output Hilbert spaces at each time slot $t_n$ within a common Hilbert space, before introducing global controlled operations acting in these spaces.
To simplify this, we will henceforth, and without loss of generality,%
\footnote{Indeed, if the input and output spaces of the operations $\A_k$ are not all the same, we can introduce additional ancillary input spaces $\HS^{A^{I'}_k}$ (of dimension $d^{I'}_k$) and output spaces $\HS^{A^{O'}_k}$ (of dimension $d^{O'}_k$), in such a way that $d^I_k d^{I'}_k = d^I$ and $d^O_k d^{O'}_k = d^O$ for all $k$, and upon which the extended operations act trivially. 
(Such $d^{I'}_k$ and $d^{O'}_k$ can always be found: one can simply choose for $d^I$ the least common multiple of all $d^{I}_k$ (and similarly for $d^O$), and then take $d^{I'}_k = d^I/d^{I}_k$, and $d^{O'}_k = d^O/d^{O}_k$.) 
The original scenario is then recovered when the additional input and output spaces are traced out.}
assume that all the external operations $\A_k$ have the same input space dimension ($d_k^I = d^I \ \forall \,k$), and the same output space dimension ($d_k^O = d^O \ \forall \,k$). 
All their input spaces are thus isomorphic to each other, and likewise for their output spaces.
As a result, the ``target'' system at each time slot is always of the same dimension, regardless of which external operation is applied to it (although the input and output dimensions may still differ, i.e., if $d^I\neq d^O$).

\begin{figure*}[t]
	\begin{center}
	\includegraphics[scale=.75]{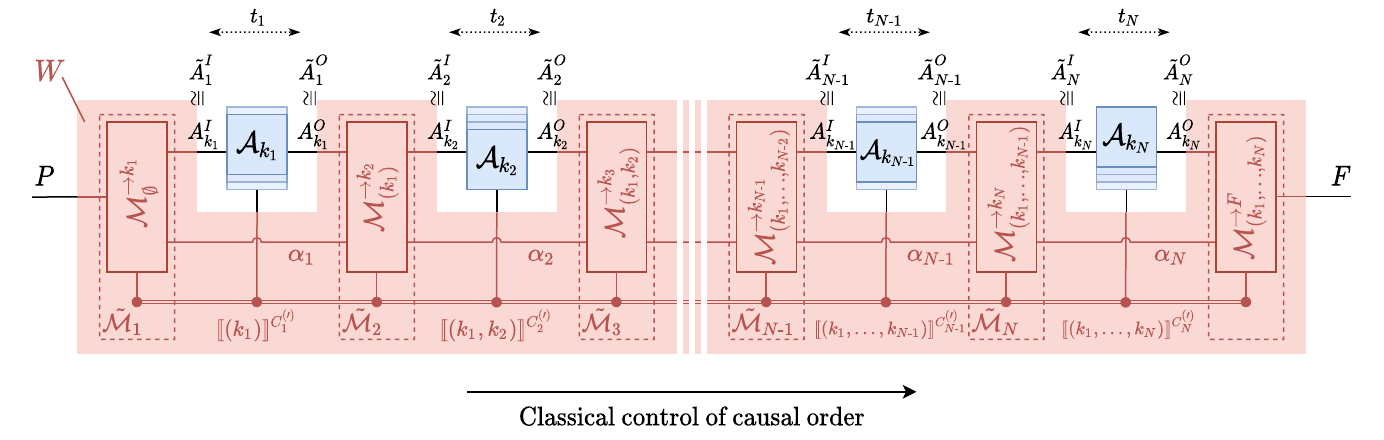}
	\end{center}
	\caption{Another possible representation of a QC-CC, equivalent to Fig.~\ref{fig:QCCCv1}. Here we show explicitly the transmission of the information about the causal order, established dynamically and stored on the fly in the states $[\![(k_1,\ldots,k_n)]\!]^{C_n^{(\prime)}}$ of some control system. The double-stroke lines indicate that this information is classical. This information is used to control which external operation $\A_{k_n}$ is to be applied at each time slot $t_n$, thus defining the joint operation $\tilde \A_n$ of Eq.~\eqref{eq:tilde_An_QCCC} on the target and control systems. It is also used to control the internal circuit operations $\M_{\emptyset}^{\to k_{1}}$, $\M_{(k_1,\ldots,k_n)}^{\to k_{n+1}}$, and $\M_{(k_1,\ldots,k_N)}^{\to F}$, defining the joint operations $\tilde \M_{1}$, $\tilde \M_{n+1}$, and $\tilde \M_{N}$ of Eqs.~\eqref{eq:tilde_Mn1_QCCC}--\eqref{eq:tilde_MN_QCCC} on the target, ancillary and control systems.}
	\label{fig:QCCCv2}
\end{figure*}

At each time slot $t_n$, we first introduce the ``generic'' input and output spaces $\HS^{\tilde A_n^I}$ and $\HS^{\tilde A_n^O}$ (with tildes), isomorphic to the $\HS^{A_{k_n}^I}$ and $\HS^{A_{k_n}^O}$ spaces, respectively.
We can then formally ``identify'' each $\HS^{A_{k_n}^I}$ with $\HS^{\tilde A_n^I}$ and each $\HS^{A_{k_n}^O}$ with $\HS^{\tilde A_n^O}$, and write the external operations $\A_{k_n}: \L(\HS^{A_{k_n}^I}) \to \L(\HS^{A_{k_n}^O})$ as operations of the form $\tilde\A_{k_n}: \L(\HS^{\tilde A_n^I}) \to \L(\HS^{\tilde A_n^O})$, with Choi matrices $\tilde A_{k_n} \in \L(\HS^{\tilde A_n^I\tilde A_n^O})$, and the internal circuit operations $\M_{(k_1,\ldots,k_n)}^{\to k_{n+1}}: \L(\HS^{A_{k_n}^O \alpha_n}) \to \L(\HS^{A_{k_{n+1}}^I \alpha_{n+1}})$ of a QC-CC as $\tilde\M_{(k_1,\ldots,k_n)}^{\to k_{n+1}}: \L(\HS^{\tilde A_n^O \alpha_n}) \to \L(\HS^{\tilde A_{n+1}^I \alpha_{n+1}})$, with Choi matrices $\tilde M_{(k_1,\ldots,k_n)}^{\to k_{n+1}}\in \L(\HS^{\tilde A_n^O \alpha_n\tilde A_{n+1}^I \alpha_{n+1}})$.%
\footnote{More formally, $\tilde\A_{k_n} \coloneqq {\cal I}^{A_{k_n}^O \to \tilde A_n^O} \circ \A_{k_n} \circ {\cal I}^{\tilde A_n^I \to A_{k_n}^I}$ and $\tilde\M_{(k_1,\ldots,k_n)}^{\to k_{n+1}} \coloneqq {\cal I}^{A_{k_{n+1}}^I \alpha_{n+1} \to \tilde A_{n+1}^I \alpha_{n+1}} \circ \M_{(k_1,\ldots,k_n)}^{\to k_{n+1}} \circ {\cal I}^{\tilde A_n^O \alpha_n \to A_{k_n}^O \alpha_n}$, where ${\cal I}^{X\to X'}$ is the ``identity'' map that relates the computational basis states of two isomorphic Hilbert spaces $\HS^X$ and $\HS^{X'}$, cf.\ Footnote~\ref{footnote:isomorphism_identity}.
In terms of Choi matrices, $\tilde A_{k_n} = \big( \dketbra{\id}{\id}^{A_{k_n}^I\tilde A_n^I} \otimes \dketbra{\id}{\id}^{A_{k_n}^O\tilde A_n^O} \big) * A_{k_n}$ and $\tilde M_{(k_1,\ldots,k_n)}^{\to k_{n+1}} = \big( \dketbra{\id}{\id}^{A_{k_n}^O\tilde A_n^O} \otimes \dketbra{\id}{\id}^{A_{k_{n+1}}^I\tilde A_{n+1}^I} \big) * M_{(k_1,\ldots,k_n)}^{\to k_{n+1}}$. \label{footnote:generic_IO_spaces}}
Similarly, we write $\M_\emptyset^{\to k_1}$ and $\M_{(k_1,\ldots,k_N)}^{\to F}$ as $\tilde\M_\emptyset^{\to k_1}: \L(\HS^P) \to \L(\HS^{\tilde A_1^I \alpha_1})$ and $\tilde\M_{(k_1,\ldots,k_N)}^{\to F}: \L(\HS^{\tilde A_N^O \alpha_N}) \to \L(\HS^F)$, with respective Choi matrices $\tilde M_\emptyset^{\to k_1}\in \L(\HS^{P\tilde A_1^I \alpha_1})$ and $\tilde M_{(k_1,\ldots,k_N)}^{\to F}\in \L(\HS^{\tilde A_N^O \alpha_NF})$.

\medskip

This allows us, at each time slot $t_n$ (for $1 \le n \le N$), to then embed the external operations $\tilde\A_{k_n}$ into some ``larger'' conditional operations $\tilde\A_n$ which use the control system to apply the correct $\tilde\A_{k_n}$:
\begin{align}
\tilde \A_n \coloneqq & \sum_{(k_1,\ldots,k_n)} \tilde\A_{k_n} \otimes \bigpi_{(k_1,\ldots,k_n)}^{C_n \to C_n'}: \notag \\[-1mm]
& \hspace{20mm} \L(\HS^{\tilde A_n^I C_n}) \to \L(\HS^{\tilde A_n^O C_n'}), \label{eq:tilde_An_QCCC}
\end{align}
where $\bigpi_{(k_1,\ldots,k_n)}^{C_n \to C_n'}$ is the (classical) map that projects the control system onto the state $[\![(k_1,\ldots,k_n)]\!]^{C_n}$, while re-labelling the control system $C_n$ to $C_n'$.
The corresponding Choi matrix of $\tilde\A_n$ is
\begin{align}
\tilde A_n = & \sum_{(k_1,\ldots,k_n)} \tilde A_{k_n} \otimes [\![(k_1,\ldots,k_n)]\!]^{C_n} \otimes [\![(k_1,\ldots,k_n)]\!]^{C_n'} \notag \\[-1mm]
& \hspace{25mm} \in \L(\HS^{\tilde A_n^I C_n\tilde A_n^O C_n'}). \label{eq:Choi_An_QCCC}
\end{align}

Similarly, we can embed the internal circuit operations $\tilde\M_{(k_1,\ldots,k_n)}^{\to k_{n+1}}$ into some ``larger'' operations that also involve the control systems, as shown in Fig.~\ref{fig:QCCCv2}.
These enlarged operations, unlike the $\tilde\M_{(k_1,\ldots,k_n)}^{\to k_{n+1}}$, are deterministic (i.e., CPTP) operations since the probabilistic choice of outcome $k_{n+1}$ is now encoded in the (classical) correlations between the control system and the joint target-ancilla system.
More precisely, we now have (for $1 \le n \le N-1$)
\begin{align}
\tilde \M_{n+1} \coloneqq & \sum_{(k_1,\ldots,k_n,k_{n+1})} \tilde\M_{(k_1,\ldots,k_n)}^{\to k_{n+1}} \otimes \bigpi_{(k_1,\ldots,k_n),k_{n+1}}^{C_n' \to C_{n+1}}: \notag \\
& \hspace{5mm} \L(\HS^{\tilde A_n^O \alpha_nC_n'}) \to \L(\HS^{\tilde A_{n+1}^I \alpha_{n+1}C_{n+1}}), \label{eq:tilde_Mn1_QCCC}
\end{align}
where $\bigpi_{(k_1,\ldots,k_n),k_{n+1}}^{C_n' \to C_{n+1}}: \L(\HS^{C_n'}) \to \L(\HS^{C_{n+1}})$ is the (classical) map that projects the control system onto $[\![(k_1,\ldots,k_n)]\!]^{C_n'}$ (the state just after the conditional operation $\tilde \A_n$, with a prime) and updates it to $[\![(k_1,\ldots,k_n,k_{n+1})]\!]^{C_{n+1}}$ (the state of the control system just before the next conditional operation $\tilde \A_{n+1}$, with no prime).
Likewise, the edge cases of the first and last operations are now
\begin{align}
\tilde \M_1 \coloneqq  & \sum_{k_1} \tilde\M_\emptyset^{\to k_1} \otimes \bigpi_{\emptyset,k_1}^{\emptyset \to C_1}: \notag \\
& \hspace{15mm} \L(\HS^P) \to \L(\HS^{\tilde A_1^I \alpha_1C_1}) \label{eq:tilde_M0_QCCC}
\end{align}
and
\begin{align}
\tilde \M_{N+1} \coloneqq & \sum_{(k_1,\ldots,k_N)} \tilde\M_{(k_1,\ldots,k_N)}^{\to F} \otimes \bigpi_{(k_1,\ldots,k_N),F}^{C_N' \to \emptyset}: \notag \\
& \hspace{10mm} \L(\HS^{\tilde A_N^O \alpha_NC_N'}) \to \L(\HS^F), \label{eq:tilde_MN_QCCC}
\end{align}
where $\bigpi_{\emptyset,k_1}^{\emptyset \to C_1}$ and $\bigpi_{(k_1,\ldots,k_N),F}^{C_N' \to \emptyset}$ are the maps that create the initial control states $[\![(k_1)]\!]^{C_1}$ and that project onto the final control states $[\![(k_1,\ldots,k_N)]\!]^{C_N'}$, respectively.

We can thus see explicitly that the control system $C_n$ controls which external operation $\A_{k_n}$ is applied at time slot $t_n$ (and hence the causal order) as well as the internal operations $\M_{(k_1,\ldots,k_n)}^{\to k_{n+1}}$, and that it does so in a classical manner.
Indeed, the internal operations cannot create any entanglement between the control system and the target or ancillary systems; instead, there is only ever classical correlation between the (classical) state of the control system and the other systems.
This justifies, in particular, the terminology of QC-CC.

The Choi matrices of the internal operations, for completeness, are
\begin{align}
\tilde M_1 = & \sum_{k_1} \tilde M_\emptyset^{\to k_1} \otimes [\![(k_1)]\!]^{C_1} \quad \in \L(\HS^{P\tilde A_1^I \alpha_1C_1}), \label{eq:def_tilde_M1_QCCC}\\[1mm]
\tilde M_{n+1} = & \sum_{(k_1,\ldots,k_n,k_{n+1})} \tilde M_{(k_1,\ldots,k_n)}^{\to k_{n+1}} \otimes [\![(k_1,\ldots,k_n)]\!]^{C_n'} \notag \\[-3mm]
& \hspace{30mm} \otimes [\![(k_1,\ldots,k_n,k_{n+1})]\!]^{C_{n+1}} \notag \\[1mm]
& \hspace{14mm} \in \L(\HS^{\tilde A_n^O \alpha_nC_n'\tilde A_{n+1}^I \alpha_{n+1}C_{n+1}}), \label{eq:Choi_Mn1_QCCC}\\[2mm]
\tilde M_{N+1} =& \sum_{(k_1,\ldots,k_N)} \tilde M_{(k_1,\ldots,k_N)}^{\to F} \otimes [\![(k_1,\ldots,k_N)]\!]^{C_N'} \notag \\[-2mm]
& \hspace{32mm} \in \L(\HS^{\tilde A_N^O \alpha_NC_N'F}). \label{eq:def_tilde_MN1_QCCC}
\end{align}

\medskip

The TP conditions for the internal operations of a QC-CC previously given in Eqs.~\eqref{eq:TP_constr_QCCC_1}--\eqref{eq:TP_constr_QCCC_N} (in terms of the Choi matrices $M_\emptyset^{\to k_1}$, $M_{(k_1,\ldots,k_n)}^{\to k_{n+1}}$ and $M_{(k_1,\ldots,k_N)}^{\to F}$ of the corresponding maps) are readily recovered in this alternative formulation by imposing that the enlarged operations  $\tilde \M_n$ are TP (on their effective input spaces) and that they preserve the probabilities for a given order of the thus-far applied external operations to be realised; see Appendix~\ref{app:proof_charact_QCCCs}.

\medskip

Finally, let us check that this formulation of QC-CCs is indeed equivalent to that given in the previous section.
Note first that the operations $\tilde \A_n$ and $\tilde \M_n$ described above are applied in a well-defined order. 
The global induced map (in its Choi version) is then obtained, similarly to Eq.~\eqref{eq:induced_M_QCFO} for the QC-FO case, by link-multiplying all these operations:
\begin{align}
M & = \tilde M_1 * \tilde A_1 * \tilde M_2 * \cdots * \tilde M_N * \tilde A_N * \tilde M_{N+1} \notag \\[1mm]
& = \sum_{(k_1,\ldots,k_N)} \tilde M_\emptyset^{\to k_1} * \tilde A_{k_1} * \tilde M_{(k_1)}^{\to k_2} * \cdots \notag \\[-4mm]
& \hspace{25mm} \cdots * \tilde M_{(k_1,\ldots,k_{N-1})}^{\to k_N} * \tilde A_{k_N} * \tilde M_{(k_1,\ldots,k_N)}^{\to F} \notag \\[1mm]
& = \sum_{(k_1,\ldots,k_N)} M_\emptyset^{\to k_1} * A_{k_1} * M_{(k_1)}^{\to k_2} * \cdots \notag \\[-4mm]
& \hspace{25mm} \cdots * M_{(k_1,\ldots,k_{N-1})}^{\to k_N} * A_{k_N} * M_{(k_1,\ldots,k_N)}^{\to F} \notag \\[1mm]
& \hspace{35mm} \in \L(\HS^{PF}) \label{eq:Choi_M_QCCC_v2}
\end{align}
where the second equality is obtained by ``contracting'' all control systems in the link products (in particular, by exploiting that $[\![(k_1,\ldots,k_n)]\!]^{C_n} * [\![(k_1',\ldots,k_n')]\!]^{C_n} = \delta_{k_1,k_1'}\cdots\delta_{k_n,k_n'}$, with $\delta$ the Kronecker delta), and where our formal identification (via the appropriate isomorphism, see Footnote~\ref{footnote:generic_IO_spaces}) of the external operations' input and output spaces $\HS^{A_{k_n}^I}$ and $\HS^{A_{k_n}^O}$ with the generic spaces $\HS^{\tilde A_n^I}$ and $\HS^{\tilde A_n^O}$ at each time slot $t_n$ allowed us, in the last line, to remove the tildes and obtain the third equality.

We thus recover Eq.~\eqref{eq:induced_M_QCCC} from the previous description of QC-CCs, and consequently also the same process matrix description of our QC-CC as in Proposition~\ref{prop:descr_W_QCCC}, and the same characterisation of QC-CC process matrices as in Proposition~\ref{prop:charact_W_QCCC}.

\subsubsection{``Purifying'' the internal circuit operations}

With the goal of progressing towards circuits with quantum, rather than classical, control of causal order, we make here one further simplification.
We will show that it suffices to consider only ``pure'' QC-CCs, in which all the internal circuit operations are isometries, and to consider the action of such QC-CCs when pure external operations are inserted in them.
This will make it significantly easier to describe coherence between the control and target/ancillary systems, which will be a crucial aspect of the shift to quantum control.

To this end, let us note that since we do not make any particular assumption about the ancillary Hilbert spaces $\HS^{\alpha_n}$ (e.g., about their dimension), they can be used to ``purify''%
\footnote{In particular, according to Stinespring's dilation theorem~\cite{stinespring55}, for any CP map $\M:\L(\HS^X)\to\L(\HS^Y)$ there exists an ancillary Hilbert space $\HS^\alpha$ and a linear operator $V:\HS^X \to \HS^{Y\alpha}$ such that $\M(\rho) = \Tr_\alpha(V\rho V^\dagger) \ \forall \rho$.
In the case of the generalised quantum circuits we consider here, the ancillary ``purifying'' systems can be carried through the circuit via the ancillary systems $\HS^{\alpha_n}$ before being traced out at the very end.}
the operations $\M_{(k_1,\ldots,k_{n-1})}^{\to k_n}$ for $1 \le n \le N$.
Without loss of generality, we can thus assume they consist of the application of just one Kraus operator, which we shall denote $V_{(k_1,\ldots,k_{n-1})}^{\to k_n}: \HS^{A_{k_{n-1}}^O\alpha_{n-1}} \to \HS^{A_{k_n}^I\alpha_n}$ (so that $\M_{(k_1,\ldots,k_{n-1})}^{\to k_n}(\varrho) = V_{(k_1,\ldots,k_{n-1})}^{\to k_n} \varrho V_{(k_1,\ldots,k_{n-1})}^{\to k_n\, \dagger}$); the Choi representations of the operations are then simply
\begin{align}
M_{(k_1,\ldots,k_{n-1})}^{\to k_n} = \dketbra{V_{(k_1,\ldots,k_{n-1})}^{\to k_n}}{V_{(k_1,\ldots,k_{n-1})}^{\to k_n}},
\end{align}
where $\dket{V_{(k_1,\ldots,k_{n-1})}^{\to k_n}} \in \HS^{A_{k_{n-1}}^O\alpha_{n-1}A_{k_n}^I\alpha_n}$ (or $\dket{V_\emptyset^{\to k_1}} \in \HS^{PA_{k_1}^I\alpha_1}$ for $n=1$) is the Choi vector representation of $V_{(k_1,\ldots,k_{n-1})}^{\to k_n}$, as introduced in Sec.~\ref{subsec:preliminaries}.
Similarly, for the final operations $\M_{(k_1,\ldots,k_N)}^{\to F}$, one can introduce an ancillary Hilbert space $\HS^{\alpha_{F}}$ so as to purify these operations and write them in terms of only one Kraus operator $V_{(k_1,\ldots,k_N)}^{\to F}$, before tracing out the ancillary system in $\HS^{\alpha_{F}}$. 
Without loss of generality we can thus write
\begin{align}
M_{(k_1,\ldots,k_N)}^{\to F} = \Tr_{\alpha_{F}} \dketbra{V_{(k_1,\ldots,k_N)}^{\to F}}{V_{(k_1,\ldots,k_N)}^{\to F}},
\end{align}
with $\dket{V_{(k_1,\ldots,k_N)}^{\to F}} \in \HS^{A_{k_N}^O\alpha_NF\alpha_{F}}$.

It will similarly be convenient to assume that the external operations $\A_k$ correspond to the application of a single Kraus operator. 
In a slight, but generally unambiguous, conflict of notation we will reuse the notation $A_k$ for this Kraus operator, with Choi vector representation $\dket{A_k} \in \HS^{A_k^IA_k^O}$ (so that the Choi matrix of the map $\A_k$ is now $\dketbra{A_k}{A_k} \in \L(\HS^{A_k^IA_k^O})$). 
The general case of multiple Kraus operators can then easily be recovered by summing what we would get for different combinations of Kraus operators for each $\A_k$.

With these simplifications, the calculation of the induced map $\M$ following Eq.~\eqref{eq:induced_M_QCCC} is made significantly easier.
More importantly, when we consider a quantum control system it will allow us to directly study a pure global map $V: \HS^P \to \HS^{F\alpha_F}$, with Choi vector $\dket{V} \in \HS^{PF\alpha_F}$ as a function of all pure external and internal operations (and only trace out the $\HS^{\alpha_F}$ ancillary system at the very end).

\subsection{Turning the classical control into a coherent control of causal order}
\label{subsec:QCQC_descr}

The reformulation of QC-CCs above provides a clearer view of how to proceed towards quantum control of causal order, namely by turning the classical control system into a quantum one which can be used to coherently control the internal circuit operations.
In order to capture the most general form of quantum control, however, it is necessary to make one crucial adjustment to the control system.
Recall that in the case of a classical control, the state $[\![(k_1,\ldots,k_n)]\!]^{C_n^{(\prime)}}$ of the control system was used to keep track of the whole history of which operations had been applied so far. 
For a quantum control we will instead use the control system to record only which operations have already been applied and to encode which operation should be applied at a given time slot, but, importantly, we will not require that it keep track of the order in which the previous operations were applied.	

For these circuits to define valid supermaps, recall that we need to ensure that each external operation is applied once and only once.
The unordered set $\{k_1,\dots,k_{n-1}\}$ of operations already applied is thus the minimal information needed to ensure that, at each time slot $t_n$ and in each coherent ``branch'' of the computation, an operation is applied that has not previously been used in that branch.
This relaxed control system will notably allow, for example, for different orders $(k_1,\ldots,k_{n-1})$ and $(k_1',\ldots,k_{n-1}')$ corresponding to the same set $\K_{n-1} = \{k_1,\ldots,k_{n-1}\} = \{k_1',\ldots,k_{n-1}'\}$ to ``interfere'' and thus make the causal order indefinite.

\medskip

In what follows, it will be useful to adopt the following notation.
We generically denote by $\K_n$ a subset of $\N$ with $n$ elements (with $0 \le n \le N$), so that in particular $K_0 = \emptyset$ and $\K_N = \N$.
We identify singletons with their single element, so as to write, for instance $\K \backslash k = \K \backslash \{k\}$, $k_N = \{k_N\} = \N \backslash \K_{N-1}$, or $\K_{n-1}\cup k_n = \K_{n-1}\cup\{k_n\} = \K_n$.

\subsubsection{General description}

In order to define quantum circuits with quantum control of causal order (QC-QCs), we thereby consider generalised quantum circuits of the form represented in Fig.~\ref{fig:QCQC}.
As anticipated by the above discussions, we exploit a quantum control system in the Hilbert spaces $\HS^{C_n^{(\prime)}}$, which now have computational basis states of the form $\ket{\K_{n-1},k_n}^{C_n^{(\prime)}}$, where $\K_{n-1}$ specifies the (unordered) set of $n-1$ operations that have already been applied before the time slot $t_n$, and $k_n \notin \K_{n-1}$ labels the operation to be applied at time slot $t_n$.
This control system thus coherently controls coherently both the application of the external operations $A_{k_n}$ (which, recall, we now identify with a single Kraus operator) as well as the pure operations $V_{\K_{n-1},k_n}^{\to k_{n+1}}$ within the internal circuit operations.

\begin{figure*}[t]
	\begin{center}
	\includegraphics[scale=.75]{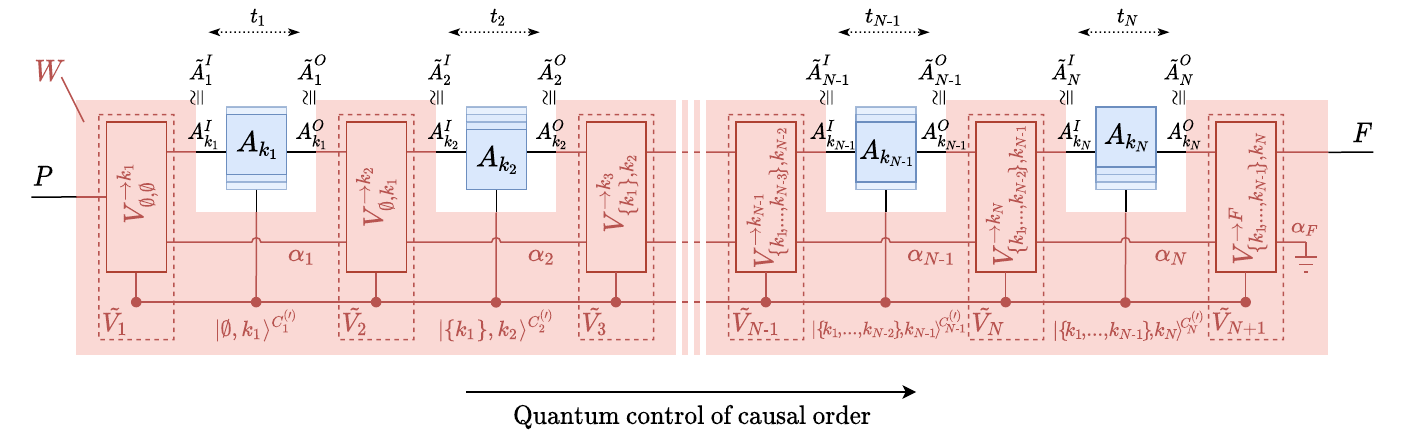}
	\end{center}
	\caption{Quantum circuit with quantum control of causal order (QC-QC). 
	We replaced the classical control system of Fig.~\ref{fig:QCCCv2} by a quantum control system with basis states $\ket{\{k_1,\ldots,k_{n-1}\},k_n}^{C_n}$, which only store information about which operations ($\{k_1,\ldots,k_{n-1}\}$) have already been applied (but not about their order) and the currently performed operation ($k_n$).
	(Note that in contrast to the previous figures, the ``boxes'' are labelled by linear operators, rather than linear CP maps).
	We illustrate here the component $\ket{w_{(k_1, \ldots, k_N,F)}}$ of the process, corresponding to the order $(k_1, \ldots, k_N)$---which is coherently superposed with other components, corresponding to different orders, in order to obtain the process matrix $W$ from the internal operations $V_{\K_{n-1},k_n}^{\to k_{n+1}}$ of the circuit; see Proposition~\ref{prop:descr_W_QCQC}.
	}
	\label{fig:QCQC}
\end{figure*}

To achieve this, we work, as in the previous subsection, with the ``generic'' input and output spaces $\HS^{\tilde A_n^I}$ and $\HS^{\tilde A_n^O}$, isomorphic to $\HS^{A_{k_n}^I}$ and $\HS^{A_{k_n}^O}$, respectively.
The external operations $A_{k_n}: \HS^{A_{k_n}^I} \to \HS^{A_{k_n}^O}$  can then be rewritten as operations on these spaces as $\tilde A_{k_n}: \HS^{\tilde A_n^I} \to \HS^{\tilde A_n^O}$ with Choi vectors $\dket{\tilde A_{k_n}}\in\HS^{\tilde A_n^I \tilde A_n^O}$.%
\footnote{\label{fn:QCQCtilde_translation}Analogously to Footnote~\ref{footnote:generic_IO_spaces}, one more formally has, in terms of the corresponding Choi vectors, $\dket{\tilde A_{k_n}} = \big( \dket{\id}^{A_{k_n}^I\tilde A_n^I} \otimes \dket{\id}^{A_{k_n}^O\tilde A_n^O} \big) * \dket{A_{k_n}}$. Similarly, for the internal operations, one has $\dket{\tilde V_{\K_{n-1},k_n}^{\to k_{n+1}}} = \big( \dket{\id}^{A_{k_n}^O\tilde A_n^O} \otimes \dket{\id}^{A_{k_{n+1}}^I\tilde A_{n+1}^I} \big) * \dket{V_{\K_{n-1},k_n}^{\to k_{n+1}}}$.}
These are then embedded into larger conditional operations $\tilde A_n$ (for $1\le n \le N$) which use the control system to apply the correct $\tilde A_{k_n}$ at time slot $t_n$ (cf.\ Eq.~\eqref{eq:tilde_An_QCCC}):%
\footnote{As in Footnote~\ref{footnote:M_slots}, one could also consider $M>N$ time slots, and fill the ``empty'' slots by trivial identity operations; this would in particular allow one to more faithfully describe the ``asymmetric'' version of the quantum switch (see Sec.~\ref{subsubsec:QSwitch}) considered, e.g., in Ref.~\cite{oreshkov19}.}
\begin{align}
\tilde A_n \coloneqq & \sum_{\K_{n-1},k_n} \tilde A_{k_n} \otimes \ket{\K_{n-1},k_n}^{C_n'}\bra{\K_{n-1},k_n}^{C_n}: \notag \\[-1mm]
& \hspace{20mm} \HS^{\tilde A_n^I C_n} \to \HS^{\tilde A_n^O C_n'}, \label{eq:tilde_An_QCQC}
\end{align}
where here, as in the remainder of what follows, summations of this form assume $k_n \notin \K_{n-1}$.
The corresponding Choi vector of $\tilde A_n$ is 
\begin{align}
\dket{\tilde A_n} = & \sum_{\K_{n-1},k_n} \dket{\tilde A_{k_n}} \otimes \ket{\K_{n-1},k_n}^{C_n} \otimes \ket{\K_{n-1},k_n}^{C_n'} \notag \\[-1mm]
& \hspace{20mm} \in \HS^{\tilde A_n^I C_n\tilde A_n^O C_n'}. \label{eq:Choi_tilde_An_QCQC}
\end{align}

\medskip

In place of the CP maps $\M_{(k_1,\ldots,k_n)}^{\to k_{n+1}}$ for QC-CCs, the internal circuit operations now control (coherently) the application of ``pure'' operators $V_{\K_{n-1},k_n}^{\to k_{n+1}}: \HS^{A_{k_n}^O \alpha_n} \to \HS^{A_{k_{n+1}}^I \alpha_{n+1}}$ (for $1 \le n \le N-1$, with $k_n \notin \K_{n-1}, k_{n+1}\notin \K_{n-1} \cup k_n$).
These operators depend on both $\K_{n-1}$ and $k_n$, and take the output of $A_{k_n}$ (along with the ancillary system in $\HS^{\alpha_n}$) to the input of $A_{k_{n+1}}$ (and the ancillary system in $\HS^{\alpha_{n+1}}$).
Similarly, the first and last internal operations control the operators $V_{\emptyset,\emptyset}^{\to k_1}: \HS^P \to \HS^{A_{k_1}^I \alpha_1}$ and $V_{\K_{N-1},k_N}^{\to F}: \HS^{A_{k_N}^O \alpha_N} \to \HS^{F\alpha_F}$ (with $k_N = \N \backslash \K_{N-1}$).
As with the external operations, we will work with the translation of these operators into the generic input and output spaces $\HS^{\tilde A_n^I}$ and $\HS^{\tilde A_n^O}$, denoted $\tilde V_{\emptyset,\emptyset}^{\to k_1}: \HS^P \to \HS^{\tilde A_1^I \alpha_1}$, $\tilde V_{\K_{n-1},k_n}^{\to k_{n+1}}: \HS^{\tilde A_n^O \alpha_n} \to \HS^{\tilde A_{n+1}^I \alpha_{n+1}}$ (for $1 \le n \le N-1$) and $\tilde V_{\K_{N-1},k_N}^{\to F}: \HS^{\tilde A_N^O \alpha_N} \to \HS^{F\alpha_F}$, and with respective Choi vectors  $\dket{\tilde V_{\emptyset,\emptyset}^{\to k_1}} \in \HS^{P\tilde A_1^I \alpha_1}$, $\dket{\tilde V_{\K_{n-1},k_n}^{\to k_{n+1}}} \in \HS^{\tilde A_n^O \alpha_n\tilde A_{n+1}^I \alpha_{n+1}}$ and $\dket{\tilde V_{\K_{N-1},k_N}^{\to F}} \in \HS^{\tilde A_N^O \alpha_NF\alpha_F}$.

\medskip

The circuit, as shown in Fig.~\ref{fig:QCQC}, is then obtained by embedding these operations into larger operations that involve the control system.
More precisely, before the time slot $t_1$, the circuit transforms the input state into a state that is sent coherently to all operations $A_{k_1}$ and, possibly, also to some ancillary system in $\HS^{\alpha_1}$, while accordingly attaching the control state $\ket{\emptyset,k_1}^{C_1}$ to each component of the superposition.
That is, instead of the operation $\tilde \M_1$ in the QC-CC case, the circuit now applies a (pure) operation of the form
\begin{align}
\tilde V_1 & \coloneqq \sum_{k_1} \tilde V_{\emptyset,\emptyset}^{\to k_1} \otimes \ket{\emptyset,k_1}^{C_1} : \quad \HS^P \to \HS^{\tilde A_1^I \alpha_1C_1}. \label{eq:def_V1_QCQC}
\end{align}

Between the time slots $t_n$ and $t_{n+1}$, for $1 \le n \le N-1$, the circuit acts coherently on the target, ancillary, and control systems.
It coherently controls the operation $V_{\K_{n-1},k_n}^{\to k_{n+1}}$ to apply depending on the state $\ket{\K_{n-1},k_n}^{C_n'}$ of the control system, before coherently sending the target system to all remaining $A_{k_{n+1}}$'s (with $k_{n+1} \notin \K_{n-1}\cup k_n$) and, possibly, an ancillary system in $\HS^{\alpha_{n+1}}$, while updating the control system to $\ket{\K_{n-1}\cup k_n,k_{n+1}}^{C_{n+1}}$, thereby encoding the next operation to apply, $k_{n+1}$ and erasing the information about the specific previous operation $k_n$ (among all the previously applied operations) by just recording the whole set of previously applied operations $\K_n \coloneqq \K_{n-1}\cup k_n$.
Formally, the circuit applies the operation 
\begin{align}
\tilde V_{n+1} \coloneqq & \sum_{\substack{\K_{n-1}, \\ k_n,k_{n+1}}} \tilde V_{\K_{n-1},k_n}^{\to k_{n+1}} \otimes \ket{\K_{n-1}\cup k_n,k_{n+1}}^{C_{n+1}} \notag \\[-8mm]
& \hspace{50mm} \bra{\K_{n-1},k_n}^{C_n'}: \notag \\[1mm]
& \hspace{15mm} \HS^{\tilde A_n^O \alpha_nC_n'} \to \HS^{\tilde A_{n+1}^I \alpha_{n+1}C_{n+1}}, \label{eq:tilde_Vn1_QCQC}
\end{align}
where the sum assumes, extending our established convention, that $k_n, k_{n+1}\in \N \backslash \K_{n-1}$ with $k_n \neq k_{n+1}$.

Finally, after time slot $t_N$, the application of the operations $V_{\K_{N-1},k_N}^{\to F}$ (with $\K_{N-1} = \N\backslash k_N$) is coherently controlled on the control system, taking the output of $A_{k_N}$, together with the ancillary state in $\HS^{\alpha_N}$, to the global output of the circuit in $\HS^F$ and, possibly, an ancillary system in $\HS^{\alpha_F}$.
The circuit thus applies the operation
\begin{align}
\tilde V_{N+1} & \coloneqq \sum_{k_N} \tilde V_{\N\backslash k_N,k_N}^{\to F} \otimes \bra{\N\backslash k_N,k_N}^{C_N'}: \notag \\[-2mm]
& \hspace{20mm} \HS^{\tilde A_N^O \alpha_NC_N'} \to \HS^{F\alpha_F}. \label{eq:def_VN1_QCQC}
\end{align}
The final ancillary system in $\HS^{\alpha_F}$ is subsequently discarded by the circuit.
Note that, in this final operation $\tilde V_{N+1}$ the control system does not need to be updated as, with $F$ replacing $A_{k_{N+1}}^I$, it would always be in the state $\ket{\N,F}^{C_{N+1}}$.
Indeed, this is crucial to allowing different causal histories to interfere within the QC-QC.
Moreover, this highlights the fact that, at the end of the circuit, each external operation has been applied exactly once, as required if the circuit is to give us a valid quantum supermap.

The Choi vectors of the operators, for completeness, are
\begin{align}
& \dket{\tilde V_1} = \sum_{k_1} \dket{\tilde V_{\emptyset,\emptyset}^{\to k_1}} \otimes \ket{\emptyset,k_1}^{C_1} \quad \in \HS^{P\tilde A_1^I \alpha_1C_1}, \label{eq:Choi_tilde_V1_QCQC} \\[1mm]
& \dket{\tilde V_{n+1}} = \sum_{\substack{\K_{n-1}, \\ k_n,k_{n+1}}} \dket{\tilde V_{\K_{n-1},k_n}^{\to k_{n+1}}} \otimes \ket{\K_{n-1},k_n}^{C_n'} \notag \\[-7mm]
& \hspace{40mm} \otimes \ket{\K_{n-1}\cup k_n,k_{n+1}}^{C_{n+1}} \notag \\[2mm]
& \hspace{30mm} \in \HS^{\tilde A_n^O \alpha_nC_n'\tilde A_{n+1}^I \alpha_{n+1}C_{n+1}}, \label{eq:Choi_tilde_Vn1_QCQC} \\[2mm]
& \dket{\tilde V_{N+1}} = \sum_{k_N} \dket{\tilde V_{\N\backslash k_N,k_N}^{\to F}} \otimes \ket{\N\backslash k_N,k_N}^{C_N'} \notag \\[-1mm]
& \hspace{35mm} \in \HS^{\tilde A_N^O \alpha_NC_N'F \alpha_F}. \label{eq:Choi_tilde_VN1_QCQC}
\end{align}
From these, the Choi matrices for the internal operations as CPTP maps (as considered in the previous sections), can be recovered as $\tilde M_n = \dketbra{\tilde V_n}{\tilde V_n}$ for $n \le N$, and $\tilde M_{N+1} = \Tr_{\alpha_{F}} \dketbra{\tilde V_{N+1}}{\tilde V_{N+1}}$.

\subsubsection{Trace-preserving conditions}

The TP conditions on the internal operations arise from the requirement that the operators $\tilde V_n : \varrho \mapsto \tilde V_n \varrho \tilde V_n^\dagger$ must act as isometries on their effective input spaces.
As for the previously considered classes of circuits, we will simply state the TP conditions here, while their full derivation is given in Appendix~\ref{app:proof_charact_QCQCs}.

To express the conditions in a compact form, let us first define, for all $1 \le n \le N$ and all $(k_1,\ldots,k_n)$, 
\begin{align}
\ket{w_{(k_1,\ldots,k_n)}} & \coloneqq \dket{V_{\emptyset,\emptyset}^{\to k_1}} \!*\! \dket{V_{\emptyset,k_1}^{\to k_2}} \!*\! \cdots \!*\! \dket{V_{\!\{k_1,\ldots,k_{n-2}\},k_{n-1}}^{\to k_n}} \notag \\
& \hspace{10mm} \in \HS^{PA_{\{k_1,\ldots,k_{n-1}\}}^{IO} A_{k_n}^I\alpha_n}, \label{eq:def_w_k1_kn_QCQC}
\end{align}
in terms of the Choi vectors of the operators $V_{\K_{n-1},k_n}^{\to k_{n+1}}$ (i.e., in the original, non-generic, Hilbert spaces) and, for all strict subsets $\K_{n-1}$ of $\N$ with $|\K_{n-1}|=n-1$ and all $k_n \in \N\backslash\K_{n-1}$,
\begin{align}
& \hspace{-3mm} \ket{w_{(\K_{n-1},k_n)}} \notag \\
& \coloneqq \!\!\!\!\sum_{\substack{(k_1,\ldots,k_{n-1}): \\ \{k_1,\ldots,k_{n-1}\} = \K_{n-1}}} \!\!\!\!\! \ket{w_{(k_1,\ldots,k_n)}} \quad \in \HS^{PA_{\K_{n-1}}^{IO} A_{k_n}^I\alpha_n}, \label{eq:def_w_Knm1_kn}
\end{align}
where the sum is taken over all ordered sequences $(k_1,\ldots,k_{n-1})$ of $\K_{n-1}$.
For the case of $n = N+1$, replacing $k_{N+1}$ by $F$, we similarly obtain, for all $(k_1,\ldots,k_N)$, the vectors $\ket{w_{(k_1,\ldots,k_N,F)}}$ and $\ket{w_{(\N,F)}} \in \HS^{PA_\N^{IO} F\alpha_F}$ (see Eqs.~\eqref{eq:W_QCQC}--\eqref{eq:def_w_k1_kN_F_QCQC} in Proposition~\ref{prop:descr_W_QCQC} below for explicit definitions).
Note that by construction we have $\ket{w_{(k_1,\ldots,k_n,k_{n+1})}} = \ket{w_{(k_1,\ldots,k_n)}} * \dket{V_{\{k_1,\ldots,k_{n-1}\},k_n}^{\to k_{n+1}}}$ and
\begin{align}
\ket{w_{(\K_n,k_{n+1})}} = \sum_{k_n \in \K_n} \ket{w_{(\K_n\backslash k_n,k_n)}} * \dket{V_{\K_n\backslash k_n,k_n}^{\to k_{n+1}}}. \label{eq:w_Kn_kn1_sum}
\end{align}

In terms of these vectors the TP conditions can then be written as
\begin{align}
& \hspace{2mm}\sum_{k_1} \Tr_{A_{k_1}^I\alpha_1} \ketbra{w_{(\emptyset,k_1)}}{w_{(\emptyset,k_1)}} = \id^P, \label{eq:TP_constr_QCQC_1}\\[1mm]
& \forall \, n=1,\dots,N-1,\ \forall \, \K_n,\notag \\
& \hspace{2mm}\sum_{k_{n+1} \notin \K_n} \!\! \Tr_{A_{k_{n+1}}^I\alpha_{n+1}} \ketbra{w_{(\K_n,k_{n+1})}}{w_{(\K_n,k_{n+1})}} \notag \\
& \hspace{4mm} = \!\! \sum_{k_n \in \K_n} \!\!\! \Tr_{\alpha_n} \! \ketbra{w_{(\K_n\backslash k_n,k_n)}}{w_{(\K_n\backslash k_n,k_n)}} \otimes \id^{A_{k_n}^O}, \label{eq:TP_constr_QCQC_n} \\[1mm]
& \text{and}\notag \\
& \hspace{2mm}\Tr_{F\alpha_F} \ketbra{w_{(\N,F)}}{w_{(\N,F)}} \notag \\
& \hspace{4mm} = \!\! \sum_{k_N \in \N} \!\!\! \Tr_{\alpha_N} \! \ketbra{w_{(\N\backslash k_N,k_N)}}{w_{(\N\backslash k_N,k_N)}} \otimes \id^{A_{k_N}^O}, \label{eq:TP_constr_QCQC_N}
\end{align}
where we note that, in the first condition, $\ket{w_{(\emptyset,k_1)}} = \ket{w_{(k_1)}} = \dket{V_{\emptyset,\emptyset}^{\to k_1}}$.

\medskip

The previous description of the process under consideration, as represented in Fig.~\ref{fig:QCQC} and with internal circuit operations $V_{\emptyset,\emptyset}^{\to k_1}, V_{\K_{n-1},k_n}^{\to k_{n+1}}, V_{\N\backslash k_N,k_N}^{\to F}$ giving vectors $\ket{w_{(\emptyset,k_1)}},\ket{w_{(\K_n\backslash k_n,k_n)}},\ket{w_{(\N\backslash k_N,k_N)}}$ satisfying the TP constraints of Eqs.~\eqref{eq:TP_constr_QCQC_1}--\eqref{eq:TP_constr_QCQC_N}, formally defines what we call a \emph{Quantum Circuit with Quantum Control of causal order} (QC-QC).

\subsubsection{Process matrix description}

To obtain the description of a QC-QC as a process matrix, we proceed analogously to the previous sections.
Indeed, note that as in the previous cases, the operations $\tilde V_n$ and $\tilde A_n$ are applied in a well-defined order.
The global operation $V: \HS^P \to \HS^{F\alpha_F}$ induced by the circuit (prior to tracing out $\HS^{\alpha_F}$) when the external operations $A_k$ are applied is obtained by composing all these operations $\tilde V_n$ and $\tilde A_n$ in that well-defined order.
Correspondingly, and similarly to the previous cases (see, e.g., Eqs.~\eqref{eq:induced_M_QCFO} and~\eqref{eq:Choi_M_QCCC_v2}), its Choi vector $\dket{V} \in \HS^{PF\alpha_F}$ is obtained by link-multiplying the Choi vectors of all these operations.
With the Choi vectors given by Eq.~\eqref{eq:Choi_tilde_An_QCQC} and Eqs.~\eqref{eq:Choi_tilde_V1_QCQC}--\eqref{eq:Choi_tilde_VN1_QCQC}, we obtain
\begin{align}
\dket{V} & = \dket{\tilde V_1} * \dket{\tilde A_1} * \dket{\tilde V_2} * \cdots * \dket{\tilde V_N} * \dket{\tilde A_N} * \dket{\tilde V_{N+1}} \notag \\
& = \sum_{(k_1,\ldots,k_N)} \dket{\tilde V_{\emptyset,\emptyset}^{\to k_1}} * \dket{\tilde A_{k_1}} * \dket{\tilde V_{\emptyset,k_1}^{\to k_2}} * \cdots \notag \\[-4mm]
& \hspace{25mm} \cdots * \dket{\tilde V_{\{k_1,\ldots,k_{N-2}\},k_{N-1}}^{\to k_N}} * \dket{\tilde A_{k_N}} \notag \\
& \hspace{48mm} * \dket{\tilde V_{\{k_1,\ldots,k_{N-1}\},k_N}^{\to F}} \notag \\[1mm]
& = \sum_{(k_1,\ldots,k_N)} \big( \dket{A_1} \otimes \cdots \otimes \dket{A_N} \big) * \ket{w_{(k_1,\ldots,k_N,F)}} \notag \\[1mm]
& = \big( \dket{A_1} \otimes \cdots \otimes \dket{A_N} \big) * \ket{w_{(\N,F)}} \ \in \HS^{PF\alpha_F} \label{eq:Choi_V_QCQC}
\end{align}
where the second equality is obtained by contracting the control systems (similarly to Eq.~\eqref{eq:Choi_M_QCCC_v2}), and the final two follow by identifying the external operations' Hilbert spaces with the corresponding generic ones (via the appropriate isomorphism, see Footnote~\ref{fn:QCQCtilde_translation}), reordering the terms in the link product (as in Eq.~\eqref{eq:induced_Mk1_kN_QCCC}), and rewriting the link product of internal operators in terms of the vectors $\ket{w_{(k_1,\ldots,k_N,F)}}$ and $\ket{w_{(\N,F)}} \in \HS^{PA_\N^{IO} F\alpha_F}$ defined in Eqs.~\eqref{eq:def_w_k1_kn_QCQC} and~\eqref{eq:def_w_Knm1_kn} (with $k_{N+1}$ replaced by $F$; cf.\ Eqs.~\eqref{eq:W_QCQC}--\eqref{eq:def_w_k1_kN_F_QCQC} below).

Analogous to the identification of the process matrix in the previous sections, we can identify $\ket{w_{(\N,F)}}$ as a ``process vector'' describing the QC-QC in the pure Choi representation prior to $\HS^{\alpha_F}$ being discarded.
In order to obtain the process matrix, we write the corresponding Choi matrix and trace out $\HS^{\alpha_F}$.
We thus obtain the following process matrix description for general QC-QCs:

\begin{proposition}[Process matrix description of QC-QCs] \label{prop:descr_W_QCQC}
The process matrix corresponding to the quantum circuit with quantum control of causal order depicted on Fig.~\ref{fig:QCQC} is
\begin{align}
W = & \Tr_{\alpha_{F}} \ketbra{w_{(\N,F)}}{w_{(\N,F)}} \notag \\
\text{with} \quad \ket{w_{(\N,F)}} \coloneqq & \sum_{(k_1,\ldots,k_N)} \ket{w_{(k_1,\ldots,k_N,F)}} \label{eq:W_QCQC}
\end{align}
and with
\begin{align}
& \ket{w_{(k_1,\ldots,k_N,F)}} \notag \\
& \quad \coloneqq \dket{V_{\emptyset,\emptyset}^{\to k_1}} * \dket{V_{\emptyset,k_1}^{\to k_2}} * \dket{V_{\{k_1\},k_2}^{\to k_3}} * \cdots \notag \\
& \qquad \quad \cdots * \dket{V_{\{k_1,\ldots,k_{N-2}\},k_{N-1}}^{\to k_N}} * \dket{V_{\{k_1,\ldots,k_{N-1}\},k_N}^{\to F}} \notag \\
& \hspace{40mm} \in \HS^{PA_\N^{IO} F\alpha_F}. \label{eq:def_w_k1_kN_F_QCQC}
\end{align}
\end{proposition}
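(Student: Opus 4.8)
The plan is to compute the Choi vector of the global pure operator $V: \HS^P \to \HS^{F\alpha_F}$ induced by the circuit directly, exploiting the fact that the embedded operations $\tilde V_1, \tilde A_1, \tilde V_2, \ldots, \tilde A_N, \tilde V_{N+1}$ are composed in a single, well-defined order. First I would invoke the composition rule for the pure-case link product, Eq.~\eqref{eq:compose_link_prod_pure}, which guarantees that the Choi vector of the composition equals the link product $\dket{V} = \dket{\tilde V_1} * \dket{\tilde A_1} * \cdots * \dket{\tilde A_N} * \dket{\tilde V_{N+1}}$ of the individual Choi vectors --- the first line of Eq.~\eqref{eq:Choi_V_QCQC}. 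Substituting the explicit Choi vectors given in Eq.~\eqref{eq:Choi_tilde_An_QCQC} and Eqs.~\eqref{eq:Choi_tilde_V1_QCQC}--\eqref{eq:Choi_tilde_VN1_QCQC} then reduces the task to a bookkeeping exercise on the control registers.

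The heart of the argument is the contraction of the control registers in this link product. Each register $C_n$ is shared only between $\dket{\tilde V_n}$ and $\dket{\tilde A_n}$, and each $C_n'$ only between $\dket{\tilde A_n}$ and $\dket{\tilde V_{n+1}}$; since all control states are orthonormal computational-basis vectors, the link product over each of them reduces to a Kronecker delta forcing the labels to agree across consecutive operations. I would track these constraints step by step: they enforce that the set recorded in the control register grows by exactly one element at each time slot, so that the surviving terms are precisely the fully-ordered histories $(k_1,\ldots,k_N)$, and for each such history the remaining target-and-ancilla factors assemble into the link product $\dket{V_{\emptyset,\emptyset}^{\to k_1}} * \cdots * \dket{V_{\{k_1,\ldots,k_{N-1}\},k_N}^{\to F}} = \ket{w_{(k_1,\ldots,k_N,F)}}$ of Eq.~\eqref{eq:def_w_k1_kN_F_QCQC}.

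The step I expect to require the most care is the \emph{coherent} collapse of the control register at the very end. After the final operation $\tilde V_{N+1}$ there is no control register left to distinguish between different orders --- the control would always read $\ket{\N,F}$ --- so all the ordered contributions land in the same Hilbert space and add coherently, yielding $\ket{w_{(\N,F)}} = \sum_{(k_1,\ldots,k_N)} \ket{w_{(k_1,\ldots,k_N,F)}}$ and hence $\dket{V} = \big(\dket{A_1}\otimes\cdots\otimes\dket{A_N}\big) * \ket{w_{(\N,F)}}$, the last line of Eq.~\eqref{eq:Choi_V_QCQC}. This must be contrasted with the QC-CC case, where the register retains the \emph{full} order and the analogous terms remain orthogonal, producing a classical mixture rather than a superposition; verifying that the order information is genuinely erased (while the unordered set is faithfully tracked) is the delicate point.

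Finally, to pass from the pure operator $V$ to the process matrix, I would form $\dketbra{V}{V}$ and trace out the discarded ancilla $\HS^{\alpha_F}$ to recover the Choi matrix $M = \Tr_{\alpha_F}\dketbra{V}{V}$ of the induced CP map $\M: \L(\HS^P)\to\L(\HS^F)$. Using the consistency between the pure- and mixed-case link products (namely $(\ket{a}*\ket{b})(\ket{a}*\ket{b})^\dagger = \ketbra{a}{a} * \ketbra{b}{b}$), together with the fact that $\Tr_{\alpha_F}$ commutes with the link product over the external spaces (since $\HS^{\alpha_F}$ is shared with no $A_k$), this gives $M = (A_1\otimes\cdots\otimes A_N) * W$ with $A_k = \dketbra{A_k}{A_k}$ and $W = \Tr_{\alpha_F}\ketbra{w_{(\N,F)}}{w_{(\N,F)}}$. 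The general case of external maps with several Kraus operators follows by linearity, summing over Kraus components; and since the $A_k$'s span the whole space of Hermitian operators on $\HS^{A_\N^{IO}}$, the matrix $W$ identified in the form of Eq.~\eqref{eq:choi_map_M} is unique (as noted at the end of Sec.~\ref{sec:QC_supermaps}), which completes the proof.
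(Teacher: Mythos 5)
Your proposal is correct and follows essentially the same route as the paper's own derivation: link-multiplying the Choi vectors of the embedded operations $\tilde V_n$ and $\tilde A_n$ (Eq.~\eqref{eq:Choi_V_QCQC}), contracting the control registers to obtain the coherent sum over orders $(k_1,\ldots,k_N)$, identifying $\ket{w_{(\N,F)}}$ as the process vector, and then tracing out $\HS^{\alpha_F}$ and invoking the spanning/uniqueness argument of Sec.~\ref{subsec:PM_supermaps} to identify $W$. The extra details you supply (the Kronecker-delta bookkeeping on $C_n, C_n'$, the pure-to-mixed link-product consistency, and the multi-Kraus case by linearity) are exactly the points the paper handles by reference to Eq.~\eqref{eq:Choi_M_QCCC_v2}, its footnotes, and Sec.~\ref{subsec:QCCC_revisit}.
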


Note that the process vector $\ket{w_{(\N,F)}}$ is a superposition of terms in each of which the target system is passed to each external operation exactly once (possibly in different orders).
This ensures that $W$ is linear in the operations, allowing us to obtain a valid quantum supermap, and reiterates the sense in which each operation is applied once and only once, even in the case where the order of application is placed in a superposition.
This moreover excludes, for instance, situations where a coherent control is used to control \emph{which} operations are applied (rather than their order), as considered, for example, in Refs.~\cite{oi03,abbott20,chiribella19,kristjansson19}---scenarios that indeed do not correspond to quantum supermaps.

\subsection{Characterisation}
\label{sec:QC-QC_characterisation}

The description of QC-QCs above allows us now to obtain the following characterisation of their process matrices.



\begin{proposition}[Characterisation of QC-QCs] \label{prop:charact_W_QCQC}
The process matrix of a quantum circuit with quantum control of causal order is a positive semidefinite matrix $W \in \L(\HS^{PA_\N^{IO} F})$ such that there exist positive semidefinite matrices $W_{(\K_{n-1},k_n)} \in \L(\HS^{PA_{\K_{n-1}}^{IO} A_{k_n}^I})$, for all strict subsets $\K_{n-1}$ of $\N$ and all $k_n \in \N\backslash\K_{n-1}$, satisfying
\begin{align}
& \sum_{k_1 \in \N} \Tr_{A_{k_1}^I} W_{(\emptyset,k_1)} = \id^P, \notag \\
& \forall \, \emptyset \subsetneq \K_n \subsetneq \N, \!\! \sum_{k_{n+1} \in \N \backslash \K_n} \!\!\! \Tr_{A_{k_{n+1}}^I} \!W_{(\K_n,k_{n+1})} \notag \\[-2mm]
& \hspace{40mm} = \sum_{k_n \in \K_n} W_{(\K_n \backslash k_n,k_n)}\otimes \id^{A_{k_n}^O}, \notag \\[1mm]
& \textup{and} \quad \Tr_F W = \sum_{k_N \in \N} W_{(\N \backslash k_N,k_N)}\otimes \id^{A_{k_N}^O}. \label{eq:charact_W_QCQC_decomp}
\end{align}

Conversely, any positive semidefinite matrix $W \in \L(\HS^{PA_\N^{IO} F})$ such that there exist positive semidefinite matrices $W_{(\K_{n-1},k_n)} \in \L(\HS^{PA_{\K_{n-1}}^{IO} A_{k_n}^I})$ for all $\K_{n-1} \subsetneq \N$ and $k_n \in \N\backslash\K_{n-1}$ satisfying Eq.~\eqref{eq:charact_W_QCQC_decomp} is the process matrix of a quantum circuit with quantum control of causal order.
\end{proposition}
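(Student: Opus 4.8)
The plan is to prove the two directions separately, using the process-matrix description of Proposition~\ref{prop:descr_W_QCQC} for necessity and an explicit inductive circuit construction for sufficiency. For the \emph{necessary} condition I would simply read off the candidate matrices from the process vectors already at hand: given a QC-QC, set $W_{(\K_{n-1},k_n)} \coloneqq \Tr_{\alpha_n}\ketbra{w_{(\K_{n-1},k_n)}}{w_{(\K_{n-1},k_n)}}$ using the vectors of Eq.~\eqref{eq:def_w_Knm1_kn}. These are manifestly positive semidefinite, being partial traces of rank-one projectors. Since $W = \Tr_{\alpha_F}\ketbra{w_{(\N,F)}}{w_{(\N,F)}}$ by Proposition~\ref{prop:descr_W_QCQC}, tracing out the purifying ancillae $\alpha_n$ (and $\alpha_F$) in the trace-preservation identities of Eqs.~\eqref{eq:TP_constr_QCQC_1}--\eqref{eq:TP_constr_QCQC_N} turns them, line by line, into exactly the three constraints of Eq.~\eqref{eq:charact_W_QCQC_decomp}; nothing further is needed.

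The \emph{sufficient} condition is the substantial part, and I would prove it by constructing the internal operators $V_{\K_{n-1},k_n}^{\to k_{n+1}}$ inductively from a given admissible decomposition $\{W_{(\K_{n-1},k_n)}\}$. The induction invariant is that at stage $n$ one has purifications $\ket{w_{(\K_{n-1},k_n)}}$ of $W_{(\K_{n-1},k_n)}$ with purifying ancilla $\alpha_n$, consistent with the recursion of Eq.~\eqref{eq:w_Kn_kn1_sum}. For the base case $n=1$ one takes any purifications $\ket{w_{(\emptyset,k_1)}} = \dket{V_{\emptyset,\emptyset}^{\to k_1}}$ of $W_{(\emptyset,k_1)}$; the first line of Eq.~\eqref{eq:charact_W_QCQC_decomp} is then precisely the Choi-level statement that $\tilde V_1$ of Eq.~\eqref{eq:def_V1_QCQC} acts as an isometry on $\HS^P$. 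The inductive step rests on the middle line of Eq.~\eqref{eq:charact_W_QCQC_decomp}: for each proper nonempty $\K_n$ it equates two expressions for one positive semidefinite operator $W_{\K_n} \coloneqq \sum_{k_n\in\K_n} W_{(\K_n\backslash k_n,k_n)}\otimes\id^{A_{k_n}^O} = \sum_{k_{n+1}\notin\K_n}\Tr_{A_{k_{n+1}}^I} W_{(\K_n,k_{n+1})}$.

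The crux is to convert this single identity into coherent operators. I would build an ``incoming'' purification of $W_{\K_n}$ from the known data, $\ket{I_{\K_n}} \coloneqq \sum_{k_n\in\K_n}\ket{w_{(\K_n\backslash k_n,k_n)}}\otimes\dket{\id}^{A_{k_n}^O\bar A_{k_n}^O}\otimes\ket{k_n}$, where the maximally entangled state purifies $\id^{A_{k_n}^O}$ and the orthogonal flag $\ket{k_n}$ keeps the branches distinguishable, and an ``outgoing'' purification $\ket{O_{\K_n}} \coloneqq \sum_{k_{n+1}\notin\K_n}\ket{w_{(\K_n,k_{n+1})}}\otimes\dket{\id}^{A_{k_{n+1}}^I\bar A_{k_{n+1}}^I}\otimes\ket{k_{n+1}}$ from chosen purifications of the $W_{(\K_n,k_{n+1})}$. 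Both purify $W_{\K_n}$, hence are related by a partial isometry $\Upsilon_{\K_n}$ on their purifying spaces; reading off its block from the $k_n$-sector to the $k_{n+1}$-sector and identifying the copies $\bar A_{k_n}^O,\bar A_{k_{n+1}}^I$ with the physical spaces $A_{k_n}^O,A_{k_{n+1}}^I$ (via the ``ricochet'' property of the maximally entangled states, implemented automatically by the link product) defines $V_{\K_n\backslash k_n,k_n}^{\to k_{n+1}}\colon \HS^{A_{k_n}^O\alpha_n}\to\HS^{A_{k_{n+1}}^I\alpha_{n+1}}$. By construction these satisfy Eq.~\eqref{eq:w_Kn_kn1_sum} and assemble into $\tilde V_{n+1}$ of Eq.~\eqref{eq:tilde_Vn1_QCQC}, while isometricity of $\Upsilon_{\K_n}$ reproduces Eq.~\eqref{eq:TP_constr_QCQC_n}. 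The final stage $n=N$ is handled identically with $k_{N+1}$ replaced by $F$ and ancilla $\alpha_F$, using the last line of Eq.~\eqref{eq:charact_W_QCQC_decomp} to make $\ket{w_{(\N,F)}}$ a purification of $W$ itself; feeding the constructed operators into Proposition~\ref{prop:descr_W_QCQC} then returns $W = \Tr_{\alpha_F}\ketbra{w_{(\N,F)}}{w_{(\N,F)}}$, as required.

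I expect the main obstacle to be verifying that $\tilde V_{n+1}$ is an isometry on its \emph{effective} input space rather than on the full space $\HS^{\tilde A_n^O\alpha_n C_n'}$. Unlike the QC-CC case, different orders leading to the same set $\K_n$ are coherently superposed, so this effective subspace is genuinely entangled across branches and is not a simple tensor factor; the per-branch control flags that made $\ket{I_{\K_n}}$ and $\ket{O_{\K_n}}$ honest purifications collapse once the control registers are contracted in the link product. Ensuring that the extracted $V_{\K_n\backslash k_n,k_n}^{\to k_{n+1}}$ act only on the physical output $A_{k_n}^O$, and that their coherent combination stays norm-preserving on precisely this interfering subspace, is exactly where the \emph{summed} constraint of Eq.~\eqref{eq:charact_W_QCQC_decomp}---as opposed to any branchwise condition---must be invoked, and it is the step demanding the most care.
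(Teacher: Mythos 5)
Your proposal is correct, and its skeleton coincides with the paper's proof in Appendix~\ref{app:proof_charact_QCQCs}: the necessity argument (setting $W_{(\K_{n-1},k_n)} \coloneqq \Tr_{\alpha_n}\ketbra{w_{(\K_{n-1},k_n)}}{w_{(\K_{n-1},k_n)}}$ and observing that the TP conditions~\eqref{eq:TP_constr_QCQC_1}--\eqref{eq:TP_constr_QCQC_N} then simply \emph{are} the constraints~\eqref{eq:charact_W_QCQC_decomp}) is identical, and your flagged, maximally-entangled-padded ``incoming purification'' $\ket{I_{\K_n}}$ is literally the paper's vector $\ket{\omega_{\K_n}}$ of Eq.~\eqref{eq:def_omega_Kn}, introduced for exactly the reason you give in your closing paragraph (the constraint holds only summed over $k_n$ and $k_{n+1}$, so the branches sharing a set $\K_n$ must be handled jointly, not branchwise). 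Where you genuinely differ is the key lemma used to extract the internal operators: the paper solves the link-product equation $\ket{\omega_{\K_n}} * \ket{V_{\K_n}^{k_{n+1}}} = \ket{w_{(\K_n,k_{n+1})}}$ separately for each $k_{n+1}$ via an explicit Moore--Penrose pseudoinverse (Lemma~\ref{lemma:invert_link_product}, whose range hypothesis is supplied by Observation~\ref{observation:nec_cond_range}), and then peels off $\dket{V_{\K_n\backslash k_n,k_n}^{\to k_{n+1}}}$ by contracting with $\dket{\id}^{A_{k_n}^OO'}\otimes\ket{k_n}^\Gamma$; you instead bundle all targets $k_{n+1}$ into an outgoing purification of the same operator (your $W_{\K_n}$, the paper's $\Omega_{\K_n}$) and invoke uniqueness of purifications to obtain a single partial isometry $\Upsilon_{\K_n}$, whose blocks are the $V_{\K_n\backslash k_n,k_n}^{\to k_{n+1}}$. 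These are two phrasings of the same linear algebra---your $\Upsilon_{\K_n}$ is precisely the paper's operators $V_{\K_n}^{k_{n+1}}$ summed against output flags, and Lemma~\ref{lemma:invert_link_product} is in effect a constructive proof of the purification-equivalence theorem---so your route buys reliance on a textbook result, while the paper's buys explicit formulas for the circuit operations and self-containedness.

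Two minor repairs. First, drop the factor $\dket{\id}^{A_{k_{n+1}}^I\bar A_{k_{n+1}}^I}$ from $\ket{O_{\K_n}}$: the space $\HS^{A_{k_{n+1}}^I}$ already appears as a tensor factor of $\ket{w_{(\K_n,k_{n+1})}}$, and together with $\alpha_{n+1}$ and the flag it \emph{is} the purifying system (the middle line of Eq.~\eqref{eq:charact_W_QCQC_decomp} involves $\Tr_{A_{k_{n+1}}^I}$, not a tensored identity); tensoring an unnormalised maximally entangled state on top both duplicates that space and spoils the purification property by a dimension factor. The ricochet trick is needed only on the incoming side, where an $\id^{A_{k_n}^O}$ factor must be converted into an input leg. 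Second, the isometricity worry in your last paragraph resolves itself: once the recursion~\eqref{eq:w_Kn_kn1_sum} is established for the constructed operators and the purifications are chosen so that $W_{(\K_{n-1},k_n)} = \Tr_{\alpha_n}\ketbra{w_{(\K_{n-1},k_n)}}{w_{(\K_{n-1},k_n)}}$, the TP conditions~\eqref{eq:TP_constr_QCQC_1}--\eqref{eq:TP_constr_QCQC_N} are \emph{equivalent} to the assumed constraints~\eqref{eq:charact_W_QCQC_decomp}, so they hold by hypothesis rather than requiring a separate verification---this is exactly how the paper closes its argument before invoking Proposition~\ref{prop:descr_W_QCQC}.
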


The full proof is given in Appendix~\ref{app:proof_charact_QCQCs}, and below we simply outline the proof approach.

The necessary condition is obtained by taking $W_{(\K_{n-1},k_n)} \coloneqq \Tr_{\alpha_n} \ketbra{w_{(\K_{n-1},k_n)}}{w_{(\K_{n-1},k_n)}}$, with $\ket{w_{(\K_{n-1},k_n)}}$ defined in Eq.~\eqref{eq:def_w_Knm1_kn}.
The constraints then follow readily from the form of Eq.~\eqref{eq:W_QCQC} and the TP conditions Eqs.~\eqref{eq:TP_constr_QCQC_1}--\eqref{eq:TP_constr_QCQC_N}.

To prove the sufficient condition, we once more show how to obtain an explicit construction of a QC-QC for any $W$ with such a decomposition; i.e., we show how to obtain the operators $V_{\emptyset,\emptyset}^{\to k_1}$, $V_{\K_{n-1},k_n}^{\to k_{n+1}}$ and $V_{\N \backslash k_N,k_N}^{\to F}$ satisfying the TP conditions needed to construct the circuit.
As for the other classes of circuits we have considered, this construction is not unique and different QC-QCs may be described by the same process matrix.

Finally, one can again verify that the constraints of Eqs.~\eqref{eq:charact_W_QCQC_decomp} indeed imply that $W$ satisfies the validity constraints for a process matrix (cf.\ Appendix~\ref{app:W_valid_matrices}).

\medskip

As one may expect, QC-CCs are a (strict) subset of QC-QCs.
One way to see this is from the characterisations of the corresponding classes:
given a process matrix $W$ for a QC-CC with a decomposition in terms of positive semidefinite matrices $W_{(k_1,\ldots,k_n)}$ as in Proposition~\ref{prop:charact_W_QCCC}, it is easily checked that the matrices $W_{(\K_{n-1},k_n)} \coloneqq \sum_{(k_1,\ldots,k_{n-1}):\{k_1,\ldots,k_{n-1}\}=\K_{n-1}} W_{(k_1,\ldots,k_{n-1},k_n)}$ satisfy the constraints of Proposition~\ref{prop:charact_W_QCQC}, and thus $W$ is also the process matrix for a QC-QC.
The fact QC-QCs are a strictly larger class of circuits (for $N\ge 2$) follows from the examples presented in the following subsection.

One way to explicitly recover QC-CCs from QC-QCs is by projecting the control system onto the ``classical'' basis $[\![ \K_{n-1},k_n ]\!]^{C_n}\coloneqq \ketbra{\K_{n-1},k_n}{\K_{n-1},k_n}^{C_n}$ prior to each time slot (cf.\ Eq.~\eqref{eq:tilde_Mn1_QCCC}).
This corresponds to the case where the control system of a QC-QC is decohered, and any coherence between the different causal orders is destroyed.
Although this leads to circuits with an effectively-classical control system recording only $(\K_{n-1},k_n)$ rather than the full order, such a control is already sufficient to describe fully the class of QC-CCs.
Indeed, the full order of operations can always be recorded in an ancillary system and used to further control the internal operations.
We could thus have also defined QC-CCs as using the classical control systems $[\![ \K_{n-1},k_n ]\!]^{C_n^{(\prime)}}$ and would have obtained the same class of process matrices.

One can similarly use ancillary systems in QC-QCs to coherently control the order of operations based on the full order of previous operations (as is indeed done in the ``quantum $N$-switch'' described in Appendix~\ref{app:Nswitch}).
However, in the case of a quantum control, storing only the pair ($\K_{n-1},k_n)$ allows the order of the operations in $\K_{n-1}$ to be forgotten, creating interference between the different causal orders and leading (in general, for $N \ge 3$) to a larger class of circuits than if control systems of the form $\ket{(k_1,\dots,k_n)}^{C_n^{(\prime)}}$ were used.

\medskip

Finally, we note that QC-QCs can also be defined for the case of process matrices with trivial global past and future Hilbert spaces.
For this case, corresponding to the original formulation of process matrices, we give a simplified formulation of the constraints of Proposition~\ref{prop:charact_W_QCQC} in Appendix~\ref{app:charact_dP_dF_1}.

\subsection{Examples}
\label{subsec:QCQC_examples}

\subsubsection{The ``quantum switch''}
\label{subsubsec:QSwitch}

The canonical example of a causally indefinite QC-QC is the ``quantum switch''~\cite{chiribella13}. 
It can be seen as a generalisation of the classical switch, which we presented as a QC-CC in Sec.~\ref{subsec:QCCCs_example}, to the case where the qubit system provided in $\HS^{P_\textup{c}}$ in the global past is used to control coherently, rather than classically, the order in which $N=2$ external operations $A_1$ and $A_2$ are applied to the $d_\textup{t}$-dimensional target system, initially provided in $\HS^{P_\textup{t}}$.
Adopting the same notation employed in Sec.~\ref{subsec:QCCCs_example} (notably, for the global past $\HS^P=\HS^{P_\textup{t}}\otimes \HS^{P_\textup{c}}$ and future $\HS^F=\HS^{F_\textup{t}}\otimes \HS^{F_\textup{c}}$), the circuit begins by coherently sending (via identity channels) the target system to $A_1$ then $A_2$ when the ``control qubit'' provided in $\HS^{P_\textup{c}}$ is in the state $\ket{1}$, and vice versa when it is $\ket{2}$ (cf.\ the possible implementation shown in Fig.~\ref{fig:Qswitch_impl}).

It is important to note here that, while the system in $\HS^{P_\textup{c}}$ (and subsequently recovered in $\HS^{F_\textup{c}}$) is generally referred to in the literature as the ``control qubit'', it is distinct from what we call the control system in the Hilbert spaces $\HS^{C_n^{(\prime)}}$ in the description of a QC-QC.
Instead, the information in $\HS^{P_\textup{c}}$ is propagated through the circuit in the QC-QC's control system and used to control the internal and external operations.

To see that the quantum switch can be described as a QC-QC, we can take (cf.\ Eq.~\eqref{eq:Ms_CSwitch} for the classical switch)
\begin{align}
\dket{V_{\emptyset,\emptyset}^{\to k_1}} & = \dket{\id}^{P_\textup{t}A_{k_1}^I} \otimes \ket{k_1}^{P_\textup{c}}, \notag \\
\dket{V_{\emptyset,k_1}^{\to k_2}} & = \dket{\id}^{A_{k_1}^OA_{k_2}^I}, \notag \\
\dket{V_{\{k_1\},k_2}^{\to F}} & = \dket{\id}^{A_{k_2}^OF_\textup{t}} \otimes \ket{k_1}^{F_\textup{c}}. \label{eq:Vs_QSwitch}
\end{align}
The corresponding operations can be interpreted intuitively:
$V_{\emptyset,\emptyset}^{\to k_1}$ is an identity channel sending the initial target system in $\HS^{P_\textup{t}}$ to the input space of $A_{k_1}$ when the state in $\HS^{P_\textup{c}}$ is $\ket{k_1}$;
$V_{\emptyset,k_1}^{\to k_2}$ is an identity channel sending the output of $A_{k_1}$ to the input of $A_{k_2}$;
and $V_{\{k_1\},k_2}^{\to F}$ sends the output of $A_{k_2}$ to the global future, while recording coherently $\ket{k_1}$ in $\HS^{F_\textup{c}}$, thereby completing the transmission of the control qubit initially provided in $\HS^{P_\textup{c}}$ (and whose state is transferred via the enlarged operations $\tilde V_1$ and $\tilde V_2$, as these update the control systems to $\ket{\emptyset,k_1}^{C_1}$ and $\ket{\{k_1\},k_2}^{C_2}$).
It is easy to verify that these operators indeed satisfy the TP constraints of Eqs.~\eqref{eq:TP_constr_QCQC_1}--\eqref{eq:TP_constr_QCQC_N}.

The process matrix describing the quantum switch defined by the operations \eqref{eq:Vs_QSwitch}, according to Proposition~\ref{prop:descr_W_QCQC}, is thus
\begin{align}
& W_\textup{QS} = \ketbra{w_\textup{QS}}{w_\textup{QS}}  \quad \text{with} \notag \\[2mm]
& \ket{w_\textup{QS}} \coloneqq \ket{1}^{P_\textup{c}} \dket{\id}^{P_\textup{t}A_1^I} \dket{\id}^{A_1^OA_2^I} \dket{\id}^{A_2^OF_\textup{t}} \ket{1}^{F_\textup{c}} \notag \\
& \hspace{12mm} + \ket{2}^{P_\textup{c}} \dket{\id}^{P_\textup{t}A_2^I} \dket{\id}^{A_2^OA_1^I} \dket{\id}^{A_1^OF_\textup{t}} \ket{2}^{F_\textup{c}} \notag \\
& \hspace{30mm} \in \HS^{P_\textup{c}P_\textup{t}A_1^{IO}A_2^{IO}F_\textup{t}F_\textup{c}}, \label{eq:W_QS}
\end{align}
where the tensor products are implicit.
We see clearly that we have a coherent superposition of terms corresponding to different causal orders, in contrast to the incoherent mixture in the process matrix $W_\textup{CS}$ of the classical switch in Eq.~\eqref{eq:W_CS}.
Indeed, one recovers $W_\textup{CS}$ by projecting the systems in $\HS^{P_\textup{c}}$ and/or $\HS^{F_\textup{c}}$ onto the basis $\{\ket{1},\ket{2}\}$ (or, similarly, by decohering the control system on the QC-QC; cf.\ the discussion at the end of Sec.~\ref{sec:QC-QC_characterisation}).
Note also that one can readily check that $W_\textup{QS}$ indeed satisfies the characterisation of Proposition~\ref{prop:charact_W_QCQC}, with $W_{(\emptyset,k_1)} = \dketbra{V_{\emptyset,\emptyset}^{\to k_1}}{V_{\emptyset,\emptyset}^{\to k_1}}$ and $W_{(\{k_1\},k_2)} = \dketbra{V_{\emptyset,\emptyset}^{\to k_1}}{V_{\emptyset,\emptyset}^{\to k_1}}\otimes \dketbra{V_{\emptyset,k_1}^{\to k_2}}{V_{\emptyset,k_1}^{\to k_2}}$.

The form of Eq.~\eqref{eq:W_QS} can be interpreted intuitively in a similar way to the individual operations in Eq.~\eqref{eq:Vs_QSwitch} discussed above.
When the control qubit in the global past is prepared in the state $\ket{1}^{P_\textup{c}}$, the induced ``conditional'' process vector for the remaining systems is precisely $\ket{w_\textup{QS}^1} = \ket{1}^{P_\textup{c}} * \ket{w_\textup{QS}} = \dket{\id}^{P_\textup{t}A_1^I} \dket{\id}^{A_1^OA_2^I} \dket{\id}^{A_2^OF_\textup{t}} \ket{1}^{F_\textup{c}}$, corresponding to identity channels taking the target system from $P_\textup{t}$ to $A_1$, then to $A_2$, and finally to $F_\textup{t}$ (while $F_\textup{c}$ receives the untouched ``control'' qubit $\ket{1}^{F_\textup{c}}$). 
Likewise, for the initial preparation of $\ket{2}^{P_\textup{c}}$, one obtains the conditional process vector $\ket{w_\textup{QS}^2} = \ket{2}^{P_\textup{c}} * \ket{w_\textup{QS}}$ describing identity channels first to $A_2$, then $A_1$.
More interestingly, when one prepares a superposition $\ket{\varphi_\textup{c}}^{P_\textup{c}} = \alpha \ket{1}^{P_\textup{c}} + \beta \ket{2}^{P_\textup{c}}$, the conditional process vector is $\ket{w_\textup{QS}^{\varphi_\textup{c}}} = \alpha \ket{w_\textup{QS}^1} + \beta \ket{w_\textup{QS}^2}$, corresponding to a superposition of the two causal orders. 
Of particular interest is the case of an equal superposition with $\alpha = \beta = \frac{1}{\sqrt{2}}$, and when the target system in $\HS^{P_\textup{t}}$ is a qubit prepared in some state $\ket{\psi_\textup{t}}^{P_\textup{t}}$.
Then, the conditional process vector (now with a trivial global past) is given by 
\begin{align}
& \ket{w_\textup{QS}^{+,\psi_\textup{t}}} \coloneqq \frac{1}{\sqrt{2}} \Big( \ket{\psi_\textup{t}}^{A_1^I} \dket{\id}^{A_1^OA_2^I} \dket{\id}^{A_2^OF_\textup{t}} \ket{1}^{F_\textup{c}} \notag \\[-2mm]
& \hspace{23mm} + \ket{\psi_\textup{t}}^{A_2^I} \dket{\id}^{A_2^OA_1^I} \dket{\id}^{A_1^OF_\textup{t}} \ket{2}^{F_\textup{c}} \Big) \notag \\
& \hspace{30mm} \in \HS^{A_1^{IO}A_2^{IO}F_\textup{t}F_\textup{c}}, \label{eq:W_QS_plus_psi}
\end{align}
with corresponding conditional process matrix $W_\textup{QS}^{+,\psi_\textup{t}} = \ketbra{w_\textup{QS}^{+,\psi_\textup{t}}}{w_\textup{QS}^{+,\psi_\textup{t}}}$, which is precisely the process matrix for the quantum switch given originally in  Refs.~\cite{araujo15,oreshkov16}.

Since the process matrix of the quantum switch, as written above, is rank-1 (hence it cannot be further decomposed as a nontrivial mixture of different process matrices) and since $\ket{w_\textup{QS}^{+,\psi_\textup{t}}}$ is clearly not compatible with a given, fixed causal order between $A_1$ and $A_2$, it follows that the quantum switch is causally nonseparable~\cite{araujo15,oreshkov16}. 
Note, however, that if we trace out the system in $\HS^{F_\textup{c}}$ one is left only with an incoherent mixture of terms corresponding to the two different causal orders.
Indeed, one has $\Tr_{F_\textup{c}} W_\textup{QS} = \Tr_{F_\textup{c}} W_\textup{CS}$, so one essentially recovers the classical switch and all coherent control is lost.
On the other hand, if one traces out only the target system sent to the global future in $\HS^{F_\textup{t}}$, one still has a non-classical switch with the coherence between the different causal orders maintained by the system in $\HS^{F_\textup{c}}$~\cite{araujo15}.

\medskip

The natural generalisation of the quantum switch to a superposition of the $N!$ possible orders of $N$ external operations~\cite{colnaghi12,araujo14,facchini15} can also be described as a QC-QC, as we show in Appendix~\ref{app:Nswitch}.
One key difference worth mentioning in the general case is that one needs to use the ancillary systems $\HS^{\alpha_n}$ to record the full causal order, as the pair $(\K_{n-1},k_n)$ stored by the QC-QCs control systems does not keep track of the full permutation to be applied.
(For the $N=2$ case described above, this was not an issue as $\K_{n-1}$ never contained more than one element.)

\subsubsection{A new type of quantum circuit with both dynamical and coherently-controlled causal order}
\label{subsubsec:new_QCQC}

The quantum switch (and rather straightforward generalisations with more operations) has, thus far, been the only causally nonseparable process for which a physical implementation is known. 
Our general description provides a framework allowing us to find new QC-QCs beyond this example. 
As an illustration, we present here a novel 3-operation QC-QC which differs qualitatively from the quantum switch in several key ways.
Firstly, unlike the standard ``3-switch'' (cf.\ Appendix~\ref{app:Nswitch}), it allows for the causal order to really be established ``dynamically'', depending (coherently) on the output of external operations (and not only a subsystem of $\HS^P$).
Secondly, it exploits the fact the control system only stores the unordered set of already applied operations in order to create interference between terms corresponding to different causal histories.
And lastly, its process matrix remains causally nonseparable, with no well-defined ``final'' operation, despite having only a trivial global future $\HS^F$.

\medskip

This new type of circuit fits our general description of QC-QCs as follows (cf.\ also the possible implementation discussed in the following subsection and Fig.~\ref{fig:new_QCQC}).
Consider a QC-QC with $N=3$ external operations, 2-dimensional input and output spaces $\HS^{A_k^I}$, $\HS^{A_k^O}$ (i.e., a qubit ``target'' system, with computational basis $\{\ket{0},\ket{1}\}$), and with trivial global past and future $\HS^P$, $\HS^F$ (i.e., $d_P = d_F = 1$),%
\footnote{Note that we could have similarly defined the process with nontrivial global past and future so that, as for the quantum switch, the initial target state and first operation are specified in the global past, and the output of the final operation and systems in $\HS^{\alpha_F}$ are sent to the global future (see Appendix~\ref{app:new_QCQC}). However, the key features of this process can be observed without needing to do so.}
defined by the operators
\begin{align}
& V_{\emptyset,\emptyset}^{\to k_1} \,=\, {\textstyle \frac{1}{\sqrt{3}}} \ket{\psi}^{A_{k_1}^I}, \notag \\[2mm]
& V_{\emptyset,k_1}^{\to k_2} \,=\, \left\{
\begin{array}{ll}
\ket{0}^{A_{k_2}^I}\bra{0}^{A_{k_1}^O} & \text{if } k_2 = k_1+1 \pmod{3} \\[1mm]
\ket{1}^{A_{k_2}^I}\bra{1}^{A_{k_1}^O} & \text{if } k_2 = k_1+2 \pmod{3}
\end{array}
\right. \!\!, \notag \\[2mm]
& V_{\{k_1\},k_2}^{\to k_3} \,=\, \left\{
\begin{array}{l}
\ket{0}^{A_{k_3}^I}\ket{0}^{\alpha_3}\bra{0}^{A_{k_2}^O} + \ket{1}^{A_{k_3}^I}\ket{1}^{\alpha_3}\bra{1}^{A_{k_2}^O} \\[1mm]
\hspace{25mm} \text{if } k_2 = k_1+1 \pmod{3} \\[2mm]
\ket{0}^{A_{k_3}^I}\ket{1}^{\alpha_3}\bra{0}^{A_{k_2}^O} + \ket{1}^{A_{k_3}^I}\ket{0}^{\alpha_3}\bra{1}^{A_{k_2}^O} \\[1mm]
\hspace{25mm} \text{if } k_2 = k_1+2 \pmod{3}
\end{array}
\right. \!\!, \notag \\[2mm]
& V_{\{k_1,k_2\},k_3}^{\to F} \,=\, \id^{{A_{k_3}^O} \alpha_3 \to \alpha_F^{(1)}} \otimes \ket{k_3}^{\alpha_F^{(2)}}, \label{eq:def_Vs_new_QCQC}
\end{align}
where we introduced an ancillary 2-dimensional system $\alpha_3$ (but no $\alpha_1, \alpha_2$), a 4-dimensional system $\alpha_F^{(1)}$ and a 3-dimensional system $\alpha_F^{(2)}$, defining $\alpha_F \coloneqq \alpha_F^{(1)} \alpha_F^{(2)}$ (with corresponding Hilbert spaces $\HS^{\alpha_F} \coloneqq \HS^{\alpha_F^{(1)} \alpha_F^{(2)}}$), and $\ket{\psi}$ is an arbitrary qubit state.
One can verify that the Choi vectors of these operators indeed satisfy the TP constraints of Eqs.~\eqref{eq:TP_constr_QCQC_1}--\eqref{eq:TP_constr_QCQC_N}, as required.

These operations can be interpreted as follows.
$V_{\emptyset,\emptyset}^{\to k_1}$ sends the state $\ket{\psi}$ to $A_{k_1}$ (and to each choice of $k_1$ with equal weight, in a superposition).
$V_{\emptyset,k_1}^{\to k_2}$ sends the output of $A_{k_1}$ to one of the remaining operations $A_{k_2}$ (for $k_2\neq k_1$) dynamically and coherently depending on the state of said output: the component in the state $\ket{0}^{A_{k_1}^O}$ is sent to $A_{k_1+1\!\pmod{3}}$, while the component in the state $\ket{1}^{A_{k_1}^O}$ is sent to $A_{k_1+2\!\pmod{3}}$.
$V_{\{k_1\},k_2}^{\to k_3}$ then sends the output of $A_{k_2}$ to the remaining operation $A_{k_3}$ and attaches an ancillary state $\ket{0}^{\alpha_3}$ if $k_2 = k_1+1 \pmod{3}$ or $\ket{1}^{\alpha_3}$ if $k_2 = k_1+2 \pmod{3}$, that is then flipped if $A_{k_3}^I$ is in the state $\ket{1}^{A_{k_3}^I}$ (i.e., a controlled \textsc{Not} gate is applied).%
\footnote{Note that introducing the ancillary system $\alpha_3$ in such a nontrivial way---in particular, with its state depending on whether $k_2 = k_1+1 \pmod{3}$ or $k_2 = k_1+2 \pmod{3}$---is indeed necessary (despite the fact that $\alpha_3$ is ultimately discarded) to ensure the internal operation $\tilde V_3$ acts as an isometry on its input spaces---i.e., to satisfy the TP constraint of Eq.~\eqref{eq:TP_constr_QCQC_n}, for $n=2$.}
Finally, $V_{\{k_1,k_2\},k_3}^{\to F}$ sends the output of $A_{k_3}$ along with the system in $\HS^{\alpha_3}$ to $\alpha_F^{(1)}$, while $\ket{k_3}$ is sent to $\alpha_F^{(2)}$.

The (tripartite) process matrix of this QC-QC, according to Proposition~\ref{prop:descr_W_QCQC}, is
\begin{align}
W & = \Tr_{\alpha_F} \ket{w}\!\bra{w} \notag \\
\text{with} \ \ket{w} & = \sum_{(k_1,k_2,k_3)} \dket{V_{\emptyset,\emptyset}^{\to k_{1}}} * \dket{V_{\emptyset,k_1}^{\to k_{2}}} \notag \\[-3mm]
& \hspace{20mm} * \dket{V_{\{k_1\},k_2}^{\to k_3}} * \dket{V_{\{k_1,k_2\},k_3}^{\to F}}.
\label{eq:processvec_newQCQC}
\end{align}
Using the technique of causal witnesses~\cite{araujo15,branciard16a,wechs19}, one can check, for any fixed but arbitrary state $\ket{\psi}$, that this process matrix is causally nonseparable (see Appendix~\ref{app:new_QCQC}). 
It is interesting also to note that although tracing out $\alpha_F$ (or even just $\alpha_F^{(2)}$) turns $W$ into an (incoherent) sum of 3 matrices (one for each value of $k_3$), these 3 matrices are not themselves valid process matrices: 
$W$ is not simply a convex mixture of 3 tripartite process matrices, each compatible with one operation $A_{k_3}$ being applied last%
\footnote{This can be checked using semidefinite programming methods, similar to causal witnesses; cf.\ Ref.~\cite{wechs19}.
In particular, $W$ is not just a convex combination of quantum switches; which operation is applied last is not predetermined (even probabilistically), but depends on the operations $A_1, A_2, A_3$.}
(as is, for instance, the ``3-switch'', cf.\ Appendix~\ref{app:Nswitch}, after tracing out $F$).
This is due to the fact that the causal order here is established dynamically. 

\medskip

Our general description of the QC-QC class thus allowed us to present here a new type of example, which combines both a coherent and dynamical control of causal order in a way not done by the quantum switch or its direct generalisations.
In Appendix~\ref{app:new_QCQC}, we present a slightly more general family of such processes that may be of further interest and provides further insight into the form of the particular example presented here.
We hence see that QC-QCs provide an interesting new class of quantum supermaps with concrete interpretations that go beyond the well-studied quantum switch and its generalisations.
Further study of this class, and of other new types of QC-QCs, may uncover further interesting examples, and we believe this to be an important direction for future research.

\subsection{Possible implementations}
\label{subsec:implementations}

The theoretical description of QC-QCs above raises of course the question of their practical realisation. 
Here we shall present some basic ideas which show that such implementations are indeed possible.

The problem at hand is to find some physical systems on which the desired operations can be implemented.
In particular, one needs to find suitable systems to encode the control states of the form%
\footnote{The primed versions $\ket{\K_{n-1},k_n}^{C_n'}$ will simply be taken here to be the same as the $\ket{\K_{n-1},k_n}^{C_n}$'s, and will not be discussed any further.}
$\ket{\K_{n-1},k_n}^{C_n}$ that can be used to control both the external operations $\tilde A_{k_n}$ (acting on some other, target systems) as in Eq.~\eqref{eq:tilde_An_QCQC}, as well as the internal circuit operations $\tilde V_{\K_{n-1},k_n}^{\to k_{n+1}}$, and be updated by the $\tilde V_n$'s as in Eqs.~\eqref{eq:def_V1_QCQC}--\eqref{eq:def_VN1_QCQC}.

One natural choice is to let the external operations be implemented at $N$ different spatial locations, and to let the target systems be carried by a physical entity (e.g., a photon) that passes through them; which operation is actually realised on the target carrier then depends on its path. 
This idea leads one to consider control states of the refined form $\ket{{\cal K}_{n-1},k_n}^{C_n} = \ket{{\cal K}_{n-1}{\scriptstyle |k_n}}^{C_n^\textup{past ops.}} \otimes \ket{k_n}^{C_n^\textup{path}}$, where $\ket{k_n}^{C_n^\textup{path}}$ denotes the path $k_n$ of the carrier that undergoes operation $A_k$ at time $t_n$, and $\ket{{\cal K}_{n-1}{\scriptstyle |k_n}}^{C_n^\textup{past ops.}}$ is the state of some complementary control system that records the required information about $\K_{n-1}$ (which, in general, may be encoded differently for different $k_n$).%
\footnote{Nontrivial states $\ket{{\cal K}_{n-1}{\scriptstyle |k_n}}^{C_n^\textup{past ops.}}$ are required as soon as $k_n$ does not identify $\K_{n-1}$ uniquely, i.e., when there are two non-vanishing components in the process vector that involve some $\ket{{\cal K}_{n-1},k_n}^{C_n}$ and $\ket{{\cal K}_{n-1}',k_n}^{C_n}$, with the same $k_n$ and with ${\cal K}_{n-1} \neq {\cal K}_{n-1}'$. 
The dimension of the corresponding Hilbert space $\HS^{C_n^\textup{past ops.}}$ may be required to be up to $\binom{N-1}{n-1}$ (the number of possible subsets ${\cal K}_{n-1} \subseteq {\cal N}\backslash k_n$ with $|{\cal K}_{n-1}| = n-1$).}
This complementary system, just like potential ancillary systems, could be encoded, e.g., on some different degrees of freedom of the physical carrier, other than the path and the target system.

In such an implementation, the internal circuit operations $\tilde V_n$ need to route the physical carrier while performing the operations $V_{\K_{n-1},k_n}^{\to k_{n+1}}$, i.e., to act jointly on the path and internal degrees of freedom of the carrier, so as to recover Eqs.~\eqref{eq:def_V1_QCQC}--\eqref{eq:def_VN1_QCQC}.
As these internal circuit operations are, in general, different for each value of $n = 1,\dots,N+1$, one possibility for their implementation is to have a circuit with fast-switching elements. 
Another possibility is to introduce yet an additional system that acts as a ``timer'' (of dimension at most $N+1$), to be ``incremented'' at every time slot $t_n$, and which also controls (in an essentially classical manner) the application of the correct internal circuit operation.

In what follows, we outline more concretely how such generic approaches to implementing QC-QCs can be applied to the two examples discussed above (i.e., the quantum switch, and the new QC-QC defined by Eq.~\eqref{eq:def_Vs_new_QCQC}) using photons as the physical carriers.

Let us first note, however, that the same ideas can of course be used for implementing QC-CCs, as particular cases of QC-QCs.
Indeed, this would actually be simpler experimentally as the control systems need not be kept coherent.
One then has refined classical control states of the form $[\![(k_1,\ldots,k_{n-1},k_n)]\!]^{C_n} = [\![(k_1,\ldots,k_{n-1})]\!]^{C_n^\textup{past ops.}} \otimes [\![k_n]\!]^{C_n^\textup{path}}$, and the physical carrier can be routed to the correct external operation at each time slot using only classical routers.
Note, nevertheless, that although all QC-CCs are causally separable it remains an open problem whether all causally separable process matrices correspond to a QC-CC or, more generally, are physically realisable~\cite{oreshkov16,wechs19}.

\subsubsection{The quantum switch}
\label{subsubsec:impl_switch}

\begin{figure}[t]
\begin{center}
\includegraphics[scale=1]{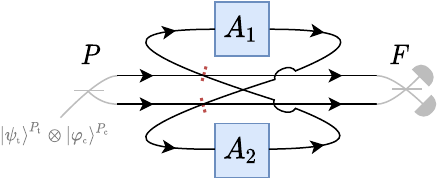}
\caption{A new possible photonic implementation of the quantum switch, in which the control qubit gets encoded in the path degree of freedom, and the target system in an internal degree of freedom of the photon.
The dashed optical elements (\protect\raisebox{-1mm}{\protect\includegraphics[scale=1]{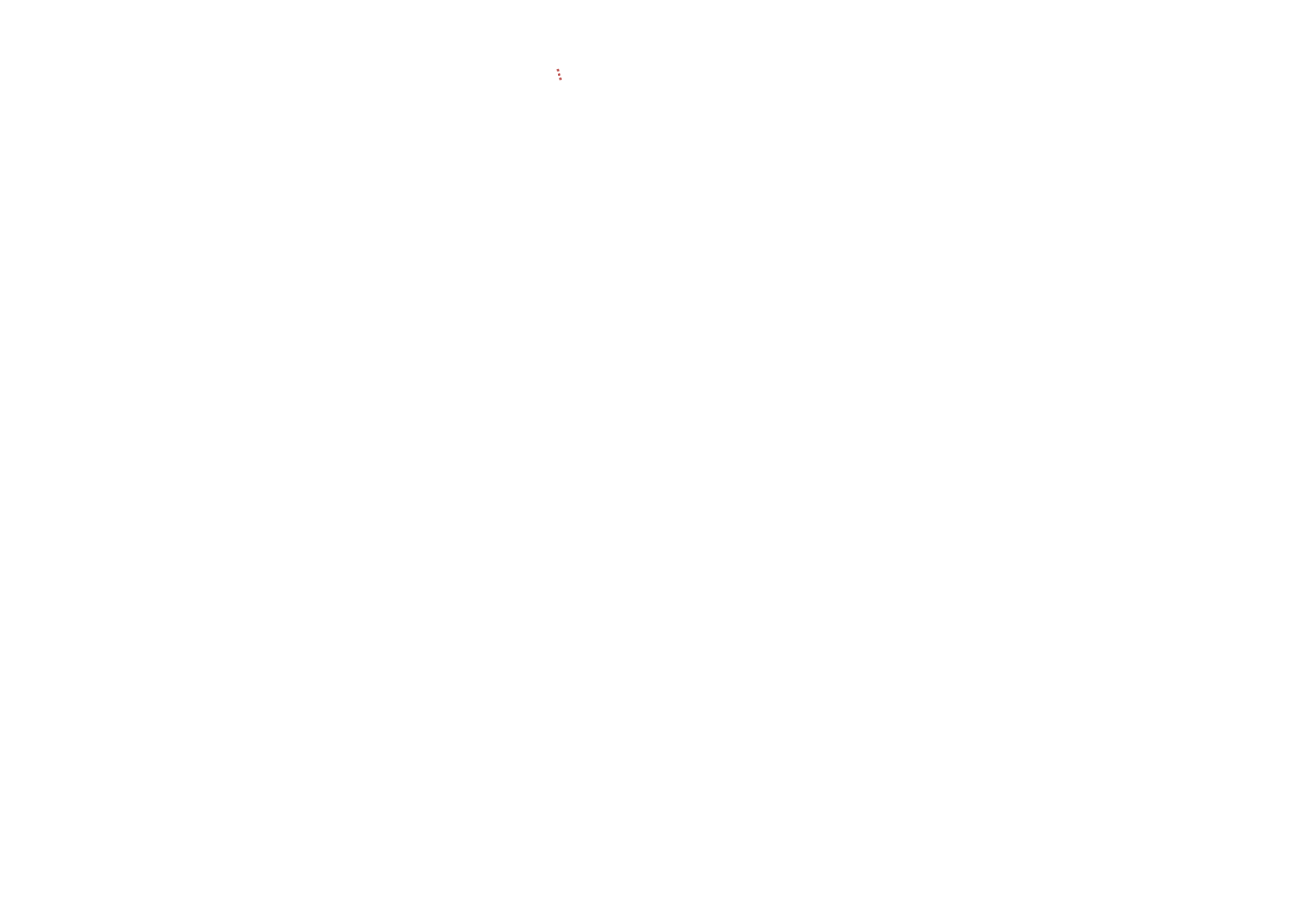}}) are reflecting mirrors, which are momentarily removed between the time slots $t_1$ and $t_2$ (i.e., between the applications of the operations $A_1$ and $A_2$, in either order).
Example operations in the global past $P$ (the preparation of an initial target state $\ket{\psi_\textup{t}}^{P_\textup{t}}$ and the control qubit in a superposition state $\ket{\varphi_\textup{c}}^{P_\textup{c}}$, see Sec.~\ref{subsubsec:QSwitch}) and future $F$ (the measurement of the final control system in $F_\textup{c}$ in a superposition basis) are shown in grey for clarity.}
\label{fig:Qswitch_impl}
\end{center}
\end{figure}

The generic implementation procedure described above can be applied to the example of the quantum switch presented in Sec.~\ref{subsubsec:QSwitch}.
One possible such implementation using photonic particle carriers is shown in Fig.~\ref{fig:Qswitch_impl}, where fast-switching removable mirrors are used to implement the different internal operations.

Interestingly, this proposal differs from previous photonic implementations of the quantum switch~\cite{procopio15,rubino17,rubino17a,goswami18,goswami20,wei19,guo20}, and highlights previously overlooked redundancies in some such implementations.
Indeed, compared to the implementation initially proposed in Ref.~\cite{araujo14} and realised experimentally in Refs.~\cite{goswami18,goswami20}, this new proposal exploits fewer degrees of freedom of the photons (at the price of using such fast-switching optical elements in the circuit): just the path as the control, and some internal degree of freedom of the photon (e.g., polarisation or orbital angular momentum) as the target systems. 
In contrast, in Refs.~\cite{araujo14,goswami18,goswami20}, the control system is copied (coherently) from the polarisation to the path degrees of freedom, inducing a redundancy in the implementation.%
\footnote{With the control systems encoded in the path of the photons (such that $\ket{\emptyset,k_1}^{C_1} = \ket{k_1}^{C_1^\textup{path}}$ and $\ket{\{k_1\},k_2}^{C_2} = \ket{k_2}^{C_2^\textup{path}}$), the implementations of Refs.~\cite{araujo14,goswami18,goswami20} could be written as QC-QCs by taking (instead of Eq.~\eqref{eq:Vs_QSwitch} above)
\begin{align}
\dket{V_{\emptyset,\emptyset}^{\to k_1}} & = \dket{\id}^{P_\textup{t}A_{k_1}^I} \otimes \ket{k_1}^{P_\textup{c}} \otimes \ket{k_1}^{\alpha_1}, \notag \\
\dket{V_{\emptyset,k_1}^{\to k_2}} & = \dket{\id}^{A_{k_1}^OA_{k_2}^I} \otimes \ket{k_1}^{\alpha_1} \otimes \ket{k_1}^{\alpha_2}, \notag \\
\dket{V_{\{k_1\},k_2}^{\to F}} & = \dket{\id}^{A_{k_2}^OF_\textup{t}} \otimes \ket{k_1}^{F_\textup{c}} \otimes \ket{k_1}^{\alpha_2}
\end{align}
with $\ket{k_1}^{P_\textup{c} / \alpha_n / F_\textup{c}} = \ket{H}$ for $k_1=1$, or $\ket{V}$ for $k_1=2$. As one can see the control system gets (redundantly) copied onto and transferred through the ancillary systems $\alpha_n$ (i.e., through the polarisation of the photons). This also illustrates the fact that the same process can be given different descriptions in terms of a QC-QC.}
Similarly, in the implementations of Refs.~\cite{procopio15,rubino17,guo20}, four spatial degrees of freedom are exploited, rather than the two in the implementation we propose here.
As a result, the internal operations can be ensured to be applied to photons in the same spatial modes (although at different times), ensuring that the applications of each $A_k$ at different time slots are truly indistinguishable.

As suggested above, one could avoid using fast-switching elements by introducing an explicit ``timer'' system. 
Here for instance, if the target system is encoded in some other internal degree of freedom of the photon, then polarisation could be used as such a ``timer'' by initially preparing it in the state $\ket{V}$, replacing the removable mirrors in the setup of Fig.~\ref{fig:Qswitch_impl} by fixed polarising beam-splitters (which reflect $\ket{V}$ and transmit $\ket{H}$), and adding wave plates, e.g., at the exit ports of $A_1$ and $A_2$ that switch the polarisation, $\ket{V} \leftrightarrow \ket{H}$ (so as to ``increment'' the timer). 
We then simply have a passive optical circuit, which uses the path, polarisation and some other degree of freedom of the photon as the target system---as in Refs.~\cite{araujo14,goswami18,goswami20}, although in a structurally different manner.

\subsubsection{Novel QC-QC with dynamical and coherently-controlled causal order}
\label{subsubsec:impl_new_QCQC}

\begin{figure}
\begin{center}
\includegraphics[width=\columnwidth]{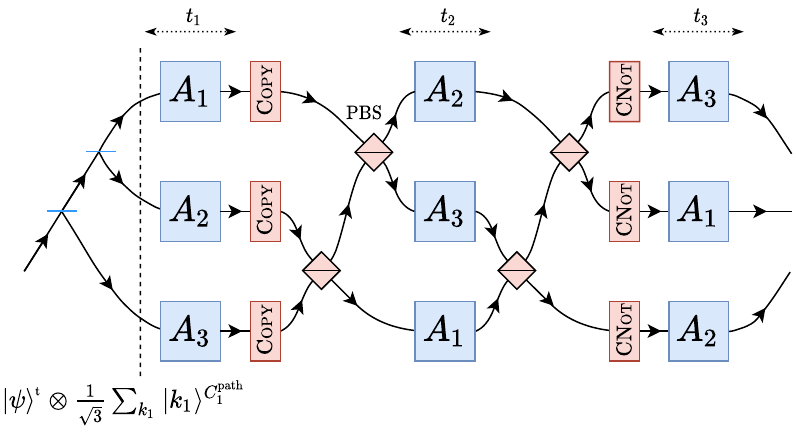}
\caption{
A possible proposal for the realisation of our new QC-QC defined by Eq.~\eqref{eq:def_Vs_new_QCQC} in a photonic circuit.
The target system on which the external operations $A_k$ act is an internal degree of freedom of a photon, while the control and ancillary systems are encoded in the path and polarisation degrees of freedom, as described in the main text.
The $\textsc{Copy}$ gate, as part of $\tilde V_2$, copies the internal state onto the polarisation state, which contributes to the control system $C_2$. 
The polarising beam-splitters (PBS) route the photon based on its polarisation.
The $\textsc{CNot}$ gates finally act jointly on the internal and polarisation degrees of freedom, so as to realise the desired operation $\tilde V_3$, as prescribed by Eqs.~\eqref{eq:tilde_Vn1_QCQC} and~\eqref{eq:def_Vs_new_QCQC}.
Note that the circuit is depicted here in an ``unfolded'' form, for clarity. 
In fact, each operation $A_k$ should be realised once and only once, and the boxes depicted here thus identified as a single operation delocalised in time~\cite{oreshkov19}.
Correspondingly, each physical ``box'' $A_k$ should be built only once: the circuit connections should ``loop back'' between time slots $t_n$ and $t_{n+1}$, from the output of one box to the input of some other boxes---e.g., using fast switching elements (or a ``timer'' system) to realise the different operations $\tilde V_n$ as in the implementation of the quantum switch in Fig.~\ref{fig:Qswitch_impl}.
} 
\label{fig:new_QCQC}
\end{center}
\end{figure}

The implementation procedure we outlined can also be used to give a concrete proposal for the implementation of the new QC-QC presented in Sec.~\ref{subsubsec:new_QCQC} and defined by Eq.~\eqref{eq:def_Vs_new_QCQC}.
In Fig.~\ref{fig:new_QCQC} we depict a possible such photonic implementation, in which a 2-dimensional target system (initially in the state $\ket{\psi}^\textup{t}$, in the generic target space $\HS^\textup{t}$) is encoded in some internal degree of freedom of a photon (e.g., its orbital angular momentum).
The control systems $C_1$ and $C_3$ are simply the path of the photon, such that $\ket{\emptyset,k_1}^{C_1} = \ket{k_1}^{C_1^\textup{path}}$ and $\ket{\{k_1,k_2\},k_3}^{C_3} = \ket{k_3}^{C_3^\textup{path}}$; 
to define the control system $C_2$ on the other hand, we need to make use of some further 2-dimensional degree of freedom $\alpha$ of the photon, which we take to be the polarisation (with basis states $\ket{0}^\alpha = \ket{V},\ket{1}^\alpha = \ket{H}$), such that $\ket{\{k_1\},k_2}^{C_2} = \ket{0}^\alpha \otimes \ket{k_2}^{C_2^\textup{path}}$ if $k_2 = k_1+1 \ (\text{mod }3)$, $\ket{\{k_1\},k_2}^{C_2} = \ket{1}^\alpha \otimes \ket{k_2}^{C_2^\textup{path}}$ if $k_2 = k_1+2 \ (\text{mod }3)$.
The ancillary system $\alpha_3$ is also taken to be the polarisation.

The ``\textsc{Copy}'' and ``\textsc{CNot}'' gates implement the operations $V_\textsc{Copy} = \sum_{i=0,1} \ket{i}^\textup{t}\ket{i}^\alpha\bra{i}^\textup{t}$ and $V_\textsc{CNot} = \sum_{i,j=0,1} \ket{i}^\textup{t}\ket{i \oplus j}^\alpha\bra{i}^\textup{t}\bra{j}^\alpha$ (with $\oplus$ denoting addition modulo 2), respectively. 
(Note that $V_\textsc{Copy}$ could be realised by preparing the polarisation in the state $\ket{0}^\alpha$ and applying $V_\textsc{CNot}$.)
As can be checked, with the choice of encoding above, the circuit shown in Fig.~\ref{fig:new_QCQC} indeed realises the internal operations $\tilde V_n$ obtained from Eq.~\eqref{eq:def_Vs_new_QCQC}, via Eqs.~\eqref{eq:def_V1_QCQC}--\eqref{eq:def_VN1_QCQC}.
It can clearly be seen in this circuit how the causal order is established dynamically: which operation $A_{k_2}$ the photon is routed to after undergoing the first operation $A_{k_1}$ depends on its polarisation (with $\ket{V}$ being reflected, $\ket{H}$ being transmitted at the polarising beam-splitters)---which the output of $A_{k_1}$ is ``copied'' onto (in the $\{\ket{0},\ket{1}\}$ basis) by $V_\textsc{Copy}$. 
It can also be seen (by following the trajectories taken by each of the $\ket{V}$ and $\ket{H}$ components, which are untouched between the \textsc{Copy} and \textsc{CNot} gates) how the proposed configuration of the beam-splitters guarantees that each operation is applied once and only once on each path.

While the realisation of this QC-QC in the lab would undoubtedly be a major challenge, it would represent a major step towards showing that more general QC-QCs exploiting dynamical, coherent control of causal order can be realised and, eventually, exploited in the laboratory, and we challenge experimental groups to the task.

\subsection{Correlations generated by quantum circuits with quantum control of causal order}
\label{subsec:causal_correlations}

One of the initial motivations in the study of processes with indefinite causal order was the possibility that they might allow one to generate correlations incompatible with any well-defined causal structure, thereby providing a particularly strong, model independent proof of the nonclassical causal structure of the world~\cite{oreshkov12}.
This, in particular, was a question the process matrix framework was originally conceived to tackle~\cite{oreshkov12}.
In this approach, the external operations $\A_k$ are interpreted as being applied by ``parties'' $\mathrm{A}_k$ in closed, locally causal, laboratories, and the process matrix $W$ as describing the physical process by which these parties interact, possibly in a causally indefinite manner.
Since we are interested in the relation between the parties---and in particular, the correlations they can observe---we take both the global past and future to be trivial (i.e., $d_P = d_F = 1$).
Each party $\mathrm{A}_k$ can then apply an instrument, potentially conditioned on some setting (a classical ``input'') $x_k$, producing outcomes (the ``outputs'') $a_k$ and hence with Choi matrices denoted $\{A_{a_k|x_k}\}_{a_k}$.

The correlation between all $N$ parties' inputs and outputs is represented by the conditional probability distribution $P(a_1,\ldots,a_N|x_1,\ldots,x_N)$, which can be calculated, within the process matrix formalism (see in particular Eq.~\eqref{eq:choi_map_M} and Footnote~\ref{footnote:pm_formalisms} in Sec.~\ref{subsec:PM_supermaps}), from the process matrix $W$ and the Choi matrices of the parties' instruments by the so-called ``generalised Born rule''~\cite{oreshkov12,araujo15}:
\begin{align}
& P(a_1,\ldots,a_N|x_1,\ldots,x_N) \notag \\
& \qquad = (A_{a_1|x_1} \otimes \cdots \otimes A_{a_N|x_N}) * W \notag \\
& \qquad = \Tr \big[ (A_{a_1|x_1} \otimes \cdots \otimes A_{a_N|x_N})^T W \big]. \label{eq:P_a1aN_x1xN}
\end{align}
Of particular interest is whether correlations obtained from a process matrix are ``causal''---i.e., can be explained by referring to a well-defined causal structure (allowing for probabilistic causal structures and for dynamical causal orders)---or not~\cite{oreshkov12,oreshkov16,abbott16}.
More specifically, in the multipartite case, causal correlations can be characterised~\cite{oreshkov16} or directly defined~\cite{abbott16} in a recursive manner, as convex combinations of correlations compatible with a given party acting first, and such that whatever that party does, the conditional correlation shared by the remaining ones is again causal (with a single-partite correlation being trivially causal). 
It was shown that the set of causal correlations (for a given scenario, i.e., a given number of parties, each with a given number of possible inputs, and a given number of possible outputs for each input) forms a convex polytope~\cite{branciard16,oreshkov16,abbott16}, delimited by so-called ``causal inequalities''~\cite{oreshkov12}.

The correlations generated by causally separable processes are necessarily causal~\cite{oreshkov12,branciard16,oreshkov16,abbott16}.
However, by only imposing a quantum description for the parties' local operations, and without making any assumption on the global causal structure, the general process matrix formalism allows in principle for (causally nonseparable) processes that generate noncausal correlations (and thus violate causal inequalities)~\cite{oreshkov12}.
Nevertheless, not all causally nonseparable processes can generate noncausal correlations~\cite{araujo15,oreshkov16,feix16}.
In fact, it is an open question of considerable interest whether any physically conceivable process, that could be built in the lab, can indeed violate a causal inequality.

The class of quantum circuits considered here does not, unfortunately, allow us to answer this open question.
Indeed, even though QC-QCs may define causally nonseparable processes, in Appendix~\ref{app:only_causal_correl} we prove that:
\begin{proposition}[Causality of QC-QC correlations] \label{prop:causal_correlations}
Quantum circuits with quantum control of order can only generate causal correlations.
\end{proposition}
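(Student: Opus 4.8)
The plan is to exploit the fact that, although a QC-QC coherently superposes causal orders, the identity of the \emph{last} operation becomes effectively classical once the global future is trivial, and then to peel off this last operation and induct on $N$. Concretely, I would set $d_P=d_F=1$ and recall that correlations are computed through the generalised Born rule \eqref{eq:P_a1aN_x1xN}. The key structural observation is that, with a trivial future, $\Tr_F W = W$, so the final line of the characterisation \eqref{eq:charact_W_QCQC_decomp} reads $W = \sum_{k_N\in\N} W_{(\N\backslash k_N,k_N)}\otimes\id^{A_{k_N}^O}$. Crucially this is an \emph{incoherent} sum of positive semidefinite terms indexed by the last-applied operation $k_N$: the coherence between different causal orders sharing the same final operation has been washed out by tracing over $\HS^{\alpha_F}$ and by the triviality of $\HS^F$.

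Next I would substitute this decomposition into \eqref{eq:P_a1aN_x1xN}. In the $k_N$-th term the factor $\id^{A_{k_N}^O}$ discards the output of party $k_N$, so its instrument enters only through the input effects $M_{a_{k_N}|x_{k_N}}\coloneqq \Tr_{A_{k_N}^O} A_{a_{k_N}|x_{k_N}}\in\L(\HS^{A_{k_N}^I})$, which form a POVM. Completeness $\sum_{a_{k_N}} M_{a_{k_N}|x_{k_N}} = \id^{A_{k_N}^I}$ then yields no signalling from $k_N$ to the remaining parties, and summing over all outcomes (using that each other party's operation, summed over its outcomes, is trace-preserving) shows that the total weight $q_{k_N}$ of the ``$k_N$-last'' branch is independent of all inputs, with $\sum_{k_N} q_{k_N}=1$.

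The heart of the argument is then to recognise $W_{(\N\backslash k_N,k_N)}$, viewed as a matrix on the $N-1$ parties $\N\backslash k_N$ with effective global future $A_{k_N}^I$, as itself a (sub-normalised) QC-QC process matrix: its lower-level decomposition matrices are inherited from the original $W_{(\K_{m-1},k_m)}$ with $\K_{m-1}\cup\{k_m\}\subseteq\N\backslash k_N$, while the middle constraint of \eqref{eq:charact_W_QCQC_decomp} specialised to $\K_{N-1}=\N\backslash k_N$, namely $\Tr_{A_{k_N}^I} W_{(\N\backslash k_N,k_N)} = \sum_{j\in\N\backslash k_N} W_{((\N\backslash k_N)\backslash j,\,j)}\otimes\id^{A_j^O}$, is exactly the final QC-QC constraint for this reduced process. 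Tracing out its future reproduces a trivial-future QC-QC on $\N\backslash k_N$, whose correlation among the remaining parties is causal by the induction hypothesis (the base case $N=1$ being trivial). Collecting the branches, $P$ is a convex combination, over the input-independent choice $q_{k_N}$ of which party acts last, of correlations in each of which $k_N$ does not signal to the others and the others are causal — that is, a correlation carrying a well-defined (dynamical, probabilistic) causal order with $k_N$ last, hence causal.

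The step I expect to be the main obstacle is making this reduction fully rigorous. One must (i) define the reduced decomposition matrices consistently across the different last-party branches and track their sub-normalisation, which forces the induction to be phrased for probabilistic (post-selected) QC-QCs rather than deterministic ones; and (ii) reconcile the fact that the clean decomposition peels the \emph{last} party whereas the recursive definition of causal correlations builds from the \emph{first} party. The latter I would handle by invoking convexity of the set of causal correlations together with the elementary fact that a non-signalling party can always be appended as the last element of an existing causal order without leaving the causal set.
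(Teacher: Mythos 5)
Your reduction breaks at the step where you claim that the branch weights $q_{k_N}$ are independent of all inputs; this claim is false, and it is the crux of the whole matter. Within the $k_N$-branch, summing over all outcomes and using trace-preservation of the parties' instruments gives
\[
q_{k_N}(\vec x_\N) \,=\, \Big( \textstyle\bigotimes_{k\neq k_N} \sum_{a_k} A_{a_k|x_k} \Big) * \Tr_{A_{k_N}^I} W_{(\N\backslash k_N,k_N)} ,
\]
which is indeed independent of $x_{k_N}$, but in general still depends on all the \emph{other} parties' settings: the constraints \eqref{eq:charact_W_QCQC_decomp} only fix the sums $\sum_{\ell\notin\K}\Tr_{A_\ell^I}W_{(\K,\ell)}$, never the individual terms $\Tr_{A_\ell^I}W_{(\K,\ell)}$, so the cascade you would need to reduce $q_{k_N}$ to a constant does not go through term by term. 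The paper's own example of Sec.~\ref{subsubsec:new_QCQC} makes this concrete: for the process of Eq.~\eqref{eq:processvec_newQCQC}, which operation is applied last is \emph{not} predetermined even probabilistically---it depends dynamically on the inserted operations. Consequently your top-level decomposition $P=\sum_{k_N}q_{k_N}\,P_{k_N}$ is not a convex combination, and ``hence causal'' does not follow: the definition of causal correlations requires the identity of the \emph{first} party to be chosen with input-independent probability, while it places no such constraint on the last party, so a who-is-last decomposition with input-dependent weights is not the right kind of object from which causality can be read off. Your anticipated fix (convexity of the causal set plus appending a non-signalling party at the end of a causal order) presupposes exactly the input-independence that fails.

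This is why the paper's proof in Appendix~\ref{app:only_causal_correl} is organised front-to-back rather than back-to-front. There, the full tower of matrices $W_{(\K,\ell)}$ is converted into weights $q^{\ell}_{\K,\vec x_\K,\vec a_\K}$ for which party acts \emph{next} given the history $(\K,\vec x_\K,\vec a_\K)$ of parties that have already acted, together with response functions $P^{(\ell)}_{\K,\vec x_\K,\vec a_\K}(a_\ell|x_\ell)$; these weights are allowed to depend on the past parties' inputs and outputs (that is precisely dynamical causal order), and the only place where genuine input-independence is needed---and is actually supplied by the constraints, via $\sum_{k_1}\Tr W_{(\emptyset,k_1)}=1$---is at the top level $\K=\emptyset$, i.e., for the choice of which party acts first. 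That recursion matches the recursive definition of causal correlations exactly, which a last-party peeling cannot be massaged into. Incidentally, for $N=2$ your argument happens to work, because $q_1(x_2)+q_2(x_1)=1$ for all inputs forces both weights to be constant (consistent with the fact that bipartite QC-QCs with trivial $\HS^P,\HS^F$ reduce to mixtures of the two fixed orders); but for $N\ge 3$ there are non-constant solutions of $\sum_{k}q_k(\vec x_{\N\backslash k})=1$, and that freedom is exactly what QC-QCs with dynamical control of causal order exploit.
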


Hence, QC-QCs cannot violate causal inequalities.
(This implies, \emph{a fortiori}, that QC-CCs, or QC-FOs, can also not violate causal inequalities, although this was already known; indeed, as we showed in the previous sections, those classes contain only causally separable processes.)
This generalises the previous results of Refs.~\cite{branciard16,oreshkov16} for the quantum switch.

\section{Probabilistic quantum circuits}
\label{sec:probQCs}

So far, we have studied quantum circuits that, although taking probabilistic external operations as inputs, are by themselves deterministic; that is, they arise from the composition of deterministic internal operations and can be realised without post-selection. 
In general, however, one can also consider circuits consisting of probabilistic operations that can produce several classical outcomes. 
In this section, we will characterise the probabilistic quantum supermaps obtained when allowing for probabilistic circuit operations in the classes that we introduced above. 
To that end, we replace each internal CPTP map in the above descriptions by a set of (trace non-increasing) CP maps (each corresponding to a given outcome) that sum up to a CPTP map\footnote{As before, when dealing with the trace-preserving conditions, we will restrict to the effective input spaces of the corresponding maps. This will be considered implicit in the following.}---i.e., by a quantum instrument~\cite{davies70}.

Such combinations of CP maps define ``probabilistic quantum circuits'' that can be represented by a set of ``probabilistic process matrices'', and that can be realised by post-selecting on the corresponding classical outcomes (where the probability of post-selection may depend on the external operations plugged into the circuit).
Technically speaking, such probabilistic quantum circuits define so-called ``quantum superinstruments''~\cite{chiribella13a}.

In what follows, we shall characterise the probabilistic quantum circuits thus obtained---and their elements, i.e., the probabilistic process matrices---for fixed causal order (Propositions~\ref{prop:descr_Wr_probQCFO} and~\ref{prop:charact_Wr_probQCFO}), for operations used in parallel (Proposition~\ref{prop:charact_Wr_probQCPAR}), for classical (Propositions~\ref{prop:descr_Wr_probQCCC} and~\ref{prop:charact_Wr_probQCCC}) and quantum (Propositions~\ref{prop:descr_Wr_probQCQC} and~\ref{prop:charact_Wr_probQCQC}) control of causal order, and even for general quantum superinstruments (Proposition~\ref{prop:charact_Wr_probIND}).

We note that results similar to those developed in this section have recently been presented in Ref.~\cite{bavaresco20}, where Bavaresco \emph{et al.}\ define so-called \emph{``two-copy parallel''}, \emph{``two-copy sequential''}, \emph{``two-copy general''} and \emph{``two-copy separable'' testers}. 
These correspond to the particular case with $N = 2$ and trivial $\HS^P$ and $\HS^F$ of the classes characterised in our Propositions~\ref{prop:charact_Wr_probQCPAR},~\ref{prop:charact_Wr_probQCFO},~\ref{prop:charact_Wr_probIND} and~\ref{prop:charact_Wr_probQCQC} respectively. 
Here, we derive these characterisations (as well as that of an additional class in Proposition~\ref{prop:charact_Wr_probQCCC}) for the general $N$-operation case using our constructive approach.  
Our results notably imply that the ``two-copy separable'' testers (in the terminology of Ref.~\cite{bavaresco20}) can be realised as probabilistic circuits with quantum control of causal order, providing a physical interpretation for that class.

\subsection{Probabilistic quantum circuits with fixed causal~order}
\label{sec:probQCFOs}

Let us start with the probabilistic counterpart of QC-FOs. 
This case has previously been studied in the literature, and equivalent or closely related concepts have been introduced under the names of~\emph{probabilistic quantum network}~\cite{chiribella09,bisio11}, \emph{generalised instrument}~\cite{chiribella09,bisio11}, \emph{measuring strategy}~\cite{gutoski06}, \emph{quantum tester}~\cite{chiribella09,bisio11,chiribella16} and \emph{process POVM}~\cite{ziman08}.

A given realisation (for a given set of classical outcomes) of a probabilistic quantum circuit with the fixed causal order $(\A_1,\ldots,\A_N)$ consists of internal CP maps $\M_1^{[r_1]}: \L(\HS^P) \to \L(\HS^{A_1^I\alpha_1})$, $\M_{n+1}^{[r_{n+1}]}: \L(\HS^{A_n^O\alpha_n}) \to \L(\HS^{A_{n+1}^I\alpha_{n+1}})$ for $1 \le n \le N-1$, and $\M_{N+1}^{[r_{N+1}]}: \L(\HS^{A_N^O\alpha_N}) \to \L(\HS^F)$, which are composed as in Fig.~\ref{fig:QCFO}, with $r_1,\dots,r_{N+1}$ denoting the classical outcomes, and with each of the CP maps being part of a quantum instrument---that is, with their sum over the classical outcomes yielding a CPTP map. 

To simplify the description, we note that the classical outcomes can always be encoded onto suitable orthogonal states of the ancillary systems, and the post-selection can be performed at the end as part of the last internal operation (before $F$). 
This allows us to describe any such probabilistic circuit without loss of generality as a circuit in which all internal operations are still deterministic, except for the last one, which is a CP map $\M_{N+1}^{[r]}$, belonging to an instrument $\{\M_{N+1}^{[r]}\}_r$ with classical outcomes $r$~\cite{chiribella09}. 

The TP conditions satisfied by the internal operations are thus given by Eqs.~\eqref{eq:TP_constr_QCFO_1}--\eqref{eq:TP_constr_QCFO_n} and by the TP condition for the final instrument, which is simply obtained by replacing $M_{N+1}$ by $\sum_r M_{N+1}^{[r]}$ in Eq.~\eqref{eq:TP_constr_QCFO_N} (see Appendix~\ref{app:proof_charact_QCCCs}). 
We formally call any such process with circuit operations $\M_1, \ldots, \M_N$ and $\{\M_{N+1}^{[r]}\}_r$ that are composed as in Fig.~\ref{fig:QCFO} and that satisfy these trace preserving conditions, a \emph{probabilistic quantum circuit with fixed causal order} (pQC-FO).  
Similarly to Proposition~\ref{prop:descr_W_QCFO}, we obtain the following process matrix description of pQC-FOs.

\begin{proposition}[Process matrix description of pQC-FOs] \label{prop:descr_Wr_probQCFO}
The probabilistic process matrix describing the specific realisation of such a pQC-FO, corresponding to the classical outcome $r$, is
\begin{align}
W^{[r]} = M_1 * M_2 * \cdots * M_N * M_{N+1}^{[r]} \quad \in \ \L(\HS^{PA_\N^{IO} F}). \label{eq:Wr_probQCFO}
\end{align}
The entire pQC-FO is described by the set $\{W^{[r]}\}_r$ of all such probabilistic process matrices, for all classical outcomes $r$. 
\end{proposition}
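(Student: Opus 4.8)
The plan is to mirror the proof of Proposition~\ref{prop:descr_W_QCFO} for the deterministic case, since by the reduction described just before the proposition we may assume without loss of generality that all internal operations $\M_1, \ldots, \M_N$ are deterministic (CPTP) and that only the final operation $\M_{N+1}^{[r]}$, belonging to the instrument $\{\M_{N+1}^{[r]}\}_r$, carries the post-selection on the outcome $r$. For each fixed $r$, the circuit is then a genuine sequential composition of fixed-order operations, and its resulting (now trace-non-increasing) induced map can be computed exactly as in Eq.~\eqref{eq:induced_M_QCFO}, with $M_{N+1}$ simply replaced by $M_{N+1}^{[r]}$.

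First I would use the fact (Eq.~\eqref{eq:compose_link_prod_mixed}) that the Choi matrix of a sequential composition of quantum operations is the link product of the constituent Choi matrices, to write the induced map conditioned on outcome $r$ as
\begin{align}
M^{[r]} = M_1 * A_1 * M_2 * \cdots * M_N * A_N * M_{N+1}^{[r]} \ \in \L(\HS^{PF}).
\end{align}
Then, invoking the commutativity and associativity of the link product (valid here since each Hilbert space appears at most twice across the factors) together with the fact that it reduces to a tensor product on non-overlapping spaces, I would reorder the terms so as to collect the external operations, obtaining
\begin{align}
M^{[r]} = (A_1 \otimes \cdots \otimes A_N) * ( M_1 * M_2 * \cdots * M_N * M_{N+1}^{[r]} ).
\end{align}
This puts $M^{[r]}$ in the form of Eq.~\eqref{eq:choi_map_M}, and the bracketed factor is immediately identified as the probabilistic process matrix $W^{[r]}$.

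The only point requiring slightly more care than in the deterministic case is the identification and uniqueness of $W^{[r]}$. Since the external operations $\A_k$ range over all CP maps, the tensor products $A_1 \otimes \cdots \otimes A_N$ span the entire space of Hermitian matrices on $\HS^{A_\N^{IO}}$, so---exactly as argued at the end of Sec.~\ref{sec:QC_supermaps}---the Hermitian matrix satisfying Eq.~\eqref{eq:choi_map_M} for all choices of $A_k$ is unique; hence $W^{[r]}$ is well-defined and given by the stated link product. I do not anticipate any real obstacle here: the derivation is structurally identical to the deterministic one, and the probabilistic character manifests only in that the induced map $M^{[r]}$ is now trace-non-increasing rather than trace-preserving. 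This affects the validity constraints on $W^{[r]}$ (to be addressed in the companion characterisation, Proposition~\ref{prop:charact_Wr_probQCFO}) but not the form of $W^{[r]}$ itself. Finally, I would remark that summing over outcomes, $\sum_r W^{[r]} = M_1 * \cdots * M_N * (\sum_r M_{N+1}^{[r]})$, recovers a deterministic process matrix as in Proposition~\ref{prop:descr_W_QCFO}, consistent with $\{\M_{N+1}^{[r]}\}_r$ forming an instrument.
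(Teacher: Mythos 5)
Your proof is correct and takes essentially the same approach as the paper: the paper obtains Proposition~\ref{prop:descr_Wr_probQCFO} ``similarly to Proposition~\ref{prop:descr_W_QCFO}'', i.e., by exactly the link-product computation of Eq.~\eqref{eq:induced_M_QCFO} with $M_{N+1}$ replaced by $M_{N+1}^{[r]}$, after the same without-loss-of-generality reduction that defers all post-selection to the final internal operation. Your closing remarks on uniqueness and on $\sum_r W^{[r]}$ being a deterministic process matrix also match what the paper notes at the end of Sec.~\ref{sec:QC_supermaps} and uses in proving Proposition~\ref{prop:charact_Wr_probQCFO}.
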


The corresponding characterisation follows directly from that of QC-FOs given by Proposition~\ref{prop:charact_W_QCFO}, and is proven in Appendix~\ref{app:proof_charact_QCFOs}. 
An equivalent result has also been proven in Ref.~\cite{chiribella09} (Theorem 4).

\begin{proposition}[Characterisation of pQC-FOs] \label{prop:charact_Wr_probQCFO}
A probabilistic quantum circuit with a fixed causal order is represented by a set of positive semidefinite matrices $\{W^{[r]} \in \L(\HS^{PA_\N^{IO} F})\}_r$, whose sum $W \coloneqq \sum_r W^{[r]}$ is the process matrix of a quantum circuit with the same fixed causal order (as characterised in Proposition~\ref{prop:charact_W_QCFO}).

Conversely, any set of positive semidefinite matrices $\{W^{[r]} \in \L(\HS^{PA_\N^{IO} F})\}_r$ whose sum is the process matrix of a QC-FO represents a probabilistic quantum circuit with the same fixed causal order.
\end{proposition}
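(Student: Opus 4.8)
The plan is to prove both directions by leaning on the deterministic characterisation of Proposition~\ref{prop:charact_W_QCFO} together with the multilinearity of the link product, reducing the probabilistic statement to the observation that the final internal operation of a QC-FO can be ``split'' into an instrument.

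For the necessary (forward) direction, I would start from a realisation of the pQC-FO with fixed deterministic internal maps $\M_1,\dots,\M_N$ and a final instrument $\{\M_{N+1}^{[r]}\}_r$, so that $W^{[r]} = M_1 * \cdots * M_N * M_{N+1}^{[r]}$ as in Eq.~\eqref{eq:Wr_probQCFO}. Each $W^{[r]}$ is then a link product of positive semidefinite Choi matrices (the $M_n$ and $M_{N+1}^{[r]}$ all being Choi matrices of CP maps), hence $W^{[r]} \succeq 0$ by the positivity property of the link product recalled in Sec.~\ref{subsubsec:link_product}. Summing over $r$ and using multilinearity of the link product, $W = \sum_r W^{[r]} = M_1 * \cdots * M_N * \big(\sum_r M_{N+1}^{[r]}\big)$, and since $\sum_r \M_{N+1}^{[r]}$ is CPTP (the defining property of an instrument), the maps $\M_1,\dots,\M_N,\sum_r\M_{N+1}^{[r]}$ satisfy the TP constraints of Eqs.~\eqref{eq:TP_constr_QCFO_1}--\eqref{eq:TP_constr_QCFO_N}. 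By Proposition~\ref{prop:descr_W_QCFO}, $W$ is then the process matrix of a QC-FO with the same fixed order, i.e.\ it satisfies Proposition~\ref{prop:charact_W_QCFO}.

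For the converse (sufficient) direction, given positive semidefinite matrices $\{W^{[r]}\}_r$ whose sum $W$ is a QC-FO process matrix, I would first invoke the sufficient direction of Proposition~\ref{prop:charact_W_QCFO} to construct internal operations $\M_1,\dots,\M_{N+1}$ realising $W = M_1*\cdots*M_{N+1}$, chosen (as in that construction) so that $\M_1,\dots,\M_N$ act as isometries on their effective input spaces. The task is then to keep these first $N$ ``teeth'' fixed and replace $\M_{N+1}$ by an instrument $\{\M_{N+1}^{[r]}\}_r$ whose elements reproduce the $W^{[r]}$. Writing $G \coloneqq M_1*\cdots*M_N$, the requirement is $W^{[r]} = G * M_{N+1}^{[r]}$, consistent with $\sum_r W^{[r]} = W = G * M_{N+1}$. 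Because the teeth are isometries, $G = \dketbra{\gamma}{\gamma}$ is rank one with $\dket{\gamma} \coloneqq \dket{V_1} * \cdots * \dket{V_N}$, and the contraction $M_{N+1}^{[r]} \mapsto G * M_{N+1}^{[r]}$ is left-invertible on the effective input space of $\M_{N+1}$: I would exhibit a ``dual tooth'', a positive semidefinite matrix $\dketbra{\tilde\gamma}{\tilde\gamma}$ built from the left inverse of the isometry on its support, whose link product with $W^{[r]}$ recovers $M_{N+1}^{[r]}$ on that space (extended by zero off it). Positivity of $M_{N+1}^{[r]}$ then follows from positivity of $W^{[r]}$, since this inverse is itself a link product with a positive semidefinite matrix, and $\sum_r M_{N+1}^{[r]} = M_{N+1}$ on the effective input space follows from $\sum_r W^{[r]} = W$; hence $\{\M_{N+1}^{[r]}\}_r$ is a genuine instrument (its sum being the CPTP map $\M_{N+1}$), completing the realisation.

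The main obstacle is precisely this inversion step in the converse: one must show that the link product with the fixed teeth $G$ can be undone so as to return a \emph{valid} (positive) Choi matrix for each instrument element, rather than merely an arbitrary Hermitian operator. The isometric character of $\M_1,\dots,\M_N$ on their effective input spaces is exactly what guarantees both injectivity of the contraction and preservation of positivity under its inverse; care is needed to restrict everything to the effective input space of $\M_{N+1}$ (cf.\ the discussion surrounding Eqs.~\eqref{eq:TP_constr_QCFO_1}--\eqref{eq:TP_constr_QCFO_N}) and to extend the recovered maps off that space without spoiling the instrument property. This mirrors the strategy already used for the deterministic case and recovers the equivalent result of Ref.~\cite{chiribella09}.
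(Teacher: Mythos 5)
Your forward direction coincides with the paper's proof: positivity of each $W^{[r]}$ follows from the positivity of the link product of positive semidefinite Choi matrices, and writing $\sum_r W^{[r]} = M_1 * \cdots * M_N * \big(\sum_r M_{N+1}^{[r]}\big)$ together with the instrument's TP condition shows the sum is a QC-FO process matrix. Your converse, however, takes a genuinely different route. The paper never inverts anything: it absorbs the outcome label into an auxiliary future system, defining $W' \coloneqq \sum_r W^{[r]} \otimes \ketbra{r}{r}^{F'}$, notes that $W'$ has the same reduced matrices $W_{(n)}$ as $\sum_r W^{[r]}$ and hence satisfies Proposition~\ref{prop:charact_W_QCFO} with global future $\HS^{FF'}$, runs the deterministic construction once to obtain $W' = M_1 * \cdots * M_N * M_{N+1}'$, and then slices the last tooth as $M_{N+1}^{[r]} \coloneqq M_{N+1}' * \ketbra{r}{r}^{F'}$; positivity of each instrument element and the sum/TP condition are then immediate. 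Your approach---fix a deterministic realisation of $W$, keep the comb $G \coloneqq M_1 * \cdots * M_N = \dketbra{\gamma}{\gamma}$ of the first $N$ teeth (rank one because those teeth each have a \emph{single Kraus operator} in the paper's construction; this is a purity property rather than an isometry property), and recover each $M_{N+1}^{[r]}$ by link-multiplying $W^{[r]}$ with a dual tooth built from the pseudo-inverse, in the spirit of Lemma~\ref{lemma:invert_link_product}---is also viable, and it makes transparent that the probabilistic structure can only sit in the final tooth. What the paper's trick buys is that all delicate support bookkeeping is inherited from the deterministic proof; your route has to redo it explicitly.

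The step you must make precise is not the one you flag. Injectivity of the contraction $X \mapsto G * X$ is immediate from the orthogonality of the eigenvectors composing $\dket{\gamma}$ (and in the paper's construction the effective input space of $\M_{N+1}$ is in fact all of $\HS^{A_N^O\alpha_N}$, so no extension ``off it'' is needed). The real issue is \emph{membership of $W^{[r]}$ in the image} of this contraction: every operator of the form $G * X$ is supported in $\range(W_{(N)}) \otimes \HS^{A_N^OF}$, so for your candidate $M_{N+1}^{[r]} \coloneqq W^{[r]} * \dketbra{\tilde\gamma}{\tilde\gamma}$ to actually satisfy $G * M_{N+1}^{[r]} = W^{[r]}$ you need $\range(W^{[r]}) \subseteq \range(W_{(N)}) \otimes \HS^{A_N^OF}$. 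This is exactly where positivity of \emph{all} the elements enters: from $W^{[r']} \succeq 0$ for every $r'$ one gets $0 \preceq W^{[r]} \preceq W$, hence $\range(W^{[r]}) \subseteq \range(W)$, and $\range(W) \subseteq \range(W_{(N)}) \otimes \HS^{A_N^OF}$ follows from $\Tr_F W = W_{(N)} \otimes \id^{A_N^O}$ via (the matrix analogue of) Observation~\ref{observation:nec_cond_range}. Note that this statement is about the $W^{[r]}$, not about the comb, and it fails for general Hermitian $W^{[r]}$ summing to $W$---so no amount of care about the comb alone can supply it. Once this lemma is inserted, your argument closes: $M_{N+1}^{[r]}$ is positive semidefinite as a link product of positive semidefinite matrices, $G * M_{N+1}^{[r]} = W^{[r]}$ reproduces Eq.~\eqref{eq:Wr_probQCFO}, and $\sum_r M_{N+1}^{[r]}$ satisfies the TP condition of Eq.~\eqref{eq:TP_constr_QCFO_N} because $G * \sum_r M_{N+1}^{[r]} = W$ and $\Tr_F W = W_{(N)} \otimes \id^{A_N^O}$, so Proposition~\ref{prop:descr_Wr_probQCFO} applies.
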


That is, for a probabilistic quantum circuit with the fixed causal order $(\A_1,\ldots,\A_N)$, the reduced matrices $W_{(n)}$ of $W \coloneqq \sum_r W^{[r]}$, as defined in Proposition~\ref{prop:charact_W_QCFO}, satisfy the constraints of Eq.~\eqref{eq:charact_W_QCFO}.
A given matrix $W^{[r]} \in \L(\HS^{PA_\N^{IO} F})$ is then a probabilistic process matrix describing a particular realisation of a pQC-FO if and only if it is an element of a pQC-FO $\{W^{[r]}\}_r$, characterised as in Proposition~\ref{prop:charact_Wr_probQCFO} above.

\medskip

As a simple example of a pQC-FO, we revisit the first example from Sec.~\ref{subsec:QCFO_examples}, where the two external operations are applied one after the other. 
Let us add a measurement in the computational basis after the second external operation, with the post-measurement state being sent to the global future. 
That is, the identity channel from $\HS^{A_2^O}$ to $\HS^F$ gets replaced by a quantum instrument $\{\M_{3}^{[i]}\}_i$, with $\M_{3}^{[i]}: \L(\HS^{A^O_2}) \to \L(\HS^F)$ given by $\M_{3}^{[i]}(\rho) = \Tr\big[\rho \cdot \ketbra{i}{i}^{A_2^O}\big] \ketbra{i}{i}^F$, with Choi matrices $M_{3}^{[i]} = \ketbra{i}{i}^{A^O_2} \otimes \ketbra{i}{i}^{F}$ and where $\{\ket{i}^{A_2^O}\}_i$ and $\{\ket{i}^F\}_i$ are the computational bases (in one-to-one correspondence) of $\HS^{A_2^O}$ and $\HS^F$. 
According to Proposition~\ref{prop:descr_Wr_probQCFO}, the probabilistic process matrix that describes the specific realisation of such a pQC-FO, corresponding to a particular measurement outcome $i$, is 
\begin{equation}
    W_{P \text{\scalebox{.4}[1]{$\to$}} A_1 \text{\scalebox{.4}[1]{$\to$}} A_2 \text{\scalebox{.4}[1]{$\to$}} F}^{[i]} = \dketbra{\id}{\id}^{P A^I_1} \otimes \dketbra{\id}{\id}^{A^O_1 A^I_2} \otimes \ketbra{i}{i}^{A^O_2} \otimes \ketbra{i}{i}^{F}.
\end{equation}
The entire pQC-FO is described by the set $\{W_{P \text{\scalebox{.4}[1]{$\to$}} A_1 \text{\scalebox{.4}[1]{$\to$}} A_2 \text{\scalebox{.4}[1]{$\to$}} F}^{[i]}\}_i$,  and it is straightforward to check that it satisfies the characterisation of Proposition~\ref{prop:charact_Wr_probQCFO}.

\medskip

The particular case with operations used in parallel is discussed in Appendix~\ref{app:QCPARs}. 
There, we outline a proof of the following characterisation of probabilistic quantum circuits with operations used in parallel (pQC-PARs):

\begin{proposition}[Characterisation of pQC-PARs] \label{prop:charact_Wr_probQCPAR}
A probabilistic quantum circuit with operations used in parallel is represented by a set of positive semidefinite matrices $\{W^{[r]} \in \L(\HS^{PA_\N^{IO} F})\}_r$, whose sum $W \coloneqq \sum_r W^{[r]}$ is the process matrix of a quantum circuit with operations used in parallel (as characterised in Proposition~\ref{prop:charact_W_QCPAR}).

Conversely, any set of positive semidefinite matrices $\{W^{[r]} \in \L(\HS^{PA_\N^{IO} F})\}_r$ whose sum is the process matrix of a QC-PAR represents a probabilistic quantum circuit with operations used in parallel.
\end{proposition}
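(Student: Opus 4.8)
The plan is to mirror the proof of Proposition~\ref{prop:charact_Wr_probQCFO} for pQC-FOs, simply replacing the fixed-order backbone by a parallel one. As for pQC-FOs, I would first reduce a general probabilistic parallel circuit to a canonical form in which every internal operation is deterministic except the last: the classical outcomes of any earlier instrument can be copied onto orthogonal states of the ancillary systems and carried forward, deferring all post-selection to the final collecting operation (cf.\ the reduction described in Sec.~\ref{sec:probQCFOs}). A pQC-PAR may thus be taken, without loss of generality, to consist of the deterministic parallel backbone of a QC-PAR together with a final \emph{instrument} $\{\M_{N+1}^{[r]}\}_r$ whose elements sum to a CPTP map.

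For the necessary direction this makes the claim immediate: each realisation $W^{[r]}$ is a link product of the positive semidefinite Choi matrices of the deterministic internal operations with that of the CP map $\M_{N+1}^{[r]}$ (analogously to Eq.~\eqref{eq:Wr_probQCFO}), and since the link product preserves positive semidefiniteness, $W^{[r]} \succeq 0$. Summing over $r$ replaces the final instrument by the CPTP map $\sum_r \M_{N+1}^{[r]}$, turning the circuit into a deterministic QC-PAR; hence $W \coloneqq \sum_r W^{[r]}$ satisfies the characterisation of Proposition~\ref{prop:charact_W_QCPAR}.

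For the converse, given positive semidefinite $\{W^{[r]}\}_r$ with $W = \sum_r W^{[r]}$ a QC-PAR process matrix, I would invoke the explicit deterministic parallel realisation of $W$ provided by Proposition~\ref{prop:charact_W_QCPAR} (Appendix~\ref{app:QCPARs}) and then split its final collecting operation into an instrument reproducing the $W^{[r]}$. Since $W^{[r]} \succeq 0$ and $\sum_r W^{[r]} = W$, one has $0 \preceq W^{[r]} \preceq W$, so the operators $W^{-1/2} W^{[r]} W^{-1/2}$ (taken on the support of $W$) form a POVM; after an isometric dilation enlarging the ancillary systems so that the final operation has access to the relevant input and output information, implementing this POVM as the final measurement yields an instrument $\{\M_{N+1}^{[r]}\}_r$ whose outcome-$r$ element is exactly $W^{[r]}$ and whose sum recovers the deterministic final operation.

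The main obstacle is precisely this last step: verifying that, by suitably enlarging the ancillas carried through the parallel circuit, the final operation can be made to reconstruct $W$ and thereby realise each $W^{[r]}$ as a matrix in the full space $\L(\HS^{PA_\N^{IO} F})$ --- rather than only on the reduced output space it naively acts upon. This is the same technical point that drives the sufficiency proof of Proposition~\ref{prop:charact_Wr_probQCFO}; here it is only lightly adapted, since the parallel structure fixes a single collecting stage in place of a sequence of intermediate instruments, so the construction of the final instrument carries over essentially unchanged.
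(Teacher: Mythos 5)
Your reduction to a canonical form (deterministic $\M_P$ plus a single final instrument) and your necessary direction both match the paper's proof in Appendix~\ref{app:QCPARs}. The converse direction, however, has a genuine gap, and it sits exactly where you located your ``main obstacle''. The POVM $\{W^{-1/2}W^{[r]}W^{-1/2}\}_r$ lives on the support of $W$ inside $\L(\HS^{PA_\N^{IO}F})$---an abstract process space containing the factor $\HS^{PA_\N^I}$ and the output factor $\HS^F$---whereas the final internal operation of a QC-PAR is a map $\L(\HS^{A_\N^O\alpha})\to\L(\HS^F)$. ``Implementing this POVM as the final measurement'' is therefore not an operation the circuit can perform; what must be exhibited are CP maps with Choi matrices $M_F^{[r]}\in\L(\HS^{A_\N^O\alpha F})$, summing to a map satisfying the TP condition \eqref{eq:TP_constr_QCPAR_F}, such that $M_P * M_F^{[r]} = W^{[r]}$. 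No argument is given for why such maps exist, and your stated reason for skipping this---that the same point drives the sufficiency proof of Proposition~\ref{prop:charact_Wr_probQCFO} and ``carries over essentially unchanged''---mischaracterises that proof: the paper never splits the final operation of a realisation of $W$; it avoids the issue entirely.

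What the paper actually does, for pQC-FOs and pQC-PARs alike, is an extension trick: introduce $\HS^{F'}$ with basis $\{\ket{r}^{F'}\}_r$, set $W'\coloneqq\sum_r W^{[r]}\otimes\ketbra{r}{r}^{F'}$, observe that $W'$ satisfies the deterministic characterisation of Proposition~\ref{prop:charact_W_QCPAR} with global future $\HS^{FF'}$ (immediate, since Eq.~\eqref{eq:charact_W_QCPAR} only constrains $\Tr_F W$ and $\Tr_{FF'}W'=W$), realise $W'=M_P * M_F'$ deterministically, and define $M_F^{[r]}\coloneqq M_F' * \ketbra{r}{r}^{F'}$; positivity, the TP condition for $\sum_r M_F^{[r]}=\Tr_{F'}M_F'$, and $M_P * M_F^{[r]}=W^{[r]}$ then all follow at once. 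Your route can in fact be repaired without this trick, but it takes real work rather than a generic dilation: in the canonical realisation of Appendix~\ref{app:QCPARs}, $M_P=\ketbra{w_{(I)}}{w_{(I)}}$ with $\ket{w_{(I)}}=\sum_i\ket{w_{(I)}^i}\ket{i}^{\alpha}$ and $d_\alpha=\rank W_{(I)}$, so $X\mapsto M_P * X$ maps $\sum_{ii'}\ketbra{i}{i'}^{\alpha}\otimes X_{ii'}$ to $\sum_{ii'}\ketbra{w_{(I)}^i}{w_{(I)}^{i'}}\otimes X_{ii'}$; this map is injective, preserves positivity in both directions (congruence by the invertible diagonal of norms $\|w_{(I)}^i\|$), and its image is exactly the set of operators supported on $\range(W_{(I)})\otimes\HS^{A_\N^OF}$. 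Since $0\preceq W^{[r]}\preceq W$ gives $\range(W^{[r]})\subseteq\range(W)\subseteq\range(W_{(I)})\otimes\HS^{A_\N^OF}$, each $W^{[r]}$ has a unique positive semidefinite preimage $M_F^{[r]}$, and these sum to $M_F$ by injectivity, which yields the required instrument. Note that these preimages are the mirror images of the $W^{[r]}$ themselves (with norm rescalings), not of your POVM elements $W^{-1/2}W^{[r]}W^{-1/2}$. Either this support argument or the paper's one-line extension is the missing content of your proposal.
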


\subsection{Probabilistic quantum circuits with classical control of causal order}
\label{sec:probQCCCs}

To obtain the probabilistic counterpart of QC-CCs, the deterministic objects that need to be replaced by probabilistic ones are the CPTP maps obtained by summing up the elements of the circuit instruments $\{\M_\emptyset^{\to k_1}\}_{k_1 \in \N}$ etc. 
Equivalently, one replaces these circuit instruments by more ``fine-grained'' instruments that admit additional classical outcomes. 
A particular realisation of such a circuit thus starts with a CP map $\M_\emptyset^{\to k_1[r_1]}: \L(\HS^{P}) \to \L(\HS^{A_{k_1}^I\alpha_1})$, which is part of a quantum instrument $\{\M_\emptyset^{\to k_1[r_1]}\}_{k_1 \in \N;r_1}$ with a classical output value $k_1$ that determines the first external operation to be applied, as well as an additional classical output $r_1$. 
Similarly, the subsequent circuit maps are given by $\M_{(k_1,\ldots,k_n)}^{\to k_{n+1}[r_{n+1}]}: \L(\HS^{A_{k_n}^O \alpha_n}) \to \L(\HS^{A_{k_{n+1}}^I \alpha_{n+1}})$, and belong to instruments $\{\M_{(k_1,\ldots,k_n)}^{\to k_{n+1}[r_{n+1}]}\}_{k_{n+1} \in \N \backslash \{k_1, \ldots, k_n\};r_{n+1}}$.%
\footnote{The set of additional, fine-grained classical outcomes could in principle be conditioned on $(k_1,\ldots,k_n,k_{n+1})$. However, we can always take these sets to be the same by taking their union and appending null elements to the instruments where necessary.}

Similarly to the QC-FO case above, the fine-grained outcomes can be encoded in the ancillary systems of the circuit, and the post-selection can be deferred to the last map. 
This gives rise to a circuit which is deterministic except for the last internal operation before $F$---this gets replaced by CP maps $\M_{(k_1,\ldots,k_N)}^{\to F\,[r]}$, which belong to instruments $\{\M_{(k_1,\ldots,k_N)}^{\to F\,[r]}\}_r$. 
The TP conditions satisfied by the internal operations are thus given by Eqs.~\eqref{eq:TP_constr_QCCC_1}--\eqref{eq:TP_constr_QCCC_n}, together with the TP conditions for the last internal operation, which are obtained by replacing $M_{(k_1,\ldots,k_N)}^{\to F}$ by $\sum_r M_{(k_1,\ldots,k_N)}^{\to F\,[r]}$ in Eq.~\eqref{eq:TP_constr_QCCC_N} (see Appendix~\ref{app:proof_charact_QCCCs}). 

We formally call any process of the kind described, with internal circuit operations $\{\M_\emptyset^{\to k_1}\}_{k_1}$, $\{\M_{(k_1,\ldots,k_n)}^{\to k_{n+1}}\}_{k_{n+1}}$ and $\{\M_{(k_1,\ldots,k_N)}^{\to F\,[r]}\}_r$ that are composed as in Fig.~\ref{fig:QCCCv1} and that satisfy these trace preserving conditions, a \emph{probabilistic quantum circuit with classical control of causal order} (pQC-CC). 
The process matrix description of pQC-CCs is obtained similarly to Proposition~\ref{prop:descr_W_QCCC}.
\begin{proposition}[Process matrix description of pQC-CCs] \label{prop:descr_Wr_probQCCC}
The probabilistic process matrix describing the specific realisation of such a pQC-CC, corresponding to the classical outcome $r$, is given by
\begin{align}
W^{[r]} = \sum_{(k_1,\ldots,k_N)} W_{(k_1,\ldots,k_N,F)}^{[r]} \label{eq:Wr_probQCCC}
\end{align}
with
\begin{align}
& W_{(k_1,\ldots,k_N,F)}^{[r]} \coloneqq M_\emptyset^{\to k_1} * M_{(k_1)}^{\to k_2} * M_{(k_1,k_2)}^{\to k_3} * \cdots \notag \\
& \hspace{27mm} \cdots * M_{(k_1,\ldots,k_{N-1})}^{\to k_N} * M_{(k_1,\ldots,k_N)}^{\to F\,[r]}. \label{eq:Wr_k1_kN_probQCCC}
\end{align}
The entire pQC-CC is described by the set $\{W^{[r]}\}_r$ of all such probabilistic process matrices, for all classical outcomes $r$. 
\end{proposition}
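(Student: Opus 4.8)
The plan is to mirror the computation used to establish the deterministic case in Proposition~\ref{prop:descr_W_QCCC}, adjusting only for the single post-selected outcome. The preceding discussion has already reduced, without loss of generality, any pQC-CC to one whose internal operations are all deterministic (CPTP) except for the final instrument $\{\M_{(k_1,\ldots,k_N)}^{\to F\,[r]}\}_r$, by encoding the fine-grained outcomes of the earlier instruments onto orthogonal ancillary states and deferring the post-selection to the last map. First I would fix both a post-selected outcome $r$ and a particular causal order $(k_1,\ldots,k_N)$ that the circuit's (deterministic) instruments happen to establish, and compute the Choi matrix of the corresponding sub-normalised induced map. This is obtained exactly as in Eq.~\eqref{eq:induced_Mk1_kN_QCCC}, by link-multiplying the internal operations $M_\emptyset^{\to k_1}, M_{(k_1)}^{\to k_2}, \ldots, M_{(k_1,\ldots,k_{N-1})}^{\to k_N}$ with the inserted external operations $A_{k_1},\ldots,A_{k_N}$ in sequence, the only change being that the terminal deterministic map is replaced by the CP map $M_{(k_1,\ldots,k_N)}^{\to F\,[r]}$ belonging to the final instrument.

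Next I would sum over the appropriate outcomes. The instrument outputs $k_1,\ldots,k_N$ that dynamically fix the causal order are internal to the circuit and are never observed, so---exactly as in the passage from Eq.~\eqref{eq:induced_Mk1_kN_QCCC} to Eq.~\eqref{eq:induced_M_QCCC}---they must be summed over to obtain the map induced by this realisation. The outcome $r$, by contrast, is precisely the classical result on which we post-select and which labels the realisation, so it is held fixed. This yields the induced (trace-nonincreasing) map
\begin{align}
M^{[r]} & = \sum_{(k_1,\ldots,k_N)} M_\emptyset^{\to k_1} * A_{k_1} * M_{(k_1)}^{\to k_2} * \cdots \notag \\
& \quad \cdots * M_{(k_1,\ldots,k_{N-1})}^{\to k_N} * A_{k_N} * M_{(k_1,\ldots,k_N)}^{\to F\,[r]}.
\end{align}

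Finally, using the commutativity and associativity of the link product (valid since each constituent Hilbert space appears at most twice among the factors), I would reorder each summand so as to factor out $A_1 \otimes \cdots \otimes A_N$ and write $M^{[r]}$ in the canonical form of Eq.~\eqref{eq:choi_map_M}, namely $M^{[r]} = (A_1 \otimes \cdots \otimes A_N) * W^{[r]}$. The remaining factor is then read off as the probabilistic process matrix $W^{[r]} = \sum_{(k_1,\ldots,k_N)} W_{(k_1,\ldots,k_N,F)}^{[r]}$, with $W_{(k_1,\ldots,k_N,F)}^{[r]}$ given by Eq.~\eqref{eq:Wr_k1_kN_probQCCC}; its uniqueness follows from the spanning argument at the end of Sec.~\ref{subsec:PM_supermaps}, as the $A_k$ range over all CP maps. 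The computation involves no genuine obstacle---it is a direct transcription of the deterministic argument---and the only point requiring care is the already-established reduction that lets one assume the circuit is deterministic apart from the final instrument. This reduction is precisely what guarantees that a \emph{single} post-selection label $r$ appears and thus delivers the clean form stated in the proposition.
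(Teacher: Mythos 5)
Your proposal is correct and follows essentially the same route as the paper: the paper derives Proposition~\ref{prop:descr_Wr_probQCCC} exactly by repeating the computation behind Proposition~\ref{prop:descr_W_QCCC} (Eqs.~\eqref{eq:induced_Mk1_kN_QCCC}--\eqref{eq:induced_M_QCCC}), with the final deterministic map replaced by the CP map $M_{(k_1,\ldots,k_N)}^{\to F\,[r]}$, summing over the internal order outcomes while holding the post-selected outcome $r$ fixed, and identifying the unique $W^{[r]}$ via the spanning argument. You also correctly invoke the prior reduction (encoding fine-grained outcomes in ancillas and deferring post-selection to the last operation), which is precisely the step the paper relies on to justify the clean single-outcome form.
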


Probabilistic quantum circuits with classical control of causal order can then be characterised as follows:

\begin{proposition}[Characterisation of pQC-CCs] \label{prop:charact_Wr_probQCCC}
A probabilistic quantum circuit with classical control of causal order is represented by a set of positive semidefinite matrices $\{W^{[r]}\}_r$, where the matrices $W^{[r]} \in \L(\HS^{PA_\N^{IO} F})$ can be decomposed in terms of positive semidefinite matrices $W_{(k_1,\ldots,k_n)} \in \L(\HS^{PA_{\{k_1,\ldots,k_{n-1}\}}^{IO} A_{k_n}^I})$ and $W_{(k_1,\ldots,k_N,F)}^{[r]} \in \L(\HS^{PA_\N^{IO} F})$, in such a way that
\begin{align}
\forall \, r, \quad W^{[r]} = \sum_{(k_1,\ldots,k_N)} W_{(k_1,\ldots,k_{N},F)}^{[r]}, \label{eq:charact_Wr_probQCCC}
\end{align}
and such that the matrices $W_{(k_1,\ldots,k_n)}$ and $W_{(k_1,\ldots,k_N,F)} \coloneqq \sum_r W_{(k_1,\ldots,k_N,F)}^{[r]}$ satisfy Eq.~\eqref{eq:charact_W_QCCC_decomp_constr} of Proposition~\ref{prop:charact_W_QCCC}.

Conversely, any set $\{W^{[r]}\}_r$ of positive semidefinite matrices with the properties above represents a probabilistic quantum circuit with classical control of causal order.  
\end{proposition}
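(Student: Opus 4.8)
The plan is to mirror exactly the two-directional structure already used for the deterministic case in Proposition~\ref{prop:charact_W_QCCC}, and to build on the reduction noted in Sec.~\ref{sec:probQCCCs}: every pQC-CC may, without loss of generality, be taken to be deterministic except for its final internal operation, which is an instrument $\{\M_{(k_1,\ldots,k_N)}^{\to F\,[r]}\}_r$. Since the probabilistic process matrices $W^{[r]}$ then differ only through this last map, their sum $W \coloneqq \sum_r W^{[r]}$ is precisely the process matrix of the deterministic QC-CC obtained by merging the instrument elements back into the CPTP map $\sum_r \M_{(k_1,\ldots,k_N)}^{\to F\,[r]}$. The whole statement thus reduces to ``Proposition~\ref{prop:charact_W_QCCC} applied to $W=\sum_r W^{[r]}$, plus a splitting of the final operation'', in direct parallel to the way Proposition~\ref{prop:charact_Wr_probQCFO} follows from the deterministic QC-FO characterisation.

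For the necessary direction, I would start from the description of Proposition~\ref{prop:descr_Wr_probQCCC} and set $W_{(k_1,\ldots,k_n)} \coloneqq \Tr_{\alpha_n}\big(M_\emptyset^{\to k_1} * \cdots * M_{(k_1,\ldots,k_{n-1})}^{\to k_n}\big)$, exactly as in the QC-CC case, together with the $W_{(k_1,\ldots,k_N,F)}^{[r]}$ given by Eq.~\eqref{eq:Wr_k1_kN_probQCCC}. All these matrices are positive semidefinite, being link products of Choi matrices of CP maps. Summing Eq.~\eqref{eq:Wr_k1_kN_probQCCC} over $r$ amounts to replacing the last instrument element by the CPTP map $\sum_r \M_{(k_1,\ldots,k_N)}^{\to F\,[r]}$, so that $W_{(k_1,\ldots,k_N,F)} \coloneqq \sum_r W_{(k_1,\ldots,k_N,F)}^{[r]}$ is precisely a term of the deterministic QC-CC built from this merged map. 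The constraints of Eq.~\eqref{eq:charact_W_QCCC_decomp_constr} then hold by the necessary part of Proposition~\ref{prop:charact_W_QCCC}, invoking the TP conditions Eqs.~\eqref{eq:TP_constr_QCCC_1}--\eqref{eq:TP_constr_QCCC_n} together with the final-instrument TP condition, i.e.\ the version of Eq.~\eqref{eq:TP_constr_QCCC_N} with $M_{(k_1,\ldots,k_N)}^{\to F}$ replaced by its sum over $r$.

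Conversely, given a set $\{W^{[r]}\}_r$ admitting such a decomposition, I would first apply the sufficient part of Proposition~\ref{prop:charact_W_QCCC} to the merged data $\{W_{(k_1,\ldots,k_n)},\,W_{(k_1,\ldots,k_N,F)}\}$ (with $W_{(k_1,\ldots,k_N,F)} \coloneqq \sum_r W_{(k_1,\ldots,k_N,F)}^{[r]}$), which by hypothesis satisfies Eq.~\eqref{eq:charact_W_QCCC_decomp_constr}. This yields an explicit deterministic QC-CC whose internal operations $\M_\emptyset^{\to k_1}$ and $\M_{(k_1,\ldots,k_n)}^{\to k_{n+1}}$ (for $n \le N-1$) act as isometries on their effective input spaces, and a final CPTP map $\M_{(k_1,\ldots,k_N)}^{\to F}$ reproducing $W=\sum_r W^{[r]}$. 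It then remains to split this final map into an instrument $\{\M_{(k_1,\ldots,k_N)}^{\to F\,[r]}\}_r$ whose elements reproduce the individual $W^{[r]}$ via Eq.~\eqref{eq:Wr_k1_kN_probQCCC}.

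The crux, and the step I expect to require most care, is precisely this splitting of the last operation, which is where the probabilistic content of the statement lives. The idea is that, because every preceding internal operation is an isometry on its effective input space, the ``prefix'' link product $M_\emptyset^{\to k_1} * \cdots * M_{(k_1,\ldots,k_{N-1})}^{\to k_N}$ can be inverted on the relevant support, so that specifying the target $W_{(k_1,\ldots,k_N,F)}^{[r]}$ determines, linearly, the Choi matrix $M_{(k_1,\ldots,k_N)}^{\to F\,[r]}$ of each instrument element. Positivity of the resulting elements follows because this inversion acts by congruence with the pseudo-inverse isometry, hence maps the positive semidefinite $W_{(k_1,\ldots,k_N,F)}^{[r]}$ to a positive semidefinite Choi matrix; and the required support condition is automatic since $0 \preceq W_{(k_1,\ldots,k_N,F)}^{[r]} \preceq W_{(k_1,\ldots,k_N,F)}$, so each element is supported within the effective support fixed by the deterministic construction. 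Finally, linearity of the inversion together with $\sum_r W_{(k_1,\ldots,k_N,F)}^{[r]} = W_{(k_1,\ldots,k_N,F)}$ guarantees that $\sum_r \M_{(k_1,\ldots,k_N)}^{\to F\,[r]}$ recovers the deterministic CPTP map, so that $\{\M_{(k_1,\ldots,k_N)}^{\to F\,[r]}\}_r$ is a genuine instrument and the constructed circuit is a bona fide pQC-CC realising $\{W^{[r]}\}_r$. This mirrors the analogous final-map bookkeeping already carried out for pQC-FOs (Proposition~\ref{prop:charact_Wr_probQCFO}), and reduces the entire argument to Proposition~\ref{prop:charact_W_QCCC} plus this splitting.
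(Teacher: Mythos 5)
Your proof is correct, but your converse direction takes a genuinely different route from the paper's. The necessary direction is the same in both: define $W_{(k_1,\ldots,k_n)} \coloneqq \Tr_{\alpha_n}\big(M_\emptyset^{\to k_1} * \cdots * M_{(k_1,\ldots,k_{n-1})}^{\to k_n}\big)$, take the $W_{(k_1,\ldots,k_N,F)}^{[r]}$ from Eq.~\eqref{eq:Wr_k1_kN_probQCCC}, and read Eq.~\eqref{eq:charact_W_QCCC_decomp_constr} off the TP conditions with $M_{(k_1,\ldots,k_N)}^{\to F}$ replaced by $\sum_r M_{(k_1,\ldots,k_N)}^{\to F\,[r]}$. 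For the converse, however, the paper never splits the final map by hand: it introduces a flag register $\HS^{F'}$ with basis $\{\ket{r}^{F'}\}_r$, forms the extended matrix $W' \coloneqq \sum_r W^{[r]} \otimes \ketbra{r}{r}^{F'}$, uses the per-$r$ decomposition hypothesis to show that $W'$ is the process matrix of a \emph{deterministic} QC-CC with global future $FF'$ (via $W'_{(k_1,\ldots,k_N,FF')} \coloneqq \sum_r W_{(k_1,\ldots,k_N,F)}^{[r]} \otimes \ketbra{r}{r}^{F'}$), applies the sufficient direction of Proposition~\ref{prop:charact_W_QCCC} to $W'$, and finally sets $M_{(k_1,\ldots,k_N)}^{\to F\,[r]} \coloneqq M_{(k_1,\ldots,k_N)}^{\prime\ \to FF'} * \ketbra{r}{r}^{F'}$; positivity, the instrument property and Eq.~\eqref{eq:Wr_k1_kN_probQCCC} then come for free by projecting on the classical flag. (The same trick is what the paper uses for pQC-FOs and pQC-QCs, so your remark that you are mirroring the pQC-FO bookkeeping of Proposition~\ref{prop:charact_Wr_probQCFO} is a misattribution: your splitting step appears nowhere in the paper.) Your route---apply Proposition~\ref{prop:charact_W_QCCC} to the merged $W$, then split the last CPTP map---does go through: the support condition needed to invert the prefix $M_\emptyset^{\to k_1} * \cdots * M_{(k_1,\ldots,k_{N-1})}^{\to k_N} = \ketbra{w_{(k_1,\ldots,k_N)}}{w_{(k_1,\ldots,k_N)}}$ follows from $0 \preceq W_{(k_1,\ldots,k_N,F)}^{[r]} \preceq W_{(k_1,\ldots,k_N,F)}$ together with Observation~\ref{observation:nec_cond_range}; the candidate elements $M_{(k_1,\ldots,k_N)}^{\to F\,[r]} \coloneqq \ketbra{w_{(k_1,\ldots,k_N)}^+}{w_{(k_1,\ldots,k_N)}^+} * W_{(k_1,\ldots,k_N,F)}^{[r]}$ are positive semidefinite, being congruences $K^\dagger W_{(k_1,\ldots,k_N,F)}^{[r]} K$ with a fixed operator $K$ built from the pseudo-inverse; and by linearity they sum to the deterministic final map, so the instrument TP condition is inherited. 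The trade-off: the paper's extension trick requires no new analytic work and is uniform across all the probabilistic classes, while yours is more explicit about how the outcome-resolved final maps arise from the $W_{(k_1,\ldots,k_N,F)}^{[r]}$'s, at the cost of re-running the Lemma~\ref{lemma:invert_link_product}-type inversion at the matrix level. Both arguments, correctly, hinge on the per-$r$ decomposition hypothesis---necessarily so, since merely requiring $\sum_r W^{[r]}$ to be a QC-CC would not suffice (cf.\ the counterexample at the end of Sec.~\ref{sec:probQCQCs}).
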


The proof extends directly from that of Proposition~\ref{prop:charact_W_QCCC}; see Appendix~\ref{app:proof_charact_QCCCs}. 
We then have that a given matrix $W^{[r]} \in \L(\HS^{PA_\N^{IO} F})$ is a probabilistic process matrix describing a particular realisation of a probabilistic QC-CC if and only if it is an element of a pQC-CC $\{W^{[r]}\}_r$, characterised as in Proposition~\ref{prop:charact_Wr_probQCCC} above.

Note that, contrary to the case of pQC-FOs and pQC-PARs above (Propositions~\ref{prop:charact_Wr_probQCFO} and~\ref{prop:charact_Wr_probQCPAR}) and to the cases of pQC-QCs and pGENs discussed below (Propositions~\ref{prop:charact_Wr_probQCQC} and~\ref{prop:charact_Wr_probIND}), simply requiring that the sum over the classical outcomes should yield the process matrix of a QC-CC is not sufficient for a set $\{W^{[r]}\}_r$ to have a realisation as a pQC-CC. 
A counterexample that satisfies this weaker condition, but not the stronger constraints of Proposition~\ref{prop:charact_Wr_probQCCC}, is discussed at the end of Sec.~\ref{sec:probQCQCs} below (see also  Footnote~\ref{footnote:specific_probQCCC} in the proof in Appendix~\ref{app:proof_charact_QCCCs}).

\medskip

We already encountered an example of a pQC-CC earlier in this paper. 
The matrices $W_{(k_1,\ldots,k_N,F)}$ introduced in Sec.~\ref{sec:QCCCs} that describe the particular realisation of a QC-CC where the order ends up being $(k_1,k_2,\ldots,k_N)$ are probabilistic process matrices, and the set of all such  matrices, for all possible orders, constitutes a pQC-CC.
It describes the situation where the additional, fine-grained outcomes on which one post-selects coincide with the outcomes $k_1,\ldots,k_N$ that determine the order of the external operations. 
Formally, the classical outcomes $r$ are taken to be the ordered sequences $(k_1,\ldots,k_N)$ of elements in $\N$, and the last internal CP maps are given by $\M_{(k_1,\ldots,k_N)}^{\to F\,[r]} = \delta_{r,(k_1,\ldots,k_N)} \, \M_{(k_1,\ldots,k_N)}^{\to F}$. 
According to Proposition~\ref{prop:descr_Wr_probQCCC}, the process matrix description of the pQC-CC thus obtained is $\{ W^{[r=(k_1,\ldots,k_N)]}\}_{(k_1,\ldots,k_N)}$, with $W^{[r=(k_1,\ldots,k_N)]} = W_{(k_1,\ldots,k_N,F)}$.

\subsection{Probabilistic quantum circuits with quantum control of causal order}
\label{sec:probQCQCs}

As we did for the previous classes of circuits, we can once again use the circuit's ancillary systems to encode the classical outcomes (and defer the post-selection to the last operation only), and to purify all internal operations (so that all circuit operations, except for the last one, have a single Kraus operator). Without loss of generality, we can thus describe a probabilistic quantum circuit with quantum control of causal order as we did for a (deterministic) QC-QC in Sec.~\ref{subsec:QCQC_descr}, with circuit operations $\tilde V_1$, $\tilde V_{n+1}$ as in Eqs.~\eqref{eq:def_V1_QCQC}--\eqref{eq:tilde_Vn1_QCQC}, and simply replacing $\tilde V_{N+1}$ in Eq.~\eqref{eq:def_VN1_QCQC} by a set of operators
\begin{align}
\tilde V^{[r]}_{N+1} & \coloneqq \sum_{k_N} \tilde V_{\N\backslash k_N,k_N}^{\to F\,[r]} \otimes \bra{\N\backslash k_N,k_N}^{C_N'}, \label{eq:def_VN1r_probQCQC}
\end{align}
each corresponding to the classical outcome $r$ of the circuit.

The first $N$ operations $\tilde V_n$ are required to satisfy the TP conditions of Eqs.~\eqref{eq:TP_constr_QCQC_1}--\eqref{eq:TP_constr_QCQC_n}, as before. For the final internal circuit instrument, it is now the map $\varrho \mapsto \sum_r \tilde{V}^{[r]}_{N+1} \varrho \tilde{V}^{[r]}_{N+1}$ (rather than $\varrho \mapsto \tilde{V}_{N+1} \varrho \tilde{V}_{N+1}$) that must be TP. The corresponding TP condition is simply obtained (see Appendix~\ref{app:proof_charact_QCQCs}) by replacing $\Tr_{F\alpha_F} \ketbra{w_{(\N,F)}}{w_{(\N,F)}}$ by $\sum_r \Tr_{F\alpha_F} \ketbra{w_{(\N,F)}^{[r]}}{w_{(\N,F)}^{[r]}}$ in Eq.~\eqref{eq:TP_constr_QCQC_N}, with $\ket{w_{(\N,F)}^{[r]}}$ given below in Eq.~\eqref{eq:Wr_probQCQC}. 
That is, it reads
\begin{align}
& \sum_r \Tr_{F\alpha_F} \ketbra{w_{(\N,F)}^{[r]}}{w_{(\N,F)}^{[r]}} \notag \\[-2mm]
& \hspace{5mm} = \!\! \sum_{k_N \in \N} \!\!\! \Tr_{\alpha_N} \! \ketbra{w_{(\N\backslash k_N,k_N)}}{w_{(\N\backslash k_N,k_N)}} \otimes \id^{A_{k_N}^O}. \label{eq:TP_constr_probQCQC_N}
\end{align}

We formally call any process abiding by the above description, with internal circuit operations $\tilde V_1$, $\tilde V_{n+1}$ and $\tilde V_{N+1}^{[r]}$ given by Eqs.~\eqref{eq:def_V1_QCQC}, Eq.~\eqref{eq:tilde_Vn1_QCQC} and Eq.~\eqref{eq:def_VN1r_probQCQC}, respectively, which are composed as in Fig.~\ref{fig:QCQC} and which satisfy the TP conditions of Eqs.~\eqref{eq:TP_constr_QCQC_1},~\eqref{eq:TP_constr_QCQC_n}, and~\eqref{eq:TP_constr_probQCQC_N}, a \emph{probabilistic quantum circuit with quantum control of causal order} (pQC-QC).
Similarly to Proposition~\ref{prop:descr_W_QCQC}, we obtain the process matrix description of pQC-QCs as follows.

\begin{proposition}[Process matrix description of pQC-QCs] \label{prop:descr_Wr_probQCQC}
The probabilistic process matrix describing the particular realisation of such a pQC-QC, corresponding to the measurement outcome $r$, is given by
\begin{align}
\label{eq:Wr_probQCQC}
& W^{[r]} = \Tr_{\alpha_F}\ketbra{w_{(\N,F)}^{[r]}}{w_{(\N,F)}^{[r]}} \ \text{with} \notag \\[2mm]
& \ket{w_{(\N,F)}^{[r]}} \notag \\
& \quad \coloneqq  \!\! \sum_{(k_1,\ldots,k_N)} \!\! \dket{V_{\emptyset,\emptyset}^{\to k_1}} * \dket{V_{\emptyset,k_1}^{\to k_2}} * \dket{V_{\{k_1\},k_2}^{\to k_3}} * \cdots \notag \\[-3mm]
& \hspace{20mm} \ \cdots \!*\! \dket{V_{\!\{k_1\!,\ldots,k_{N-2}\}\!,k_{N-1}\!}^{\to k_N}} \!*\! \dket{V_{\!\{k_1\!,\ldots,k_{N-1}\}\!,k_N\!}^{\to F\,[r]}}\!.
\end{align}
The entire pQC-QC is described by the set $\{W^{[r]}\}_r$ of all such probabilistic process matrices, for all classical outcomes $r$. 
\end{proposition}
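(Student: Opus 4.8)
The plan is to mirror the derivation of the deterministic case (Proposition~\ref{prop:descr_W_QCQC}) essentially verbatim, since the only structural change in a pQC-QC is that the final internal operation $\tilde V_{N+1}$ is replaced by the outcome-labelled operators $\tilde V_{N+1}^{[r]}$ of Eq.~\eqref{eq:def_VN1r_probQCQC}. First I would fix an outcome $r$ and note that, exactly as in the deterministic case, the operations $\tilde V_1, \tilde A_1, \ldots, \tilde V_N, \tilde A_N, \tilde V_{N+1}^{[r]}$ are applied in a well-defined order, so the global (pure) operation $V^{[r]}: \HS^P \to \HS^{F\alpha_F}$ induced conditioned on outcome $r$, prior to discarding $\HS^{\alpha_F}$, has Choi vector obtained by link-multiplying the Choi vectors of all these operations.

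Then I would carry out exactly the three manipulations used to obtain Eq.~\eqref{eq:Choi_V_QCQC}: (i) contracting the control systems $C_n^{(\prime)}$, whose orthonormal computational-basis states collapse the sum to a sum over ordered sequences $(k_1,\ldots,k_N)$; (ii) identifying the generic spaces $\HS^{\tilde A_n^{I}}, \HS^{\tilde A_n^{O}}$ with the $\HS^{A_{k_n}^{I}}, \HS^{A_{k_n}^{O}}$ via the appropriate isomorphism and reordering the link product so that each $\dket{A_k}$ sits in its own slot; and (iii) recognising the remaining link product of internal operators as the vector $\ket{w_{(\N,F)}^{[r]}}$ defined in Eq.~\eqref{eq:Wr_probQCQC}. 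The only difference from the deterministic computation is that the last factor now carries the label $[r]$; everything else is untouched. This yields $\dket{V^{[r]}} = (\dket{A_1} \otimes \cdots \otimes \dket{A_N}) * \ket{w_{(\N,F)}^{[r]}}$.

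Next I would pass to Choi matrices and trace out the purifying ancilla. Using the property (cf.\ the footnote following Eq.~\eqref{eq:def_mixed_link_product}) that $(\ket{a}*\ket{b})(\ket{a}*\ket{b})^\dagger = \ketbra{a}{a}*\ketbra{b}{b}$, the Choi matrix of the induced CP map $\M^{[r]}$ is $\Tr_{\alpha_F}\dketbra{V^{[r]}}{V^{[r]}} = (\dketbra{A_1}{A_1}\otimes\cdots\otimes\dketbra{A_N}{A_N}) * \Tr_{\alpha_F}\ketbra{w_{(\N,F)}^{[r]}}{w_{(\N,F)}^{[r]}}$, where I use that $\HS^{\alpha_F}$ appears only in the $\ket{w_{(\N,F)}^{[r]}}$ factor, so the partial trace commutes with the link product over the $A_k$ systems. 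Writing $A_k = \dketbra{A_k}{A_k}$ for the Choi matrix of $\A_k$ puts this in the form $(A_1\otimes\cdots\otimes A_N)*W^{[r]}$ of Eq.~\eqref{eq:choi_map_M}, with $W^{[r]} = \Tr_{\alpha_F}\ketbra{w_{(\N,F)}^{[r]}}{w_{(\N,F)}^{[r]}}$; by the uniqueness of the operator realising Eq.~\eqref{eq:choi_map_M} (noted at the end of Sec.~\ref{sec:QC_supermaps}), this identifies $W^{[r]}$ as the probabilistic process matrix for outcome $r$. The general case of multi-Kraus external operations then follows by linearity, summing the contributions of each combination of Kraus operators for the $\A_k$, exactly as in the ``purifying'' discussion.

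I do not anticipate a serious obstacle: the heavy lifting is already done in Proposition~\ref{prop:descr_W_QCQC}, and the computation transports term-by-term once the label $[r]$ is attached to the final operator and the post-selection on $r$ is understood as suppressing the sum over outcomes. The one point requiring care is the bookkeeping of the ancilla $\HS^{\alpha_F}$: one must check that $\Tr_{\alpha_F}$ genuinely commutes past the link product contracting the external operations, which holds because $\HS^{\alpha_F}$ is disjoint from all $\HS^{A_k^{IO}}$ and from $\HS^{PF}$. Finally, the claim that $\{W^{[r]}\}_r$ describes the entire pQC-QC follows by observing that summing over $r$ restores the trace-preservation of the final instrument (via Eq.~\eqref{eq:TP_constr_probQCQC_N}), so that $\sum_r W^{[r]}$ reduces to the deterministic process matrix of Proposition~\ref{prop:descr_W_QCQC}.
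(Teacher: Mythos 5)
Your proposal is correct and follows essentially the same route as the paper, which obtains Proposition~\ref{prop:descr_Wr_probQCQC} by repeating the derivation of Eq.~\eqref{eq:Choi_V_QCQC} for Proposition~\ref{prop:descr_W_QCQC} with the final internal operation replaced by the outcome-labelled $\tilde V_{N+1}^{[r]}$ of Eq.~\eqref{eq:def_VN1r_probQCQC}: contracting the control systems, identifying the generic spaces, recognising $\ket{w_{(\N,F)}^{[r]}}$, tracing out $\HS^{\alpha_F}$, and invoking the uniqueness of the matrix realising Eq.~\eqref{eq:choi_map_M}. Your additional care about $\Tr_{\alpha_F}$ commuting with the link product and the recovery of the deterministic case via Eq.~\eqref{eq:TP_constr_probQCQC_N} matches the paper's treatment.
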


The following proposition then characterises probabilistic quantum circuits with quantum control of causal orders: 

\begin{proposition}[Characterisation of pQC-QCs] \label{prop:charact_Wr_probQCQC}
A probabilistic quantum circuit with quantum control of causal order is represented by a set of positive semidefinite matrices $\{W^{[r]} \in \L(\HS^{PA_\N^{IO} F})\}_r$, whose sum $W \coloneqq \sum_r W^{[r]}$ is the process matrix of a quantum circuit with quantum control of causal order (as characterised in Proposition~\ref{prop:charact_W_QCQC}).

Conversely, any set of positive semidefinite matrices $\{W^{[r]} \in \L(\HS^{PA_\N^{IO} F})\}_r$ whose sum is the process matrix of a QC-QC represents a probabilistic quantum circuit with quantum control of causal order.
\end{proposition}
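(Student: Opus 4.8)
The plan is to mirror the two-step strategy already used for pQC-FOs (Proposition~\ref{prop:charact_Wr_probQCFO}), exploiting the fact that a pQC-QC differs from a deterministic QC-QC only in its final internal operation, which is split into the instrument $\{\tilde V_{N+1}^{[r]}\}_r$ of Eq.~\eqref{eq:def_VN1r_probQCQC} while the first $N$ operators remain deterministic isometries. With this observation the necessary direction becomes essentially immediate, whereas the converse rests on the Schr\"odinger--GHJW mixture theorem applied to the purifying ancilla $\HS^{\alpha_F}$; this last point is where the real work lies.

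For the necessary direction I would take any pQC-QC and note that each $W^{[r]} = \Tr_{\alpha_{F}} \ketbra{w_{(\N,F)}^{[r]}}{w_{(\N,F)}^{[r]}}$ from Eq.~\eqref{eq:Wr_probQCQC} is manifestly positive semidefinite, being a partial trace of a rank-one positive operator. It then remains to show that $W \coloneqq \sum_r W^{[r]}$ admits a decomposition satisfying Eq.~\eqref{eq:charact_W_QCQC_decomp}. Since the first $N$ internal operations are unchanged from the deterministic case, I would take $W_{(\K_{n-1},k_n)} \coloneqq \Tr_{\alpha_n} \ketbra{w_{(\K_{n-1},k_n)}}{w_{(\K_{n-1},k_n)}}$, with $\ket{w_{(\K_{n-1},k_n)}}$ as in Eq.~\eqref{eq:def_w_Knm1_kn}. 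The first two families of constraints in Eq.~\eqref{eq:charact_W_QCQC_decomp} involve only these operations and hold by the TP conditions Eqs.~\eqref{eq:TP_constr_QCQC_1}--\eqref{eq:TP_constr_QCQC_n}, while the final constraint $\Tr_F W = \sum_{k_N} W_{(\N\backslash k_N,k_N)} \otimes \id^{A_{k_N}^O}$ follows by writing $\Tr_F W = \sum_r \Tr_{F\alpha_{F}}\ketbra{w_{(\N,F)}^{[r]}}{w_{(\N,F)}^{[r]}}$ and invoking the probabilistic TP condition Eq.~\eqref{eq:TP_constr_probQCQC_N}. (Equivalently, one may absorb the label $r$ into an extra ancilla appended to $\HS^{\alpha_{F}}$, turning $\{\tilde V_{N+1}^{[r]}\}_r$ into a single deterministic isometry and thus exhibiting $W$ directly as the process matrix of a genuine QC-QC, to which Proposition~\ref{prop:charact_W_QCQC} applies.)

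For the converse I would start from a set $\{W^{[r]}\}_r$ of positive semidefinite matrices whose sum $W$ is a QC-QC process matrix. By the sufficient direction of Proposition~\ref{prop:charact_W_QCQC} I can explicitly construct the first $N$ internal operators $V_{\emptyset,\emptyset}^{\to k_1}$, $V_{\K_{n-1},k_n}^{\to k_{n+1}}$ and a deterministic final operator $V_{\N\backslash k_N,k_N}^{\to F}$ of a QC-QC realising $W$, yielding a process vector $\ket{w_{(\N,F)}}$ with $W = \Tr_{\alpha_{F}}\ketbra{w_{(\N,F)}}{w_{(\N,F)}}$; this exhibits $\ket{w_{(\N,F)}}$ as a purification of $W$ with purifying space $\HS^{\alpha_{F}}$. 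Since $0 \preceq W^{[r]} \preceq W$, each $W^{[r]}$ has support contained in that of $W$, so by the Schr\"odinger--GHJW mixture theorem there is a POVM $\{M_r\}$ on $\HS^{\alpha_{F}}$, with $\sum_r M_r = \id^{\alpha_{F}}$ on the support of the reduced state, such that $W^{[r]} = \Tr_{\alpha_{F}}\big[(\id \otimes M_r)\ketbra{w_{(\N,F)}}{w_{(\N,F)}}\big]$. I would then define $V_{\N\backslash k_N,k_N}^{\to F\,[r]} \coloneqq (\id^F \otimes \sqrt{M_r}^{\,\alpha_{F}})\, V_{\N\backslash k_N,k_N}^{\to F}$, so that the resulting $\ket{w_{(\N,F)}^{[r]}} = (\id \otimes \sqrt{M_r}^{\,\alpha_{F}})\ket{w_{(\N,F)}}$ has precisely the form of Eq.~\eqref{eq:Wr_probQCQC}. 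Cyclicity of the partial trace over $\alpha_{F}$ then gives $\Tr_{\alpha_{F}}\ketbra{w_{(\N,F)}^{[r]}}{w_{(\N,F)}^{[r]}} = W^{[r]}$, while $\sum_r M_r = \id^{\alpha_{F}}$ ensures the probabilistic TP condition Eq.~\eqref{eq:TP_constr_probQCQC_N} collapses to the deterministic Eq.~\eqref{eq:TP_constr_QCQC_N} already guaranteed by the construction.

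The main obstacle is this converse direction, and specifically the clean application of the Schr\"odinger--GHJW theorem to an \emph{arbitrary} positive-semidefinite decomposition $W = \sum_r W^{[r]}$—not merely one into pure states nor one with a bounded number of terms. I would handle it by first refining each $W^{[r]}$ into a pure-state ensemble supported in $\range(W)$, realising the resulting fine-grained rank-one decomposition by a rank-one POVM on $\HS^{\alpha_{F}}$ (licit precisely because each pure component lies in $\range(W)$, matching the support of the reduced state of the purification), and then coarse-graining the POVM elements according to $r$. The only bookkeeping to verify is that the purifying dimension suffices and that $\{M_r\}$ can be completed to resolve $\id^{\alpha_{F}}$; since $\HS^{\alpha_{F}}$ may be enlarged freely and the $M_r$ need only sum to the identity on the support of the reduced state, neither causes any difficulty.
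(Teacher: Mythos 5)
Your proof is correct, but your converse takes a genuinely different route from the paper's. For the necessary direction you directly verify the constraints of Proposition~\ref{prop:charact_W_QCQC} using the probabilistic TP condition~\eqref{eq:TP_constr_probQCQC_N}, and your parenthetical remark---absorbing the outcome $r$ into an extra ancilla so that $\sum_r V_{\N\backslash k_N,k_N}^{\to F\,[r]}\otimes\ket{r}^{\alpha_F'}$ becomes a single deterministic operation---is exactly the paper's argument, so the two agree there. For the converse, the paper never invokes the Schr\"odinger--GHJW theorem: it appends a classical flag space $\HS^{F'}$ to the global future, forms $W'\coloneqq\sum_r W^{[r]}\otimes\ketbra{r}{r}^{F'}$, notes that $W'$ satisfies the QC-QC constraints with the \emph{same} matrices $W_{(\K_{n-1},k_n)}$ as $W$ (since $\Tr_{FF'}W'=\Tr_F W$), runs the deterministic construction of Proposition~\ref{prop:charact_W_QCQC} on $W'$, and extracts the instrument elements by projecting the flag, $\dket{V_{\{k_1,\ldots,k_{N-1}\},k_N}^{\to F\,[r]}}\coloneqq\dket{V_{\{k_1,\ldots,k_{N-1}\},k_N}^{\to FF'}}*\ket{r}^{F'}$. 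You instead run the construction on $W$ itself and steer the decomposition $\{W^{[r]}\}_r$ by a POVM $\{M_r\}$ on the purifying ancilla $\HS^{\alpha_F}$, folding $\sqrt{M_r}$ into the final isometry; your handling of the support condition ($\range(W^{[r]})\subseteq\range(W)$ because $W^{[r]}\preceq W$), of the refinement-to-pure-states and coarse-graining needed to apply GHJW to a decomposition into mixed operators, and of the completion of $\sum_r M_r$ to $\id^{\alpha_F}$ are all sound, and the TP bookkeeping goes through as you state. The trade-off: the paper's flag trick is self-contained and is the identical one-line device used for pQC-FOs, pQC-PARs and pQC-CCs, making all the probabilistic proofs uniform; your GHJW route needs an external theorem and more bookkeeping, but it is conceptually illuminating---it exhibits the classical outcome $r$ literally as a measurement on the final ancilla of one fixed deterministic realisation of $W$, and it makes transparent why the analogous weak ``sum'' condition fails for pQC-CCs (Proposition~\ref{prop:charact_Wr_probQCCC}): a POVM on a purifying ancilla generically recombines the coherent branches, which a classically controlled circuit cannot accommodate.
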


The proof extends directly from that of Proposition~\ref{prop:charact_W_QCQC}; see Appendix~\ref{app:proof_charact_QCQCs}.
We then have that a given matrix $W^{[r]} \in \L(\HS^{PA_\N^{IO} F})$ is the probabilistic process matrix describing a particular realisation of a pQC-QC if and only it is an element of a pQC-QC $\{W^{[r]}\}_r$, characterised as in Proposition~\ref{prop:charact_Wr_probQCQC} above.

\medskip

Similarly to the example that we discussed above for the QC-CC case, the matrices $W_{(\N \backslash k_N,k_N)}$ in Proposition~\ref{prop:charact_W_QCQC} are probabilistic process matrices that constitute a pQC-QC. 
They are realised when the last internal operation $\tilde V_{N+1}$ is replaced by an operation that measures the control system $C_N'$ (while also transforming the output state of the last operation and the ancilla). 
Formally, one takes $r \in \N$ and $V_{\N\backslash k_N,k_N}^{\to F\,[r]} =  \delta_{r,k_N} V_{\N\backslash k_N,k_N}^{\to F}$ in Eq.~\eqref{eq:def_VN1r_probQCQC}. 
The pQC-QC thus obtained is $\{W^{[r=k_N]}\}_{k_N}$ with $W^{[r=k_N]} = W_{(\N \backslash k_N,k_N)}$.

As another example, let us once again consider the quantum switch, with its process matrix description Eq.~\eqref{eq:W_QS}, and where the control qubit is measured at the end of the circuit in the basis $\{\ket{+}^{F_\textup{c}},\ket{-}^{F_\textup{c}}\}$, with $\ket{\pm}^{F_\textup{c}} \coloneqq \frac{\ket{1}^{F_\textup{c}}\pm\ket{2}^{F_\textup{c}}}{\sqrt{2}}$. 
The internal operations that constitute the corresponding probabilistic quantum circuit are (in their Choi representation) $\dket{V_{\emptyset,\emptyset}^{\to k_1}} = \ket{k_1}^{P_\text{c}} \otimes \dket{\id}^{P_\text{t}A_{k_1}^I}$ and $\dket{V_{\emptyset,k_1}^{\to k_2}} = \dket{\id}^{A_{k_1}^OA_{k_2}^I}$ as in Eq.~\eqref{eq:Vs_QSwitch}, and now $\dket{V_{\{k_1\},k_2}^{\to F\,[\pm]}} = \frac{(\pm 1)^{k_2}}{\sqrt{2}}\dket{\id}^{A_{k_2}^OF_\text{t}}$.

The corresponding probabilistic process matrix description is therefore, according to Proposition~\ref{prop:descr_Wr_probQCQC}, $\{W_\textup{QS}^{[+]},W_\textup{QS}^{[-]}\}$ with $W_\textup{QS}^{[\pm]} = \ketbra{w_\textup{QS}^{[\pm]}}{w_\textup{QS}^{[\pm]}}$, and 
\begin{align}
    \ket{w_\textup{QS}^{[\pm]}} = & \frac{1}{\sqrt{2}}\Big(\ket{1}^{P_\text{c}} \dket{\id}^{P_\text{t}A_1^I} \dket{\id}^{A_1^OA_2^I} \dket{\id}^{A_2^OF_\text{t}} \notag \\[-1mm]
    & \qquad \pm \ket{2}^{P_\text{c}} \dket{\id}^{P_\text{t}A_2^I} \dket{\id}^{A_2^OA_1^I} \dket{\id}^{A_1^OF_\text{t}}\Big) .
\end{align}
We then have that $W_\textup{QS}^{[+]}+W_\textup{QS}^{[-]} = \Tr_{F_\text{c}} W_\textup{QS}$ is indeed the process matrix of a QC-QC. 
In fact, it is even the process matrix of a QC-CC,%
\footnote{One may more generally define a subclass of pQC-QCs, strictly larger than that of pQC-CCs, containing the sets of positive semidefinite matrices $\{W^{[r]}\}_r$ that sum up to the process matrix of a QC-CC~\cite{bavaresco20}. The physical interpretation of this subclass remains in general to be clarified.}
since $\Tr_{F_\text{c}} W_\textup{QS} = \Tr_{F_\text{c}} W_\textup{CS}$, with $W_\textup{CS}$ the process matrix of the classical switch (Eq.~\eqref{eq:W_CS}). 
Nevertheless, $\{W_\textup{QS}^{[+]},W_\textup{QS}^{[-]}\}$ is not a probabilistic QC-CC. In order to realise it, the two causal orders need to be coherently superposed in the switch before the control qubit is measured.

To see this, note that the matrices $W_\textup{QS}^{[\pm]}$ do not satisfy the additional constraints in Proposition~\ref{prop:charact_Wr_probQCCC}. 
That is, $W_\textup{QS}^{[\pm]}$ cannot be decomposed as $W_\textup{QS}^{[\pm]} = W^{[\pm]}_{(1,2,F)}+W^{[\pm]}_{(2,1,F)}$, such that, with $W_{(1,2,F)} \coloneqq W^{[+]}_{(1,2,F)} + W^{[-]}_{(1,2,F)}$ and $W_{(2,1,F)} \coloneqq W^{[+]}_{(2,1,F)} + W^{[-]}_{(2,1,F)}$, we obtain a decomposition of $W_\textup{QS}^{[+]}+W_\textup{QS}^{[-]} = W_{(1,2,F)}+W_{(2,1,F)}$ as in Proposition~\ref{prop:charact_W_QCCC}.
This follows from the fact that $W^{[\pm]}$ are rank-one projectors, and can therefore not be further decomposed into a (nontrivial) sum of positive semidefinite matrices, and neither $W_\textup{QS}^{[+]}$ nor $W_\textup{QS}^{[-]}$ satisfies individually the constraints on either $W_{(1,2,F)}$ or $W_{(2,1,F)}$ in Eq.~\eqref{eq:charact_W_QCCC_decomp_constr}.

Note that the example we described here is precisely the probabilistic quantum circuit that one uses in the canonical application of the quantum switch, a task where one has two unitaries that either commute or anticommute, and the aim of which is to determine which of the two properties holds true~\cite{chiribella12}. 
The pQC-QC described here allows one to discriminate between the two cases with certainty, while this is not possible with a pQC-CC~\cite{chiribella12,araujo15}.

We thus recover straightforwardly this known advantage of the quantum switch over causally separable processes.
However, the characterisation of the full class of QC-QCs and of their probabilistic versions now also allows us to go beyond that simple, canonical example, and to search for new applications of physically realisable, causally nonseparable processes in a more systematic way.
In Section~\ref{sec:applications}, we will illustrate this through a specific example.

\subsection{General quantum superinstruments}

As already mentioned in Sec.~\ref{subsec:PM_supermaps}, one can also consider probabilistic supermaps in the most general situation, where it is not specified \emph{a priori} how the external operations are to be connected. 
A particular such probabilistic supermap takes the $N$ external operations $\A_k$ to a CP map $\M^{[r]}: \L(\HS^P) \to \L(\HS^F)$, associated to a classical output $r$, such that summing over all $r$ yields a deterministic supermap (i.e., such that the induced map $\sum_r \M^{[r]} (\cdot)$ is TP whenever all external operations are TP). 
We call the set of such probabilistic supermaps for all classical outputs $r$ a \emph{(general) quantum superinstrument} (pGEN). Such general quantum superinstruments have previously been characterised in Refs.~\cite{quintino19,quintino19a}. 

It follows from Eq.~\eqref{eq:choi_map_M} that the process matrix description of a particular realisation of a pGEN is given by a positive semidefinite matrix $W^{[r]}$, with the sum over all $W^{[r]}$ being a valid (deterministic) process matrix.
We therefore have the following characterisation: 

\begin{proposition}[Characterisation of pGENs] \label{prop:charact_Wr_probIND}
A general quantum superinstrument is represented by a set of positive semidefinite matrices $\{W^{[r]} \in \L(\HS^{PA_\N^{IO} F})\}_r$, whose sum $W \coloneqq \sum_r W^{[r]}$ is a valid process matrix. 

Conversely, any set of positive semidefinite matrices $\{W^{[r]} \in \L(\HS^{PA_\N^{IO} F})\}_r$ whose sum is a valid process matrix represents a general quantum superinstrument.
\end{proposition}

\section{Applications}
\label{sec:applications}

One of the motivations for the investigation of quantum causal structures is the prospect that indefinite causal orders could enable new quantum information processing tasks and protocols, and that causal nonseparability could be used as an information processing resource~\cite{chiribella13}. 
Indeed, some advantages in this respect have recently been identified, for instance in regard to quantum query complexity~\cite{chiribella12,colnaghi12,araujo14,facchini15,taddei20}, quantum communication complexity~\cite{feix15,guerin16} and other information processing tasks~\cite{ebler18,salek18,chiribella18,mukhopadhyay18,
mukhopadhyay20,procopio19,frey19,loizeau20,caleffi20,gupta19,procopio20,
zhao20,felce20,guha20,sazim20,wilson20a}. 
These studies have focused particularly on the quantum switch and its straightforward $N$-operation generalisation, since these were so far the only known examples of causally nonseparable processes with a physical interpretation. 

The process matrix descriptions of the different classes of circuits we introduce here, as well as their probabilistic versions, allow us to more systematically search for advantages in quantum information processing arising from causal nonseparability. 
By finding tasks for which QC-QCs provide an advantage over circuits with definite causal order, we can thereby identify new applications of causally nonseparable processes that are more general than the quantum switch and for which a physical implementation scheme exists. 

One natural type of information processing tasks in the context of higher-order maps are ``higher-order quantum computation'' problems, such as the cloning~\cite{chiribella08}, the storage and retrieval~\cite{chiribella08}, or the replication of the inverse or transpose~\cite{chiribella16,quintino19,quintino19a} of some undisclosed, black-box operation of which one or multiple copies are available. 
Another natural type of tasks are generalised channel discrimination problems, in which one is given some black-box operations which, collectively, belong to one of a finite number of classes (or, equivalently, are promised to obey one of several properties). 
Examples are the discrimination of phase relations between unitary operations~\cite{chiribella12,araujo14,taddei20} or the discrimination between different cause and effect structures of unitaries~\cite{chiribella19b}. 
To quantify how a given class of circuits performs for some task, one needs to optimise over the corresponding higher order transformations in order to maximise some figure of merit, such as the channel fidelity between the desired ``target'' channel and the output of the supermap, or the success probability of the task. 

The characterisations provided in this article allow us to optimise the performance of different classes of circuits in both these types of problems by exploiting semidefinite programming (SDP) techniques.
In this section, we present a concrete example of one such problem and show, in particular, a gap between the performance of probabilistic QC-CCs and QC-QCs.
The task is a natural generalisation of the discrimination task studied in Ref.~\cite{shimbo18} which we call the \emph{$K$-unitary equivalence determination problem}.
One is given $K$ \emph{reference boxes} which implement black-box unitary operations $U_1,\dots,U_K$, and a further \emph{target box} that implements one of the $U_k$ ($1\le k \le K$) with probability $1/K$.
The aim is to determine which of the reference boxes is implemented by the target box, while using each of the $K+1$ boxes exactly once.
For simplicity (as in Ref.~\cite{shimbo18}) we shall consider the case where the boxes all implement qubit unitaries, with the reference boxes chosen randomly according to the Haar measure on $\mathrm{SU}(2)$.%
\footnote{Note that the problem considered in Ref.~\cite{shimbo18} corresponds to the case of $K=2$, although they also considered the possibility of receiving $n_k\ge 1$ copies of each reference box $U_k$. 
While we could also consider such a generalisation, we focus on the case of $n_k=1$ for simplicity, and because the SDP methods we employ would quickly become intractable when more copies are considered.}

Let us denote the input and output spaces of the reference boxes as $\HS^{A_k^I}$ and $\HS^{A^O_k}$ (with $k = 1,\dots,K$), those of the target box as $\HS^{A_{K+1}^I}$ and $\HS^{A_{K+1}^O}$, and the probabilistic quantum circuit that we have at our disposal by $\{W^{[r]}\}_{r=1,\dots,K}$, with $W^{[r]} \in  \L(\HS^{A^{IO}_{\{1,\ldots,K+1\}}})$, where $P$ and $F$ are trivial, and the outcome $r$ of the probabilistic circuit corresponds to the guess of which reference box is implemented by the target box. 
The success probability (for a specific choice of $U_1,\dots,U_K$) is then
\begin{align}
\label{eq:success_probability_fixedU}
    p_{(U_1,\dots,U_K)} =& \frac{1}{K} \sum_{r = 1}^K S(U_1,\dots,U_K,U_r) * W^{[r]} ,
\end{align}
where $S(U_1,\dots,U_K,U_r) \coloneqq \dketbra{U_1}{U_1} \otimes \cdots \otimes \dketbra{U_K}{U_K} \otimes \dketbra{U_r}{U_r}$ (which lives in the same space $\L(\HS^{A^{IO}_{\{1,\ldots,K+1\}}})$ as $W^{[r]}$, so that the link product above returns a scalar value, as required).

For the case we are considering of Haar random $U_k$'s, the probability of success is obtained by averaging Eq.~\eqref{eq:success_probability_fixedU} over the (normalised) Haar measure $\mu$, giving
\begin{align}
\label{eq:success_probability}
    p_\text{succ} =& \int\!\cdots\!\int \mathrm{d}\mu(U_1)\cdots \mathrm{d}\mu(U_K)\, p_{(U_1,\dots,U_K)} \notag\\
	=& \frac{1}{K} \sum_{r = 1}^K \tilde{S}_r * W^{[r]},
\end{align}
where $\tilde{S}_r \coloneqq \int\!\cdots\!\int \mathrm{d}\mu(U_1)\cdots \mathrm{d}\mu(U_K)\, S(U_1,\dots,U_K,U_r)$.
For qubits, the $\tilde{S}_r$'s can be calculated analytically by using the fact that, for $U\in\mathrm{SU}(2)$, one has $\int  \mathrm{d}\mu(U) \dketbra{U}{U}  = \frac{1}{2} \id$ and $\int \mathrm{d}\mu(U) \dketbra{U}{U} \otimes \dketbra{U}{U} = \frac{1}{4} (\id + \frac{1}{3} \sum_{i,j} \sigma_i \otimes \sigma_j \otimes \sigma_i \otimes \sigma_j)$ (where the $\sigma_i$'s are the 3 Pauli matrices $\sigma_x,\sigma_y,\sigma_z$).

For a given class $\text{pX}_K$ of $K$-outcome probabilistic quantum circuits, with $\text{pX}\in\{\text{pQC-PAR},\text{pQC-FO},\text{pQC-CC},\text{pQC-QC},\text{pGEN}\}$, the problem is thus to find
\begin{align}\label{eq:psucc_SDP}
	p^\text{X}_\text{succ} = \max & \quad p_\text{succ}\notag\\
	\text{s.t.} & \quad \{W^{[r]}\}_{r=1,\ldots,K} \in \text{pX}_K.
\end{align}
For each of the classes $\text{pX}_K$ specified above, characterised by one of the Propositions~\ref{prop:charact_Wr_probQCFO}, \ref{prop:charact_Wr_probQCPAR}, \ref{prop:charact_Wr_probQCCC}, \ref{prop:charact_Wr_probQCQC} or~\ref{prop:charact_Wr_probIND}, this optimisation task is an SDP problem and is thus tractable for small enough $K$.

For $K=2$, Ref.~\cite{shimbo18} found $p_\text{succ}^\text{QC-PAR}=p_\text{succ}^\text{QC-FO} = 0.875$. 
Using the SDP solver SCS~\cite{donoghue2016,scs_code}, we found that no improvement over this was possible even with general quantum superinstruments (and thus also for the classes of probabilistic QC-CCs and QC-QCs since $p_\text{succ}^\text{QC-PAR} \le p_\text{succ}^\text{QC-FO} \le p_\text{succ}^\text{QC-CC} \le p_\text{succ}^\text{QC-QC} \le p_\text{succ}^\text{GEN}$).%
\footnote{We briefly note, however, that if one chooses a different distribution from the Haar measure for the reference boxes, one can observe advantages even for $K=2$.
For example, if each $U_i$ is drawn uniformly at random from the finite set $\mathcal{U}=\{\sigma_y,R_y,T\}$ (where $R_y = \frac{1}{\sqrt{2}} \begin{psmallmatrix}1 & -1\\ 1 & 1\end{psmallmatrix}$ is a Bloch sphere rotation of $\pi/2$ around the $y$ axis and $T$ is the phase shift gate $\begin{psmallmatrix}1 & 0\\ 0 & e^{i \pi/4}\end{psmallmatrix}$) one obtains a strict separation between all the classes considered.
}
For $K=3$, however, we found a (admittedly small, but still) strict separation between all the classes of probabilistic quantum circuits except pQC-FOs and pQC-CCs (indicating that dynamical definite causal order provides no advantage in the $3$-unitary equivalence determination problem).
The results are summarised in Table~\ref{tab:res_1}.
Finally, we note that, for $K>3$, the SDP problem \eqref{eq:psucc_SDP} became too large for us to solve.%
\footnote{Note that each $W^{[r]}$ is a $2^{2(K+1)} \times 2^{2(K+1)}$ matrix. Even for $K=3$, using a start-of-the-art, memory efficient SDP solver~\cite{donoghue2016,scs_code} the SDP required up to 25GB of RAM (depending on the class of circuits) and several hours to solve.}

\begin{table}[h]
    \centering
	\setlength{\tabcolsep}{5pt}
\begin{tabular}{ c|ccccc  }
$K$ & pQC-PAR & pQC-FO & pQC-CC & pQC-QC & pGEN\\
 \hline
 2  & $0.875^*$ & $0.875^*$ &  0.875 & 0.875 & 0.875\\
 3 & 0.6919 & 0.6998 & 0.6998 & 0.7080 & 0.7093\\
\end{tabular}
    \caption{Maximal success probability $p_\text{succ}$ for the $K$-unitary equivalence determination problem for $K=2,3$ for probabilistic quantum circuits from the classes: pQC-PAR (probabilistic quantum circuits with operations used in parallel), pQC-FO (probabilistic quantum circuits compatible with one of the $(K{+}1)!$ fixed causal orders), pQC-CC (probabilistic quantum circuits with classical control of causal order), pQC-QC (probabilistic quantum circuits with quantum control of causal order), pGEN (general, possibly causally indefinite, quantum superinstruments). Starred figures were already given in Ref.~\cite{shimbo18}.}
    \label{tab:res_1}
\end{table}

This shows that causal indefiniteness is indeed a resource for the $3$-unitary equivalence determination problem, and moreover that an advantage can be obtained using probabilistic QC-QCs, i.e., in a way that is physically realisable, at least in principle.
This is in contrast with other problems such as the exact probabilistic reversal of an unknown unitary operation. For the particular instances of that problem studied in Ref.~\cite{quintino19}, we found no advantage using QC-QCs, although (as shown in Ref.~\cite{quintino19}) general quantum superinstruments can provide an advantage over QC-FO ones. 
This means that the results presented in the present paper do not provide a physical interpretation of the advantage identified in Ref.~\cite{quintino19}.
We expect further study to unveil new quantum information tasks for which (probabilistic) QC-QCs provide advantages over all circuits with a definite, possibly dynamical, causal structure.

\section{Discussion}

The central question of our paper was which completely CP-preserving (CCP) quantum supermaps beyond those that correspond to standard, fixed-order quantum circuits have a physical interpretation.
A major motivation for this study was that general CCP quantum supermaps can exhibit indefinite causal order, a phenomenon which has recently attracted substantial interest and whose physical realisability is a crucial open question. 
Similarly to previous investigations that focused on the fixed-order case~\cite{chiribella08,chiribella09}, we adopted a constructive, bottom-up approach in order to find concrete realisations of more general types of quantum supermaps in terms of generalised quantum circuits. 
This first led us to introduce \emph{quantum circuits with classical control of causal order} (QC-CCs), in which the order of operations is established dynamically in a classically-controlled manner.
A crucial point in our construction was to keep track of which ``external'' input operations had already been applied, in order to ensure that each external operation is applied once and only once throughout the circuit.  
We then moved on to \emph{quantum circuits with quantum control of causal order} (QC-QCs) by including explicit control systems that encode the relevant information and by introducing coherences between the target and ancillary systems and the control.
Importantly, in the QC-QC case, we let the control system record the unordered set of previously applied operations rather than their full order, allowing different orders to ``interfere'' while still ensuring that each external operation appears once and only once in each coherent ``branch'' of the circuit.

Although we have thus-far overlooked this point, in the case of coherent control, it is no longer obvious that the latter can be understood as each external operation being applied once and only once in the overall circuit.
For the quantum switch in particular, this has led to some controversy, and it has been argued by some authors that its standard quantum-mechanical realisations should be considered \emph{simulations} rather than genuine \emph{realisations} of the corresponding supermap with indefinite causal order, given that each external operation is associated with two spacetime events~\cite{maclean17,paunkovic20}. 
In Ref.~\cite{oreshkov19}, it has been shown that the external operations in the quantum switch are indeed applied once and only once on some well-defined input and output systems. 
These systems are \emph{time-delocalised subsystems}, that is, they are nontrivial subsystems of composite systems whose constituents are associated with different times.
This argument applies also to general QC-QCs, where one can similarly identify time-delocalised input and output subsystems for all external operations, and which can therefore be seen as genuine realisations of quantum supermaps with indefinite causal order in that same sense.

All the types of generalised circuits we described correspond to distinct classes of quantum supermaps, which we fully characterised in the process matrix framework (Propositions~\ref{prop:charact_W_QCFO}, \ref{prop:charact_W_QCPAR}, \ref{prop:charact_W_QCCC} and~\ref{prop:charact_W_QCQC}). 
These characterisations in terms of convex semidefinite constraints notably allow one to verify whether a given process matrix is in a given class or not. 
Using similar techniques as for witnesses of causal nonseparability~\cite{araujo15,branciard16a,wechs19} one can, for instance, show that the classical switch does not have a fixed order, that the quantum switch cannot be described by a classical control (in that case the problem reduces to a witness of causal nonseparability), or that the process matrix $W_\text{OCB}$ originally introduced by Oreshkov \emph{et al.}~\cite{oreshkov12} or the tripartite ``classical'' example of Baumeler \emph{et al.}~\cite{baumeler14a} are not realisable as QC-QCs.\footnote{Note that this also follows from the fact that these can violate causal inequalities, while we showed that QC-QCs cannot. 
In the case of $W_\text{OCB}$, the problem again reduces to a witness of causal nonseparability, since for $N = 2$ and trivial global past and future systems, QC-QCs reduce to simple classical mixtures of the two possible orders (cf.\ Appendix~\ref{app:charact_dP_dF_1}).}  

Let us elaborate further on how the classes of quantum supermaps we identified here relate to other classes that have been studied before. 
As noted in Sec.~\ref{sec:QCCCs}, the process matrices describing QC-CCs are causally separable. 
Whether the converse holds---i.e., whether any causally separable process matrix satisfies the constraints of Proposition~\ref{prop:charact_W_QCCC} and can therefore be realised as a QC-CC---is an open problem in the general $N$-operation case~\cite{wechs19}.
A similar open question is whether the process matrices in the QC-QC class are the only process matrices that cannot violate causal inequalities, i.e., whether any \emph{extensibly causal} process matrix~\cite{oreshkov16,feix16} can be realised as a QC-QC.
Another important class is that of \emph{unitary} or \emph{pure} supermaps, which map unitary input operations to a unitary output operation. 
This class was introduced in Ref.~\cite{araujo17}, where it was argued that physically realisable supermaps should be \emph{unitarily extensible}, that is, recoverable from a unitary supermap by preparing a fixed state in some subsystem of the global past, and tracing out some subsystem of the global future.
One can check that by introducing suitable additional Hilbert spaces and suitably extending the internal circuit operations, one can find such a unitary extension for any QC-QC process matrix (and therefore also for any QC-CC, QC-FO and QC-PAR process matrix). 
For the case of two input operations, the converse also holds, that is, any unitarily extensible supermap with two input operations can be realised as a QC-QC. 
This follows from~\cite{barrett20,yokojima20}, where it was shown that all unitary supermaps with two input operations are ``variations of the quantum switch'', which can straightforwardly be verified to satisfy the characterisation of two-operation QC-QCs. 
In the general case, however, the set of unitarily extensible process matrices is strictly larger than the QC-QC class, since there exist unitarily extensible process matrices with three input operations that violate causal inequalities~\cite{araujo17}. 
This finding in fact motivated the authors of Ref.~\cite{araujo17} to suggest a bottom-up approach of the kind taken in our paper.

The fact that there remains a gap between the class of QC-QCs obtained from our bottom-up approach and the class of general quantum supermaps, which was obtained from a top-down approach by just imposing some consistency constraints, stands in contrast to the fixed-order case, where the form of QC-FOs obtained constructively and with an axiomatic approach matched~\cite{chiribella09}.
Another central question for future research is therefore whether and how quantum supermaps outside the QC-QC class can be given a physical interpretation.
In an upcoming work~\cite{wechs20}, it is shown that certain supermaps that go beyond the QC-QC class have realisations on time-delocalised subsystems as introduced in Ref.~\cite{oreshkov19}.
Note also that while we relaxed the assumption of a well-defined causal order for the external operations, there remains some well-defined causal order ``inside the circuit'', for the internal circuit operations.
One may wonder whether there could be a way to also relax this definite causal order of the internal operations, and whether it could allow to realise more general CCP supermaps. 

More generally, another direction is also to study new types of circuits beyond quantum supermaps in which the requirement that each operation should be applied once and only once is relaxed~\cite{abbott20,chiribella19,kristjansson19,vanrietvelde20}, or where the trace-preserving constraints are not required to hold for \emph{all} possible external operations, but only for some limited subsets that are allowed to be plugged in. We note in this regard that our negative result on the impossible violation of causal inequalities would still hold in the latter case (with a similar proof). 

Our approach allowed us to find qualitatively new examples of physically realisable processes with indefinite causal structure, one of which we discussed in detail in Sec.~\ref{subsubsec:new_QCQC}.
On that basis, an interesting future research direction is to devise laboratory experiments that implement such processes in practice. 
A suitable experimental platform could be photonic setups, similarly to those used in laboratory implementations of the quantum switch, with spatially separate ``boxes'' realising the operations $\A_k$, and with the control system including the path, as outlined in Sec.~\ref{subsec:implementations}. 
Other types of implementations could also be conceivable, for instance based on superconducting qubits~\cite{friis14} or trapped ions~\cite{friis15}. 

Indefinite causal order has also been speculated to arise at the interface of quantum theory and gravity, and a gravitational realisation of the quantum switch which involves a massive object in a quantum superposition of locations has been proposed as a thought experiment~\cite{zych19}. 
A natural question is whether other QC-QCs could have realisations in similar gravitational settings. 

Finally, in Sec.~\ref{sec:probQCs} we extended our characterisations to probabilistic quantum circuits (Propositions~\ref{prop:charact_Wr_probQCFO}, \ref{prop:charact_Wr_probQCPAR}, \ref{prop:charact_Wr_probQCCC},   \ref{prop:charact_Wr_probQCQC} and \ref{prop:charact_Wr_probIND}). These results open the door to a more systematic search for applications of quantum circuits beyond causally ordered ones. 
We illustrated this by an example in Sec.~\ref{sec:applications}, a discrimination problem where probabilistic QC-QCs yield a higher success probability than probabilistic QC-CCs. 
Identifying further such tasks for which QC-QCs perform better than circuits with well-defined causal order will shed more light on the usefulness of indefinite causal order for quantum information processing. 

\medskip

\emph{Note added in proof.}---%
By coincidence, on the same day as the first pre-print version of the present paper was posted on the arXiv server~\cite{wechs21}, another paper appeared, which also showed that causal inequalities cannot be violated in some similar circuit-like quantum models~\cite{purves21}.

\section*{Acknowledgements}

We thank Jessica Bavaresco, Fabio Costa, Mehdi Mhalla, Mio Murao, Ognyan Oreshkov and Marco T{\'u}lio Quintino for fruitful discussions, and acknowledge financial support from the \emph{``Investissements d'avenir''} (ANR-15-IDEX-02) program of the French National Research Agency, from the Swiss National Science Foundation (NCCR SwissMAP and Starting Grant DIAQ) and from the Program of Concerted Research Actions (ARC) of the Universit\'e Libre de Bruxelles.

\appendix

\section{Process matrices with or without ``global past'' and ``global future'' systems $P, F$}

\subsection{Equivalence between the two process matrix frameworks}
\label{app:W_equiv_descr}

Process matrices were initially introduced as the most general way to map quantum operations to probabilities in a consistent manner (so as to only output nonnegative and normalised probabilities), without assuming any \emph{a priori} global causal structure~\cite{oreshkov12}. 
Here, as in Ref.~\cite{araujo17}, we consider a slightly different version of process matrices that take the $N$ CP maps $\A_k: \L(\HS^{A_k^I}) \to \L(\HS^{A_k^O})$, with Choi representation $A_k \in \L(\HS^{A_k^{IO}})$, to a new CP map $\M: \L(\HS^P) \to \L(\HS^F)$ (rather than to some probabilities), from some ``global past'' Hilbert space $\HS^P$ to some ``global future'' Hilbert space $\HS^F$, with Choi representation $M \in \L(\HS^{PF})$; see Fig.~\ref{fig:general_W}.

Any mapping from quantum operations to probabilities, as described by a process matrix $W$ in the original version of the framework~\cite{oreshkov12}, can equivalently be seen as a deterministic CCP quantum supermap conforming to the definition in Sec.~\ref{subsec:PM_supermaps}, which acts as in Eq.~\eqref{eq:choi_map_M}, and where the ``global past'' and ``global future'' Hilbert spaces are trivial, i.e., one-dimensional ($d_P = d_F = 1$).
As mentioned in Footnote~\ref{footnote:pm_formalisms}, the ``generalised Born rule'', which yields the probabilities in the original formalism, is formally recovered by identifying in that case the (scalar) output of the induced map $\M: 1 \mapsto (A_1 \otimes \cdots \otimes A_N) * W$ with the probability distribution $P(\A_1,\ldots,\A_N)$. 
The non-negativity and normalisation of these probabilities, as imposed in the original framework, implies that the corresponding supermap must indeed be CCP and deterministic. 

Conversely, the process matrix $W$ that specifies the action of a deterministic supermap (cf.\ Eq.~\eqref{eq:choi_map_M}) can be seen as a process matrix in the original framework where one has two additional operations, one of which corresponds to a state preparation in the ``global past'' Hilbert space (i.e., its output Hilbert space is $\HS^P$ and its input Hilbert space is trivial), and the other to a measurement of the output system in the ``global future'' Hilbert space (i.e., its input Hilbert space is $\HS^F$ and its output Hilbert space is trivial). 
The constraints on a CCP and deterministic quantum supermap imply that one indeed obtains a mapping to valid (nonnegative and normalised) probabilities, as required in the original version of the process matrix framework, when these two additional operations are included.

\subsection{Validity conditions for process matrices}
\label{app:W_valid_matrices}

The requirement that process matrices must yield nonnegative and normalised probabilities can be expressed more directly in terms of some simple conditions that these matrices must satisfy. 
These validity constraints were first derived for the case of two operations in Ref.~\cite{oreshkov12} and generalised to more complex scenarios (including the general, $N$-operation case) in Refs.~\cite{araujo15,oreshkov16,araujo17,wechs19}. 
With the equivalence of the two frameworks established above, we can use these previous characterisations in order to formulate the validity constraints for a matrix $W$ to describe a completely CP-preserving and deterministic quantum supermap as per Eq.~\eqref{eq:choi_map_M}. 

\medskip

We will use a somewhat different notation here. For any $W \in \L(\HS^{PA_\N^{IO}F})$ and any nonempty subset $\K$ of $\N$, we define the partial traces%
\footnote{As it appears from Eq.~\eqref{eq:validity_cstr}, if $W$ is a valid process matrix in $ \L(\HS^{PA_\N^{IO}F})$, then the matrices $W^{[P\K F]}$ and $W^{[P\K]}$ are also, up to normalisation, valid process matrices in the corresponding spaces.}
$W^{[P\K F]} \coloneqq \Tr_{A_{\N\backslash\K}^{IO}} W \in \L(\HS^{PA_\K^{IO}F})$, $W^{[P\K]} \coloneqq \Tr_{A_{\N\backslash\K}^{IO}F} W \in \L(\HS^{PA_\K^{IO}})$ and $W^{[P]} \coloneqq \Tr_{A_\N^{IO}F} W \in \L(\HS^P)$.
We furthermore use the ``trace-out-and-replace'' notation of Ref.~\cite{araujo15}, defined as
\begin{align}\label{eq:trade_and_pad}
		{}_{X}W \coloneqq (\Tr_X W) \otimes \frac{\id^X}{d_X}\,, \quad {}_{[1-X]}W \coloneqq W - {}_{X}W, 
\end{align}
where $d_X \coloneqq \dim \HS^X$ (and where the second part of the definition above can be applied recursively, in a commutative manner, so that, e.g., ${}_{\Pi_{k\in\{k_1,k_2,\ldots,k_n\}}[1-X_k]}W = {}_{[1-X_{k_1}]}\big({}_{\Pi_{k\in\{k_2,\ldots,k_n\}}[1-X_k]}W\big)$).

We then have that $W \in \L(\HS^{PA_\N^{IO}F})$ is a valid process matrix if and only if $W \succeq 0$, $\Tr W = d_P \, \Pi_{k\in\N} d_k^O$, and $W$ is in some subspace $\L^{P\N F}$ of $\L(\HS^{PA_\N^{IO}F})$, characterised as
\begin{align}\label{eq:validity_cstr}
W \in \L^{P\N F} & \Leftrightarrow \ \forall \ \emptyset \subsetneq \K \subseteq \N, \ {}_{ \prod_{k \in \K}[1-A_k^O]} W^{[P\K]} = 0 \notag \\
& \hspace{10mm} \text{ and } {}_{[1-P]} W^{[P]} = 0 \notag \\[2mm]
& \hspace{-7mm} \Leftrightarrow \ \forall \ \emptyset \subsetneq \K \subsetneq \N, \ W^{[P\K F]} \in \L^{P\K F}, \notag \\
& \hspace{3mm} {}_{ \prod_{k \in \N}[1-A_k^O]} W^{[P\N]} = 0 \text{ and } {}_{[1-P]} W^{[P]} = 0.
\end{align}
For trivial spaces $\HS^P$ and $\HS^F$, one directly recovers Eqs.~(A5) and~(A6) from Ref.~\cite{wechs19}.

\medskip

One can check that all (classes of) deterministic process matrices characterised in the paper (cf.\ Propositions~\ref{prop:charact_W_QCFO}, \ref{prop:charact_W_QCPAR}, \ref{prop:charact_W_QCCC} and \ref{prop:charact_W_QCQC}) satisfy these validity constraints.
To see this directly, e.g., for the QC-QC class (which contains the other classes under consideration), one can show recursively, from $|\K| = N$ down to $|\K| = 1$,%
\footnote{To prove Eq.~\eqref{eq:WPK_QCQC}, one may first show (also recursively) that for any $k\in\K$ and any $n = 0, \ldots, |\N\backslash\K|$,
\begin{align}
& \hspace{-5mm} \sum_{\emptyset \subseteq \K' \subseteq \N\backslash\K} d_{\K'k}^O \Tr_{A_{\N\backslash\K\K'}^{IO}} \Tr_{A_k^I} W_{(\N\backslash\K'k,k)} \notag \\
= & \sum_{|\K'|<n} d_{\K'k}^O \Tr_{A_{\N\backslash\K\K'}^{IO}} \sum_{k'\in\K\backslash k} W_{(\N\backslash\K'k k',k')} \otimes \id^{A_{k'}^O} \notag \\
& + \sum_{|\K'|=n} d_{\K'k}^O \Tr_{A_{\N\backslash\K\K'}^{IO}} \sum_{k''\in\K'k} \Tr_{A_{k''}^I} W_{(\N\backslash\K'k,k'')} \notag \\
& + \sum_{|\K'|>n} d_{\K'k}^O \Tr_{A_{\N\backslash\K\K'}^{IO}} \Tr_{A_k^I} W_{(\N\backslash\K'k,k)}.
\end{align}
For $n = |\N\backslash\K|$, this is then equal to
\begin{align}
\sum_{\emptyset \subseteq \K' \subseteq \N\backslash\K} d_{\K'k}^O \Tr_{A_{\N\backslash\K\K'}^{IO}} \sum_{k'\in\K\backslash k} W_{(\N\backslash\K'k k',k')} \otimes \id^{A_{k'}^O}
\end{align}
if $|\K|>1$, and to $d_\N^O \id^P$ if $|\K|=1$.}
that the constraints of Eq.~\eqref{eq:charact_W_QCQC_decomp} imply that
\begin{align}
W^{[P\K]} = \sum_{\emptyset \subseteq \K' \subseteq \N\backslash\K} d_{\K'}^O \Tr_{A_{\N\backslash\K\K'}^{IO}} \sum_{k\in\K} W_{(\N\backslash\K' k,k)} \otimes \id^{A_k^O} \label{eq:WPK_QCQC}
\end{align}
and $W^{[P]} = d_\N^O\, \id^P$ (using the short-hand notations $d_{\K'}^O \coloneqq \Pi_{k\in\K'}d_k^O$, $\K\K' \coloneqq \K\cup\K'$ and $\K' k \coloneqq \K'\cup\{k\}$), from which the first set of constraints in Eq.~\eqref{eq:validity_cstr} above are easily verified.

\subsection{Characterisation of quantum circuits with trivial ``global past'' and ``global future'' systems}
\label{app:charact_dP_dF_1}

For ease of reference, we give here explicit versions of our characterisations for trivial ``global past'' and ``global future'' systems ($d_P = d_F = 1$)---i.e., for the original version of process matrices which map quantum operations to probabilities.

\medskip

For QC-FOs, Proposition~\ref{prop:charact_W_QCFO} becomes:
\begin{customprop}{\ref*{prop:charact_W_QCFO}'}[Characterisation of QC-FOs with trivial $\HS^P, \HS^F$] \label{prop:charact_W_QCFO_trivial_PF}
The process matrix $W \in \L(\HS^{A_\N^{IO}})$ of a quantum circuit with the fixed causal order $(\A_1, \A_2, \ldots, \A_N)$ is a positive semidefinite matrix such that its reduced matrices $W_{(n)} \coloneqq \frac{1}{d_n^O d_{n+1}^O \cdots d_N^O}\Tr_{A_n^OA_{\{n+1,\ldots,N\}}^{IO}} W \in \L(\HS^{A_{\{1,\ldots,n-1\}}^{IO}A_n^I})$ (defined for $1 \le n \le N$, relative to the fixed order just specified) satisfy
\begin{align}
& \Tr W_{(1)} = 1, \notag \\
& \forall \, n = 1, \ldots, N-1, \quad \Tr_{A_{n+1}^I} W_{(n+1)} = W_{(n)} \otimes \id^{A_n^O}, \notag \\
& \textup{and} \quad W = W_{(N)} \otimes \id^{A_N^O}. \label{eq:charact_W_QCFO_trivial_PF}
\end{align}

Conversely, any positive semidefinite matrix $W \in \L(\HS^{A_\N^{IO}})$ whose reduced matrices $W_{(n)}$ satisfy the constraints of Eq.~\eqref{eq:charact_W_QCFO_trivial_PF} is the process matrix of a quantum circuit with the fixed causal order $(\A_1, \A_2, \ldots, \A_N)$.
\end{customprop}

\medskip

For QC-PARs, Proposition~\ref{prop:charact_W_QCPAR} becomes:
\begin{customprop}{\ref*{prop:charact_W_QCPAR}'}[Characterisation of QC-PARs with trivial $\HS^P, \HS^F$] \label{prop:charact_W_QCPAR_trivial_PF}
The process matrix $W \in \L(\HS^{A_\N^{IO}})$ of a quantum circuit with operations used in parallel is of the form
\begin{align}
W = W_{(I)} \otimes \id^{A_\N^O} \quad \textup{with} \quad \Tr W_{(I)} = 1, \label{eq:charact_W_QCPAR_trivial_PF}
\end{align}
for some positive semidefinite matrix $W_{(I)} \in \L(\HS^{A_\N^I})$ (which is nothing but a density matrix describing a quantum state sent to all $N$ operations).

Conversely, any positive semidefinite matrix $W \in \L(\HS^{A_\N^{IO}})$ satisfying Eq.~\eqref{eq:charact_W_QCPAR_trivial_PF} above is the process matrix of a quantum circuit with operations used in parallel.
\end{customprop}

\medskip

For QC-CCs, Proposition~\ref{prop:charact_W_QCCC} becomes:
\begin{customprop}{\ref*{prop:charact_W_QCCC}'}[Characterisation of QC-CCs with trivial $\HS^P, \HS^F$] \label{prop:charact_W_QCCC_trivial_PF}
The process matrix $W \in \L(\HS^{A_\N^{IO}})$ of a quantum circuit with classical control of causal order can be decomposed in terms of positive semidefinite matrices $W_{(k_1,\ldots,k_n)} \in \L(\HS^{A_{\{k_1,\ldots,k_{n-1}\}}^{IO} A_{k_n}^I})$, for all nonempty ordered subsets $(k_1,\ldots,k_n)$ of $\N$ (with $1 \le n \le N$, $k_i \neq k_j$ for $i \neq j$), in such a way that
\begin{align}
W = \sum_{(k_1,\ldots,k_N)} W_{(k_1,\ldots,k_N)}\otimes \id^{A_{k_N}^O} \label{eq:charact_W_QCCC_decomp_sum_trivial_PF}
\end{align}
and
\begin{align}
& \sum_{k_1} \Tr W_{(k_1)} = 1, \notag \\[1mm]
& \forall \, n = 1, \ldots, N{-}1, \ \forall \, (k_1, \ldots, k_n), \notag \\
& \hspace{3mm} \sum_{k_{n+1}} \Tr_{A_{k_{n+1}}^I} \! W_{(k_1,\ldots,k_n,k_{n+1})} = W_{(k_1,\ldots,k_n)} \otimes \id^{A_{k_n}^O}. \label{eq:charact_W_QCCC_decomp_constr_trivial_PF}
\end{align}

Conversely, any Hermitian matrix $W \in \L(\HS^{A_\N^{IO}})$ that admits a decomposition in terms of positive semidefinite matrices $W_{(k_1,\ldots,k_n)} \in \L(\HS^{A_{\{k_1,\ldots,k_{n-1}\}}^{IO} A_{k_n}^I})$ satisfying Eqs.~\eqref{eq:charact_W_QCCC_decomp_sum_trivial_PF}--\eqref{eq:charact_W_QCCC_decomp_constr_trivial_PF} above is the process matrix of a quantum circuit with classical control of causal order.
\end{customprop}

As mentioned in the main text, this characterisation is equivalent to the sufficient condition for causal separability presented in Ref.~\cite{wechs19}---albeit using different notation: what is denoted $W_{(k_1, \ldots, k_n)}$ here corresponds to $\frac{1}{d_{k_n}^O \cdots d_{k_N}^O} \Tr_{A_{k_n}^OA_{\{k_{n+1},\ldots,k_N\}}^{IO}} W_{(k_1, \ldots, k_n)}$ in Ref.~\cite{wechs19}; 
Eq.~\eqref{eq:charact_W_QCCC_decomp_sum_trivial_PF} corresponds to Eq.~(30) in Ref.~\cite{wechs19}, and the last line in~\eqref{eq:charact_W_QCCC_decomp_constr_trivial_PF} is equivalent (for $n = 1, \ldots, N-1$) to Eq.~(32) in Ref.~\cite{wechs19}.

Note that for $N = 2$, in the case of a trivial $\HS^P$ (and in fact, whether $\HS^F$ is trivial or not)  the characterisation of Proposition~\ref{prop:charact_W_QCCC} just reduces to a probabilistic mixture of the two possible fixed causal orders $(\A_1,\A_2)$ and $(\A_2,\A_1)$. 
Indeed, the constraints in this case read $W = W_{(1,2,F)} + W_{(2,1,F)}$, $\Tr W_{(1)} + \Tr W_{(2)} = 1$, $\Tr_{A_{k_2}^I} \! W_{(k_1,k_2)} = W_{(k_1)} \otimes \id^{A_{k_1}^O}$ and $\Tr_{F} \! W_{(k_1,k_2,F)} = W_{(k_1,k_2)} \otimes \id^{A_{k_2}^O}$. One thus sees that $W$ is the convex mixture, with weights $\Tr W_{(k_1)}$, of the process matrices $\frac{1}{\Tr W_{(k_1)}} W_{(k_1,k_2,F)}$ (or 0 if $\Tr W_{(k_1)} = 0$), each compatible with the corresponding fixed order $(\A_{k_1},\A_{k_2})$.
In order to have an order between the $N$ operations $\A_k$ that is not predefined (even probabilistically), in that case with trivial $\HS^P$, we therefore need $N \ge 3$. 
In contrast, for $N = 2$ and a nontrivial $\HS^P$, a non-predefined order is possible: an example is the classical switch considered in Sec.~\ref{subsec:QCCCs_example} (even with $\HS^F$ traced out).

\medskip

Finally, for QC-QCs, Proposition~\ref{prop:charact_W_QCQC} becomes:
\begin{customprop}{\ref*{prop:charact_W_QCQC}'}[Characterisation of QC-QCs with trivial $\HS^P, \HS^F$] \label{prop:charact_W_QCQC_trivial_PF}
The process matrix $W \in \L(\HS^{A_\N^{IO}})$ of a quantum circuit with quantum control of causal order is such that there exist positive semidefinite matrices $W_{(\K_{n-1},k_n)} \in \L(\HS^{A_{\K_{n-1}}^{IO} A_{k_n}^I})$, for all strict subsets $\K_{n-1}$ of $\N$ and all $k_n \in \N\backslash\K_{n-1}$, satisfying
\begin{align}
& \sum_{k_1 \in \N} \Tr W_{(\emptyset,k_1)} = 1, \notag \\
& \forall \, \emptyset \subsetneq \K_n \subsetneq \N, \!\! \sum_{k_{n+1} \in \N \backslash \K_n} \!\!\! \Tr_{A_{k_{n+1}}^I} \!W_{(\K_n,k_{n+1})} \notag \\[-2mm]
& \hspace{40mm} = \sum_{k_n \in \K_n} W_{(\K_n \backslash k_n,k_n)}\otimes \id^{A_{k_n}^O}, \notag \\[1mm]
& \textup{and} \quad W = \sum_{k_N \in \N} W_{(\N \backslash k_N,k_N)}\otimes \id^{A_{k_N}^O}. \label{eq:charact_W_QCQC_decomp_trivial_PF}
\end{align}

Conversely, any Hermitian matrix $W \in \L(\HS^{A_\N^{IO}})$ such that there exist positive semidefinite matrices $W_{(\K_{n-1},k_n)} \in \L(\HS^{A_{\K_{n-1}}^{IO} A_{k_n}^I})$ for all $\K_{n-1} \subsetneq \N$ and $k_n \in \N\backslash\K_{n-1}$ satisfying Eq.~\eqref{eq:charact_W_QCQC_decomp_trivial_PF} is the process matrix of a quantum circuit with quantum control of causal order.
\end{customprop}

Note that for $N=2$, with a trivial $\HS^F$ (and now, whether $\HS^P$ is trivial or not), the characterisation of Proposition~\ref{prop:charact_W_QCQC} coincides with that of Proposition~\ref{prop:charact_W_QCCC}, i.e., QC-QCs reduce to QC-CCs. 
In that case, the last line of Eq.~\eqref{eq:charact_W_QCQC_decomp} becomes $W = W_{(\{1\},2)} \otimes \id^{A_{2}^O}+W_{(\{2\},1)} \otimes \id^{A_{1}^O}$, and the constraints are identical to those in Proposition~\ref{prop:charact_W_QCCC}, with $W_{(k_1)} = W_{(\emptyset,k_1)}$, $W_{(k_1,k_2)} = W_{(\{k_1\},k_2)}$ and $W_{(k_1,k_2,F)} = W_{(\{k_1\},k_2)} \otimes \id^{A_{k_2}^O}$. 
For $N = 2$ with a nontrivial $\HS^F$ on the other hand, the two classes do not coincide. 
A counterexample is given by the quantum switch (even when taking its ``global past'' space $\HS^P$ to be trivial, by fixing the input state as in Eq.~\eqref{eq:W_QS_plus_psi}), which is causally nonseparable. 

Together with the observation made after Proposition~\ref{prop:charact_W_QCCC_trivial_PF}, it follows that for $N=2$ with both $d_P = d_F = 1$, the classes of QC-QCs and QC-CCs both collapse to a probabilistic mixture of QC-FOs.

\medskip

The characterisation of the various classes of probabilistic quantum circuits for $d_P=d_F=1$ (Propositions~\ref{prop:charact_Wr_probQCFO}, \ref{prop:charact_Wr_probQCPAR}, \ref{prop:charact_Wr_probQCCC}, \ref{prop:charact_Wr_probQCQC} and \ref{prop:charact_Wr_probIND}) can be obtained similarly to what we have done here for the deterministic case.

\section{Process matrix characterisation of quantum circuits}
\label{app:proofs_charact}

In this appendix we prove the process matrix characterisations of the different classes of quantum circuits considered in this paper. For each of these classes, we first derive the TP conditions that the respective internal circuit operations must satisfy so that they act trace-preservingly on all input states they can receive, i.e., on their ``effective input spaces''. We then prove the necessary and sufficient conditions for the deterministic case separately. In order to prove the sufficient condition in particular, we provide a method to construct an explicit circuit from a given process matrix in the class under consideration. We then extend the proofs to the respective probabilistic circuits.

\medskip

In the proofs (for the sufficient conditions) below we will use the following lemma to ``invert'' the link product for vectors:

\begin{lemma}[Link product inversion] \label{lemma:invert_link_product}
Let $\ket{a} \in \HS^{XY}$ and $\ket{c} \in \HS^{XZ}$, and define $A_X \coloneqq \Tr_Y \ketbra{a}{a} \in \L(\HS^X)$.

A necessary condition for the existence of $\ket{b} \in \HS^{YZ}$ such that $\ket{a} * \ket{b} = \ket{c}$ is that $\ket{c} \in \range(A_X) \otimes \HS^Z$.
Under this condition, a solution is given by
\begin{align}
\ket{b} & \coloneqq \ket{a^+} * \ket{c} \notag \\[1mm]
\text{with} \quad \ket{a^+} & \coloneqq \big(\bra{a} A_X^+ \otimes \id^Y \big)^T \in \HS^{XY}, \label{eq:invert_link_product}
\end{align}
where $A_X^+$ is the Moore-Penrose pseudoinverse of $A_X$.
\end{lemma}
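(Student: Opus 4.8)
The plan is to translate the vectorial link-product equation $\ket{a}*\ket{b}=\ket{c}$ into an ordinary operator equation, solve that equation by the standard theory of the Moore--Penrose pseudoinverse, and finally verify that the vector $\ket{a^+}$ defined in~\eqref{eq:invert_link_product} is exactly the Choi vector encoding that pseudoinverse, so that $\ket{a^+}*\ket{c}$ reproduces the desired solution.

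First I would set up the operator dictionary. Writing $\ket{a}=\sum_i\ket{a_i}^X\otimes\ket{i}^Y$ and associating to it the operator $a\coloneqq\ket{a}^{T_Y}=\sum_i\ket{a_i}^X\bra{i}^Y:\HS^Y\to\HS^X$, and similarly $b\coloneqq\ket{b}^{T_Z}:\HS^Z\to\HS^Y$ and $c\coloneqq\ket{c}^{T_Z}:\HS^Z\to\HS^X$, a direct expansion in the computational bases --- using the definition~\eqref{eq:def_pure_link_product} of the link product over the shared factor $\HS^Y$ --- shows that $\ket{a}*\ket{b}$ is the Choi vector of the composed operator $ab:\HS^Z\to\HS^X$. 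Hence $\ket{a}*\ket{b}=\ket{c}$ is equivalent to the operator equation $ab=c$. Moreover $A_X=\Tr_Y\ketbra{a}{a}=a\,a^\dagger$, so that $\range(A_X)=\range(a)$.

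With this dictionary both claims become elementary linear algebra. For the necessary condition, $ab=c$ forces $\range(c)\subseteq\range(a)=\range(A_X)$; translating back, the ``columns'' $\ket{c_\mu}^X=(\id^X\otimes\bra{\mu}^Z)\ket{c}$ all lie in $\range(A_X)$, i.e.\ $\ket{c}\in\range(A_X)\otimes\HS^Z$. For sufficiency, I would invoke the standard identity $a^+=a^\dagger A_X^+=a^\dagger(a\,a^\dagger)^+$ for the pseudoinverse, together with the fact that $a\,a^+$ is the orthogonal projector onto $\range(a)$. Then $b\coloneqq a^+c$ satisfies $ab=a\,a^+c=c$ precisely because $\range(c)\subseteq\range(a)$ holds under the assumed condition (and the solution is visibly non-unique, since any $\ket{b_0}$ with $(a\otimes\id^Z)\ket{b_0}=0$ may be added).

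The last, and only delicate, step is to check that the vector $\ket{a^+}$ of~\eqref{eq:invert_link_product} is the Choi vector of $a^+=a^\dagger A_X^+$, so that the link product $\ket{a^+}*\ket{c}$ --- now contracting the shared factor $\HS^X$ --- yields the Choi vector of $a^+c=b$. I would do this componentwise, expanding $\bra{a}(A_X^+\otimes\id^Y)$ and applying the transpose, while carefully tracking that the computational-basis transpose sends bras to kets \emph{without} conjugating the scalar coefficients, and that $A_X^+$ is Hermitian ($A_X^{+\dagger}=A_X^+$). One then finds that the coefficient of $\ket{p}^X\otimes\ket{i}^Y$ in $\ket{a^+}$ equals $\sum_{p'}\overline{a_{p'i}}\,(A_X^+)_{p'p}=\bra{i}^Y a^\dagger A_X^+\ket{p}^X$, which is exactly the matrix element $\bra{i}^Y a^+\ket{p}^X$ needed for $\ket{a^+}*\ket{c}$ to coincide with the Choi vector of $a^+c$. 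I expect this bookkeeping of transposes and conjugations to be the main obstacle; once it is settled, the identity $\ket{a}*\ket{b}=(a\,a^+\otimes\id^Z)\ket{c}=\ket{c}$ follows immediately from the operator reformulation and the projector property of $a\,a^+$.
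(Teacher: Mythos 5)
Your proposal is correct: the operator dictionary is set up consistently with the paper's conventions (with $a=\ket{a}^{T_Y}$ one indeed has $A_X=aa^\dagger$, $\ket{a}*\ket{b}$ is the Choi vector of $ab$, and your component check that $\ket{a^+}$ encodes $a^\dagger A_X^+=a^+$ goes through, using that $A_X^+$ is Hermitian and the computational-basis transpose does not conjugate coefficients). The route differs from the paper's in packaging rather than in substance. The paper never leaves the vector level: it expands $\ket{a}=\sum_i\ket{a_i}\otimes\ket{i}^Y$, gets the necessary condition from $\ket{a}*\ket{b}=\sum_i\ket{a_i}\otimes\ket{b_i}$ together with $A_X=\sum_i\ketbra{a_i}{a_i}$, writes $\ket{b}=\ket{a^+}*\ket{c}=\sum_i\ket{i}^Y\otimes\big(\bra{a_i}A_X^+\otimes\id^Z\ket{c}\big)$ explicitly, and verifies directly that $\ket{a}*\ket{b}=\big(A_XA_X^+\otimes\id^Z\big)\ket{c}=\ket{c}$. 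You instead reduce everything to the operator equation $ab=c$, identify the necessary condition as the range inclusion $\range(c)\subseteq\range(a)$, and obtain the solution from the standard Moore--Penrose identity $a^+=a^\dagger(aa^\dagger)^+$ together with the fact that $aa^+$ projects onto $\range(a)$ --- which is literally the same projector identity the paper invokes, since $aa^+=A_XA_X^+$. What your version buys is conceptual: it explains \emph{why} the formula for $\ket{a^+}$ works (it is the Choi encoding of the Moore--Penrose solution $b=a^+c$) and makes the non-uniqueness of $\ket{b}$ transparent; the cost is the extra bookkeeping step relating $\ket{a^+}$ to $a^+$, which the paper avoids by verifying the solution directly in components.
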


\begin{proof}
Let us denote by $\{\ket{i}^Y\}_i$ the computational basis of $\HS^Y$ and define, for any $\ket{a} \in \HS^{XY}$ and $\ket{b} \in \HS^{YZ}$, $\ket{a_i} \coloneqq (\id^X \otimes \bra{i}^Y) \ket{a} \in \HS^X$ and $\ket{b_i} \coloneqq (\bra{i}^Y \otimes \id^Z) \ket{b} \in \HS^Z$.

By noting that $\ket{a} * \ket{b} = \sum_i \ket{a_i} \otimes \ket{b_i}$ (as in Eq.~\eqref{eq:def_pure_link_product}) and $A_X = \sum_i \ketbra{a_i}{a_i}$, it appears clearly that the link product $\ket{a} * \ket{b}$ is in $\range(A_X) \otimes \HS^Z$, which proves the necessary condition stated in the proposition.

Suppose that this condition is indeed satisfied. The vectors $\ket{a^+}$ and $\ket{b}$ above can be written more explicitly, in terms of the $\ket{a_i}$'s, as%
\footnote{If all the vectors $\ket{a_i}$ are orthogonal, as it will be the case in the proofs of Appendices~\ref{app:proof_charact_QCFOs} and~\ref{app:proof_charact_QCCCs} below (for the characterisation of QC-FOs and QC-CCs), then one has $A_X^+ = \sum_{i:\ket{a_i}\neq 0} \ketbra{a_i}{a_i} / \braket{a_i}{a_i}^2$, so that Eq.~\eqref{eq:invert_link_product_explicit} can be written even more directly as $\ket{a^+} = \sum_{i:\ket{a_i}\neq 0} \frac{\bra{a_i}^T}{\braket{a_i}{a_i}} \otimes \ket{i}^Y$, $\ket{b} = \sum_{i:\ket{a_i}\neq 0} \ket{i}^Y \otimes \big( \frac{\bra{a_i}}{\braket{a_i}{a_i}} \otimes \id^Z \ket{c} \big)$. \label{footnote:invert_link_prod_ortho_decomp}}
\begin{align}
\ket{a^+} & = {\textstyle \sum_i} \big(\!\bra{a_i} A_X^+ \big)^T \otimes \ket{i}^Y, \notag \\[1mm]
\ket{b} & = {\textstyle \sum_i} \ket{i}^Y \otimes \big( \bra{a_i} A_X^+ \otimes \id^Z \ket{c} \big). \label{eq:invert_link_product_explicit}
\end{align}
We then have $\ket{a} * \ket{b} = \sum_i \ket{a_i} \otimes \ket{b_i} = \sum_i \ket{a_i} \otimes \bra{a_i} A_X^+ \otimes \id^Z \ket{c} = A_X A_X^+ \otimes \id^Z \ket{c} = \ket{c}$, where we used the fact that $A_XA_X^+$ is the projector onto (and therefore acts as the identity within) the range of $A_X$.
This proves that $\ket{b}$ defined in Eq.~\eqref{eq:invert_link_product} is indeed a solution to $\ket{a} * \ket{b} = \ket{c}$.
\end{proof}

To verify the necessary condition $\ket{c} \in \range(A_X) \otimes \HS^Z$ when using Lemma~\ref{lemma:invert_link_product} in the proofs below, let us also make the following observation:

\begin{observation} \label{observation:nec_cond_range}
Let $\{\ket{c_k}\}_k$ be a family of vectors, with each $\ket{c_k} \in \HS^{XZ_k} = \HS^X \otimes \HS^{Z_k}$ for some (possibly different) Hilbert spaces $\HS^X, \HS^{Z_k}$, and define $C_X \coloneqq \sum_k \Tr_{Z_k} \ketbra{c_k}{c_k} \in \L(\HS^X)$.

One has that for each $k$, $\ket{c_k} \in \range(C_X) \otimes \HS^{Z_k}$.
\end{observation}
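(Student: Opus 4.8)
The plan is to prove Observation~\ref{observation:nec_cond_range} by reducing it to a basic fact about how the range of a positive semidefinite operator relates to the vectors from which it is built. First I would fix an index $k$ and isolate the single term $\ketbra{c_k}{c_k}$ from the sum defining $C_X$. The key point is that for positive semidefinite operators, the range is monotone under the ordering $\preceq$: since $C_X = \sum_{k'} \Tr_{Z_{k'}} \ketbra{c_{k'}}{c_{k'}} \succeq \Tr_{Z_k} \ketbra{c_k}{c_k} \succeq 0$ (each summand being positive semidefinite, as a partial trace of a positive semidefinite operator), one has $\range\big(\Tr_{Z_k} \ketbra{c_k}{c_k}\big) \subseteq \range(C_X)$.

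Next I would connect the range of the reduced operator $\Tr_{Z_k} \ketbra{c_k}{c_k} \in \L(\HS^X)$ to the vector $\ket{c_k}$ itself. Writing $\ket{c_k} = \sum_j \ket{c_{k,j}}^X \otimes \ket{j}^{Z_k}$ in the computational basis $\{\ket{j}^{Z_k}\}_j$ of $\HS^{Z_k}$, one has $\Tr_{Z_k} \ketbra{c_k}{c_k} = \sum_j \ketbra{c_{k,j}}{c_{k,j}}^X$, whose range is exactly $\Span\{\ket{c_{k,j}}^X\}_j$. Consequently each component $\ket{c_{k,j}}^X$ lies in this range, and hence $\ket{c_k} = \sum_j \ket{c_{k,j}}^X \otimes \ket{j}^{Z_k} \in \big(\Span_j \ket{c_{k,j}}^X\big) \otimes \HS^{Z_k} = \range\big(\Tr_{Z_k} \ketbra{c_k}{c_k}\big) \otimes \HS^{Z_k}$. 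Combining this with the range inclusion from the previous step yields $\ket{c_k} \in \range(C_X) \otimes \HS^{Z_k}$, which is the claim.

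I do not anticipate a genuine obstacle here, as the statement is essentially a packaging of two standard linear-algebra facts. The only point requiring slight care is the monotonicity of range under the positive-semidefinite ordering: one should justify that $A \preceq B$ (with $A, B \succeq 0$) implies $\range(A) \subseteq \range(B)$, which follows since $\ker(B) \subseteq \ker(A)$ (if $\bra{\phi} B \ket{\phi} = 0$ then $0 \le \bra{\phi} A \ket{\phi} \le \bra{\phi} B \ket{\phi} = 0$, so $A\ket{\phi} = 0$ by positivity), and then pass to orthogonal complements using $\range = \ker^\perp$ for Hermitian operators. The main role of this observation in the surrounding text is to guarantee the hypothesis $\ket{c} \in \range(A_X) \otimes \HS^Z$ needed to apply Lemma~\ref{lemma:invert_link_product}, so the proof should be stated cleanly and concisely to serve as a reusable tool in the later appendices.
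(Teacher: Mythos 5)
Your proof is correct, but it takes a different route from the paper's. The paper argues globally on the whole sum at once: writing $\Pi_X^\perp \coloneqq \id^X - C_XC_X^+$ for the projector onto the orthogonal complement of $\range(C_X)$, it computes $\sum_k \Tr\big[(\Pi_X^\perp \otimes \id^{Z_k})\ketbra{c_k}{c_k}\big] = \Tr\big[\Pi_X^\perp C_X\big] = 0$, notes that each summand is nonnegative and hence vanishes, and concludes $(\Pi_X^\perp \otimes \id^{Z_k})\ket{c_k} = 0$ for every $k$ directly. You instead work term by term: you isolate $\Tr_{Z_k}\ketbra{c_k}{c_k} \preceq C_X$, invoke monotonicity of the range under the positive-semidefinite ordering, and then handle the single-vector case by expanding $\ket{c_k}$ in a computational basis of $\HS^{Z_k}$ to show $\ket{c_k} \in \range\big(\Tr_{Z_k}\ketbra{c_k}{c_k}\big) \otimes \HS^{Z_k}$. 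Both arguments are sound and rest on the same underlying positivity facts (your kernel-inclusion proof of range monotonicity is essentially the paper's ``nonnegative summand must vanish'' step in disguise), and both implicitly use finite-dimensionality via $\range = \ker^\perp$ for Hermitian operators, which the paper assumes throughout. What the paper's version buys is compactness and self-containment—one trace computation, no auxiliary lemmas; what yours buys is modularity—each step is a standard, reusable linear-algebra fact, and the decomposition $\Tr_{Z_k}\ketbra{c_k}{c_k} = \sum_j \ketbra{c_{k,j}}{c_{k,j}}$ makes the geometric content (the reduced operator's range is exactly the span of the $X$-components of $\ket{c_k}$) explicit rather than implicit.
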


\begin{proof}
Denoting by $\Pi_X \coloneqq C_XC_X^+$ the projector onto the range of $C_X$ and by $\Pi_X^\perp \coloneqq \id^X - \Pi_X$ its orthogonal projector in $\HS^X$, one has
\begin{align}
\sum_k \Tr\big[(\Pi_X^\perp \otimes \id^{Z_k} ) \ketbra{c_k}{c_k} \big] & = \sum_k \Tr\big[\Pi_X^\perp (\Tr_{Z_k} \ketbra{c_k}{c_k}) \big] & \notag \\
& = \Tr\big[\Pi_X^\perp C_X \big] = 0.
\end{align}
Since the individual summands in the sum above cannot be negative, we conclude that each of them (and hence, $(\Pi_X^\perp \otimes \id^{Z_k} ) \ket{c_k}$) must be zero, and therefore that $(\Pi_X \otimes \id^{Z_k} ) \ket{c_k} = \ket{c_k}$---i.e., $\ket{c_k} \in \range(C_X) \otimes \HS^{Z_k}$.
\end{proof}

\subsection{QC-FOs: Proofs of Propositions~\ref{prop:charact_W_QCFO} and~\ref{prop:charact_Wr_probQCFO}}
\label{app:proof_charact_QCFOs}

Here, we will prove the characterisations of QC-FOs (Proposition~\ref{prop:charact_W_QCFO}) and pQC-FOs (Proposition~\ref{prop:charact_Wr_probQCFO}). Equivalent results were already proven in Refs.~\cite{chiribella09,gutoski06}. Below we make the constructive proofs for the sufficient conditions somewhat more explicit. Also, the proofs for QC-CCs and QC-QCs will follow very similar paths, so it is useful to first present the simpler case.

For ease of notations and to avoid repetitions, it will be convenient in this section to define $\HS^{A^I_{N+1}} \coloneqq \HS^F$.

\subsubsection{Trace-preserving conditions}
\label{app:subsubsec_QCFO_TP}

Let us first derive the TP conditions of Eqs.~\eqref{eq:TP_constr_QCFO_1}--\eqref{eq:TP_constr_QCFO_N} that the internal circuit operations of a QC-FO must satisfy.
Consider for that a QC-FO as depicted in Fig.~\ref{fig:QCFO}, and suppose one inputs some state $\rho \in \L(\HS^P)$ into the circuit.

We first require the state $\M_1(\rho) = \rho * M_1$ after applying the first internal circuit operation $\M_1$ to have the same trace as $\rho$. That is, we want
\begin{align}
& \Tr[\rho * M_1] = \Tr[(\rho^T \otimes \id^{A_1^I\alpha_1}) M_1] \notag \\
& \quad = \Tr[\rho^T (\Tr_{A_1^I\alpha_1} M_1)] = \Tr[\rho] \ \ ( = \Tr[\rho^T]).
\end{align}
As this must hold for all $\rho \in \L(\HS^P)$, this constraint is equivalent to
\begin{align}
\Tr_{A_1^I\alpha_1} M_1 = \id^P,
\end{align}
as in Eq.~\eqref{eq:TP_constr_QCFO_1}, which is indeed the standard trace-preserving condition for the Choi representation of a quantum map.

For $n = 1, \ldots, N$, the states of the global system going through the circuit right before and right after the application of $\M_{n+1}$ are obtained (in terms of the Choi representations and link products) as $\rho * M_1 * A_1 * M_2 * \cdots * M_n * A_n = (\rho \otimes A_1 \otimes \cdots \otimes A_n) * (M_1 * M_2 * \cdots * M_n) \in \L(\HS^{A_n^O\alpha_n})$ and $\rho * M_1 * A_1 * M_2 * \cdots * M_n * A_n * M_{n+1} = (\rho \otimes A_1 \otimes \cdots \otimes A_n) * (M_1 * M_2 * \cdots * M_n * M_{n+1}) \in \L(\HS^{A_{n+1}^I\alpha_{n+1}})$, respectively (with $A_{N+1}^I = F$ and a trivial ancillary space $\HS^{\alpha_{N+1}}$ for $n=N$). Their traces are
\begin{align}
& \Tr[(\rho \otimes A_1 \otimes \cdots \otimes A_n) * (M_1 * \cdots * M_n)] \notag \\
& \ = \Tr\big[\big( (\rho \otimes A_1 \otimes \cdots \otimes A_n)^T \otimes \id^{\alpha_n} \big) \notag \\[-2mm]
& \hspace{30mm} \big( (M_1 * \cdots * M_n) \otimes \id^{A_n^O} \big) \big] \notag \\
& \ = \Tr\big[ (\rho \otimes A_1 \otimes \cdots \otimes A_n)^T \notag \\[-1mm]
& \hspace{25mm} \big( \Tr_{\alpha_n} (M_1 * \cdots * M_n) \otimes \id^{A_n^O} \big) \big] \label{eq:Tr_before_Mn_QCFO}
\end{align}
and
\begin{align}
& \Tr[(\rho \otimes A_1 \otimes \cdots \otimes A_n) * (M_1 * \cdots * M_n * M_{n+1})] \notag \\
& \ = \Tr\big[\big( (\rho \otimes A_1 \otimes \cdots \otimes A_n)^T \otimes \id^{A_{n+1}^I\alpha_{n+1}} \big) \notag \\[-1mm]
& \hspace{35mm} (M_1 * \cdots * M_n * M_{n+1}) \big] \notag \\
& \ = \Tr\big[ (\rho \otimes A_1 \otimes \cdots \otimes A_n)^T \notag \\[-1mm]
& \hspace{18mm} \Tr_{A_{n+1}^I\alpha_{n+1}} (M_1 * \cdots * M_n * M_{n+1}) \big]. \label{eq:Tr_after_Mn_QCFO}
\end{align}
We require these to be equal, for all possible initial states $\rho \in \L(\HS^P)$ and all possible external CP maps with Choi matrices $A_k \in \L(\HS^{A_k^{IO}})$. As $\rho \otimes A_1 \otimes \cdots \otimes A_n$ spans the whole space $\L(\HS^{PA_{\{1,\ldots,n\}}^{IO}})$, this is indeed equivalent to
\begin{align}
& \Tr_{A_{n+1}^I\alpha_{n+1}} (M_1 * \cdots * M_n * M_{n+1}) \notag \\
& \hspace{20mm} = \Tr_{\alpha_n} (M_1 * \cdots * M_n) \otimes \id^{A_n^O},
\end{align}
as in Eqs.~\eqref{eq:TP_constr_QCFO_n} (for $1\le n < N$) and~\eqref{eq:TP_constr_QCFO_N} (for $n=N$).

\medskip

The TP conditions for probabilistic QC-FOs follow from the exact same reasoning, with the last internal circuit operation $\M_{N+1}$ replaced by $\sum_r \M_{N+1}^{[r]}$, the CPTP map obtained by summing over the classical outcomes.

\subsubsection{Proof of Proposition~\ref{prop:charact_W_QCFO}: Necessary condition}

Consider the process matrix $W = M_1 * M_2 * \cdots * M_{N+1}$ of a QC-FO, as per Proposition~\ref{prop:descr_W_QCFO}, with the Choi matrices $M_n$ satisfying the TP conditions of Eqs.~\eqref{eq:TP_constr_QCFO_1}--\eqref{eq:TP_constr_QCFO_N}.

Note first that as all $M_n \succeq 0$, it directly follows that $W$ is positive semidefinite.

Defining $W_{(N+1)} \coloneqq W$, the reduced matrices $W_{(n)}$ defined in Proposition~\ref{prop:charact_W_QCFO} can be obtained recursively (from $n=N$, down to $n=1$) as $W_{(n)} = \frac{1}{d_n^O}\Tr_{A_n^OA_{n+1}^I} W_{(n+1)}$.
Similarly, Eqs.~\eqref{eq:TP_constr_QCFO_n}--\eqref{eq:TP_constr_QCFO_N} imply that $\Tr_{\alpha_n} (M_1 * \cdots * M_n)= \frac{1}{d_n^O}\Tr_{A_n^OA_{n+1}^I} [ \Tr_{\alpha_{n+1}} (M_1 * \cdots * M_n * M_{n+1})]$.
Since $W_{(n)}$ and $\Tr_{\alpha_n} (M_1 * \cdots * M_n)$ are equal for $n = N+1$ (with a trivial $\HS^{\alpha_{N+1}}$) and satisfy the same recursive property, it follows that they are the same for all $n = 1, \ldots, N+1$:
\begin{align}
W_{(n)} = \Tr_{\alpha_n} (M_1 * \cdots * M_n).
\end{align}
The constraints of Eq.~\eqref{eq:charact_W_QCFO} are then simply equivalent to (and therefore readily implied by) the TP conditions of Eqs.~\eqref{eq:TP_constr_QCFO_1}--\eqref{eq:TP_constr_QCFO_N}.

\subsubsection{Proof of Proposition~\ref{prop:charact_W_QCFO}: Sufficient condition}
\label{app:proof_charact_QCFOs_SC}

Consider a positive semidefinite matrix $W \in \L(\HS^{PA_\N^{IO} F})$ whose reduced matrices $W_{(n)} \coloneqq \frac{1}{d_n^O d_{n+1}^O \cdots d_N^O}\Tr_{A_n^OA_{\{n+1,\ldots,N\}}^{IO}F} W \in \L(\HS^{PA_{\{1,\ldots,n-1\}}^{IO}A_n^I})$ satisfy the constraints of Eq.~\eqref{eq:charact_W_QCFO}.
We will show that $W$ is the process matrix of a QC-FO with the causal order $(\A_1,\ldots,\A_N)$, by constructing some internal circuit operations $\M_n$ (CPTP maps, in the sense of the constraints of Eqs.~\eqref{eq:TP_constr_QCFO_1}--\eqref{eq:TP_constr_QCFO_N}) explicitly.

\medskip

Since $W \succeq 0$, then all $W_{(n)} \succeq 0$ as well (for $1 \le n \le N+1$, with again $W_{(N+1)} \coloneqq W$), which admit a spectral decomposition of the form
\begin{align}
W_{(n)} = \sum_i \ketbra{w_{(n)}^i}{w_{(n)}^i},
\end{align}
for some eigenbasis consisting of $r_n \coloneqq \rank W_{(n)}$ (non-normalised and nonzero) orthogonal vectors $\ket{w_{(n)}^i} \in \HS^{PA_{\{1,\ldots,n-1\}}^{IO}A_n^I}$.
Let us then introduce, for each $n = 1, \ldots, N+1$, some $r_n$-dimensional ancillary Hilbert space $\HS^{\alpha_n}$ with its computational basis $\{\ket{i}^{\alpha_n}\}_{i=1}^{r_n}$, and define
\begin{align}
\ket{w_{(n)}} \coloneqq \sum_i \ket{w_{(n)}^i} \otimes \ket{i}^{\alpha_n} \ \in \HS^{PA_{\{1,\ldots,n-1\}}^{IO}A_n^I\alpha_n}, \label{eq:def_wn}
\end{align}
such that $W_{(n)} = \Tr_{\alpha_n} \ketbra{w_{(n)}}{w_{(n)}}$.

For $n = 1, \ldots, N$, the assumption that $\Tr_{A_{n+1}^I} W_{(n+1)} = \Tr_{A_{n+1}^I\alpha_{n+1}} \ketbra{w_{(n+1)}}{w_{(n+1)}} = W_{(n)} \otimes \id^{A_n^O}$ implies, after further tracing out over $A_n^O$ and via Observation~\ref{observation:nec_cond_range} (here with a single vector $\ket{c_k}$), that $\ket{w_{(n+1)}} \in \range(W_{(n)}) \otimes \HS^{A_n^OA_{n+1}^I\alpha_{n+1}}$.
Using Lemma~\ref{lemma:invert_link_product} above, this property ensures that one can relate $\ket{w_{(n)}}$ and $\ket{w_{(n+1)}}$ by defining, for $1 \le n \le N$,%
\footnote{As in Lemma~\ref{lemma:invert_link_product}, in Eq.~\eqref{eq:def_wn_p} $W_{(n)}^+$ denotes the Moore-Penrose pseudoinverse of $W_{(n)}$ (and similarly for $W_{(k_1,\ldots,k_n)}^+$ and $\Omega_{\K_n}^+$ in Eqs.~\eqref{eq:def_wk1_kn_p} and~\eqref{eq:def_omega_Kn_p} further below). More explicitly, from the spectral decomposition of $W_{(n)}$ in terms of (nonzero orthogonal) vectors $\ket{w_{(n)}^i}$, we can write $\dket{V_{n+1}} = \sum_i \ket{i}^{\alpha_n} \otimes \Big( \frac{\bra{w_{(n)}^i}}{\braket{w_{(n)}^i}{w_{(n)}^i}} \otimes \id^{A_n^O\!A_{n+1}^I\alpha_{n+1}} \ket{w_{(n+1)}} \Big)$ for all $1 \le n \le N$: cf.\ Eq.~\eqref{eq:invert_link_product_explicit} and Footnote~\ref{footnote:invert_link_prod_ortho_decomp}. \label{footnote:explicit_V_n1_QCFO}}
\begin{align}
& \ket{w_{(n)}^+} \!\coloneqq\! (\bra{w_{(n)}} \!W_{\!(n)}^+ \!\otimes\! \id^{\alpha_n}\!)^T \in \HS^{PA_{\{1,\ldots,n-1\}}^{IO}A_n^I\alpha_n}, \label{eq:def_wn_p} \\
& \dket{V_{n+1}} \coloneqq \ket{w_{(n)}^+} * \ket{w_{(n+1)}} \in \HS^{A_n^O\!\alpha_nA_{n+1}^I\alpha_{n+1}}
\end{align}
so that
\begin{align}
\ket{w_{(n)}} * \dket{V_{n+1}} = \ket{w_{(n+1)}}. \label{eq:QCFO_wn_wn1} 
\end{align}
By further defining $\dket{V_1} \coloneqq \ket{w_{(1)}} \in \HS^{PA_1^I\alpha_1}$, one recursively obtains that
\begin{align}
\dket{V_1} * \dket{V_2} * \cdots * \dket{V_n} = \ket{w_{(n)}} \label{eq:QCFO_rec_Vs}
\end{align}
for all $1 \le n \le N+1$.

From the double-ket vectors $\dket{V_n}$ just introduced we can then define the operators
\begin{align}
M_n & \coloneqq \dketbra{V_n}{V_n} \quad \text{for } 1 \le n \le N \notag \\[1mm]
\text{and } M_{N+1} & \coloneqq \Tr_{\alpha_{N+1}} \dketbra{V_{N+1}}{V_{N+1}},
\end{align}
such that 
\begin{align}
& \Tr_{\alpha_n} (M_1 * \cdots * M_n) = \Tr_{\alpha_n} \ketbra{w_{(n)}}{w_{(n)}} = W_{(n)} \notag \\[1mm]
& \text{and } \ M_1 * \cdots * M_{N+1} = W .\label{eq:QCFO_rec_Ms}
\end{align}
As the $W_{(n)}$'s are assumed to satisfy the constraints of Eq.~\eqref{eq:charact_W_QCFO}, then by construction the (positive semidefinite) $M_n$'s satisfy the (equivalent, once Eq.~\eqref{eq:QCFO_rec_Ms} is established) TP conditions of Eqs.~\eqref{eq:TP_constr_QCFO_1}--\eqref{eq:TP_constr_QCFO_N}.
This proves that the operators $M_n$ define CPTP maps $\M_1: \L(\HS^P) \to \L(\HS^{A_1^I\alpha_1})$, $\M_{n+1}: \L(\HS^{A_n^O\alpha_n}) \to \L(\HS^{A_{n+1}^I\alpha_{n+1}})$ for $n=1,\ldots,N-1$, and $\M_{N+1}: \L(\HS^{A_N^O\alpha_N}) \to \L(\HS^F)$, as required for the internal circuit operations of a QC-FO.
The second line of Eq.~\eqref{eq:QCFO_rec_Ms} above shows, according to Proposition~\ref{prop:descr_W_QCFO}, that $W$ is indeed the process matrix of the QC-FO thus constructed.

\subsubsection{Proof of Proposition~\ref{prop:charact_Wr_probQCFO}}

The proofs of both the necessary and the sufficient conditions above extend easily to the characterisation of probabilistic QC-FOs, as given by Proposition~\ref{prop:charact_Wr_probQCFO} in Sec.~\ref{sec:probQCFOs}.

For the necessary condition, recall that a pQC-FO with the order $(\A_1,\ldots,\A_N)$ is described by the matrices $W^{[r]}$ given in Proposition~\ref{prop:descr_Wr_probQCFO}. Since all $M_n, M_{N+1}^{[r]} \succeq 0$, then clearly each $W^{[r]}$ is positive semidefinite; furthermore, since all $M_n$ for $1 \le n \le N$, as well as $\sum_r M_{N+1}^{[r]}$, satisfy the TP conditions of Eqs.~\eqref{eq:TP_constr_QCFO_1}--\eqref{eq:TP_constr_QCFO_N}, then the sum $\sum_r W^{[r]} = M_1 * M_2 * \cdots * M_N * \big( \sum_r M_{N+1}^{[r]} \big)$ is indeed the process matrix of a QC-FO (with the same fixed causal order) as per Proposition~\ref{prop:descr_W_QCFO}.

Conversely for the sufficient condition, let $\{W^{[r]}\}_r$ be a set of positive semidefinite matrices whose sum is the process matrix of a QC-FO with the order $(\A_1,\ldots,\A_N)$. Introducing a Hilbert space $\HS^{F'}$ with computational basis states $\{\ket{r}^{F'}\}_r$, let us define the (positive semidefinite) ``extended'' matrix $W' \coloneqq \sum_r W^{[r]} \otimes \ketbra{r}{r}^{F'} \in \L(\HS^{PA_\N^{IO}FF'})$, such that $W' * \ketbra{r}{r}^{F'} = W^{[r]}$ and $\Tr_{F'} W' = \sum_r W^{[r]}$, so that $W'$ and $\sum_r W^{[r]}$ have the same reduced matrices $W_{(n)}$ (as defined in Proposition~\ref{prop:charact_W_QCFO}). As $\sum_r W^{[r]}$ is assumed to satisfy the constraints of Eq.~\eqref{eq:charact_W_QCFO}, then so does $W'$ (with $F$ replaced by $FF'$), so that $W'$ is the process matrix of a (deterministic) QC-FO with the same order and with the global future space $\HS^{FF'}$ as per Proposition~\ref{prop:charact_W_QCFO}.
We can thus decompose it as $W' = M_1 * M_2 * \cdots * M_N * M_{N+1}'$, where all $M_1 \in \L(\HS^{P A^I_1 \alpha_1})$, $M_{n+1} \in \L(\HS^{A_n^O\alpha_nA_{n+1}^I\alpha_{n+1}})$, and $M_{N+1}' \in \L(\HS^{A^O_N \alpha_N F F'})$ are CP maps satisfying the TP conditions of Eqs.~\eqref{eq:TP_constr_QCFO_1}--\eqref{eq:TP_constr_QCFO_N}.
Defining the CP maps $M_{N+1}^{[r]} \coloneqq M_{N+1}' * \ketbra{r}{r}^{F'}$, whose sum $\sum_r M_{N+1}^{[r]} = \Tr_{F'} M_{N+1}'$ satisfies the TP condition of Eq.~\eqref{eq:TP_constr_QCFO_N}, we then obtain $W^{[r]} = W' * \ketbra{r}{r}^{F'} = M_1 * M_2 * \cdots * M_N * M_{N+1}^{[r]}$, which is of the form of Eq.~\eqref{eq:Wr_probQCFO} and thus proves according to Proposition~\ref{prop:descr_Wr_probQCFO} that $\{W^{[r]}\}_r$ indeed has a realisation as a pQC-FO with the fixed causal order $(\A_1,\ldots,\A_N)$.

\subsection{QC-CCs: Proofs of Propositions~\ref{prop:charact_W_QCCC} and~\ref{prop:charact_Wr_probQCCC}}
\label{app:proof_charact_QCCCs}

In this section, we will derive the TP conditions for QC-CCs, and prove Propositions~\ref{prop:charact_W_QCCC} and~\ref{prop:charact_Wr_probQCCC}.

To avoid repetitions we define here $k_{N+1} \coloneqq F$, $\HS^{A^I_{k_{N+1}}} \coloneqq \HS^F$ and $\HS^{\tilde A^I_{N+1}} \coloneqq \HS^F$, as for instance in $M_{(k_1,\ldots,k_N)}^{\to k_{N+1}} = M_{(k_1,\ldots,k_N)}^{\to F} (\in \L(\HS^{A_{k_N}^O \alpha_NA_{k_{N+1}}^I}) = \L(\HS^{A_{k_N}^O \alpha_NF}))$ and $W_{(k_1,\ldots,k_N,k_{N+1})} = W_{(k_1,\ldots,k_N,F)} (\in \L(\HS^{PA_\N^{IO}A_{k_{N+1}}^I}) = \L(\HS^{PA_\N^{IO}F}))$.

\subsubsection{Trace-preserving conditions}
\label{app:subsubsec_QCCC_TP}

The TP conditions of Eqs.~\eqref{eq:TP_constr_QCCC_1}--\eqref{eq:TP_constr_QCCC_N} for QC-CCs can be obtained in the very same way as those for the QC-FO case (see Sec.~\ref{app:subsubsec_QCFO_TP}) after noting that, according to the description of QC-CCs given in Sec.~\ref{subsec:QCCC_description}, it is now (for each $(k_1,\ldots,k_n)$) the sums%
\footnote{Here we indulge some slight abuse of notation, as the different maps $\M_{(k_1,\ldots,k_n)}^{\to k_{n+1}}$ in the sums have different output spaces (cf.\ Footnote~\ref{footnote:different_output_spaces}). Note however that after tracing out over $A_{k_{n+1}}^I$, all summands in the sums of Eqs.~\eqref{eq:TP_constr_QCCC_1} and~\eqref{eq:TP_constr_QCCC_n} belong to the same spaces.}
$\sum_{k_{n+1}} \M_{(k_1,\ldots,k_n)}^{\to k_{n+1}}$ that must preserve the trace of all possible input states (rather than the internal operations $\M_{n+1}$ in the QC-FO case).

\medskip

Note that these constraints can also similarly be obtained from the alternative description of QC-CCs given in Sec.~\ref{subsec:QCCC_revisit}, by requiring that the global internal circuit operations $\tilde \M_n$ preserve the trace of their global input states (including the control systems), as well as the probabilities for a given order $(k_1,\ldots,k_n)$ (for the thus far applied external operations) to be realised. Let us indeed check that, for consistency.

Suppose that one inputs some state $\rho \in \L(\HS^P)$ into the circuit. Requiring first that $\tilde \M_1$ (with Choi matrix given in Eq.~\eqref{eq:def_tilde_M1_QCCC}) preserves the trace of $\rho$, we must have
\begin{align}
\label{eq:TP_QCCC_calculation1}
& \Tr[\tilde \M_1(\rho)] = \Tr[\rho * \tilde M_1] \notag \\
& \quad = \Tr\big[\big(\rho^T \otimes \id^{\tilde A_1^I\alpha_1C_1}\big) \big( {\textstyle \sum_{k_1}} \tilde M_\emptyset^{\to k_1} \otimes [\![(k_1)]\!]^{C_1} \big)\big] \notag \\
& \quad = \Tr\big[\rho^T \big( {\textstyle \sum_{k_1}} \Tr_{A_{k_1}^I\alpha_1} M_\emptyset^{\to k_1} \big)\big] = \Tr[\rho]
\end{align}
(where the isomorphism between $\HS^{\tilde A_1^I}$ and each $\HS^{A_{k_1}^I}$ allowed us to remove the tildes in the last line, that is, we use that $\Tr_{\tilde A_1^I\alpha_1} \tilde M_\emptyset^{\to k_1} = \Tr_{\tilde A_1^I\alpha_1} [M_\emptyset^{\to k_1} * \dketbra{\id}{\id}^{A^I_{k_1}\tilde A_1^I}] = \Tr_{A_{k_1}^I\alpha_1} M_\emptyset^{\to k_1}$, see Footnote~\ref{footnote:generic_IO_spaces}).
As this must hold for all $\rho \in \L(\HS^P)$, this constraint is indeed equivalent to Eq.~\eqref{eq:TP_constr_QCCC_1}.

For $n = 1, \ldots, N$, the states $\varrho_{(n)}' \in \L(\HS^{\tilde A_n^O\alpha_n C_n'})$ and $\varrho_{(n+1)} \in \L(\HS^{\tilde A_{n+1}^I\alpha_{n+1} C_{n+1}})$ of the global system going through the circuit right before and right after the application of $\tilde \M_{n+1}$,%
\footnote{For $n=N$, one should replace $\tilde \M_{N+1}$ in the definition of $\varrho_{(N+1)}$ by an operation $\tilde \M_{N+1}^*$ that records the final order $(k_1,\ldots,k_N)$ in some control system $C_{N+1}$, i.e., with a Choi matrix
$\tilde M_{N+1}^* = \sum_{(k_1,\ldots,k_N)} \tilde M_{(k_1,\ldots,k_N)}^{\to F} \otimes [\![(k_1,\ldots,k_N)]\!]^{C_N'} \otimes [\![(k_1,\ldots,k_N,F)]\!]^{C_{N+1}} \in \L(\HS^{\tilde A_N^O \alpha_NC_N'FC_{N+1}})$ (with a trivial $\HS^{\alpha_{N+1}}$), instead of $\tilde M_{N+1}$ given by Eq.~\eqref{eq:def_tilde_MN1_QCCC}.} 
respectively, are easily obtained recursively as
\begin{align}
\varrho_{(n)}' & = \rho * \tilde M_1 * \tilde A_1 * \tilde M_2 * \tilde A_2 * \cdots * \tilde M_n * \tilde A_n \notag \\
& = \!\! \sum_{(k_1,\ldots,k_n)} \!\!\! \big( \rho * M_\emptyset^{\to k_1} * A_{k_1} * M_{(k_1)}^{\to k_2} * A_{k_2} * \cdots \notag \\[-3mm]
& \hspace{17mm} \cdots *\! M_{(k_1,\ldots,k_{n-1})}^{\to k_n} \!*\! A_{k_n} * \dketbra{\id}{\id}^{A^O_{k_n}\tilde A_n^O}\big) \notag \\
& \hspace{32mm} \!\otimes\! [\![(k_1,\ldots,k_n)]\!]^{C_n'}
 \label{eq:link_prod_rho_tilde_M1_tilde_An}
\end{align}
and
\begin{align}
\varrho_{(n+1)} & = \rho * \tilde M_1 * \tilde A_1 * \tilde M_2 * \tilde A_2 * \cdots * \tilde M_n * \tilde A_n * \tilde M_{n+1} \notag \\
& = \!\! \sum_{(k_1,\ldots,k_n,k_{n+1})} \!\!\! \big( \rho \!*\! M_\emptyset^{\to k_1} \!*\! A_{k_1} \!*\! M_{(k_1)}^{\to k_2} \!*\! A_{k_2} \!*\!\cdots \notag \\[-3mm]
& \hspace{22mm} \cdots \!*\! M_{(k_1,\ldots,k_{n-1})}^{\to k_n} \!*\! A_{k_n} \!*\! M_{(k_1,\ldots,k_n)}^{\to k_{n+1}} \notag \\[1mm]
& \hspace{45mm}  * \dketbra{\id}{\id}^{A^I_{k_{n+1}}\tilde A_{n+1}^I} \big)  \notag \\[1mm]
& \hspace{25mm} \otimes [\![(k_1,\ldots,k_n,k_{n+1})]\!]^{C_{n+1}}, \label{eq:link_prod_rho_tilde_M1_tilde_Mn1}
\end{align}
where the isomorphism from Footnote~\ref{footnote:generic_IO_spaces} allows us again to use the ``non-tilde'' versions of the internal and external circuit operations, as we did also in Eq.~\eqref{eq:Choi_M_QCCC_v2} of the main text; only the identifications of $\HS^{A^O_{k_n}}$ with $\HS^{\tilde A^O_n}$, and that of $\HS^{A^I_{k_{n+1}}}$ with $\HS^{\tilde A^I_{n+1}}$ need to be maintained here.\footnote{More formally, we use that $\dketbra{\id}{\id}^{A^I_{k_n}\tilde A_n^I} * \tilde A_{k_n} = A_{k_n} * \dketbra{\id}{\id}^{A^O_{k_{n}}\tilde A^O_{n}}$ and $\dketbra{\id}{\id}^{A^O_{k_n}\tilde A_n^O} * \tilde M_{(k_1,\ldots,k_n)}^{\to k_{n+1}} = M_{(k_1,\ldots,k_n)}^{\to k_{n+1}} * \dketbra{\id}{\id}^{A^I_{k_{n+1}}\tilde A^I_{n+1}}$ when recursively establishing Eqs.~\eqref{eq:link_prod_rho_tilde_M1_tilde_An}--\eqref{eq:link_prod_rho_tilde_M1_tilde_Mn1}.}
The probability that a given order $(k_1,\ldots,k_n)$ is realised can be obtained in both cases as $\Tr[(\id^{\tilde A_O^n\alpha_n} \otimes [\![ (k_1,\ldots,k_n) ]\!]^{C_n'})\varrho_{(n)}']$ and $\sum_{k_{n+1}}\Tr[(\id^{\tilde A_{n+1}^I \alpha_{n+1}} \otimes [\![ (k_1,\ldots,k_n,k_{n+1}) ]\!]^{C_{n+1}})\varrho_{(n+1)}]$, with (similarly to Eqs.~\eqref{eq:Tr_before_Mn_QCFO}--\eqref{eq:Tr_after_Mn_QCFO})
\begin{align}
& \Tr\big[(\id^{\tilde A_O^n\alpha_n} \otimes [\![ (k_1,\ldots,k_n) ]\!]^{C_n'})\varrho_{(n)}'\big] \notag \\
& = \Tr\big[\rho * M_\emptyset^{\to k_1} * A_{k_1} * \cdots * M_{(k_1,\ldots,k_{n-1})}^{\to k_n} * A_{k_n} \big] \notag \\
& = \Tr\!\big[\big(\rho \otimes A_{k_1} \otimes \cdots \otimes A_{k_n} \big)^T \notag \\[-1mm]
& \hspace{15mm} \big( \Tr_{\alpha_n} ( M_\emptyset^{\to k_1} * \cdots * M_{(k_1,\ldots,k_{n-1})}^{\to k_n} ) \otimes \id^{A_{k_n}^O} \big) \big] \label{eq:tr_varrho_n_QCCC}
\end{align}
and
\begin{align}
& \sum_{k_{n+1}} \Tr[(\id^{\tilde A_{n+1}^I \alpha_{n+1}} \otimes [\![ (k_1,\ldots,k_n,k_{n+1}) ]\!]^{C_{n+1}})\varrho_{(n+1)}] \notag \\
& = \sum_{k_{n+1}} \Tr\big[\rho * M_\emptyset^{\to k_1} * A_{k_1} * \cdots * A_{k_n} * M_{(k_1,\ldots,k_n)}^{\to k_{n+1}} \big] \notag \\
& = \Tr\!\big[\big(\rho \otimes A_{k_1} \otimes \cdots \otimes A_{k_n} \big)^T \notag \\
& \hspace{12mm} \sum_{k_{n+1}} \Tr_{A_{k_{n+1}}^I\alpha_{n+1}} ( M_\emptyset^{\to k_1} * \cdots * M_{(k_1,\ldots,k_n)}^{\to k_{n+1}} ) \big]. \label{eq:tr_varrho_n_1_QCCC}
\end{align}
For a classical control these probabilities must be preserved---i.e., the expressions in Eqs.~\eqref{eq:tr_varrho_n_QCCC} and~\eqref{eq:tr_varrho_n_1_QCCC} above must be equal---for each (well-defined) order $(k_1,\ldots,k_n)$. (Note that imposing that all these probabilities are preserved also implies that the whole trace of $\varrho_{(n)}'$ and $\varrho_{(n+1)}$ is preserved.) As this must hold for all $\rho \in \L(\HS^P)$ and all $A_k \in \L(\HS^{A_k^{IO}})$, we then obtain the TP conditions of Eqs.~\eqref{eq:TP_constr_QCCC_n} (for $1\le n < N$) and~\eqref{eq:TP_constr_QCCC_N} (for $n=N$, with $k_{N+1} = F$, $\HS^{A_{k_{N+1}}^I} = \HS^F$ and a trivial $\HS^{\alpha_{N+1}}$), as claimed above.

\medskip

To derive the TP conditions for probabilistic QC-CCs, one again follows the exact same reasoning as for the deterministic case, with $M_{(k_1,\ldots,k_N)}^{\to F}$ replaced by $\sum_r M_{(k_1,\ldots,k_N)}^{\to F\,[r]}$.

\subsubsection{Proof of Proposition~\ref{prop:charact_W_QCCC}: Necessary condition}
\label{app:proof_charact_QCCCs_NC}

Consider the process matrix $W = \sum_{(k_1,\ldots,k_N)} W_{(k_1,\ldots,k_{N},F)}$ of a QC-CC, as per Proposition~\ref{prop:descr_W_QCCC}, with the $W_{(k_1,\ldots,k_{N},F)}$'s of the form of Eq.~\eqref{eq:W_k1_kN_QCCC}, and with the Choi matrices $M_{(k_1,\ldots,k_n)}^{\to k_{n+1}} (\succeq 0)$ of the internal circuit operations satisfying the TP conditions of Eqs.~\eqref{eq:TP_constr_QCCC_1}--\eqref{eq:TP_constr_QCCC_N}.

Let us then define, for all $1 \le n \le N$ and all $(k_1,\ldots,k_n)$, the matrices
\begin{align}
W_{(k_1,\ldots,k_n)} \coloneqq & \Tr_{\alpha_n} \big( M_\emptyset^{\to k_1} * M_{(k_1)}^{\to k_2} * \cdots * M_{(k_1,\ldots,k_{n-1})}^{\to k_n} \big) \notag \\[1mm]
& \hspace{10mm} \in \L(\HS^{PA_{\{k_1,\ldots,k_{n-1}\}}^{IO}A_{k_n}^I}).
\end{align}
As all $M_{(k_1,\ldots,k_n)}^{\to k_{n+1}} \succeq 0$, it directly follows that all $W_{(k_1,\ldots,k_n)}$'s (including the $W_{(k_1,\ldots,k_N,k_{N+1})} = W_{(k_1,\ldots,k_N,F)}$'s for $n= N+1$) are also positive semidefinite. Furthermore, the constraints of Eq.~\eqref{eq:charact_W_QCCC_decomp_constr} are simply equivalent to the TP conditions of Eqs.~\eqref{eq:TP_constr_QCCC_1}--\eqref{eq:TP_constr_QCCC_N}, and are thus readily satisfied by assumption.

\subsubsection{Proof of Proposition~\ref{prop:charact_W_QCCC}: Sufficient condition}

Consider a Hermitian matrix $W \in \L(\HS^{PA_\N^{IO} F})$ that admits a decomposition in terms of positive semidefinite matrices $W_{(k_1,\ldots,k_n)} \in \L(\HS^{PA_{\{k_1,\ldots,k_{n-1}\}}^{IO} A_{k_n}^I})$ and $W_{(k_1,\ldots,k_N,F)} \in \L(\HS^{PA_\N^{IO} F})$ satisfying Eqs.~\eqref{eq:charact_W_QCCC_decomp_sum}--\eqref{eq:charact_W_QCCC_decomp_constr}.
We will show that $W$ is the process matrix of a QC-CC by explicitly constructing some internal circuit operations $\M_{(k_1,\ldots,k_n)}^{\to k_{n+1}}$ satisfying the TP conditions of Eqs.~\eqref{eq:TP_constr_QCCC_1}--\eqref{eq:TP_constr_QCCC_N}, as required.

\medskip

The positive semidefinite matrices $W_{(k_1,\ldots,k_n)}$ (for $1 \le n \le N+1$, recalling that $k_{N+1} = F$) admit a spectral decomposition of the form
\begin{align}
W_{(k_1,\ldots,k_n)} = \sum_i \ketbra{w_{(k_1,\ldots,k_n)}^i}{w_{(k_1,\ldots,k_n)}^i},
\end{align}
for some eigenbasis consisting of $r_{(k_1,\ldots,k_n)} \coloneqq \rank W_{(k_1,\ldots,k_n)}$ (nonnormalised and nonzero) orthogonal vectors $\ket{w_{(k_1,\ldots,k_n)}^i} \in \HS^{PA_{\{k_1,\ldots,k_{n-1}\}}^{IO} A_{k_n}^I}$.
Similarly to what we did for the QC-FO case, let us introduce, for each $n = 1, \ldots, N+1$, some ancillary Hilbert space $\HS^{\alpha_n}$ of dimension $r_n \ge \max_{(k_1,\ldots,k_n)} r_{(k_1,\ldots,k_n)}$ with computational basis $\{\ket{i}^{\alpha_n}\}_{i=1}^{r_n}$, and define%
\footnote{Formally, if $r_{(k_1,\ldots,k_n)} < r_n$, in Eq.~\eqref{eq:def_wk1_kn} we complete the set of $r_{(k_1,\ldots,k_n)}$ nonzero vectors $\ket{w_{(k_1,\ldots,k_n)}^i} \neq 0$ by $(r_n - r_{(k_1,\ldots,k_n)})$ null vectors $\ket{w_{(k_1,\ldots,k_n)}^i} = 0$. \label{footnote:complete_null_vectors}}
\begin{align}
\ket{w_{(k_1,\ldots,k_n)}} \coloneqq & \sum_i \ket{w_{(k_1,\ldots,k_n)}^i} \otimes \ket{i}^{\alpha_n} \notag \\[-2mm]
& \hspace{10mm} \in \HS^{PA_{\{k_1,\ldots,k_{n-1}\}}^{IO}A_{k_n}^I\alpha_n}, \label{eq:def_wk1_kn}
\end{align}
such that $W_{(k_1,\ldots,k_n)} = \Tr_{\alpha_n} \ketbra{w_{(k_1,\ldots,k_n)}}{w_{(k_1,\ldots,k_n)}}$.

Similarly again to the QC-FO case, it can be seen here, via Observation~\ref{observation:nec_cond_range}, that the assumption (from Eq.~\eqref{eq:charact_W_QCCC_decomp_constr}) that $\sum_{k_{n+1}} \Tr_{A_{k_{n+1}}^I} \! W_{(k_1,\ldots,k_n,k_{n+1})} = \sum_{k_{n+1}} \Tr_{A_{k_{n+1}}^I\alpha_{n+1}} \!  \ketbra{w_{(k_1,\ldots,k_{n+1})}}{w_{(k_1,\ldots,k_{n+1})}} = W_{(k_1,\ldots,k_n)} \otimes \id^{A_{k_n}^O}$ implies that $\ket{w_{(k_1,\ldots,k_{n+1})}} \in \range(W_{(k_1,\ldots,k_n)}) \otimes \HS^{A_{k_n}^OA_{k_{n+1}}^I\alpha_{n+1}}$.
Recalling again Lemma~\ref{lemma:invert_link_product}, this property ensures that one can relate $\ket{w_{(k_1,\ldots,k_n)}}$ and $\ket{w_{(k_1,\ldots,k_{n+1})}}$ by defining, for $1 \le n \le N$ and for each $(k_1,\ldots,k_n,k_{n+1})$,%
\footnote{As in Footnote~\ref{footnote:explicit_V_n1_QCFO} for the QC-FO case (see also Eq.~\eqref{eq:invert_link_product_explicit} and Footnote~\ref{footnote:invert_link_prod_ortho_decomp}), from the spectral decomposition of $W_{(k_1,\ldots,k_n)}$ in terms of (orthogonal) vectors $\ket{w_{(k_1,\ldots,k_n)}^i}$, we can write more explicitly $\dket{V_{(k_1,\ldots,k_n)}^{\to k_{n+1}}} = \sum_{i: \ket{w_{(k_1,\ldots,k_n)}^i} \neq 0} \ket{i}^{\alpha_n} \otimes \Big( \frac{\bra{w_{(k_1,\ldots,k_n)}^i}}{\braket{w_{(k_1,\ldots,k_n)}^i}{w_{(k_1,\ldots,k_n)}^i}} \otimes \id^{A_{k_n}^O\!A_{k_{n+1}}^I\alpha_{n+1}} \ket{w_{(k_1,\ldots,k_{n+1})}} \Big)$ for all $1 \le n \le N$ and all $(k_1,\ldots,k_n,k_{n+1})$. \label{footnote:explicit_V_n1_QCCC}}
\begin{align}
& \ket{w_{(k_1,\ldots,k_n)}^+} \coloneqq (\bra{w_{(k_1,\ldots,k_n)}} W_{(k_1,\ldots,k_n)}^+ \otimes \id^{\alpha_n})^T \notag \\
& \hspace{30mm} \in \HS^{PA_{\{k_1,\ldots,k_{n-1}\}}^{IO}A_{k_n}^I\alpha_n}, \label{eq:def_wk1_kn_p} \\[1mm]
& \dket{V_{(k_1,\ldots,k_n)}^{\to k_{n+1}}} \coloneqq \ket{w_{(k_1,\ldots,k_n)}^+} * \ket{w_{(k_1,\ldots,k_n,k_{n+1})}} \notag \\
& \hspace{30mm} \in \HS^{A_{k_n}^O\alpha_nA_{k_{n+1}}^I\alpha_{n+1}},
\end{align}
so that
\begin{align}
\ket{w_{(k_1,\ldots,k_n)}} * \dket{V_{(k_1,\ldots,k_n)}^{\to k_{n+1}}} = \ket{w_{(k_1,\ldots,k_n,k_{n+1})}}. \label{eq:QCCC_wk1_kn_wk1_kn1} 
\end{align}
By further defining $\dket{V_\emptyset^{\to k_1}} \coloneqq \ket{w_{(k_1)}} \in \HS^{PA_{k_1}^I\alpha_1}$, one recursively obtains
\begin{align}
\dket{V_\emptyset^{\to k_1}} * \dket{V_{(k_1)}^{\to k_{2}}} * \cdots * \dket{V_{(k_1,\ldots,k_{n-1})}^{\to k_n}} = \ket{w_{(k_1,\ldots,k_n)}} \label{eq:QCCC_rec_Vs}
\end{align}
for all $1 \le n \le N+1$ and all $(k_1,\ldots,k_n)$.

From the double-ket vectors $\dket{V_{(k_1,\ldots,k_{n-1})}^{\to k_n}}$ just introduced we can then define the operators
\begin{align}
& M_{\!(k_1,\ldots,k_{n-1})}^{\to k_n} \!\coloneqq\! \dketbra{V_{\!(k_1,\ldots,k_{n-1})}^{\to k_n}}{V_{\!(k_1,\ldots,k_{n-1})}^{\to k_n}} \ \ \text{for } 1 \!\le\! n \!\le\! N \notag \\[1mm]
& \text{and } M_{(k_1,\ldots,k_N)}^{\to F} \coloneqq \Tr_{\alpha_{N+1}} \dketbra{V_{(k_1,\ldots,k_N)}^{\to F}}{V_{(k_1,\ldots,k_N)}^{\to F}},
\end{align}
such that 
\begin{align}
& \Tr_{\alpha_n} (M_\emptyset^{\to k_1} * M_{(k_1)}^{\to k_2} * \cdots * M_{(k_1,\ldots,k_{n-1})}^{\to k_n}) \notag \\
& \quad = \Tr_{\alpha_n} \ketbra{w_{(k_1,\ldots,k_n)}}{w_{(k_1,\ldots,k_n)}} = W_{(k_1,\ldots,k_n)} \notag \\[2mm]
& \text{and} \quad M_\emptyset^{\to k_1} * M_{(k_1)}^{\to k_2} * \cdots * M_{(k_1,\ldots,k_N)}^{\to F} = W_{(k_1,\ldots,k_N,F)}. \label{eq:QCCC_rec_Ms}
\end{align}
As the $W_{(k_1,\ldots,k_n)}$'s are assumed to satisfy the constraints of Eq.~\eqref{eq:charact_W_QCCC_decomp_constr}, then by construction the (positive semidefinite) $M_{(k_1,\ldots,k_{n-1})}^{\to k_n}$'s satisfy the (equivalent, once Eq.~\eqref{eq:QCCC_rec_Ms} is established) TP conditions of Eqs.~\eqref{eq:TP_constr_QCCC_1}--\eqref{eq:TP_constr_QCCC_N}.
This proves that the operators $M_{(k_1,\ldots,k_{n-1})}^{\to k_n}$ define valid QC-CC internal circuit operations $\M_\emptyset^{\to k_1}: \L(\HS^P) \to \L(\HS^{A_{k_1}^I\alpha_1})$, $\M_{(k_1,\ldots,k_n)}^{\to k_{n+1}}: \L(\HS^{A_{k_n}^O\alpha_n}) \to \L(\HS^{A_{k_{n+1}}^I\alpha_{n+1}})$ for $n=1,\ldots,N-1$, and $\M_{(k_1,\ldots,k_N)}^{\to F}: \L(\HS^{A_{k_N}^O\alpha_N}) \to \L(\HS^F)$.
The last line of Eq.~\eqref{eq:QCCC_rec_Ms} above shows, according to Proposition~\ref{prop:descr_W_QCCC}, that $W = \sum_{(k_1,\ldots,k_N)} W_{(k_1,\ldots,k_N,F)}$ is indeed the process matrix of the QC-CC thus constructed.

\subsubsection{Proof of Proposition~\ref{prop:charact_Wr_probQCCC}}

The proofs above extend again easily to the characterisation of probabilistic QC-CCs, as given by Proposition~\ref{prop:charact_Wr_probQCCC} in Sec.~\ref{sec:probQCCCs}.

For the necessary condition, recall that according to Proposition~\ref{prop:descr_Wr_probQCCC}, a pQC-CC is described by a set of positive semidefinite matrices $W^{[r]}$ obtained indeed as in Eq.~\eqref{eq:charact_Wr_probQCCC}, with the matrices $W_{(k_1,\ldots,k_{N},F)}^{[r]}$ obtained as in Eq.~\eqref{eq:Wr_k1_kN_probQCCC}.
As all matrices $M_{(k_1,\ldots,k_{n-1})}^{\to k_n}$ and $\sum_r M_{(k_1,\ldots,k_N)}^{\to F\,[r]}$ must satisfy the TP conditions of Eqs.~\eqref{eq:TP_constr_QCCC_1}--\eqref{eq:TP_constr_QCCC_N}, then the (positive semidefinite) matrices $W_{(k_1,\ldots,k_n)} \coloneqq \Tr_{\alpha_n} (M_\emptyset^{\to k_1} * M_{(k_1)}^{\to k_2} * \cdots * M_{(k_1,\ldots,k_{n-1})}^{\to k_n})$ and $W_{(k_1,\ldots,k_N,F)} \coloneqq \sum_r W_{(k_1,\ldots,k_N,F)}^{[r]} = M_\emptyset^{\to k_1} * M_{(k_1)}^{\to k_2} * \cdots * M_{(k_1,\ldots,k_{N-1})}^{\to k_N} * \big( \sum_r M_{(k_1,\ldots,k_N)}^{\to F\,[r]} \big)$ are those that enter the decomposition of the process matrix of a QC-CC (see Sec.~\ref{app:proof_charact_QCCCs_NC} above), and therefore must satisfy Eq.~\eqref{eq:charact_W_QCCC_decomp_constr} of Proposition~\ref{prop:charact_W_QCCC}.

Conversely for the sufficient condition, consider a set of positive semidefinite matrices $W^{[r]}$ that can be decomposed in terms of positive semidefinite matrices $W_{(k_1,\ldots,k_n)}$ and $W_{(k_1,\ldots,k_N,F)}^{[r]}$ as per Proposition~\ref{prop:charact_Wr_probQCCC}, and define again the ``extended'' matrix $W' \coloneqq \sum_r W^{[r]} \otimes \ketbra{r}{r}^{F'} \in \L(\HS^{PA_\N^{IO}FF'})$ by introducing an additional Hilbert space $\HS^{F'}$ with computational basis states $\{\ket{r}^{F'}\}_r$.
We note now that $W'$ is the process matrix of a (deterministic) QC-CC with global future space $\HS^{FF'}$,%
\footnote{Note that this would not be true in general if we just assumed that the probabilistic process matrices $W^{[r]}$ sum up to a QC-CC, without imposing the decomposition of Eq.~\eqref{eq:charact_Wr_probQCCC} for each $r$ individually. See also the discussion after Proposition~\ref{prop:charact_Wr_probQCCC}, and the quantum switch example at the end of Sec.~\ref{sec:probQCQCs}. \label{footnote:specific_probQCCC}}
since it has a decomposition as in Proposition~\ref{prop:charact_W_QCCC} (with $W_{(k_1,\ldots,k_N,FF')}' \coloneqq \sum_r W^{[r]}_{(k_1,\ldots,k_N,F)} \otimes \ketbra{r}{r}^{F'}$ and $\Tr_{FF'} W_{(k_1,\ldots,k_N,FF')}' = \Tr_F W_{(k_1,\ldots,k_N,F)}$, which satisfies the corresponding constraints by assumption).
One can therefore construct internal circuit operations $M_{(k_1,\ldots,k_n)}^{\to k_{n+1}}$ and $M_{(k_1,\ldots,k_N)}^{\prime \ \to FF'}$ as in the proof for the sufficient condition of Proposition~\ref{prop:charact_W_QCCC} above, satisfying the TP conditions of Eqs.~\eqref{eq:TP_constr_QCCC_1}--\eqref{eq:TP_constr_QCCC_N}, such that in particular $W_{(k_1,\ldots,k_N,FF')}' = M_\emptyset^{\to k_1} * M_{(k_1)}^{\to k_2} * \cdots * M_{(k_1,\ldots,k_{N-1})}^{\to k_N} * M_{(k_1,\ldots,k_N)}^{\prime \ \to FF'}$. Defining the CP maps $M_{(k_1,\ldots,k_N)}^{\to F\,[r]} \coloneqq M_{(k_1,\ldots,k_N)}^{\prime \ \to FF'} * \ketbra{r}{r}^{F'}$ (whose sum $\sum_r M_{(k_1,\ldots,k_N)}^{\to F\,[r]} = \Tr_{F'} M_{(k_1,\ldots,k_N)}^{\prime \ \to FF'}$ satisfies the required TP condition), we obtain $W^{[r]}_{(k_1,\ldots,k_N,F)} = W_{(k_1,\ldots,k_N,FF')}' * \ketbra{r}{r}^{F'} = M_\emptyset^{\to k_1} * M_{(k_1)}^{\to k_2} * \cdots * M_{(k_1,\ldots,k_{N-1})}^{\to k_N} * M_{(k_1,\ldots,k_N)}^{\to F\,[r]}$, so that each $W^{[r]}$ is indeed of the form of Eqs.~\eqref{eq:Wr_probQCCC}--\eqref{eq:Wr_k1_kN_probQCCC}, which proves, according to Proposition~\ref{prop:descr_Wr_probQCCC}, that $\{W^{[r]}\}_r$ is a pQC-CC.

\subsection{QC-QCs: Proofs of Propositions~\ref{prop:charact_W_QCQC} and~\ref{prop:charact_Wr_probQCQC}}
\label{app:proof_charact_QCQCs}

In this section, we will derive the TP conditions for QC-QCs, and prove Propositions~\ref{prop:charact_W_QCQC} and~\ref{prop:charact_Wr_probQCQC}.

As in the previous section, we define $k_{N+1} \coloneqq F$, $\HS^{A^I_{k_{N+1}}} \coloneqq \HS^F$ and $\HS^{\tilde A^I_{N+1}} \coloneqq \HS^F$, and here also $\HS^{\alpha_{N+1}} \coloneqq \HS^{\alpha_F}$, as for instance in $\dket{V_{\K_N \backslash k_N,k_N}^{\to k_{N+1}}} = \dket{V_{\N \backslash k_N,k_N}^{\to F}} (\in \HS^{A_{k_N}^O \alpha_NA_{k_{N+1}}^I \alpha_{N+1}} = \HS^{A_{k_N}^O \alpha_NF \alpha_F})$ and $W_{(\K_N,k_{N+1})} = W_{(\N,F)} \coloneqq W (\in \L(\HS^{PA_{\K_N}^{IO} A_{k_{N+1}}^I}) = \L(\HS^{PA_\N^{IO} F}))$ (with $\K_N = \N$).

\subsubsection{Trace-preserving conditions}
\label{app:subsubsec_QCQC_TP}

The TP conditions of Eqs.~\eqref{eq:TP_constr_QCQC_1}--\eqref{eq:TP_constr_QCQC_N} are obtained by imposing that each internal circuit operation $\tilde V_n$ in a QC-QC preserves the norm (as we consider pure states and pure operations) of their global input state, involving the target, the ancillary and the control systems.

Suppose that one inputs some state $\ket{\psi} \in \HS^P$ into the circuit. The global state $\ket{\varphi_{(1)}} \in \HS^{\tilde A_1^I \alpha_1C_1}$ right after the first internal circuit operation $\tilde V_1$ is
\begin{align}
\ket{\varphi_{(1)}} & = \ket{\psi} * \dket{\tilde V_1} = \sum_{k_1} \big( \ket{\psi} * \dket{\tilde V_{\emptyset,\emptyset}^{\to k_1}} \big) \otimes \ket{\emptyset,k_1}^{C_1}.
\end{align}
We want its norm to be equal to that of $\ket{\psi}$---i.e., we want
\begin{align}
\braket{\varphi_{(1)}}{\varphi_{(1)}} & = \sum_{k_1} \Tr \big[ \ketbra{\psi}{\psi} * \dketbra{\tilde V_{\emptyset,\emptyset}^{\to k_1}}{\tilde V_{\emptyset,\emptyset}^{\to k_1}} \big] \notag \\[-1mm]
& =\Tr \!\big[ \big(\!\ketbra{\psi}{\psi}\!\big)^{\!T} \!\big( \sum_{k_1} \Tr_{A_{k_1}^I\!\alpha_1}\! \dketbra{V_{\emptyset,\emptyset}^{\to k_1}}{V_{\emptyset,\emptyset}^{\to k_1}} \! \big) \big] \notag \\[-1mm]
& \qquad = \braket{\psi}{\psi} \quad \big( \, = \Tr \big[ \big(\ketbra{\psi}{\psi} \big)^T \big] \, \big)
\end{align}
where, similarly to Eq.~\eqref{eq:TP_QCCC_calculation1}, we removed the tildes by using the appropriate isomorphism (see Footnote~\ref{fn:QCQCtilde_translation}).

As this must hold for all $\ket{\psi} \in \HS^P$, this constraint is indeed equivalent to Eq.~\eqref{eq:TP_constr_QCQC_1} (with $\ket{w_{(\emptyset,k_1)}} = \ket{w_{(k_1)}} = \dket{V_{\emptyset,\emptyset}^{\to k_1}}$).

For $n = 1, \ldots, N$, the states $\ket{\varphi_{(n)}'} \in \HS^{\tilde A_n^O\alpha_n C_n'}$ and $\ket{\varphi_{(n+1)}} \in \HS^{\tilde A_{n+1}^I\alpha_{n+1} C_{n+1}}$ of the global system going through the circuit right before and right after the application of $\tilde V_{n+1}$, respectively, are easily obtained recursively, and can be expressed---by rearranging the sums, introducing the vectors $\ket{\psi,A_{\K_n}} \coloneqq \ket{\psi} \bigotimes_{k \in \K_n} \dket{A_k} \in \HS^{PA_{\K_n}^{IO}}$ and using the vectors $\ket{w_{(\K_{n-1},k_n)}} \in \HS^{PA_{\K_{n-1}}^{IO} A_{k_n}^I\alpha_n}$ defined in Eqs.~\eqref{eq:def_w_Knm1_kn} and~\eqref{eq:W_QCQC}---as
\begin{align}
& \ket{\varphi_{(n)}'} = \ket{\psi} \!*\! \dket{\tilde V_1} \!*\! \dket{\tilde A_1} \!*\! \dket{\tilde V_2} \!*\! \dket{\tilde A_2} \!*\! \cdots \!*\! \dket{\tilde V_n} \!*\! \dket{\tilde A_n} \notag \\[2mm]
& \quad = \sum_{(k_1,\ldots,k_n)} \!\! \big(\ket{\psi} \!*\! \dket{V_{\emptyset,\emptyset}^{\to k_1}} \!*\! \dket{A_{k_1}} \!*\! \dket{V_{\emptyset,k_1}^{\to k_2}} \!*\! \dket{A_{k_2}} \!*\! \cdots \notag \\[-3mm]
& \hspace{20mm} \cdots \!*\! \dket{V_{\{k_1,\ldots,k_{n-2}\},k_{n-1}}^{\to k_n}} \!*\! \dket{A_{k_n}} * \dket{\id}^{A^O_{k_n}\tilde A^O_n} \big) \notag \\
& \hspace{43mm} \otimes \ket{\{k_1,\ldots,k_{n-1}\},k_n}^{C_n'} \notag \\[2mm]
& \quad = \!\!\! \sum_{\substack{\K_n, \\ (k_1,\ldots,k_n) \in \K_n}} \!\!\!\!\!\!\!\!\! \ket{\psi,A_{\K_n}} * \big( \dket{V_{\emptyset,\emptyset}^{\to k_1}} * \dket{V_{\emptyset,k_1}^{\to k_2}} * \cdots \notag \\[-6mm]
& \hspace{30mm} \cdots * \dket{V_{\{k_1,\ldots,k_{n-2}\},k_{n-1}}^{\to k_n}} \big) * \dket{\id}^{A^O_{k_n}\tilde A^O_n}  \notag \\
& \hspace{43mm} \otimes \ket{\{k_1,\ldots,k_{n-1}\},k_n}^{C_n'} \notag \\[2mm]
& \quad = \sum_{\K_n, k_n \in \K_n} \!\!\! \big( \ket{\psi,A_{\K_n}} \!*\! \ket{w_{(\K_n\backslash k_n,k_n)}} * \dket{\id}^{A^O_{k_n}\tilde A^O_n} \big)\notag \\[-3mm]
&\hspace{43mm} \otimes \ket{\K_n\backslash k_n,k_n}^{C_n'} \label{eq:link_prod_psi_tilde_V1_tilde_An}
\end{align}
and
\begin{align}
& \ket{\varphi_{(n+1)}} = \ket{\psi} \!*\! \dket{\tilde V_1} \!*\! \dket{\tilde A_1} \!*\! \cdots \!*\! \dket{\tilde V_n} \!*\! \dket{\tilde A_n} \!*\! \dket{\tilde V_{n+1}} \notag \\[1mm]
& \quad = \sum_{(k_1,\ldots,k_n,k_{n+1})} \!\! \big(\ket{\psi} \!*\! \dket{V_{\emptyset,\emptyset}^{\to k_1}} \!*\! \dket{A_{k_1}} \!*\! \cdots \notag \\[-1mm]
& \hspace{16mm} \cdots \!*\! \dket{A_{k_n}} \!*\! \dket{V_{\{k_1,\ldots,k_{n-1}\},k_n}^{\to k_{n+1}}} * \dket{\id}^{A^I_{k_{n+1}}\tilde A^I_{n+1}} \big) \notag \\[1mm]
& \hspace{43mm} \otimes \ket{\{k_1,\ldots,k_n\},k_{n+1}}^{C_{n+1}} \notag \\[2mm]
& \quad = \! \sum_{\substack{\K_n, (k_1,\ldots,k_n) \in \K_n, \\ k_{n+1} \notin \K_n}} \!\!\! \ket{\psi,A_{\K_n}} * \big( \dket{V_{\emptyset,\emptyset}^{\to k_1}} * \dket{V_{\emptyset,k_1}^{\to k_2}} * \cdots \notag \\[-3mm]
& \hspace{26mm} \cdots * \dket{V_{\{k_1,\ldots,k_{n-1}\},k_n}^{\to k_{n+1}}} \big) * \dket{\id}^{A^I_{k_{n+1}}\tilde A^I_{n+1}} \notag \\[1mm]
& \hspace{43mm} \otimes \ket{\{k_1,\ldots,k_n\},k_{n+1}}^{C_{n+1}} \notag \\[2mm]
& \quad = \!\! \sum_{\K_n, k_{n+1} \notin \K_n} \!\!\! \big( \ket{\psi,A_{\K_n}} \!*\! \ket{w_{(\K_n,k_{n+1})}} * \dket{\id}^{A^I_{k_{n+1}}\tilde A^I_{n+1}} \big) \notag \\[-2mm]
& \hspace{43mm} \otimes \ket{\K_n,k_{n+1}}^{C_{n+1}} \label{eq:link_prod_psi_tilde_V1_tilde_Vn1}
\end{align}
(where the sums $\sum_{\K_n}$ are over all subsets $\K_n$ of $\N$ such that $|\K_n| = n$). 
From the second lines in Eqs.~\eqref{eq:link_prod_psi_tilde_V1_tilde_An}--\eqref{eq:link_prod_psi_tilde_V1_tilde_Vn1}, we again removed the tildes using the appropriate isomorphism (similarly to Eqs.~\eqref{eq:link_prod_rho_tilde_M1_tilde_An}--\eqref{eq:link_prod_rho_tilde_M1_tilde_Mn1} above, and as in Eq.~\eqref{eq:Choi_V_QCQC} in the main text).

The squared norms of $\ket{\varphi_{(n)}'}$ and $\ket{\varphi_{(n+1)}}$ are then
\begin{align}
& \braket{\varphi_{(n)}'}{\varphi_{(n)}'} \notag \\
& = \!\!\!\!\! \sum_{\K_n, k_n \in \K_n} \!\!\!\!\!\! \Tr\big[ \ketbra{\psi,\!A_{\K_n}}{\psi,\!A_{\K_n}} \!*\! \ketbra{w_{(\K_n\backslash k_n,k_n)}}{w_{(\K_n\backslash k_n,k_n)}} \notag \\[-4mm]
&\hspace{62mm} * \dketbra{\id}{\id}^{A^O_{k_n}\tilde A^O_n} \!\big] \notag \\[1mm]
& = \sum_{\K_n} \Tr\big[ \big( \ketbra{\psi,A_{\K_n}}{\psi,A_{\K_n}} \big)^T \notag \\[-3mm]
& \hspace{17mm} \big( \!\! \sum_{k_n \in \K_n} \!\!\! \Tr_{\alpha_n}\! \ketbra{w_{(\K_n\backslash k_n,k_n)}}{w_{(\K_n\backslash k_n,k_n)}} \!\otimes\! \id^{A_{k_n}^O} \big) \big] \label{eq:varphi_n_varphi_n}
\end{align}
and
\begin{align}
& \braket{\varphi_{(n+1)}}{\varphi_{(n+1)}} \notag \\
& = \!\!\!\!\! \sum_{\K_n, k_{n+1} \notin \K_n} \!\!\!\!\!\! \Tr\big[ \ketbra{\psi,\!A_{\K_n}}{\psi,\!A_{\K_n}} \!*\! \ketbra{w_{(\K_n,k_{n+1})}}{w_{(\K_n,k_{n+1})}} \notag \\[-4mm]
&\hspace{57mm}* \dketbra{\id}{\id}^{A^I_{k_{n+1}}\tilde A^I_{n+1}} \big] \notag \\[1mm]
& = \sum_{\K_n} \Tr\big[ \big( \ketbra{\psi,A_{\K_n}}{\psi,A_{\K_n}} \big)^T \notag \\[-3mm]
& \hspace{16mm} \big( \!\! \sum_{k_{n+1} \notin \K_n} \!\!\!\! \Tr_{A_{k_{n+1}}^I\alpha_{n+1}}\! \ketbra{w_{(\K_n,k_{n+1})}}{w_{(\K_n,k_{n+1})}} \big) \big]. \label{eq:varphi_n1_varphi_n1}
\end{align}
We require these norms to be the same, for all possible $\ket{\psi}$ and all $A_k$'s.
Let us take, for a given $\K_n$ with $|\K_n|=n$, all $A_{k'} = 0$ for all $k' \notin \K_n$. The sums $\sum_{\K_n}$ in Eqs.~\eqref{eq:varphi_n_varphi_n} and~\eqref{eq:varphi_n1_varphi_n1} above then reduce to just the single term corresponding to that particular $\K_n$. Hence, the equality of Eqs.~\eqref{eq:varphi_n_varphi_n} and~\eqref{eq:varphi_n1_varphi_n1} must in fact hold for each $\K_n$ individually (and not just for their sums):
\begin{align}
& \Tr\big[ \big(\ketbra{\psi,A_{\K_n}}{\psi,A_{\K_n}}\big)^T \notag \\
& \hspace{10mm} \big( \sum_{k_{n+1} \notin \K_n} \!\! \Tr_{A_{k_{n+1}}^I\alpha_{n+1}} \ketbra{w_{(\K_n,k_{n+1})}}{w_{(\K_n,k_{n+1})}} \big) \big] \notag \\
& = \Tr\big[ \big(\ketbra{\psi,A_{\K_n}}{\psi,A_{\K_n}}\big)^T \notag \\
& \hspace{10mm} \big( \sum_{k_n \in \K_n} \!\! \Tr_{\alpha_n} \ketbra{w_{(\K_n\backslash k_n,k_n)}}{w_{(\K_n\backslash k_n,k_n)}} \otimes \id^{A_{k_n}^O} \big) \big]. 
\end{align}
As this must hold for the $\ketbra{\psi,A_{\K_n}}{\psi,A_{\K_n}}$'s spanning the whole spaces $\L(\HS^{PA_{\K_n}^{IO}})$, this implies that one must have, for all $\K_n$,
\begin{align}
& \sum_{k_{n+1} \notin \K_n} \!\! \Tr_{A_{k_{n+1}}^I\alpha_{n+1}} \ketbra{w_{(\K_n,k_{n+1})}}{w_{(\K_n,k_{n+1})}} \notag \\
& \hspace{3mm} = \!\! \sum_{k_n \in \K_n} \!\!\! \Tr_{\alpha_n} \! \ketbra{w_{(\K_n\backslash k_n,k_n)}}{w_{(\K_n\backslash k_n,k_n)}} \otimes \id^{A_{k_n}^O},
\end{align}
which indeed gives the TP conditions of Eqs.~\eqref{eq:TP_constr_QCQC_n} (for $1 \le n < N$) and~\eqref{eq:TP_constr_QCQC_N} (for $n = N$, with $k_{N+1} = F$, $\HS^{A_{k_{N+1}}^I} = \HS^F$, $\alpha_{N+1} = \alpha_F$ and $\ket{w_{(\K_N,k_{N+1})}} = \ket{w_{(\N,F)}}$).

\medskip

For the case of probabilistic QC-QCs, the only difference is that we impose $\sum_r \braket{\varphi_{(N+1)}^{[r]}}{\varphi_{(N+1)}^{[r]}} = \braket{\varphi_{(N)}'}{\varphi_{(N)}'}$, where $\ket{\varphi_{(N+1)}^{[r]}} = \ket{\psi} * \dket{\tilde V_1} * \dket{\tilde A_1} * \cdots * \dket{\tilde V_N} * \dket{\tilde A_N} * \dket{\tilde V_{N+1}^{[r]}}$ is the (unnormalised) state of the global system after the last internal operation $\tilde V_{N+1}^{[r]}$, corresponding to the classical outcome $r$ of the probabilistic circuit. The same reasoning as above leads to the same constraint as Eq.~\eqref{eq:TP_constr_QCQC_N}, with $\Tr_{F\alpha_F} \ketbra{w_{(\N,F)}}{w_{(\N,F)}}$ replaced by $\sum_r \Tr_{F\alpha_F} \ketbra{w_{(\N,F)}^{[r]}}{w_{(\N,F)}^{[r]}}$.

\subsubsection{Proof of Proposition~\ref{prop:charact_W_QCQC}: Necessary condition}
\label{app:proof_charact_QCQCs_NC}

Consider the process matrix $W \!=\! \Tr_{\alpha_F} \!\ketbra{w_{(\N\!,F)}}{w_{(\N\!,F)}}$ with $\ket{w_{(\N,F)}} = \sum_{(k_1,\ldots,k_N)} \ket{w_{(k_1,\ldots,k_N,F)}}$ of a QC-QC, as per Proposition~\ref{prop:descr_W_QCQC}, with $\ket{w_{(k_1,\ldots,k_N,F)}} \in \HS^{PA_\N^{IO} F\alpha_{F}}$ of the form of Eq.~\eqref{eq:def_w_k1_kN_F_QCQC}, and with the internal circuit operations $V_{\K_{n-1},k_n}^{\to k_{n+1}}$ satisfying the TP conditions of Eqs.~\eqref{eq:TP_constr_QCQC_1}--\eqref{eq:TP_constr_QCQC_N}, written in terms of the vectors $\ket{w_{(\K_{n-1},k_n)}}$ defined in Eq.~\eqref{eq:def_w_Knm1_kn}.

Let us then define, for all $1 \le n \le N$, all subsets $\K_{n-1}$ of $\N$ with $|\K_{n-1}|=n-1$ and all $k_n \in \N \backslash \K_{n-1}$, the matrices
\begin{align}
W_{(\K_{n-1},k_n)} \coloneqq & \Tr_{\alpha_n} \ketbra{w_{(\K_{n-1},k_n)}}{w_{(\K_{n-1},k_n)}} \notag \\[1mm]
& \hspace{20mm} \in \L(\HS^{PA_{\K_{n-1}}^{IO}A_{k_n}^I}).
\end{align}
It is clear, from their definition, that all $W_{(\K_{n-1},k_n)}$'s are positive semidefinite.
Furthermore, the constraints of Eq.~\eqref{eq:charact_W_QCQC_decomp} are simply equivalent to the TP conditions of Eqs.~\eqref{eq:TP_constr_QCQC_1}--\eqref{eq:TP_constr_QCQC_N}, and are thus readily satisfied by assumption.

\subsubsection{Proof of Proposition~\ref{prop:charact_W_QCQC}: Sufficient condition}

Consider a Hermitian matrix $W \in \L(\HS^{PA_\N^{IO} F})$ such that there exist positive semidefinite matrices $W_{(\K_{n-1},k_n)} \in \L(\HS^{PA_{\K_{n-1}}^{IO} A_{k_n}^I})$ for all $\K_{n-1} \subsetneq \N$ and $k_n \in \N\backslash\K_{n-1}$ satisfying Eq.~\eqref{eq:charact_W_QCQC_decomp}.
As in the cases above, we will show that $W$ is the process matrix of a QC-QC by explicitly constructing now some internal circuit operations $V_{\emptyset,\emptyset}^{\to k_1}$, $V_{\K_{n-1},k_n}^{\to k_{n+1}}$ and $V_{\N \backslash k_N,k_N}^{\to F}$ satisfying the TP conditions of Eqs.~\eqref{eq:TP_constr_QCQC_1}--\eqref{eq:TP_constr_QCQC_N}, as required.

\medskip

The positive semidefinite matrices $W_{(\K_{n-1},k_n)}$ (for $1 \le n \le N+1$, with $k_{N+1} \coloneqq F$ and $W_{(\N,F)} \coloneqq W$) admit a spectral decomposition of the form
\begin{align}
W_{(\K_{n-1},k_n)} = \sum_i \ketbra{w_{(\K_{n-1},k_n)}^i}{w_{(\K_{n-1},k_n)}^i}, \label{eq:spectral_decomp_W_Kn1_kn}
\end{align}
for some eigenbasis consisting of $r_{(\K_{n-1},k_n)} \coloneqq \rank W_{(\K_{n-1},k_n)}$ (nonnormalised and nonzero) orthogonal vectors $\ket{w_{(\K_{n-1},k_n)}^i} \in \HS^{PA_{\K_{n-1}}^{IO} A_{k_n}^I}$.
Similarly to what we did for the previous cases, let us introduce, for each $n = 1, \ldots, N+1$, some ancillary Hilbert space $\HS^{\alpha_n}$ of dimension $r_n \ge \max_{(\K_{n-1},k_n)} r_{(\K_{n-1},k_n)}$ with computational basis $\{\ket{i}^{\alpha_n}\}_{i=1}^{r_n}$, and define%
\footnote{As previously (see  Footnote~\ref{footnote:complete_null_vectors}), if $r_{(\K_{n-1},k_n)} < r_n$, in Eq.~\eqref{eq:def_wKn1_kn} we complete the set of $r_{(\K_{n-1},k_n)}$ nonzero vectors $\ket{w_{(\K_{n-1},k_n)}^i} \neq 0$ by $(r_n - r_{(\K_{n-1},k_n)})$ null vectors.}
\begin{align}
\ket{w_{(\K_{n-1},k_n)}} \coloneqq & \sum_i \ket{w_{(\K_{n-1},k_n)}^i} \otimes \ket{i}^{\alpha_n} \notag \\[-1mm]
& \hspace{15mm} \in \HS^{PA_{\K_{n-1}}^{IO}A_{k_n}^I\alpha_n}, \label{eq:def_wKn1_kn}
\end{align}
such that $W_{(\K_{n-1},k_n)} = \Tr_{\alpha_n} \ketbra{w_{(\K_{n-1},k_n)}}{w_{(\K_{n-1},k_n)}}$.

Contrary to the previous two cases, we do not want to relate $\ket{w_{(\K_{n-1},k_n)}}$ and $\ket{w_{(\K_n,k_{n+1})}}$ directly by a link product (as in Eqs.~\eqref{eq:QCFO_wn_wn1} and~\eqref{eq:QCCC_wk1_kn_wk1_kn1}), but via a sum as in Eq.~\eqref{eq:w_Kn_kn1_sum}. 
Some more preliminary work is therefore required.
To this aim, let us introduce some Hilbert space $\HS^{O'}$ isomorphic to any $\HS^{A_k^O}$ (which we assumed to all be isomorphic) and some $N$-dimensional Hilbert space $\Gamma$ with computational basis $\{\ket{k}^\Gamma\}_{k\in\N}$, and let us define, for each nonempty subset $\K_n$ of $\N$,
\begin{align}
& \ket{\omega_{\K_n}} \coloneqq \sum_{k_n\in\K_n} \ket{w_{(\K_n\backslash k_n,k_n)}} \otimes \dket{\id}^{A_{k_n}^OO'} \otimes \ket{k_n}^\Gamma \notag \\[-1mm]
& \hspace{30mm} \in \HS^{PA_{\K_n}^{IO}\alpha_nO'\Gamma}, \label{eq:def_omega_Kn}
\end{align}
such that $\Omega_{\K_n} \coloneqq \Tr_{\alpha_nO'\Gamma} \ketbra{\omega_{\K_n}}{\omega_{\K_n}} = \sum_{k_n} \Tr_{\alpha_n} \ketbra{w_{(\K_n\backslash k_n,k_n)}}{w_{(\K_n\backslash k_n,k_n)}} \otimes \id^{A_{k_n}^O} = \sum_{k_n} W_{(\K_n\backslash k_n,k_n)} \otimes \id^{A_{k_n}^O} \in \L(\HS^{PA_{\K_n}^{IO}})$.

As before it can be seen here, via Observation~\ref{observation:nec_cond_range}, that the assumption (from Eq.~\eqref{eq:charact_W_QCQC_decomp}) that $\sum_{k_{n+1}} \Tr_{A_{k_{n+1}}^I} W_{(\K_n,k_{n+1})} = \sum_{k_{n+1}} \Tr_{A_{k_{n+1}}^I\alpha_{n+1}}  \ketbra{w_{(\K_n,k_{n+1})}}{w_{(\K_n,k_{n+1})}} = \sum_{k_n} W_{(\K_n \backslash k_n,k_n)}\otimes \id^{A_{k_n}^O} = \Omega_{\K_n}$ implies that $\ket{w_{(\K_n,k_{n+1})}} \in \range(\Omega_{\K_n}) \otimes \HS^{A_{k_{n+1}}^I\alpha_{n+1}}$.
Using once again Lemma~\ref{lemma:invert_link_product}, this ensures that one can relate $\ket{\omega_{\K_n}}$ and $\ket{w_{(\K_n,k_{n+1})}}$ by defining, for each $\K_n$ and $k_{n+1} \notin \K_n$,
\begin{align}
\ket{\omega_{\K_n}^+} & \coloneqq (\bra{\omega_{\K_n}} \Omega_{\K_n}^+ \otimes \id^{\alpha_nO'\Gamma})^T \in \HS^{PA_{\K_n}^{IO}\alpha_nO'\Gamma}, \label{eq:def_omega_Kn_p} \\[1mm]
\ket{V_{\K_n}^{k_{n+1}}} & \coloneqq \ket{\omega_{\K_n}^+} * \ket{w_{(\K_n,k_{n+1})}} \in \HS^{\alpha_nO'\Gamma A_{k_{n+1}}^I\alpha_{n+1}},
\end{align}
so that
\begin{align}
\ket{\omega_{\K_n}} * \ket{V_{\K_n}^{k_{n+1}}} = \ket{w_{(\K_n,k_{n+1})}}. \label{eq:inv_omega_Kn}
\end{align}
From $\ket{V_{\K_n}^{k_{n+1}}}$ thus obtained, let us then define%
\footnote{More explicitly, using the form of Eq.~\eqref{eq:invert_link_product_explicit} for $\ket{V_{\K_n}^{k_{n+1}}}$, one obtains $\dket{V_{\K_n\backslash k_n,k_n}^{\to k_{n+1}}} = \big[ \sum_i \ket{i}^{\alpha_n} \otimes \big( \bra{w_{(\K_n\backslash k_n,k_n)}^i} \otimes \id^{A_{k_n}^O} \big) \big] \Omega_{\K_n}^+ \otimes \id^{A_{k_{n+1}}^I\alpha_{n+1}} \ket{w_{(\K_n,k_{n+1})}}$.
(Note that one cannot in general further simplify this expression as in Footnotes~\ref{footnote:invert_link_prod_ortho_decomp}, \ref{footnote:explicit_V_n1_QCFO} and~\ref{footnote:explicit_V_n1_QCCC}, because for a given $\K_n$, $\Omega_{\K_n}^+$ is generally not simply expressed in terms of the $\ket{w_{(\K_n\backslash k_n,k_n)}^i}$'s, and the vectors $\ket{\omega_{\K_n}^{i,j,k_n}} \coloneqq (\id^{PA_{\K_n}^{IO}} \otimes \bra{i}^{\alpha_n} \bra{j}^{O'} \bra{k_n}^{\Gamma}) \ket{\omega_{\K_n}}=\ket{w_{(\K_n\backslash k_n,k_n)}^i}\ket{j}^{A_{k_n}^O}$---which play the role of the $\ket{a_i}$'s, e.g., in Eq.~\eqref{eq:invert_link_product_explicit}---are not necessarily orthogonal.)}
\begin{align}
& \dket{V_{\K_n\backslash k_n,k_n}^{\to k_{n+1}}} \coloneqq \big(\dket{\id}^{A_{k_n}^OO'} \otimes \ket{k_n}^\Gamma \big) * \ket{V_{\K_n}^{k_{n+1}}} \notag \\
& \hspace{25mm} \in \HS^{A_{k_n}^O\alpha_nA_{k_{n+1}}^I\alpha_{n+1}}. \label{eq:V_Knkn_kn_kn1}
\end{align}
With this, and using the definition of Eq.~\eqref{eq:def_omega_Kn}, Eq.~\eqref{eq:inv_omega_Kn} gives
\begin{align}
\sum_{k_n\in\K_n} \ket{w_{(\K_n\backslash k_n,k_n)}} * \dket{V_{\K_n\backslash k_n,k_n}^{\to k_{n+1}}} = \ket{w_{(\K_n,k_{n+1})}}. \label{eq:sum_kn_link_prod_wV}
\end{align}
By further defining $\dket{V_{\emptyset,\emptyset}^{\to k_1}} \coloneqq \ket{w_{(\emptyset,k_1)}} \in \HS^{PA_{k_1}^I\alpha_1}$ and
\begin{align}
& \ket{w_{(k_1,\ldots,k_n)}} \coloneqq \dket{V_{\emptyset,\emptyset}^{\to k_1}} * \dket{V_{\emptyset,k_1}^{\to k_2}} * \dket{V_{\{k_1\},k_2}^{\to k_3}} * \cdots \notag \\
& \hspace{37mm} \cdots * \dket{V_{\{k_1,\ldots,k_{n-2}\},k_{n-1}}^{\to k_n}} \label{eq:def_wknm1_wkn}
\end{align}
for all $(k_1,\ldots,k_n)$, one recursively obtains from Eq.~\eqref{eq:sum_kn_link_prod_wV}
\begin{align}
& \ket{w_{(\K_n,k_{n+1})}} = \sum_{\substack{(k_1,\ldots,k_n): \\ \{k_1,\ldots,k_n\} = \K_n}} \!\!\!\!\! \ket{w_{(k_1,\ldots,k_n,k_{n+1})}} \label{eq:QCQC_rec_Vs}
\end{align}
for all $0 \le n \le N$ and all $\K_n$, $k_{n+1}$, as desired (see Eq.~\eqref{eq:def_w_Knm1_kn}).

Recall that the matrices $W_{(\K_{n-1},k_n)} = \Tr_{\alpha_n} \ketbra{w_{(\K_{n-1},k_n)}}{w_{(\K_{n-1},k_n)}}$ are assumed to satisfy Eq.~\eqref{eq:charact_W_QCQC_decomp}. This implies that the operations $V_{\emptyset,\emptyset}^{\to k_1}$, $V_{\K_n \backslash k_n,k_n}^{\to k_{n+1}}$ and $V_{\N \backslash k_N,k_N}^{\to F}$ constructed above (via their Choi representations) satisfy the (equivalent, once Eq.~\eqref{eq:QCQC_rec_Vs} is established, with $\ket{w_{(k_1,\ldots,k_n)}}$ defined in Eq.~\eqref{eq:def_wknm1_wkn}) TP conditions of Eqs.~\eqref{eq:TP_constr_QCQC_1}--\eqref{eq:TP_constr_QCQC_N}, and thus define valid internal circuit operations for a QC-QC.

All in all, we thus find that
\begin{align}
W = W_{(\N,F)} = \Tr_{\alpha_F} \ketbra{w_{(\N,F)}}{w_{(\N,F)}}
\end{align}
with $\alpha_F \!\coloneqq\! \alpha_{N+1}$, $\ket{w_{(\N\!,F)}} \!=\! \sum_{(k_1\!,\ldots,k_N\!)} \! \ket{w_{(k_1\!,\ldots,k_N\!,F)}}$ according to Eq.~\eqref{eq:QCQC_rec_Vs} (for $n = N+1$, with again $k_{N+1} \coloneqq F$), and with the $\ket{w_{(k_1,\ldots,k_N,F)}}$'s defined by Eq.~\eqref{eq:def_wknm1_wkn}---as in Proposition~\ref{prop:descr_W_QCQC}.
This proves that $W$ is indeed the process matrix of the QC-QC thus constructed.

\subsubsection{Proof of Proposition~\ref{prop:charact_Wr_probQCQC}}

Once again, the proofs above extend easily to the characterisation of probabilistic QC-QCs, as given by Proposition~\ref{prop:charact_Wr_probQCQC} in Sec.~\ref{sec:probQCQCs}, in an analogous way to the proofs for pQC-FOs and pQC-CCs.

For the necessary condition, consider a pQC-QC $\{W^{[r]}\}_r$---i.e., according to Proposition~\ref{prop:descr_Wr_probQCCC}, a set of matrices $W^{[r]}$ of the form of Eq.~\eqref{eq:Wr_probQCQC}, with the operations $V_{\emptyset,\emptyset}^{\to k_1}$, $V_{\K_{n-1},k_n}^{\to k_{n+1}}$ and $V_{\N \backslash k_N,k_N}^{\to F \, [r]}$ satisfying the TP conditions of Eqs.~\eqref{eq:TP_constr_QCQC_1}--\eqref{eq:TP_constr_QCQC_n} and~\eqref{eq:TP_constr_probQCQC_N}.
Introducing some additional ancillary system $\alpha_F'$ with computational basis states $\ket{r}^{\alpha_F'}$, and defining $V_{\N \backslash k_N,k_N}^{\to F} \coloneqq \sum_r V_{\N \backslash k_N,k_N}^{\to F \, [r]} \otimes \ket{r}^{\alpha_F'}$ and $\ket{w_{(\N,F)}} \coloneqq \sum_r \ket{w_{(\N,F)}^{[r]}} \otimes \ket{r}^{\alpha_F'}$, one can then write $\sum_r W^{[r]} = \sum_r \Tr_{\alpha_F}\ketbra{w_{(\N,F)}^{[r]}}{w_{(\N,F)}^{[r]}} = \Tr_{\alpha_F\alpha_F'}\ketbra{w_{(\N,F)}}{w_{(\N,F)}}$, with $\ket{w_{(\N,F)}}$ of the form of Eq.~\eqref{eq:W_QCQC}, and with the operations $V_{\emptyset,\emptyset}^{\to k_1}$, $V_{\K_{n-1},k_n}^{\to k_{n+1}}$ and $V_{\N \backslash k_N,k_N}^{\to F}$ now satisfying the appropriate TP conditions~\eqref{eq:TP_constr_QCQC_1}--\eqref{eq:TP_constr_QCQC_N} for a QC-QC (with $\alpha_F$ replaced by $\alpha_F\alpha_F'$). This shows, according to Proposition~\ref{prop:descr_W_QCQC}, that $\sum_r W^{[r]}$ is indeed the process matrix of a QC-QC.

Conversely for the sufficient condition, consider a set of positive semidefinite matrices $\{W^{[r]} \in \L(\HS^{PA_\N^{IO}F}\}_r$ whose sum is the process matrix of a QC-QC. As we did before, let us define the ``extended'' matrix $W' \coloneqq \sum_r W^{[r]} \otimes \ketbra{r}{r}^{F'} \in \L(\HS^{PA_\N^{IO}FF'})$ by introducing an additional Hilbert space $\HS^{F'}$ with computational basis states $\ket{r}^{F'}$.
The decomposition of $\sum_r W^{[r]}$ in terms of positive semidefinite matrices $W_{(\K_{n-1},k_n)}$ obtained from Proposition~\ref{prop:charact_W_QCQC} readily gives essentially the same decomposition for $W'$, which also satisfies the required constraints (with just $F$ replaced by $FF'$)--which proves that $W'$ is also the process matrix of a QC-QC. According to the proof in the previous subsection, we can thus construct internal circuit operations $V_{\emptyset,\emptyset}^{\to k_1}$, $V_{\K_{n-1},k_n}^{\to k_{n+1}}$ (satisfying the TP conditions of Eqs.~\eqref{eq:TP_constr_QCQC_1}--\eqref{eq:TP_constr_QCQC_n}) and $V_{\N \backslash k_N,k_N}^{\to FF'}$, such that $W' = \Tr_{\alpha_F}\ketbra{w_{(\N,FF')}}{w_{(\N,FF')}}$ with $\ket{w_{(\N,FF')}} \coloneqq \sum_{(k_1,\ldots,k_N)} \dket{V_{\emptyset,\emptyset}^{\to k_1}} * \dket{V_{\emptyset,k_1}^{\to k_2}} \cdots * \dket{V_{\{k_1,\ldots,k_{N-2}\},k_{N-1}}^{\to k_N}} * \dket{V_{\{k_1,\ldots,k_{N-1}\},k_N}^{\to FF'}}$ satisfying the TP condition of Eq.~\eqref{eq:TP_constr_QCQC_N} (with $F$ replaced by $FF'$).
Defining (via their Choi representation) the operations $\dket{V_{\{k_1,\ldots,k_{N-1}\},k_N}^{\to F \, [r]}} \coloneqq \dket{V_{\{k_1,\ldots,k_{N-1}\},k_N}^{\to FF'}} * \ket{r}^{F'}$ and from these the vectors $\ket{w_{(\N,F)}^{[r]}} \coloneqq \sum_{(k_1,\ldots,k_N)} \dket{V_{\emptyset,\emptyset}^{\to k_1}} * \dket{V_{\emptyset,k_1}^{\to k_2}} \cdots * \dket{V_{\{k_1,\ldots,k_{N-2}\},k_{N-1}}^{\to k_N}} * \dket{V_{\{k_1,\ldots,k_{N-1}\},k_N}^{\to F \, [r]}} = \ket{w_{(\N,FF')}} * \ket{r}^{F'}$ (which now satisfy the TP condition of Eq.~\eqref{eq:TP_constr_probQCQC_N}), we find that each $W^{[r]} = W' * \ketbra{r}{r}^{F'} = \Tr_{\alpha_F}\ketbra{w_{(\N,F)}^{[r]}}{w_{(\N,F)}^{[r]}}$ is of the form of Eq.~\eqref{eq:Wr_probQCQC}, which proves, according to Proposition~\ref{prop:descr_Wr_probQCQC}, that $\{W^{[r]}\}_r$ is indeed a pQC-QC.

\section{Quantum circuits with operations used in parallel}
\label{app:QCPARs}

In this appendix we describe quantum circuits with operations used in parallel (QC-PARs), and show explicitly how these can be obtained as particular cases of QC-FOs.

\begin{figure*}[t]
	\begin{center}
	\includegraphics[scale=.57]{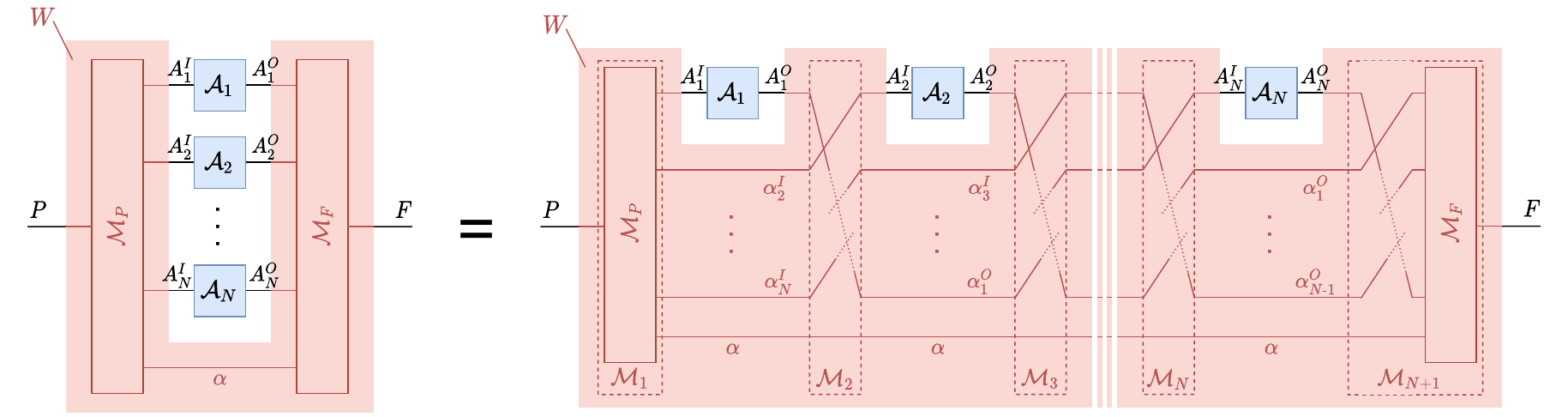}
	\end{center}
	\caption{The left hand side shows a quantum circuit with operations used in parallel (QC-PAR), with its process matrix representation given by $W = M_P * M_F$. The right hand side shows an equivalent circuit that conforms to our description of QC-FOs, with internal circuit operations $\M_2, \ldots, \M_N$ that simply transfer the inputs of subsequent operations and the outputs of previous ones via ancillary systems, and with process matrix representation $W = M_1 * M_2 * \cdots * M_{N+1} = M_P * M_F$. This demonstrates that QC-PARs are a particular case of QC-FOs.}
	\label{fig:QCPAR}
\end{figure*}

We consider a circuit as on the left hand side in Fig.~\ref{fig:QCPAR}, with just a first internal circuit operation $\M_P: \L(\HS^P) \to \L(\HS^{A_\N^I\alpha})$ (with Choi matrix $M_P \in \L(\HS^{PA_\N^I\alpha})$) and a final internal circuit operation $\M_F: \L(\HS^{A_\N^O\alpha}) \to \L(\HS^F)$ (with Choi matrix $M_F \in \L(\HS^{A_\N^O\alpha F})$), which satisfy the TP conditions (easily obtained in the same way as those of Eqs.~\eqref{eq:TP_constr_QCFO_1}--\eqref{eq:TP_constr_QCFO_N} for the more general QC-FO case)
\begin{align}
& \Tr_{A_\N^I\alpha} M_P = \id^P \label{eq:TP_constr_QCPAR_P} \\[1mm]
& \textup{and} \quad \Tr_F (M_P * M_F) = \Tr_\alpha M_P \otimes \id^{A_\N^O}. \label{eq:TP_constr_QCPAR_F}
\end{align}
The corresponding process matrix is easily obtained as
\begin{align}
W = M_P * M_F \quad \in \ \L(\HS^{PA_\N^{IO}F}).
\end{align}

The fact that such a (positive semidefinite) process matrix satisfies Eq.~\eqref{eq:charact_W_QCPAR} in Proposition~\ref{prop:charact_W_QCPAR} follows directly from the TP conditions above, with $W_{(I)} = \Tr_\alpha M_P$.
(Note that it also clearly satisfies Eq.~\eqref{eq:charact_W_QCFO} in Proposition~\ref{prop:charact_W_QCFO}, with $W_{(n)} = \Tr_{A_{\{n+1,\ldots,N\}}^I\alpha} M_P \otimes \id^{A_{\{1,\ldots,n-1\}}^O}$ for each $n$.)
Conversely, consider a positive semidefinite matrix $W \in \L(\HS^{PA_\N^{IO} F})$ satisfying Eq.~\eqref{eq:charact_W_QCPAR}. Following the proofs of the previous appendix (see Sec.~\ref{app:proof_charact_QCFOs_SC}, and Footnote~\ref{footnote:explicit_V_n1_QCFO} in particular), one can diagonalise $W_{(I)}$ in the form $W_{(I)} = \sum_i \ketbra{w_{(I)}^i}{w_{(I)}^i}$ with orthogonal nonzero vectors $\ket{w_{(I)}^i} \in \HS^{PA_\N^I}$, introduce a ($\rank W_{(I)}$)-dimensional ancillary Hilbert space $\HS^\alpha$ with computational basis $\{\ket{i}^\alpha\}_i$, and define $\ket{w_{(I)}} \coloneqq \sum_i \ket{w_{(I)}^i} \ket{i}^\alpha$, $M_P \coloneqq \ketbra{w_{(I)}}{w_{(I)}}$ (such that $\Tr_\alpha M_P = W_{(I)}$) and $M_F \coloneqq \sum_{i,i'} \ketbra{i}{i'}^\alpha \otimes \Big( \big( \frac{\bra{w_{(I)}^i}}{\braket{w_{(I)}^i}{w_{(I)}^i}} \otimes \id^{A_\N^OF}\big) W \big( \frac{\ket{w_{(I)}^{i'}}}{\braket{w_{(I)}^{i'}}{w_{(I)}^{i'}}} \otimes \id^{A_\N^OF}\big) \Big)$. One can check that the maps $M_P$ and $M_F$ thus defined satisfy the TP constraints~\eqref{eq:TP_constr_QCPAR_P}--\eqref{eq:TP_constr_QCPAR_F} above, and allow one to recover $W = M_P * M_F$ as the process matrix of the corresponding QC-PAR.

\medskip

In order to see that our presentation of QC-PARs here fits in the more general description of QC-FOs, let us introduce, for each $n = 1, \ldots, N$, some ancillary systems $\HS^{\alpha_n} \coloneqq \HS^{\alpha_1^O\dots\alpha_{n-1}^O\alpha_{n+1}^I\dots\alpha_N^I\alpha}$, with each $\HS^{\alpha_k^I}$ isomorphic to $\HS^{A_k^I}$ and each $\HS^{\alpha_k^O}$ isomorphic to $\HS^{A_k^O}$ so as to transfer the inputs of subsequent operations and the outputs of previous ones via these ancillary systems (as hinted at in the main text); cf.\ Fig.~\ref{fig:QCPAR}.

The internal circuit operations can then be defined as follows.
The first operation $\M_1: \L(\HS^P) \to [ \L(\HS^{A_1^I\alpha_1}) = \L(\HS^{A_1^I\alpha_2^I\ldots\alpha_N^I\alpha}) ]$ is taken to be formally the same as $\M_P: \L(\HS^P) \to \L(\HS^{A_\N^I\alpha})$, up to the identification $\alpha_k^I \equiv A_k^I$ (via the isomorphism that relates each $\HS^{\alpha_k^I}$ to $\HS^{A_k^I}$).%
\footnote{More rigorously: $\M_1$ is defined as the composition of $\M_P$ with the ``identity'' operations $\ket{i}^{A_k^I} \mapsto \ket{i}^{\alpha_k^I}$ (where $\{\ket{i}^{A_k^I}\}_i$ and $\{\ket{i}^{\alpha_k^I}\}_i$ are the computational bases of $\HS^{A_k^I}$ and $\HS^{\alpha_k^I}$, in one-to-one correspondence), for $k = 2, \ldots, N$. In terms of their Choi matrices: $M_1 = M_P * \dketbra{\id}{\id}^{A_2^I\alpha_2^I} * \cdots * \dketbra{\id}{\id}^{A_N^I\alpha_N^I}$. (The following operations $\M_n$ can similarly be defined more rigorously.)}
The subsequent internal circuit operations $\M_{n+1}: [ \L(\HS^{A_n^O\alpha_n}) = \L(\HS^{A_n^O\alpha_1^O\ldots\alpha_{n-1}^O\alpha_{n+1}^I\ldots\alpha_N^I\alpha})] \to [\L(\HS^{A_{n+1}^I\alpha_{n+1}}) = \L(\HS^{A_{n+1}^I\alpha_1^O\ldots\alpha_n^O\alpha_{n+2}^I\ldots\alpha_N^I\alpha})]$ are taken to be the identity, up to the identifications $A_n^O \equiv \alpha_n^O$ and $\alpha_{n+1}^I \equiv A_{n+1}^I$. Finally, the last internal operation $\M_{N+1}: [\L(\HS^{A_N^O\alpha_N}) = \L(\HS^{A_N^O\alpha_1^O\ldots\alpha_{N-1}^O\alpha})] \to \L(\HS^F)$ is taken to be formally the same as $\L(\HS^{A_\N^O\alpha}) \to \L(\HS^F)$, up to the identification $\alpha_k^O \equiv A_k^O$.

One can easily verify that one thus recovers the process matrix as $W = M_1 * M_2 * \cdots * M_{N+1} = M_P * M_F$.

\medskip

The probabilistic counterpart of a QC-PAR is obtained by replacing the last internal circuit operation $\M_F$ by an instrument $\{\M_F^{[r]}\}_r$, which satisfies the TP condition of Eq.~\eqref{eq:TP_constr_QCPAR_F}, with $M_F$ replaced by $\sum_r M_F^{[r]}$. 
It is immediate to check that the set of probabilistic process matrices $\{W^{[r]}\}_r$, with $W^{[r]} = M_P * M_F^{[r]}$ satisfies the characterisation of Proposition~\ref{prop:charact_Wr_probQCPAR}. 
Conversely, for a set of positive semidefinite matrices $\{W^{[r]} \in \L(\HS^{PA_\N^{IO} F})\}_r$ whose sum is the process matrix of a QC-PAR, we define (as we did for pQC-FOs, pQC-CCs and pQC-QCs in the previous appendix) the ``extended'' matrix $W' \coloneqq \sum_r W^{[r]} \otimes \ketbra{r}{r}^{F'} \in \L(\HS^{PA_\N^{IO}FF'})$, which is the process matrix of a (deterministic) QC-PAR with the global future space $\HS^{FF'}$ as per Proposition~\ref{prop:charact_W_QCPAR}, and can thus be decomposed as $W' = M_P * M_{F}'$, where $M_P \in \L(\HS^{P A^I_\N \alpha})$ and $M_{F}' \in \L(\HS^{A^O_\N \alpha F F'})$ satisfy the TP conditions of Eqs.~\eqref{eq:TP_constr_QCPAR_P}--\eqref{eq:TP_constr_QCPAR_F}.
Defining the CP maps $M_{F}^{[r]} \coloneqq M_{F}' * \ketbra{r}{r}^{F'}$, we obtain $W^{[r]} = W' * \ketbra{r}{r}^{F'} = M_P * M_{F}^{[r]}$, which provides a realisation of the matrices $W^{[r]}$ as probabilistic process matrices of a pQC-PAR.

\section{Further examples of QC-QCs}
\label{app:furtherExamplesQCQCs}

Here we present some generalisations of the examples of QC-QCs presented in Sec.~\ref{subsec:QCQC_examples}.

\subsection{The ``quantum $N$-switch'' and generalisations}
\label{app:Nswitch}

The quantum switch can easily be generalised to a setup involving $N$ operations $A_k$ (all with isomorphic $d_\textup{t}$-dimensional input and output Hilbert spaces, for simplicity) and a ``control system'' used to coherently control between applying the operations in the $N!$ possible permutations of orders (or some subset thereof)~\cite{colnaghi12,araujo14,facchini15,procopio19,procopio20,taddei20,wilson20,sazim20,wilson20a,pinzani20}.
Such an $N$-operation quantum switch (or ``quantum $N$-switch'') requires, in general, a control system of dimension $N!$ so as to encode each of the possible permutations $\pi\coloneqq(k_1,\ldots,k_N)$ of the $N$ operations.
It can be obtained as a QC-QC, for instance by introducing $d_\textup{t}$-dimensional Hilbert spaces $\HS^{P_\textup{t}}$ and $\HS^{F_\textup{t}}$ (for the ``target'' systems in the global past and future) and $N!$-dimensional isomorphic Hilbert spaces $\HS^{P_\textup{c}}$, $\HS^{F_\textup{c}}$ and $\HS^{\alpha_n}$ (for the global past/future ``control'' and for the ancillary systems) with orthonormal bases $\{\ket{\pi}\}_{\pi\in\Pi_\N}$, with $\Pi_\N$ denoting the set of all permutations $\pi = (\pi(1),\ldots,\pi(N))$ of $\N$, and by taking%
\footnote{Note that with the choice of Eq.~\eqref{eq:Vs_N_switch}, there is some redundancy in the information encoded in the control systems $C_n$ and in the ancillary systems $\alpha_n$ (when considering the ``complete'' internal circuit operation $\tilde V_n$); indeed, given $(\K_{n-1},k_n)$ one would only need to record $(k_1,\dots,k_{n-2})$ in an ancillary system to uniquely determine the full history $(k_1,\dots,k_n)$.
For practical purposes, the dimension of the Hilbert spaces used to encode this information could thus be reduced. 
See also the discussion in Sec.~\ref{subsubsec:impl_switch} about redundant encodings in certain implementations of the quantum switch.}
\begin{align}
& \forall \, k_1, \quad \dket{V_{\emptyset,\emptyset}^{\to k_1}} = \dket{\id}^{P_\textup{t}A_{k_1}^I} \otimes \sum_{\substack{\pi\in \Pi_\N:\\ \pi(1) = k_1}} \ket{\pi}^{P_\textup{c}} \otimes \ket{\pi}^{\alpha_1}, \notag \\[1mm]
& \forall \, \K_{n-1}, k_n, k_{n+1}, \notag \\[-1mm]
& \hspace{3mm} \dket{V_{\K_{n-1},k_n}^{\to k_{n+1}}} = \dket{\id}^{A_{k_n}^O\!A_{k_{n+1}}^I} \otimes \hspace{-8mm} \sum_{\substack{\pi\in \Pi_\N: \\ \{\pi(1),\ldots,\pi(n-1)\} = \K_{n-1},\\ \pi(n) = k_n, \pi(n+1) = k_{n+1}}} \hspace{-10mm} \ket{\pi}^{\alpha_n} \!\otimes\! \ket{\pi}^{\alpha_{n+1}}\!, \notag \\[1mm]
& \forall \, k_N, \quad \dket{V_{\N\!\backslash k_N\!,k_N}^{\to F}} = \dket{\id}^{A_{k_N}^O\!F_\textup{t}} \otimes \!\sum_{\substack{\pi\in \Pi_\N:\\ \pi(N) = k_N}}\!\!\! \ket{\pi}^{\alpha_N} \!\otimes\! \ket{\pi}^{F_\textup{c}}. \notag \\[-3mm] \label{eq:Vs_N_switch}
\end{align}
These indeed satisfy the TP constraints of Eqs.~\eqref{eq:TP_constr_QCQC_1}--\eqref{eq:TP_constr_QCQC_N}, and give $W_\textup{QS}^{(N)} = \ketbra{w_\textup{QS}^{(N)}}{w_\textup{QS}^{(N)}}$ with
\begin{align}
& \ket{w_\textup{QS}^{(N)}} \coloneqq \!\sum_{(k_1,\ldots,k_N)=:\pi}\! \ket{\pi}^{P_\textup{c}} \dket{\id}^{P_\textup{t}A_{k_1}^I} \dket{\id}^{A_{k_1}^OA_{k_2}^I} \cdots \notag \\[-4mm]
& \hspace{37mm} \cdots \dket{\id}^{A_{k_{N-1}}^O\!A_{k_N}^I} \!\dket{\id}^{A_{k_N}^O\!F_\textup{t}} \!\ket{\pi}^{F_\textup{c}} \notag \\
& \hspace{45mm} \in \HS^{P_\textup{c}P_\textup{t}A_\N^{IO}F_\textup{t}F_\textup{c}} \label{eq:def_N_switch}
\end{align}
according to Proposition~\ref{prop:descr_W_QCQC}.
Note that in the quantum $N$-switch, while the control of causal order is indeed quantum, there is no real notion of ``dynamical'' causal order, as the full order $\pi\coloneqq(k_1,\ldots,k_N)$ corresponding to each component $\ket{\pi}^{P_\textup{c}} \dket{\id}^{P_\textup{t}A_{k_1}^I} \cdots \dket{\id}^{A_{k_N}^O\!F_\textup{t}} \ket{\pi}^{F_\textup{c}}$ of $\ket{w_\textup{QS}^{(N)}}$ is encoded from the start in the state of the control system (and, with the choice of Eq.~\eqref{eq:Vs_N_switch}, is transmitted, untouched, throughout the circuit by the ancillary states $\ket{\pi}^{\alpha_n}$).

As was the case for the quantum switch (i.e., when $N=2$), the process matrix $\Tr_{F_\textup{c}} W_\textup{QS}^{(N)}$ obtained by tracing out the system in $\HS^{F_\textup{c}}$ from the quantum $N$-switch is simply an incoherent mixture of terms corresponding to the $N!$ different orders.
Indeed, one obtains a ``classical $N$-switch'' generalising the classical switch described in Sec.~\ref{subsec:QCCCs_example}.

\medskip

While the quantum $N$-switch is the most straightforward and extensively studied generalisation of the quantum switch, further generalisations are possible.
The simplest such possibility would be to replace all the identity channels in the quantum $N$-switch (i.e., the $\dket{\id}$'s in Eq.~\eqref{eq:Vs_N_switch} or~\eqref{eq:def_N_switch}) applied to the target system by any, potentially different, arbitrary unitaries (or even, taking the external operations to have non-isomorphic input and output Hilbert spaces, isometries), as was considered, for instance, for the case of $N=2$ in Refs.~\cite{taddei19,yokojima20}.
Such a choice would indeed still give QC-QCs satisfying the TP constraints of Eqs.~\eqref{eq:TP_constr_QCQC_1}--\eqref{eq:TP_constr_QCQC_N}, as is easily verified. 
Taking this one step further, one could introduce further ancillary systems $\alpha_n'$ to act as ``memory channels'' across the different time slots.

Like the quantum $N$-switch, none of these generalisations exhibit any form of really dynamical causal order, and instead exploit only coherent control conditioned on some quantum system that remains fixed throughout the process.
Indeed, this is also true of other previous attempts to define coherent control of causal order (see, e.g., Ref.~\cite{pinzani20}).

\subsection{A family of new QC-QCs with dynamical and coherently-controlled causal order}
\label{app:new_QCQC}

Here we present a more general family of new QC-QCs, of which the example given in Sec.~\ref{subsubsec:new_QCQC} of the main text is a specific case.

We consider, as in the main text, QC-QCs with $N=3$ operations $A_1, A_2, A_3$.
For simplicity, all input and output Hilbert spaces $\HS^{A_k^I}$ and $\HS^{A_k^O}$ are taken to be isomorphic (of the same dimension $d_\textup{t}$).
In contrast to the example of Sec.~\ref{subsubsec:new_QCQC}, we will consider here a nontrivial global past $P\coloneqq P_\textup{t} P_\textup{c}$ (with corresponding Hilbert space $\HS^P\coloneqq \HS^{P_\textup{t}P_\textup{c}}$, $d_{P_\textup{t}}=d_\textup{t}$ and $d_{P_\textup{c}}=3$) and global future $F\coloneqq F_\textup{t} F_\alpha F_\textup{c}$ (with corresponding Hilbert space $\HS^F\coloneqq \HS^{F_\textup{t} F_\alpha F_\textup{c}}$, $d_{F_\textup{t}}=d_\textup{t}$, $d_{F_\alpha} \ge 2$ and $d_{F_\textup{c}}=3$).

A relatively simple way to satisfy the TP constraints of Eqs.~\eqref{eq:TP_constr_QCQC_1}--\eqref{eq:TP_constr_QCQC_N} is to consider operators $V_{\emptyset,\emptyset}^{\to k_1}$, $V_{\K_{n-1},k_n}^{\to k_{n+1}}$ and $V_{\N\backslash k_N,k_N}^{\to F}$ of the form
\begin{align}
V_{\emptyset,\emptyset}^{\to k_1} & \coloneqq \id^{P_\textup{t}\to A_{k_1}^I} \otimes \bra{k_1}^{P_\textup{c}}, \label{eq:new_QCQC_Vk1} \\[2mm]
V_{\emptyset,k_1}^{\to k_2} & \coloneqq (\id^{A_{k_2}^I} \otimes \bra{\sigma_{(k_1,k_2)}}^\alpha) \mathcal{V}_{k_1}, \label{eq:new_QCQC_Vk2}
\end{align}
for some isometries%
\footnote{$\mathcal{V}_{k_1} : \HS^{A_{k_1}^O} \to \HS^{A_{k_2}^I\alpha}$ may depend, as indicated by its subscript, on $k_1$, but it must have the same form for both values of $k_2 \neq k_1$ (recall that all $\HS^{A_{k_2}^I}$'s are isomorphic). Similarly below, $\mathcal{V}_{k_3}' : \HS^{A_{k_2}^O\alpha'} \to \HS^{A_{k_3}^I\alpha_3}$ may depend on $k_3$, but for a given $k_3$ it must have the same form for both initial orders $(k_1,k_2)$ (with all $\HS^{A_{k_2}^O}$'s being isomorphic).}
$\mathcal{V}_{k_1} : \HS^{A_{k_1}^O} \to \HS^{A_{k_2}^I\alpha}$, where we introduced a 2-dimensional auxiliary Hilbert space $\HS^\alpha$ with orthonormal basis $\{\ket{0}^\alpha,\ket{1}^\alpha\}$, which encodes the ``signature'' of the order $(k_1,k_2)$ in such a way that $\sigma_{(k_1,k_2)} \coloneqq 0$ if $k_2 = k_1+1 \pmod{3}$, and $\sigma_{(k_1,k_2)} \coloneqq 1$ if $k_2 = k_1+2 \pmod{3}$ (and such that $\forall \, k_1, \sum_{k_2} \ketbra{\sigma_{(k_1,k_2)}}{\sigma_{(k_1,k_2)}}^\alpha = \id^\alpha$);
\begin{align}
V_{\{k_1\},k_2}^{\to k_3} & \coloneqq \mathcal{V}_{k_3}' (\id^{A_{k_2}^O} \otimes \ket{\sigma_{(k_1,k_2)}}^{\alpha'}), \label{eq:new_QCQC_Vk3}
\end{align}
for some isometries $\mathcal{V}_{k_3}' : \HS^{A_{k_2}^O\alpha'} \to \HS^{A_{k_3}^I\alpha_3}$, where we similarly introduced a 2-dimensional auxiliary Hilbert space $\HS^{\alpha'}$, as well as an ancillary ($d_{F_\alpha}$-dimensional) system $\alpha_3$; and
\begin{align}
V_{\{k_1,k_2\},k_3}^{\to F} & \coloneqq \id^{A_{k_3}^O \to F_\textup{t}} \otimes \id^{\alpha_3 \to F_\alpha} \otimes \ket{k_3}^{F_\textup{c}}. \label{eq:new_QCQC_VF}
\end{align}
According to Proposition~\ref{prop:descr_W_QCQC}, the process matrix corresponding to the choice of operators above is then $W = \ketbra{w}{w}$ with $\ket{w} \coloneqq \sum_{(k_1,k_2,k_3)} \ket{w_{(k_1,k_2,k_3,F)}}$ and
\begin{align}
\ket{w_{(k_1,k_2,k_3,F)}} \hspace{-15mm} \notag \\
= & \dket{V_{\emptyset,\emptyset}^{\to k_1}} * \dket{V_{\emptyset,k_1}^{\to k_2}} * \dket{V_{\{k_1\},k_2}^{\to k_3}} * \dket{V_{\{k_1,k_2\},k_3}^{\to F}} \notag \\
= & \ket{k_1}^{P_\textup{c}} \otimes \dket{\id}^{P_\textup{t}A_{k_1}^I} \otimes \big( \dket{{\cal V}_{k_1}}^{A_{k_1}^O A_{k_2}^I\alpha} * \ket{\sigma_{(k_1,k_2)}}^\alpha \big) \notag \\
& \qquad \otimes \big( \dket{{\cal V}_{k_3}'}^{A_{k_2}^O \alpha' A_{k_3}^I \alpha_3} * \ket{\sigma_{(k_1,k_2)}}^{\alpha'} * \dket{\id}^{\alpha_3 F_\alpha} \big) \notag \\
& \qquad \qquad \otimes \dket{\id}^{A_{k_3}^O F_\textup{t}} \otimes \ket{k_3}^{F_\textup{c}}.
\end{align}
Since $W = \ketbra{w}{w}$ is a rank-1 process matrix, and since there exists some preparation of states in the global past such that the induced process is not compatible with any given operation being applied first, then it follows (referring to the same argument as for the ``pure'' quantum switch~\cite{araujo15,oreshkov16}) that $W$ is causally nonseparable.

As in the example of Sec.~\ref{subsubsec:new_QCQC}, one may now choose to fix the preparation of some particular global past state, and/or (perhaps partially) trace out some systems in the global future. 
Whether the resulting process matrix remains causally nonseparable or not may then depend on the choice of initial state and of isometries ${\cal V}_{k_1}$ and ${\cal V}_{k_3}'$. 
The specific example of the main text corresponds to inputting the initial state $\ket{\psi}^{P_\textup{t}}\otimes\frac{1}{\sqrt{3}}\sum_{k_1}\ket{k_1}^{P_\textup{c}}$, choosing ${\cal V}_{k_1} = V_\textsc{Copy}$ and ${\cal V}_{k_3}' = V_\textsc{CNot}$ (see Sec.~\ref{subsubsec:impl_new_QCQC}) and tracing out $F$ completely, which indeed results in a causally nonseparable process.%
\footnote{More specifically, we computed the random robustness~\cite{araujo15,branciard16a,wechs19} for $1000$ random qubit states $\ket{\psi}$ via SDP, and always found values in the interval $[0.51, 0.53]$, which indeed certifies causal nonseparability.}
Had we chosen, for instance, ${\cal V}_{k_3}' = \id^{A_{k_2}^O \to A_{k_3}^I} \otimes \id^{\alpha' \to \alpha_3}$ instead (with the same initial state preparation and the same ${\cal V}_{k_1}$, corresponding to removing the \textsc{CNot} gates in Fig.~\ref{fig:new_QCQC}), the resulting process matrix after tracing out $F$ would have become causally separable.
Indeed, one would have $\dket{{\cal V}_{k_3}'}^{A_{k_2}^O \alpha' A_{k_3}^I \alpha_3} * \ket{\sigma_{(k_1,k_2)}}^{\alpha'} * \dket{\id}^{\alpha_3 F_\alpha} = \dket{\id}^{A_{k_2}^O A_{k_3}^I}\otimes\ket{\sigma_{(k_1,k_2)}}^{F_\alpha}$, and thus
\begin{equation}
	\Tr_F W = \sum_{(k_1,k_2,k_3)} \Tr_F \ketbra{w_{(k_1,k_2,k_3,F)}}{w_{(k_1,k_2,k_3,F)}}.
\end{equation}
The sum above gives a decomposition of the process $\Tr_F W$ (now with a trivial $F$) into terms $W_{(k_1,\dots,k_3,F)}$ satisfying the conditions of Proposition~\ref{prop:charact_W_QCCC}, thereby showing that this process is a QC-CC and thus causally separable.

\medskip

The construction above thus provides a family of novel QC-QCs that can exhibit a range of different behaviours.
One can imagine yet further generalisations, for example by introducing further ancillary systems in a nontrivial way.
The exploration of such possibilities, or of completely new families of causally nonseparable QC-QCs, provides a key direction for future research.

\section{Quantum circuits with quantum control of causal order cannot violate causal inequalities}
\label{app:only_causal_correl}

In this final appendix we prove Proposition~\ref{prop:causal_correlations}, that QC-QCs (and \emph{a fortiori}, QC-CCs or QC-FOs) can only generate causal correlations.

\medskip

For any subset $\K = \{k_1,\ldots,k_n\}$ of $\N$, we shall denote by $\vec x_\K \coloneqq (x_{k_1},\ldots,x_{k_n})$ and $\vec a_\K \coloneqq (a_{k_1},\ldots,a_{k_n})$ the list of inputs and outputs for the parties in $\K$, and by $A_{\vec a_\K|\vec x_\K} \coloneqq \bigotimes_{k \in \K} A_{a_k|x_k} \in \L(\HS^{A_\K^{IO}})$ the corresponding joint operations (in their Choi representation). With these notations, the correlations $P(\vec a_\N|\vec x_\N)$ obtained from a QC-QC are given, as in Eq.~\eqref{eq:P_a1aN_x1xN} (and for all $\vec x_\N, \vec a_\N$), by%
\footnote{Recall from Sec.~\ref{subsec:causal_correlations} that we consider here a scenario with no (or trivial, 1-dimensional) ``global past'' and ``global future'' Hilbert spaces $\HS^P, \HS^F$. 
For all link products written in this appendix, the Hilbert spaces of their two factors are the same: the link products therefore correspond here to full traces, and give scalar values.}
\begin{align}
P(\vec a_\N|\vec x_\N) = A_{\vec a_\N|\vec x_\N} * W,
\end{align}
with $W$ satisfying the constraints of Proposition~\ref{prop:charact_W_QCQC_trivial_PF}, i.e., such that there exist positive semidefinite matrices $W_{(\K_{n-1},k_n)}$, for all strict subsets $\K_{n-1}$ of $\N$ and all $k_n \in \N\backslash\K_{n-1}$, satisfying Eq.~\eqref{eq:charact_W_QCQC_decomp_trivial_PF}. 
In order to lighten the notations, we will replace here the dummy labels $\K_n$, $k_n$ and $k_{n+1}$ used in Eq.~\eqref{eq:charact_W_QCQC_decomp_trivial_PF} by just $\K$, $k$ and $\ell$, respectively.

Let us define, for any nonempty strict subset $\K$ of $\N$, any $k\in\K$ and any $\ell \in \N \backslash \K$,
\begin{align}
r_{(\K,\ell)}(\vec a_\K | \vec x_\K) & \coloneqq A_{\vec a_\K|\vec x_\K} * \big( \Tr_{A_\ell^I} W_{(\K,\ell)} \big) , \notag \\
s_{(\K\backslash k,k)}(\vec a_\K|\vec x_\K) & \coloneqq A_{\vec a_\K|\vec x_\K} * \big( W_{(\K\backslash k,k)} \otimes \id^{A_k^O} \big) , \label{eq:def_r_s}
\end{align}
with the first definition extending to $r_{(\emptyset,\ell)}(\vec a_{\emptyset} | \vec x_{\emptyset}) \coloneqq \Tr W_{(\emptyset,\ell)}$ for $\K = \emptyset$, and the second one also applying to $\K=\N$.
We note that $r_{(\K,\ell)}(\vec a_\K|\vec x_\K)$ and $s_{(\K\backslash k,k)}(\vec a_\K|\vec x_\K)$ are nonnegative functions %
 of the inputs and outputs of the parties in $\K$, which inherit the following properties from Eq.~\eqref{eq:charact_W_QCQC_decomp_trivial_PF}:
\begin{align}
& \sum_{\ell \in \N} r_{(\emptyset,\ell)}(\vec a_{\emptyset} | \vec x_{\emptyset}) = 1, \notag \\
& \forall \, \emptyset \!\subsetneq\! \K \!\subsetneq\! \N, \sum_{\ell \in \N \backslash \K} \!\!r_{(\K,\ell)}(\vec a_\K | \vec x_\K) = \sum_{k \in \K} \!s_{(\K \backslash k,k)}(\vec a_\K|\vec x_\K), \notag \\[1mm]
& \textup{and} \quad P(\vec a_\N|\vec x_\N) = \sum_{k \in \N} s_{(\N \backslash k,k)}(\vec a_\N|\vec x_\N). \label{eq:consrt_r_s}
\end{align}
This incites us to further define the functions
\begin{align}
f_\K(\vec a_\K | \vec x_\K) & \coloneqq \sum_{\ell \in \N \backslash \K} r_{(\K,\ell)}(\vec a_\K | \vec x_\K) \notag \\[-2mm]
& \hspace{18mm} = \sum_{k \in \K} s_{(\K \backslash k,k)}(\vec a_\K|\vec x_\K), \label{eq:def_fK}
\end{align}
with $f_{\emptyset}(\vec a_\emptyset | \vec x_\emptyset) \coloneqq \sum_\ell r_{(\emptyset,\ell)}(\vec a_{\emptyset} | \vec x_{\emptyset}) = 1$ and $f_\N(\vec a_\N | \vec x_\N) \coloneqq \sum_k s_{(\N \backslash k,k)} (\vec a_\N|\vec x_\N) =  P(\vec a_\N|\vec x_\N)$ for $\K = \emptyset$ and $\K = \N$, resp.

As the $r_{(\K,\ell)}$'s are nonnegative, it is clear from the definition above that for each $\K \subsetneq\N, \vec x_\K, \vec a_\K$ there must exist some weights $q_{\K, \vec x_\K, \vec a_\K}^{\ell} \ge 0$ such that $\sum_{\ell \in \N \backslash \K} q_{\K, \vec x_\K, \vec a_\K}^{\ell} = 1$ and
\begin{align}
r_{(\K,\ell)}(\vec a_\K | \vec x_\K) = f_\K(\vec a_\K | \vec x_\K) \, q_{\K, \vec x_\K, \vec a_\K}^{\ell} \label{eq:r_f_qs}
\end{align}
for all $\ell \in \N \backslash \K$.
Furthermore, using the fact that (for each $x_\ell$) the sum $\sum_{a_\ell} A_{a_\ell|x_\ell}$ is a CPTP map, i.e., that $\Tr_{A_\ell^O} \sum_{a_\ell} A_{a_\ell|x_\ell} = \id^{A_\ell^I}$, one finds (replacing $\K\backslash k$ by $\K$ and $k$ by $\ell (\notin \K)$ in the definition of Eq.~\eqref{eq:def_r_s}) that
\begin{align}
& \sum_{a_\ell} s_{(\K,\ell)}(\vec a_{\K\cup\ell}|\vec x_{\K\cup\ell}) \notag \\[-2mm]
& \qquad = \big( A_{\vec a_\K|\vec x_\K} \otimes \sum_{a_\ell} A_{a_\ell|x_\ell} \big) * \big( W_{(\K,\ell)} \otimes \id^{A_\ell^O} \big) \notag \\[-1mm]
& \qquad = \big( A_{\vec a_\K|\vec x_\K} \otimes \Tr_{A_\ell^O} \sum_{a_\ell} A_{a_\ell|x_\ell} \big) * W_{(\K,\ell)} \notag \\[-1mm]
& \qquad = \big( A_{\vec a_\K|\vec x_\K} \otimes \id^{A_\ell^I} \big) * W_{(\K,\ell)} \notag \\[1mm]
& \qquad = A_{\vec a_\K|\vec x_\K} * \big( \Tr_{A_\ell^I} W_{(\K,\ell)} \big) = r_{(\K,\ell)}(\vec a_\K | \vec x_\K).
\end{align}
It follows, as above, that one can define (for all $\K, \vec x_\K, \vec a_\K$ and all $\ell \in \N \backslash \K$) a valid conditional probability distribution $P^{(\ell)}_{\K,\vec x_\K, \vec a_\K}(a_\ell|x_\ell)$ for party $A_\ell$---which, as indicated by the subscript, depends on $\K$, $\vec x_\K$ and $\vec a_\K$---such that
\begin{align}
s_{(\K,\ell)}(\vec a_{\K\cup\ell}|\vec x_{\K\cup\ell}) = r_{(\K,\ell)}(\vec a_\K | \vec x_\K) \ P^{(\ell)}_{\K,\vec x_\K, \vec a_\K}(a_\ell|x_\ell). \label{eq:s_r_Ps}
\end{align}

\medskip

With this in place, we are now in a position to prove the following:
\begin{proposition}\label{prop:correl_QCQC_rec}
For any $n = 0, \ldots, N$, one can decompose the correlations $P(\vec a_\N|\vec x_\N)$ as
\begin{align}
& P(\vec a_\N|\vec x_\N) \notag \\
& \quad = \sum_{\K: \, |\K| = n} f_\K(\vec a_\K | \vec x_\K) \, P^\textup{causal}_{\K,\vec x_\K,\vec a_\K}(\vec a_{\N \backslash \K}|\vec x_{\N \backslash \K}), \label{eq:correl_QCQC_rec}
\end{align}
where the sum runs over all $n$-partite subsets $\K$ of $\N$, with $f_\K(\vec a_\K | \vec x_\K)$ defined in Eq.~\eqref{eq:def_fK}, and where (for all $\K, \vec x_\K, \vec a_\K$) the $P^\textup{causal}_{\K,\vec a_\K,\vec x_\K}(\vec a_{\N\backslash\K}|\vec x_{\N\backslash\K})$'s are causal correlations for the parties in $\N\backslash\K$ (and with $P^\textup{causal}_{\N,\vec x_\N,\vec a_\N}(\vec a_\emptyset|\vec x_\emptyset) = 1$ for the $n = N$ case).
\end{proposition}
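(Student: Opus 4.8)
The plan is to prove Eq.~\eqref{eq:correl_QCQC_rec} by \emph{downward} induction on $n$, starting from $n=N$ and descending to $n=0$. The final conclusion of Proposition~\ref{prop:causal_correlations} will then be exactly the $n=0$ instance, in which the only subset is $\K=\emptyset$, $f_\emptyset=1$, and the single term $P^\textup{causal}_{\emptyset}(\vec a_\N|\vec x_\N)$ must equal $P(\vec a_\N|\vec x_\N)$, thereby certifying that the generated correlations are causal. The base case $n=N$ is immediate: the only subset of size $N$ is $\K=\N$, so the right-hand side reduces to $f_\N(\vec a_\N|\vec x_\N)\,P^\textup{causal}_{\N}(\vec a_\emptyset|\vec x_\emptyset)=P(\vec a_\N|\vec x_\N)\cdot 1$, using $f_\N=P$ from Eq.~\eqref{eq:def_fK} and the convention that the empty-party causal correlation equals $1$.

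For the inductive step, I would assume the decomposition at level $n$ (over subsets $\K$ of size $n$) and derive it at level $n-1$. First I would expand each weight $f_\K$ with $|\K|=n$ using the second equality in Eq.~\eqref{eq:def_fK}, $f_\K=\sum_{k\in\K}s_{(\K\backslash k,k)}$, and then rewrite each $s_{(\K\backslash k,k)}$ via Eq.~\eqref{eq:s_r_Ps} (applied with $\K\mapsto\K\backslash k$, $\ell\mapsto k$) as the product $r_{(\K\backslash k,k)}\,P^{(k)}_{\K\backslash k,\vec x_{\K\backslash k},\vec a_{\K\backslash k}}(a_k|x_k)$. Substituting into the level-$n$ expression yields a double sum $\sum_{\K:|\K|=n}\sum_{k\in\K}$, which I would reindex by setting $\K'=\K\backslash k$ (of size $n-1$) and treating the removed element as a free label $\ell=k$ ranging over $\N\backslash\K'$; the map $(\K,k)\mapsto(\K',\ell)$ is a bijection onto the pairs appearing in $\sum_{\K':|\K'|=n-1}\sum_{\ell\in\N\backslash\K'}$.

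After reindexing, each $r_{(\K',\ell)}(\vec a_{\K'}|\vec x_{\K'})$ factorises as $f_{\K'}(\vec a_{\K'}|\vec x_{\K'})\,q^\ell_{\K',\vec x_{\K'},\vec a_{\K'}}$ by Eq.~\eqref{eq:r_f_qs}, so $f_{\K'}$ can be pulled out of the inner sum, leaving as the coefficient of $f_{\K'}$ the quantity
\begin{align}
& P^\textup{causal}_{\K',\vec x_{\K'},\vec a_{\K'}}(\vec a_{\N\backslash\K'}|\vec x_{\N\backslash\K'}) \coloneqq \sum_{\ell\in\N\backslash\K'} q^\ell_{\K',\vec x_{\K'},\vec a_{\K'}}\,P^{(\ell)}_{\K',\vec x_{\K'},\vec a_{\K'}}(a_\ell|x_\ell) \notag \\
& \hspace{18mm} \times\, P^\textup{causal}_{\K'\cup\ell,\vec x_{\K'\cup\ell},\vec a_{\K'\cup\ell}}(\vec a_{\N\backslash(\K'\cup\ell)}|\vec x_{\N\backslash(\K'\cup\ell)}).
\end{align}
The crux is then to recognise that this object is precisely a causal correlation for the parties in $\N\backslash\K'$: the $q^\ell$'s are nonnegative and sum to $1$ over $\ell\in\N\backslash\K'$ (the property stated below Eq.~\eqref{eq:r_f_qs}), each $P^{(\ell)}$ is a valid single-party distribution for the ``first'' party $\ell$, and, by the induction hypothesis, the conditional correlation $P^\textup{causal}_{\K'\cup\ell,\ldots}$ shared by the remaining parties $\N\backslash(\K'\cup\ell)$ is causal for every $x_\ell,a_\ell$. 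This matches term-for-term the recursive definition of causal correlations recalled in Sec.~\ref{subsec:causal_correlations}, which closes the induction.

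The main obstacle I anticipate is bookkeeping rather than conceptual: one must carefully track how $q^\ell$, $P^{(\ell)}$ and $P^\textup{causal}_{\K'\cup\ell}$ depend both on the already-fixed inputs/outputs $\vec x_{\K'},\vec a_{\K'}$ and on the newly revealed $x_\ell,a_\ell$, so as to be certain that the reassembled coefficient genuinely instantiates the recursive causal template (in particular that the conditioning of the remaining-party correlation on the first party's input and output is correctly inherited). Once this is verified, together with the bijectivity of the $\sum_{\K}\sum_{k\in\K}\to\sum_{\K'}\sum_{\ell\notin\K'}$ reindexing, setting $n=0$ delivers Proposition~\ref{prop:causal_correlations}.
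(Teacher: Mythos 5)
Your proof is correct and follows essentially the same route as the paper's: downward recursion from the $n=N$ base case, expanding $f_\K$ via the second equality in Eq.~\eqref{eq:def_fK}, substituting Eqs.~\eqref{eq:s_r_Ps} and~\eqref{eq:r_f_qs}, reindexing the double sum bijectively, and recognising the coefficient of $f_{\K'}$ as a causal correlation through its recursive definition (convex mixture over the first party $\ell$, with a response function independent of the other parties' inputs and causal conditional correlations). The only immaterial difference is ordering: you apply Eq.~\eqref{eq:s_r_Ps} before the reindexing, whereas the paper does so after.
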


\begin{proof}
We prove the above claim recursively, starting from $n = N$, down to $n=0$.

For $n = N$, the result follows directly from the fact that $f_\N(\vec a_\N | \vec x_\N) =  P(\vec a_\N|\vec x_\N)$ (as noted above, and which follows from the third line of Eq.~\eqref{eq:consrt_r_s}).

Suppose that a decomposition of the form of Eq.~\eqref{eq:correl_QCQC_rec} exists, with a sum over subsets $\K'$ of cardinality $n+1$ (with $n+1 \ge 1$).
Then using the definition of $f_{\K'}(\vec a_{\K'} | \vec x_{\K'})$, rewriting the sums $\sum_{\K': \, |\K'| = n+1} \sum_{\ell \in \K'}$ in the equivalent form $\sum_{\K: \, |\K| = n} \sum_{\ell \in \N \backslash \K}$ (with $\K = \K'\backslash \ell$, so that $\K' = \K\cup\ell$) and using Eqs.~\eqref{eq:s_r_Ps} and~\eqref{eq:r_f_qs}, we obtain:
\begin{align}
& P(\vec a_\N|\vec x_\N) \notag \\
& \quad = \sum_{\K': \, |\K'| = n+1} \, \sum_{\ell \in \K'} s_{(\K' \backslash \ell,\ell)}(\vec a_{\K'}|\vec x_{\K'}) \notag \\[-3mm]
& \hspace{40mm} P^\textup{causal}_{\K',\vec x_{\K'},\vec a_{\K'}}(\vec a_{\N \backslash \K'}|\vec x_{\N \backslash \K'}) \notag \\[1mm]
& \quad = \sum_{\K: \, |\K| = n} \, \sum_{\ell \in \N \backslash \K} s_{(\K,\ell)}(\vec a_{\K\cup\ell}|\vec x_{\K\cup\ell}) \notag \\[-3mm]
& \hspace{35mm} P^\textup{causal}_{\K\cup\ell,\vec x_{\K\cup\ell},\vec a_{\K\cup\ell}}(\vec a_{\N \backslash \K \backslash \ell}|\vec x_{\N \backslash \K \backslash \ell}) \notag \\[1mm]
& \quad = \sum_{\K: \, |\K| = n} f_\K(\vec a_\K | \vec x_\K) \sum_{\ell \in \N \backslash \K} q_{\K, \vec x_\K, \vec a_\K}^{\ell} \ P^{(\ell)}_{\K,\vec x_\K, \vec a_\K}(a_\ell|x_\ell) \notag \\[-1mm]
& \hspace{35mm} P^\textup{causal}_{\K\cup\ell,\vec x_{\K\cup\ell},\vec a_{\K\cup\ell}}(\vec a_{\N \backslash \K \backslash \ell}|\vec x_{\N \backslash \K \backslash \ell}). \label{eq:correl_QCQC_rec_proof}
\end{align}
One can see that $\sum_{\ell \in \N \backslash \K} q_{\K, \vec x_\K, \vec a_\K}^{\ell} \ P^{(\ell)}_{\K,\vec x_\K, \vec a_\K}(a_\ell|x_\ell)$ $P^\textup{causal}_{\K\cup\ell,\vec x_{\K\cup\ell},\vec a_{\K\cup\ell}}(\vec a_{\N \backslash \K \backslash \ell}|\vec x_{\N \backslash \K \backslash \ell})$ in the above expression defines (for each $\K, \vec x_\K, \vec a_\K$) a causal probability distribution for the parties in $\N\backslash\K$~\cite{oreshkov16,abbott16}: indeed it is written as a convex mixture (with weights $q_{\K, \vec x_\K, \vec a_\K}^{\ell}$) of correlations compatible with a given party $\ell \in \N\backslash\K$ acting first (with a response function $P^{(\ell)}_{\K,\vec x_\K, \vec a_\K}(a_\ell|x_\ell)$ which does not depend on the inputs of the other parties in $\N\backslash\K\backslash\ell$) and such that whatever that party's input and output, the conditional correlations $P^\textup{causal}_{\K\cup\ell,\vec x_{\K\cup\ell},\vec a_{\K\cup\ell}}(\vec a_{\N \backslash \K \backslash \ell}|\vec x_{\N \backslash \K \backslash \ell})$ shared by the other parties in $\N\backslash\K\backslash\ell$ are causal.

This shows that Eq.~\eqref{eq:correl_QCQC_rec_proof} provides a decomposition of $P(\vec a_\N|\vec x_\N)$ in the form of Eq.~\eqref{eq:correl_QCQC_rec}, and thereby proves, by recursion, that such a decomposition indeed exists for all $n = 0, \ldots, N$.
\end{proof}

To conclude the proof of Proposition~\ref{prop:causal_correlations}, it then suffices to notice (remembering that $f_{\emptyset}(\vec a_\emptyset | \vec x_\emptyset) = 1$) that the latter simply corresponds to the case $n=0$ of Proposition~\ref{prop:correl_QCQC_rec} above.

\bibliography{bib_classical_quantum_control_of_orders}

\end{document}